\def\myarabic#1{\normalfont(\roman{#1})}
\newlist{theoremlist}{enumerate}{1}
\setlist[theoremlist]{label=\myarabic{theoremlisti},ref={\myarabic{theoremlisti}},itemindent=0pt,labelindent=0pt,
 leftmargin=*,noitemsep}
\renewcommand{\p@theoremlisti}{\perh@ps{\thetheorem}}
\protected\def\perh@ps#1#2{\textup{#1#2}}
\newcommand{\itemrefperh@ps}[2]{\textup{#2}}
\newcommand{\itemref}[1]{\begingroup\let\perh@ps\itemrefperh@ps\ref{#1}\endgroup}
\newcommand{\crefi}[2]{%
 \nameCref{#1}~\hyperref[#2]{%
 \ref*{#1}%
 \begingroup\let\perh@ps\itemrefperh@ps\ref*{#2}\endgroup%
 }%
}
\def\extfile{loop_ampl}
 \def\extfile{loop_ampl_aux}
 \def\XR@ext{.tex}%
\newtheorem{theoremintro}{Theorem}
\newtheorem{theorem}{Theorem}[section]
\newtheorem{lemma}[theorem]{Lemma}
\newtheorem{proposition}[theorem]{Proposition}
\newtheorem{corollary}[theorem]{Corollary}
\newtheorem{problem}[theorem]{Problem}
\newtheorem{question}[theorem]{Question}
\theoremstyle{definition}
\newtheorem{remark}[theorem]{Remark}
\newtheorem{definition}[theorem]{Definition}
\newtheorem{example}[theorem]{Example}
\newtheorem{notation}[theorem]{Notation}
\crefname{figure}{Figure}{Figures}
\def\figref#1(#2){Figure~\hyperref[#1]{\ref*{#1}(#2)}}
\def\Acal{\mathcal{A}}\def\Bcal{\mathcal{B}}\def\Ccal{\mathcal{C}}\def\Hcal{\mathcal{H}}\def\Ical{\mathcal{I}}\def\Lcal{\mathcal{L}}\def\Mcal{\mathcal{M}}\def\Ncal{\mathcal{N}}\def\Ocal{\mathcal{O}}\def\Rcal{\mathcal{R}}\def\Tcal{\mathcal{T}}
\def\C{{\mathbb{C}}}
\def\R{{\mathbb{R}}}
\def\Z{{\mathbb{Z}}}
\def\<{{\langle}}
\def\>{{\rangle}}
\def\eps{{\epsilon}}
\def\id{\operatorname{id}}
\def\det{\operatorname{det}}
\def\Ker{\operatorname{Ker}}
\def\diag{\operatorname{diag}}
\def\rank{\operatorname{rank}}
\def\Span{\operatorname{Span}}
\def\wt{\operatorname{wt}}
\def\RR{{\mathbb R}}
\def\RP{{\RR\mathbb P}}
\def\xrasim{\xrightarrow{\sim}}
\def\GL{\operatorname{GL}}
\def\SL{\operatorname{SL}}
\def\Fl{\operatorname{Fl}}
\def\Mat{\operatorname{Mat}}
\def\Gr{\operatorname{Gr}}
\def\Grtnn{\Gr_{\ge 0}}
\def\Grtp{\Gr_{>0}}
\def\Pio{\Pi^\circ}
\def\Povtp_#1{\Pi_{#1}^{>0}}
\def\Ptp_#1{\Povtp_{#1}}
\def\Povtnn_#1{\Pi_{#1}^{\geq0}}
\def\BND{\Bcal}
\def\Boundkn{\BND(k,n)}
\def\Boundxx(#1,#2){\BND(#1,#2)}
\def\Bounda{\BND_{2,0}(k,n)}
\def\Boundb{\BND_{0,2}(k,n)}
\def\OG{\operatorname{OG}}
\def\OGtnn{\OG_{\ge0}}
\def\Boundkdk{\BND(k,2k)}
\numberwithin{equation}{section}
\def\wtT{\wt_{|\Tcal|}}
\def\fX{f_{X}}
\def\sumw{\alphaT^\circ}
\def\sumb{\alphaT^\bullet}
\def\sumwP_#1{\sumw_{\Pcurve,#1}}
\def\sumbP_#1{\sumb_{\Pcurve,#1}}
\def\Apm{\apm}
\def\wind{\operatorname{wind}}
\def\windskip{\wind_{\hat1\hat2}}
\let\save@mathaccent\mathaccent
\newcommand*\if@single[3]{%
 \setbox0\hbox{${\mathaccent"0362{#1}}^H$}%
 \setbox2\hbox{${\mathaccent"0362{\kern0pt#1}}^H$}%
 \ifdim\ht0=\ht2 #3\else #2\fi
 }
\newcommand*\rel@kern[1]{\kern#1\dimexpr\macc@kerna}
\newcommand*\widebar[1]{\@ifnextchar^{{\wide@bar{#1}{0}}}{\wide@bar{#1}{1}}}
\newcommand*\wide@bar[2]{\if@single{#1}{\wide@bar@{#1}{#2}{1}}{\wide@bar@{#1}{#2}{2}}}
\newcommand*\wide@bar@[3]{%
 \begingroup
 \def\mathaccent##1##2{%
 \let\mathaccent\save@mathaccent
 \if#32 \let\macc@nucleus\first@char \fi
 \setbox\z@\hbox{$\macc@style{\macc@nucleus}_{}$}%
 \setbox\tw@\hbox{$\macc@style{\macc@nucleus}{}_{}$}%
 \dimen@\wd\tw@
 \advance\dimen@-\wd\z@
 \divide\dimen@ 3
 \@tempdima\wd\tw@
 \advance\@tempdima-\scriptspace
 \divide\@tempdima 10
 \advance\dimen@-\@tempdima
 \ifdim\dimen@>\z@ \dimen@0pt\fi
 \rel@kern{0.6}\kern-\dimen@
 \if#31
 \overline{\rel@kern{-0.6}\kern\dimen@\macc@nucleus\rel@kern{0.4}\kern\dimen@}%
 \advance\dimen@0.4\dimexpr\macc@kerna
 \let\final@kern#2%
 \ifdim\dimen@<\z@ \let\final@kern1\fi
 \if\final@kern1 \kern-\dimen@\fi
 \else
 \overline{\rel@kern{-0.6}\kern\dimen@#1}%
 \fi
 }%
 \macc@depth\@ne
 \let\math@bgroup\@empty \let\math@egroup\macc@set@skewchar
 \mathsurround\z@ \frozen@everymath{\mathgroup\macc@group\relax}%
 \macc@set@skewchar\relax
 \let\mathaccentV\macc@nested@a
 \if#31
 \macc@nested@a\relax111{#1}%
 \else
 \futurelet\first@char\gobble@till@marker#1\endmarker
 \ifcat\noexpand\first@char A\else
 \fi
 \macc@nested@a\relax111{\first@char}%
 \fi
 \endgroup
}
\DeclareRobustCommand{\cev}[1]{%
 \mathpalette\do@cev{#1}%
}
\newcommand{\do@cev}[2]{%
 \fix@cev{#1}{+}%
 \reflectbox{$\m@th#1\vec{\reflectbox{$\fix@cev{#1}{-}\m@th#1#2\fix@cev{#1}{+}$}}$}%
 \fix@cev{#1}{-}%
}
\newcommand{\fix@cev}[2]{%
 \ifx#1\displaystyle
 \mkern#23mu
 \else
 \ifx#1\textstyle
 \mkern#23mu
 \else
 \ifx#1\scriptstyle
 \mkern#22mu
 \else
 \mkern#22mu
 \fi
 \fi
 \fi
}
\newlength\arrowheight
\def\bdrypath{\vec\partial_{\E}}
\def\bdrypathf{\vec\partial_{\Ef}}
\def\bdrypathbot{\vec\partial_{\Ebot}}
\def\bdrypathp{\vec\partial_{\E'}}
\def\MatroidG{\Mcal_{\G}}
\def\fhat{f^\vee}
\def\Gvee{\G^\vee}
\def\xllp{\xd'_{\la,\lat}}
\def\Tllp{\Tcal'_{\la,\lat}}
\def\Tllcs{\Tcal_{\lacs,\latcs,C}}
\def\Iji{I_{\bm{(j,i]}}}
\def\StauOm{S(\tauOm)}
\def\StauOmbot{S(\tauOmbot)}
\def\StauOmbotin{S_{\insub}(\tauOmbot)}
\def\StauOmbotout{S_{\outsub}(\tauOmbot)}
\def\Ttaucoef{c^{\ijsepsup}_{\tau,T}}
\def\Ttaucoefxx#1#2{c^{#1\ssep #2}_{\tau,T}}
\def\fgsup{\ff\ssep\f}
\def\TtaucoefOmfgof{c^{\fgsup}_{\Om}\!\!}
\def\TtaucoefOmfgijof{c^{\fgijsup}_{\Om}\!\!\!}
\def\TtaucoefOmeeof{c^{\e_1,e_2}_{\Om}}
\def\BCFWG{\mathbf{\G}^{\BCFWop}}
\def\BCFWGkn{\BCFWG_{k,n}}
\def\BCFWGknL{\BCFWG_{k_L,n_L}}
\def\BCFWGknR{\BCFWG_{k_R,n_R}}
\def\BCFWGkdk{\BCFWG_{k,2k}}
\def\brat[#1|{[#1|}
\def\kett|#1]{|#1]}
\def\bralat[#1|{[#1|_{\lat}}
\def\ketlat|#1]{|#1]_{\lat}}
\def\Ray{{\mathfrak{R}}}
\def\ray{\Tcomp{\Ray}}
\def\rayTO{\Ray}
\def\rayT{\Tcomp{\Ray}}
\def\rayO{\Ocomp{\Ray}}
\def\raypointO_#1{\rayO(\r_{#1})}
\def\raypointTO_#1{\rayTO(\r_{#1})}
\def\Acurve{{\bm{a}}}
\def\AcurveT{\Tcomp{{\bm{a}}}}
\def\Bcurve{{\bm{b}}}
\def\BcurveT{\Tcomp{{\bm{b}}}}
\def\Ccurve{{\bm{c}}}
\def\AcurveL{\Acurve_L}
\def\AcurveTL{\AcurveT_L}
\def\BcurveL{\Bcurve_L}
\def\BcurveTL{\BcurveT_L}
\def\CcurveL{\Ccurve_L}
\def\AcurveR{\Acurve_R}
\def\AcurveTR{\AcurveT_R}
\def\BcurveR{\Bcurve_R}
\def\BcurveTR{\BcurveT_R}
\def\CcurveR{\Ccurve_R}
\def\Dcurve{{\bm{d}}}
\def\DcurveT{\Tcomp{\bm{d}}}
\def\tauT{(\tau,T)}
\def\tauTOm{(\tauOm,\TOm)}
\def\tauOm{\tau_{\Om}}
\def\TOm{T_{\Om}}
\def\tauTOmbot{(\tauOmbot,\TOmbot)}
\def\tauOmbot{\tau_{\Ombot}}
\def\TOmbot{T_{\Ombot}}
\def\tauTOmf{(\tauOmf,\TOmf)}
\def\tauOmf{\tau_{\Omf}}
\def\TOmf{T_{\Omf}}
\def\Gtnn{\GL^{\geq0}_n(\R)}
\def\Otp{\operatorname{O}^{>0}_{2k}(\R)}
\def\Otnn{\operatorname{O}^{\geq0}_{2k}(\R)}
\def\RWint{\Rg^\circ_{\intop}}
\def\RBint{\Rg^\bullet_{\intop}}
\def\helmin{k_{\min}}
\def\rcrit{\outacc{r}}
\def\G{\Gamma}
\def\GD{\Gamma^\ast}
\def\GDp{\Gamma^{\prime\ast}}
\def\vertex{v}
\def\v{\vertex}
\def\vv{u}
\def\fletter{g}
\def\ffletter{f}
\def\face{\fletter^\ast}
\def\ff{\ffletter^\ast}
\def\f{\face}
\def\ffout{\outacc{\ffletter}^{\ast}}
\def\bdryaccent#1{#1^{\partial}}
\def\bdvletter{u}
\def\bdv{\bdvletter^{\partial}}
\def\bdvp{\bdvletter^{\partial\prime}}
\def\bdvx{\tilde \bdvletter^{\partial}}
\def\bdwx{\tilde \wv^{\partial}}
\def\bdvf{\fouracc{\bdvletter}^{\partial}}
\def\bdef{\fouracc{\e}^{\partial}}
\def\bde{\bdryaccent\e}
\def\bdep{\e^{\partial\prime}}
\def\bdf{\ffletter^{\partial\ast}}
\def\Verts{\mathbf{V}}
\def\Faces{\Verts^\ast}
\def\intop{\mathtt{int}}
\def\Vint{\Verts_{\intop}}
\def\Vbd{\Verts_\partial}
\def\Fint{\Faces_{\intop}}
\def\Fbd{\Faces_\partial}
\def\bv{b}
\def\wv{w}
\def\e{e}
\def\east{e^\ast}
\def\E{\mathbf{E}}
\def\East{\mathbf{E}^\ast}
\def\Eint{\E_{\intop}}
\def\Ebd{\E_\partial}
\def\alt{\operatorname{alt}}
\def\altp{\alt^\perp}
\def\lalats{\lalatbf_n}
\def\la{\lambda}
\def\lat{\tilde\lambda}
\def\lalat{(\la,\lat)}
\def\lalak{\lalatbf_{k,n}^+}
\def\lalakdk{\lalatbf_{k,2k}^+}
\def\lalakx_#1{\lalatbf_{#1}^+}
\def\TRIPLES{\lalatbf\mkern-1mu\bm{C}_{k,n}^+}
\def\laCf{\labf\mkern-1mu\bm{C}_{k,n,f}^+}
\def\Meas{\operatorname{Meas}}
\def\BV{\Verts^{\bullet}}
\def\WV{\Verts^{\circ}}
\def\BVint{\BV_{\intop}}
\def\WVint{\WV_{\intop}}
\def\BVbd{\BV_{\partial}}
\def\WVbd{\WV_{\partial}}
\def\Fw{F^\circ}
\def\Fb{\tilde F^\bullet}
\def\I{\mathbf{i}}
\def\gauge{g}
\def\pt{p}
\def\Mandop{S}
\def\Mand_#1(#2){\Mandop_{#2}(#1)}
\def\brxx<#1,#2>{\<#1\,#2\>}
\def\brla<#1,#2>{\brxx<#1,#2>_{\la}}
\def\brlae<#1,#2>{\brxx<#1,#2>_{\laeps}}
\def\brlapr<#1,#2>{\brxx<#1,#2>_{\la'}}
\def\brlat[#1,#2]{[#1\,#2]_{\lat}}
\def\La{\Lambda}
\def\Lat{\tilde\Lambda}
\def\PhiLL{\Phi_{\La,\Lat}}
\def\Pip{\Pi^{>0}}
\def\Grnda{\Grndxx20}
\def\Grndb{\Grndxx02}
\def\twonondeg{$2^\partial$-nondegenerate\xspace}
\def\Qla{Q_{\la}}
\def\Cast{\C^\times}
\def\Rast{\R^\times}
\def\Gcoll{\mathbf{\G}}
\def\latp{\lat^\perp}
\def\alphaT{\Tcomp{\alpha}}
\def\sumwT{\alphaT^\circ}
\def\sumbT{\alphaT^\bullet}
\def\sumwTll{\sumw}
\def\sumbTll{\sumb}
\def\SO{\operatorname{SO}}
\def\Re{\operatorname{Re}}
\def\Im{\operatorname{Im}}
\def\ovlmargin{0.1em}
\def\ovl#1{\hspace{\ovlmargin}\overline{\hspace{-\ovlmargin}#1\hspace{-\ovlmargin}}\hspace{\ovlmargin}}
\def\bt{\mathbf{t}}
\def\tdiag{\bt}
\def\epsK{\varepsilon}
\def\FX_#1(#2){#2(\bdrypath{#1})}
\def\pFw{\partial\Fw}
\def\pFb{\partial\Fb}
\def\pFwl{\partial\Fwl}
\def\pFbl{\partial\Fbl}
\def\APMSop{\Acal}
\def\APMS{\Acal}
\def\APMSGbd(#1){\APMS_{#1}(\G)}
\def\apm{\bm{a}}
\def\by{\bm{\xi}}
\def\byt{\bm{\tilde \xi}}
\def\fp{\f}
\def\RV#1{{\normalfont(R$#1$)}\xspace} %
\def\MV#1{{\normalfont(M$#1$)}\xspace} %
\def\MVbd{\MV{\bdryaccent{1}}}
\def\Itarget{I}
\def\Ir{\Itarget}
\def\Icalr{\Ical}
\def\laext{\la^\circ}
\def\latext{\lat^\bullet}
\def\Chat{\widehat C}
\def\labf{\bm{\la}}
\def\latbf{\bm{\lat}}
\def\lalatbf{\bm{\la\!^\perp\!\!\tilde\la}}
\def\LaLatbf{\bm{\La\!\tilde\La}}
\def\lak{\labf_{k,n}^+}
\def\lakpr{\labf_{k',n}^+}
\def\latk{\latbf_{k,n}^+}
\def\LaLak{\LaLatbf_{k,n}^+}
\def\LaLat{(\La,\Lat)}
\def\Lap{\La^\perp}
\def\LapLat{(\La^\perp,\Lat)}
\def\lakdk{\labf_{k,2k}^+}
\def\fdd{\ddot f}
\def\bdryarcs{\partial^{\mathtt{arcs}}}
\def\nbdryarcs#1{|\bdryarcs#1|}
\def\laextsub{\la}
\def\latextsub{\lat}
\def\brlaw<#1,#2>{\brxx<#1,#2>_{\laextsub}}
\def\brlatb[#1,#2]{[#1\,#2]_{\latextsub}}
\def\brlawf<#1,#2>{\brxx<#1,#2>_{\laextsub}}
\def\brlatbf[#1,#2]{[#1\,#2]_{\latextsub}}
\def\brguess<#1,#2>{\brxx<#1,#2>^{\mathtt{RHS}}}
\def\Kop{\operatorname{K}}
\def\wtK{\Kop}
\def\ddwt{\ddot{\wt}}
\def\ddG{\ddot{\G}}
\def\Sig{\Sigma}
\def\ddfperm{\ddot{\fap}}
\def\ddItarget{\ddot{\Itarget}}
\def\ddIr{\ddItarget}
\def\w{\wv}
\def\b{\bv}
\def\zzpath{\gamma}
\def\s{s}
\def\cshift{\sigma_k}
\def\leq{\leqslant}
\def\geq{\geqslant}
\def\ge{\geqslant}
\def\epsvar{\varepsilon}
\def\GO{\vec \G}
\def\GOp{\vec \G'}
\def\fC{f_{C}}
\def\fG{f_\G}
\def\VandM{M}
\def\slp{\nu}
\def\Rtp{\R_{>0}}
\def\csop{\mathtt{cs}}
\def\Lacs{\La^{\csop}}
\def\Latcs{\Lat^{\csop}}
\def\lacs{\la^{\csop}}
\def\latcs{\lat^{\csop}}
\def\Momcs{\Mcal_{\Lacs,\Latcs}}
\def\Phics{\Phi_{\Lacs,\Latcs}}
\def\Shift{\sigma_k}
\def\Salt{\sigma^{\alt}_k}
\def\Xcs{X^{\csop}_k}
\def\Xcsx#1{X^{\csop}_{#1}}
\def\XcsLa{X^{\csop}_{n-k+2}}
\def\XcsLat{X^{\csop}_{k+2}}
\def\eig{\xi}
\def\eigvec{u}
\def\Eig{E}
\def\bzero{\bm{0}}
\def\Id{\bm{1}}
\def\medmatrix#1#2{\text{\scalebox{#1}{$\left(\begin{matrix}#2\end{matrix}\right)$}}}
\def\Hast{\ast}
\def\fkn{f_{k,n}}
\def\Matroid{\Mcal}
\def\Ibar{\bar I}
\def\Icalbar{\bar\Ical}
\def\cShift{\cshift} %
\def\epsKpr{\epsK'}
\def\fpr{f'}
\def\comp#1{#1^{c}}
\def\Fwl{\Fw_{\la}}
\def\Fbl{\Fb_{\lat}}
\def\brn{\bm{[n]}}
\def\brm{\bm{[m]}}
\def\brnm{\bm{[n-1]}}
\def\brd{\bm{[d]}}
\def\brx#1{\bm{[#1]}}
\def\Iij{I_{\bm{(i,j]}}}
\def\Jijyy_#1{I_{\bm{[#1)}}}
\def\Stau{S(\tau)}
\def\Ttaus{\bm{\tau\!T}}
\def\Ttaukn{\Ttaus_{k,n}}
\def\Ttauknij{\Ttaus^{i\ssep j}_{k,n}}
\def\Lmark{\operatorname{L}}
\def\Rmark{\operatorname{R}}
\def\marking{\mu}
\def\Lmu{L_{\marking}}
\def\Rmu{R_{\marking}}
\def\dmu{d_\mu}%
\def\brk{\brx{k}}
\def\BRLAP<#1>{\<#1\>^\perp}
\def\BRLATP<#1>{\<\tilde{#1}\>}
\def\Ttauij{(\tau,T,i,j)}
\def\Pmom{P}
\def\MPop{\Mcal^{\ambsup}}
\def\lalapp{\MPop_{k,n}}
\def\lalappf{\MPop_f}
\def\lalappg{\MPop_g}
\def\brat[#1|{[#1|}
\def\kett|#1]{|#1]}
\def\bralat[#1|{[#1|_{\lat}}
\def\ketlat|#1]{|#1]_{\lat}}
\def\Tcomp#1{\hat{#1}}
\def\Ocomp#1{\check{#1}}
\def\jo{{j_0}}
\def\io{{i_0}}
\def\jop{{j'_0}}
\def\PmomT{\Tcomp\Pmom}
\def\PmomO{\Ocomp\Pmom}
\def\Qmom{Q}
\def\QmomT{\Tcomp\Qmom}
\def\QmomO{\Ocomp\Qmom}
\def\ycomp#1{#1^{\y}}
\def\ytcomp#1{#1^{\yt}}
\def\Pmomy{\ycomp\Pmom}
\def\Pmomyt{\ytcomp{\Pmom}}
\def\Qmomy{\ycomp\Qmom}
\def\Qmomyt{\ytcomp\Qmom}
\def\r{r}
\def\Rmom{R}
\def\RmomT{\Tcomp\Rmom}
\def\RmomO{\Ocomp\Rmom}
\def\Rmomy{\ycomp\Rmom}
\def\Rmomyt{\ytcomp\Rmom}
\def\Psum_#1{\Pmom_{\bm{(\io,#1]}}}
\def\line_#1{\ell_{#1}}
\def\raypoint_#1{\ray(\r_{#1})}
\def\Sone(#1,#2){\|\xd(#1)-\xd(#2)\|_1}
\def\Phmom{\Pmom'}
\def\Qhmom{\Qmom'}
\def\QhmomT{\QmomT'}
\def\lalatL{(\la_L,\lat_L)}
\def\lalatR{(\la_R,\lat_R)}
\def\Pcurve{\bm{p}}
\def\PcurveT{\Tcomp{\bm{p}}}
\def\PcurveO{\Ocomp{\bm{p}}}
\def\Pbd{\Pcurve^{\partial}}
\def\PbdT{\PcurveT^{\partial}}
\def\PbdO{\PcurveO^{\partial}}
\def\PbdxT{\PbdT_{\xd}}
\def\PbdxO{\PbdO_{\xd}}
\def\Pbdx{\Pbd_{\xd}}
\def\sumD{\alphaT_{\Dcurve}}
\def\sumDa{\alphaT_{\Dcurve_1}}
\def\sumDb{\alphaT_{\Dcurve_2}}
\def\sumwAL{\alphaT_{\AcurveL}^\circ}
\def\sumwBR{\alphaT_{\BcurveR}^\circ}
\def\sumwCL{\alphaT_{\CcurveL}^\circ}
\def\sumwCR{\alphaT_{\CcurveR}^\circ}
\def\sumwD{\alphaT_{\Dcurve}^\circ}
\def\sumwDa{\alphaT_{\Dcurve_1}^\circ}
\def\sumwDb{\alphaT_{\Dcurve_2}^\circ}
\def\sumbAR{\alphaT_{\AcurveR}^\bullet}
\def\sumbBL{\alphaT_{\BcurveL}^\bullet}
\def\sumbCL{\alphaT_{\CcurveL}^\bullet}
\def\sumbCR{\alphaT_{\CcurveR}^\bullet}
\def\sumbD{\alphaT_{\Dcurve}^\bullet}
\def\sumbDa{\alphaT_{\Dcurve_1}^\bullet}
\def\sumbDb{\alphaT_{\Dcurve_2}^\bullet}
\def\LG{T}
\def\LGp{T_+}
\def\LGm{T_-}
\def\LGpm{T_{\pm}}
\def\argp(#1){\arg_{[0,\pi)}(\pm#1)}
\def\pinv{\Theta}
\def\tlai{t^\la_i}
\def\tla_#1{t^\la_{#1}}
\def\btla{\tdiag^\la}
\def\mupt{\nu}
\def\Hyp{H}
\def\HArr{\Acal}
\def\HArrC{\Acal_C}
\def\bHArrC{\overline{\Acal}_C}
\def\chiC{\chi_{\HArrC}}
\def\bbar{\beta}
\def\ba{\bm{a}}
\def\dop{\operatorname{d}\!}
\def\dlog{\operatorname{dlog}}
\def\TO{\xd}
\def\ITO{\Ical[\TO]}
\def\bH{\bar H}
\def\pifl{\pi_{k-2,k+2}}
\def\Gtp{\GL^{>0}_n(\R)}
\def\Tcirc{\mathbb{T}}
\def\dref#1#2{\hyperref[#2]{\ref*{#1}\ref*{#2}}}
\def\Rg{R}
\def\RW{\Rg^\circ}
\def\RB{\Rg^\bullet}
\def\RgWV{\Rg^\circ}
\def\RgBV{\Rg^\bullet}
\def\GR{\G\ind[\Rg]}
\def\helW{k^\circ_\G}
\def\helB{k^\bullet_\G}
\def\helWsub_#1{k^\circ_{#1}}
\def\helBsub_#1{k^\bullet_{#1}}
\def\helWmin{k^\circ_{\min}}
\def\helBmin{k^\bullet_{\min}}
\def\helWp{k^\circ_{\G'}}
\def\degG{\deg_\G}
\def\degGD{\deg_{\GD}}
\def\xd{\bm{x}}
\def\xT{\Tcomp{\xd}}
\def\xO{\Ocomp{\xd}}
\def\xs{x}
\def\xsT{\Tcomp{x}}
\def\xsO{\Ocomp{x}}
\def\ys{y}
\def\ysT{\Tcomp{y}}
\def\leftop{\oldmathtt{L}}
\def\rightop{\oldmathtt{R}}
\def\Lop{\leftop}
\def\Rop{\rightop}
\def\Edges{\mathbf{E}}
\def\outop{\mathtt{out}}
\def\bdx{\xs^\partial}
\def\bdxM{M^\partial}
\def\lalappx_#1{\MPop_{#1}}
\def\xTout{\Tcomp{\xdout}}
\def\xOout{\Ocomp{\xdout}}
\def\xM{M}
\def\ap{a_+}
\def\am{a_-}
\def\CMI{\eta}
\def\lap{\la^\perp}
\def\Qlapp{Q_\la^\vee}
\def\ddC{\ddot{C}}
\def\AA{\Acal^{\ambsup}}
\def\AAkn{\AA_{k-2,n}}
\def\AAkntree{\AAkn}
\def\AAddf{\AA_{\ddfperm}}
\def\var{\operatorname{var}}
\def\brV[#1,#2,#3,#4]{[#1\,#2\,#3\,#4]_V}
\def\brVLnopar[#1,#2,#3]{[#1\,#2\,#3]_{V}}
\def\brVLy[#1,#2]{\brVLnopar[#1,#2,\Liney]}
\def\MPnla{\MPop_{n;\la^\perp}}
\def\AAnla{\AA_{n;(\la)}}
\def\bdxT{\xsT^{\partial}}
\def\bdxO{\xsO^{\partial}}
\def\MPkn{\MPop_{k,n}}
\def\MPkntree{\MPkn}
\def\MPf{\MPop_f}
\def\Grsup^#1{\Gr^{(#1)}}
\def\Grsupla{\Grsup^\la}
\def\ladual{\la^\vee}
\def\Qladual{Q_{\ladual}}
\def\Grtnnsupla{\Gr_{\geq0}^{(\la)}}
\def\flagsup{\mathtt{flag}}
\def\MAop{\Acal^{\flagsup}}
\def\MAkn{\MAop_{k-2,n}}
\def\MAddf{\MAop_{\ddfperm}}
\def\Bounddb{\BND_{0,2}(k-2,n)}
\def\Disk{\mathbb{D}}
\def\ORATITLE{Origami reconstruction algorithm\xspace}
\def\oraTITLE{origami reconstruction algorithm\xspace}
\def\ora{origami reconstruction algorithm\xspace}
\def\tembendings{weak t-embeddings\xspace}
\def\intsup{\diamond}
\def\ebar{\bar\e}
\def\ebarast{\bar{e}^\ast}
\def\Ebarast{\bar{\E}^\ast}
\def\outacc{\tilde}
\def\partEast{\partial_{\East}}
\def\partE{\partial_{\Edges}}
\def\ORst{origami reconstruction step\xspace}
\def\Cyc{\zeta}
\def\adatr{an \datr}
\def\datr{algebraic t-realization\xspace}
\def\datrs{algebraic t-realizations\xspace}
\def\Rtpgauge{\Rtp^{|\Faces|-1}}
\def\RtpVint{\Rtp^{|\Vint|}}
\def\RtpE{\Rtp^{|\Edges|}}
\def\pmoneVint{\{\pm1\}^{|\Vint|}}
\def\decacc#1{#1}
\def\talgop{\mathtt{ATR}}
\def\Mdatr{\decacc\Mcal_{\talgop}}
\def\Neigh{\operatorname{N}}
\def\NeighG{\Neigh_{\G}}
\def\XV{X}
\def\XW{\XV^{\circ}}
\def\XB{\XV^{\bullet}}
\def\Rdd{\R^{2,2}}
\def\ind[#1]{[#1]}
\def\sconn{simply connected\xspace}
\def\WKmat{\Kop^{\circ}}
\def\BKmat{\Kop^{\bullet}}
\def\hasMbd{has M-positive boundary\xspace}
\def\Mbd{M-positive boundary\xspace}
\def\MbdTITLE{Mandelstam-positive boundary\xspace}
\def\weakMbd{weakly M-positive boundary\xspace}
\def\datrQ{(\wt,\epsK,\Fw,\Fb,\xd)}
\def\datrQf{(\wtf,\epsKf,\Fwf,\Fbf,\xdGf)}
\def\datrQnox{(\wt,\epsK,\Fw,\Fb)}
\def\datrQL{\bm{T}}
\def\datrQLll{\bm{T}_{\la,\lat}}
\def\datrQll{(\wt,\epsK,\Fw,\Fb,\xll)}
\def\datrQLf{\fouracc{\datrQL}}
\def\quintuple{quintuple\xspace}
\def\fouracc#1{\widetilde{#1}}
\def\Gf{\fouracc{\G}}
\def\Gfint{\Gf_{\intop}}
\def\GDf{\fouracc{\G}^\ast}
\def\Ef{\fouracc{\Edges}}
\def\Vf{\fouracc{\Verts}}
\def\Vfint{\fouracc{\Verts}_{\intop}}
\def\Vfbd{\fouracc{\Verts}_\partial}
\def\WVf{\fouracc{\Verts}^\circ}
\def\BVf{\fouracc{\Verts}^\bullet}
\def\BVfint{\fouracc{\Verts}^\bullet_{\intop}}
\def\WVfint{\fouracc{\Verts}^\circ_{\intop}}
\def\wtf{\fouracc{\wt}}
\def\wtKf{\fouracc{\wtK}}
\def\Cf{\fouracc{C}}
\def\epsKf{\fouracc{\epsK}}
\def\kf{\fouracc{k}}
\def\nf{\fouracc{n}}
\def\Fwf{F^\circ}
\def\Fbf{\tilde F^\bullet}
\def\xdGf{\fouracc{\xd}}
\def\laextf{\la^\circ}
\def\latextf{\lat\vphantom{\lat}^\bullet}
\def\Omf{\fouracc{\Om}}
\def\Ombot{\botacc{\Om}}
\def\Omfvec{\Omf^{\uparrow}}
\def\Omfvecbw{\Omf^{\bullet\to\circ}}
\def\Omvecbw{\Om^{\bullet\to\circ}}
\def\epsOmf{\varepsilon_{\mathtt{alt}}}
\def\apmwf{\fouracc{\apm}^\circ}
\def\apmbf{\fouracc{\apm}^\bullet}
\def\Cut{\xi}
\def\CutP{\xi_{\Path}}
\def\CutPb{\xi_{\Pathb}}
\def\CutPw{\xi_{\Pathw}}
\def\CutPwp{\xi_{\Pathwp}}
\def\constw{c^\circ}
\def\constb{c^\bullet}
\def\Grem#1{\G\rem #1}
\def\Gfrem#1#2{\Gf\rem\{#1,#2\}}
\def\Gfirem#1#2{\Gfint\setminus\{#1,#2\}}
\def\Gfiww{\Gfirem{\w_1}{\w_2}}
\def\Gfbb{\Gfrem{\b_1}{\b_2}}
\def\Giww{\G\rem\{\w_1,\w_2\}}
\def\Gbb{\G\rem\{\b_1,\b_2\}}
\def\Hwspace_#1{\Hcal_{#1}^\circ}
\def\Hbspace_#1{\Hcal_{#1}^\bullet}
\def\Hwbspace_#1{\Hcal_{#1}^{\circ\bullet}}
\def\HHspaceKV{\Hwbspace_{\Vspace}\HtripK}
\def\HHspaceKC{\Hwbspace_{\C}\HtripK}
\def\HHspaceKCf{\Hwbspace_{\C}\HtripKf}
\def\HHspaceKRd{\Hwbspace_{\R^2}\HtripK}
\def\HHspaceKRdf{\Hwbspace_{\R^2}\HtripKf}
\def\Omegas{\bm{\Omega}}
\def\Om{\Omega}
\def\Oms{\Omegas}
\def\ncycop(#1){|\operatorname{cyc}_{\geq4}(#1)|}
\def\ncycOm{\ncycop(\Om)}
\def\ncycOmf{\ncycop(\Omf)}
\def\ssep{\|}
\def\OmsG{\Oms_{\G}}
\def\OmsGtauT{\Oms_{\G}\tauT}
\def\OmsGbot{\botacc{\Oms}_{\Gbot}}
\def\OmsGf{\fouracc{\Oms}_{\Gf}}
\def\OmsGfij{\OmsGf^{\ijsepsup}}
\def\OmsGij{\OmsG^{\ijsepsup}}
\def\Pathbd{\Path^\partial}
\def\bdbd{boundary-to-boundary\xspace}
\def\bweps{\delta}
\def\orUD_#1{\sigma_{#1}(\Omfvec)}
\def\epsOm{\varepsilon_{\mathtt{alt}}}
\def\Pathw{\Path^\circ}
\def\Pathwp{\Path^{\circ\prime}}
\def\Pathb{\Path^\bullet}
\def\apmw{\apm^\circ}
\def\apmb{\apm^\bullet}
\def\bisy{\zeta}
\def\bisyt{\tilde\zeta}
\def\CtoM[#1]{(#1)_{+}}
\def\CtoMt[#1]{(#1)_{-}}
\def\Matdnr{\Mat_{2,n}(\R)}
\def\GLp{\GL_2^+(\R)}
\def\Matddr{\Mat_{2,2}(\R)}
\def\ambsup{\mathtt{flip}}
\def\topacc#1{#1}
\def\botacc#1{\breve{#1}}
\def\Gtop{\topacc{\G}}
\def\kbot{\botacc{k}}
\def\Gbot{\botacc{\G}}
\def\nbot{\botacc{n}}
\def\Vbot{\botacc{\Verts}}
\def\WVbot{\botacc{\Verts}^\circ}
\def\BVbot{\botacc{\Verts}^\bullet}
\def\BVintbot{\botacc{\Verts}^\bullet_{\intop}}
\def\WVintbot{\botacc{\Verts}^\circ_{\intop}}
\def\Fbot{\botacc{\Verts}^\ast}
\def\apmbot{\botacc{\apm}}
\def\apmbotp{\botacc{\apm}^+}
\def\apmbotm{\botacc{\apm}^-}
\def\apmbotpm{\botacc{\apm}^\pm}
\def\ffbot{\botacc{\ff}}
\def\ffbotout{\botacc{\ffletter}^\ast_{\outop}}
\def\Ebot{\botacc{\Edges}}
\def\Gbotp{\Gbot^+}
\def\Gbotm{\Gbot^-}
\def\Gbotpm{\Gbot^\pm}
\def\wttop{\topacc{\wt}}
\def\wtbot{\botacc{\wt}}
\def\wtbotK{\botacc{\wtK}}
\def\wtbotp{\botacc{\wt}\vphantom{\wt}'}
\def\epsKtop{\topacc{\epsK}}
\def\epsKbot{\botacc{\epsK}}
\def\Etop{\topacc{\Edges}}
\def\bdvtop{\topacc{\bdv}}
\def\bdebotin{\botacc{\e}^{\partial\insub}}
\def\bdfbotin{\botacc{\ffletter}^{\partial\ast\insub}}
\def\bdfbotout{\botacc{\ffletter}^{\partial\ast\outsub}}
\def\Cbot{\botacc{C}}
\def\Cbotp{\Cbot^+}
\def\Cbotm{\Cbot^-}
\def\Cbotpm{\Cbot^\pm}
\def\bapletter{f}
\def\fap{\bapletter}
\def\wtemb{weak t-embedding\xspace}
\def\wtembs{\tembendings}
\def\SuppGD{|\GD|}
\def\oac{origami-amplituhedron correspondence\xspace}
\def\OACTITLE{Origami-amplituhedron correspondence\xspace}
\def\oacTITLE{origami-amplituhedron correspondence\xspace}
\def\Mdash{M-}
\def\MdashTITLE{Mandelstam-}
\def\wdash{$\circ$-}
\def\bdash{$\bullet$-}
\def\wclosed{\wdash closed\xspace}
\def\bclosed{\bdash closed\xspace}
\def\Rgs{\bm{R}}
\def\clsub{\mathtt{cl}}
\def\WNEIg{\mathrlap{\overline{\phantom{\Rgs^{\circ}}}}\Rgs^{\circ}_{\clsub}}
\def\BNEIg{\mathrlap{\overline{\phantom{\Rgs^{\bullet}}}}\Rgs^{\bullet}_{\clsub}}
\def\WNEI{\Rgs^{\circ}_{\clsub}}
\def\BNEI{\Rgs^{\bullet}_{\clsub}}
\def\kw{k^\circ}
\def\kb{k^\bullet}
\def\rem{\setminus}
\def\KSprim{Kenyon--Smirnov primitive\xspace}
\def\KSprims{Kenyon--Smirnov primitives\xspace}
\def\Vspace{\mathbb{U}}
\def\HtripK{(\G,\wtK)}
\def\HtripKp{(\G',\wtK')}
\def\HtripKf{(\Gf,\wtKf)}
\def\HtripKbotp{(\Gbotp,\wtbotK)}
\def\HtripKbotm{(\Gbotm,\wtbotK)}
\def\Kawangle{Kawasaki angle\xspace}
\def\Deltaw{\Delta^\circ}
\def\Deltab{\Delta^\bullet}
\def\Ann{\botacc{\mathbb{A}}}
\def\insub{\mathtt{in}}
\def\outsub{\mathtt{out}}
\def\nin{\botacc{n}_{\insub}}
\def\nout{\botacc{n}_{\outsub}}
\def\bdvin{\botacc{\bdvletter}^{\partial\insub}}
\def\bdvout{\botacc{\bdvletter}^{\partial\outsub}}
\def\partin{\partial_{\insub}}
\def\partout{\partial_{\outsub}}
\def\Annin{\partin\Ann}
\def\Annout{\partout\Ann}
\def\Iin{I_{\insub}}
\def\Iout{I_{\outsub}}
\def\Iinp{I_{\insub}^+}
\def\Iinm{I_{\insub}^-}
\def\Iinpm{I_{\insub}^\pm}
\def\Ioutp{I_{\outsub}^+}
\def\Ioutm{I_{\outsub}^-}
\def\Ioutpm{I_{\outsub}^\pm}
\def\Arep{$\Ann$-representable\xspace}
\def\Areped{$\Ann$-represented\xspace}
\def\epstra{\delta}
\def\latextalt{\latextf_{\alt}}
\def\Fltpsymbol{\Fl_{>0}}
\def\Fltpn(#1,#2){\Fltpsymbol(#1,#2;n)}
\def\Fln(#1,#2){\Fl(#1,#2;n)}
\def\Xmat{X}
\def\Latrest{\Lat_{\mathtt{rest}}}
\def\twosep{$2$-separated\xspace}
\def\Twosep{$2$-separated\xspace}
\def\mtpref{momentum-twistor\xspace}
\def\mta{\mtpref amplituhedron\xspace}
\def\varxxx{\var_{123\ast}}
\def\varx{\var_{1\ast}}
\let\oldmu\mu
\def\mu{\oldmu}
\def\BCFWop{\mathtt{BCFW}}
\def\BCFWf{\bm{\fap}^{\BCFWop}}
\def\BCFWfkn{\BCFWf_{k,n}}
\def\BBf{\bm{\fap}}
\def\ddBBf{\ddot{\bm{\fap}}}
\def\BBfkn{\BBf_{k,n}}
\def\ddBBfkn{\ddBBf_{k-2,n}}
\def\ddRel{\ddot \Rel}
\def\mtiling{$\Rel$-tiling\xspace}
\def\Mtiling{$\Rel$-Tiling\xspace}
\def\mtilings{$\Rel$-tilings\xspace}
\def\amtiling{an $\Rel$-tiling\xspace}
\def\Reliso{\Rphitop}
\def\RelAAforg{(\id,\AAforg)}
\def\pperf{p}
\def\ddpperf{\ddot{p}}
\def\MomLL{\MomLLproj}
\def\projsup{\mathtt{proj}}
\def\Mproj{\Mcal^{\projsup}}
\def\Aproj{\Acal^{\projsup}}
\def\MomLLproj{\Mproj_{k,n}(\La,\Lat)}
\def\AZproj{\Aproj_{k-2,n}(Z)}
\def\MomLLprojf{\Mproj_{\fap}(\La,\Lat)}
\def\MomLLprojG{\Mproj_{\G}(\La,\Lat)}
\def\AZprojddf{\Aproj_{\ddfperm}(Z)}
\def\MomLLamb{\Mcal^{\ambsup}_{k,n}(\La,\Lat)}
\def\AZamb{\Acal^{\ambsup}_{k-2,n}(Z)}
\def\clMomLLamb{\overline{\Mcal^{\ambsup}_{k,n}(\La,\Lat)}}
\def\clAZamb{\overline{\Acal^{\ambsup}_{k-2,n}(Z)}}
\def\PsiZ{\Phi_Z}
\def\LaLaimmp{\LaLatbf_{k,n}^{\mathtt{imm>0}}}
\def\LaLaimmnn{\LaLatbf_{k,n}^{\mathtt{imm\geq0}}}
\def\LaLaimmnndk{\LaLatbf_{k,2k}^{\mathtt{imm\geq0}}}
\def\LaLafl{\LaLatbf_{k,n}^{\mathtt{flag}}}
\def\LaLaflp{\LaLatbf_{k,n}^{\mathtt{flag>0}}}
\def\LaLaArep{\LaLatbf_{k,n}^{\Ann}}
\def\fullysep{fully \twosep}
\def\isfullysep{is \fullysep}
\def\rhoij{\rho_{i,j}}
\def\xll{\xd_{\la,\lat}}
\def\xTll{\xT_{\la,\lat}}
\def\xOll{\xO_{\la,\lat}}
\def\Tll{\xTll}
\def\Oll{\xOll}
\def\Tcal{\xT}
\def\Ocal{\xO}
\def\T{\xT}
\def\O{\xO}
\def\y{\bisy}
\def\yt{\bisyt}
\def\paptwo#1{\cite[#1]{origami2}\xspace}
\def\justpaptwo{\cite{origami2}\xspace}
\def\Lamid{\widehat{\La}}
\def\mat[#1]{(#1)}
\let\oldmathtt\mathtt
\def\ttscl{0.85}
\def\mathtt#1{\text{\scalebox{\ttscl}{$\oldmathtt{#1}$}}}
\def\Nref#1{\paptwo{\cref*{N-#1}}}
\def\special{separating\xspace}
\def\Iijspec{$(\bdf_i,\bdf_j)$-\special}
\def\botoutacc#1{\botacc{#1}^{\partial\outsub}}
\def\OmsGbotcoef{\botacc{\Oms}^{\ijsepsup}_{\tau,T}}
\def\bbdv{\bm{\bdvletter}^\partial}
\def\bbdvij{\bbdv_{\bm{(i,j]}}}
\def\bbdvji{\bbdv_{\bm{(j,i]}}}
\def\gp{\gamma}
\def\gl{\ell}
\def\windGL{\wind_{\mathtt{GL}}}
\def\GLPath{\gp}
\def\windRd{\wind}
\def\glrot{\rho}
\def\gm{g}
\def\gmpol{\bm{\gm}}
\def\gmpoll{\gl_{\gmpol}}
\def\winding{winding\xspace}
\def\TURN{\operatorname{turn}}
\def\PllO{\PbdO_{\la,\lat}}
\def\PllT{\PbdT_{\la,\lat}}
\def\Pll{\Pbd_{\la,\lat}}
\def\muof(#1){\mu^{#1}}
\def\muiof_#1(#2){\mu_{#1}^{#2}}
\def\windaround(#1,#2){\wind(#1,#2)}
\def\Liney{\Lcal_{\ys}}
\def\Arg{\operatorname{Arg}}
\def\LVVp{\mat[V_1|V_2]^\perp}
\def\laeps{\la^{\eps}}
\def\laieps_#1{\la_{#1}^{\eps}}
\def\Veps{V^{\eps}}
\def\muot{\mu_{03}}
\def\yes{\eps\ys}
\def\yesM{\eps\xM_{\ys}}
\def\muepsy{\muof(\yes)}
\def\muepsyi_#1{\muiof_{#1}(\yes)}
\def\subev{\mathtt{ev}}
\def\subodd{\mathtt{odd}}
\def\oend{parity-nondegenerate\xspace}
\def\brnodd{\brx{2k}_{\subodd}}
\def\brnev{\brx{2k}_{\subev}}
\def\DP{D}
\def\Groe{\Grtnn^{\mathtt{par}}}
\def\AAkdk{\AA_{k-2,2k}}
\def\ABJMsup{\mathtt{ABJM}}
\def\MBJ{\Mcal^{\ABJMsup}}
\def\MBJk{\MBJ_{k,2k}}
\def\RelBJ{\Rel_{\ABJMsup}}
\def\ReloBJ_#1{\Rel^{\intsup}_{\ABJMsup}(#1)}
\def\BJG{\mathbf{\G}^{\ABJMsup}}
\def\BJGs{\BJG_{k,2k}}
\def\BJfs{\bm{f}^{\ABJMsup}_{k,2k}}
\def\brkk{\brx{2k}}
\def\BJO{\Tcomp{O}}
\def\JIS{\r}
\def\QuadT{\Tcomp{\bm{q}}}
\def\sech{\operatorname{sech}}
\def\tanh{\operatorname{tanh}}
\def\Circ{\Ccal}
\def\BJDO{\Ocomp{D}}
\def\BJa{a}
\def\BJb{b}
\def\BJc{c}
\def\BJd{d}
\def\BJx{x}
\def\BJA{A}
\def\BJB{B}
\def\BJC{C}
\def\BJD{D}
\def\BJX{X}
\def\CircX{\Circ_{|\BJx|}}
\def\CircB{\Circ_{\BJb}}
\def\CircD{\Circ_{\BJd}}
\def\CircC{\Circ_{|\BJc|}}
\def\invCX{\CircX^{-1}}
\def\invCB{\CircB^{-1}}
\def\invCC{\CircC^{-1}}
\def\invCD{\CircD^{-1}}
\def\BJDofr{\rayT(\r)}
\def\BJDofx(#1){\rayT(#1)}
\def\invCXj{\Circ_{\BJx_j}^{-1}}
\def\GIS{G^{\mathtt{Ising}}}
\def\Mijll{(\bdx_i-\bdx_j)^2}
\def\Mxxll#1#2{(\bdx_{#1}-\bdx_{#2})^2}
\def\ssgen{single-step generic\xspace}
\def\msgen{fully generic\xspace}
\def\twondsup{\mathtt{2^\partial}\text{-}\mathtt{nd}}
\def\Grnd{\Gr^{\twondsup}_{\geq0}}
\def\Grndxx#1#2{\Gr^{\mathtt{(#1,#2)}^\partial\text{-}\mathtt{nd}}_{\geq0}}
\def\GrndAB{\Gr^{\text{\scalebox{\ttscl}{$(a,b)^\partial$}}\text{-}\mathtt{nd}}_{\geq0}}
\def\APMnoacr{almost perfect matching\xspace}
\def\APM{APM\xspace}
\def\APMs{APMs\xspace}
\def\Omfg{(\Om,\ff,\f)}
\def\Omfgij{(\Om,\bdf_i,\bdf_j)}
\def\markings{\bm{\marking}}
\def\musOmfg{\markings^{\fgsup}_{\Om}}
\def\musOmee{\markings^{\e_1,\e_2}_{\Om}}
\def\dmufg{d_\mu}
\def\dOmee{d^{\;\e_1,\e_2}_\Om}
\def\ijsepsup{i\ssep j}
\def\fgijsup{\bdf_i\ssep\bdf_j}
\def\hasnofloat{is boundary-connected\xspace}
\def\hasnofloatAPM{\hasnofloat and admits an \APM}
\def\fo{\ffout}
\def\xdout{\xd}
\def\xdoutll{\xdout_{\la,\lat}}
\def\xToutll{\xTout_{\la,\lat}}
\def\xdoutllp{\xdout_{\la,\lat}'}
\def\Stack{\operatorname{Glue}}
\def\Tmark{\operatorname{T}}
\def\Emark{\emptyset}
\def\Path{\zzpath}
\def\Pathdir{\Path}
\def\onebdfour{p}
\def\onebdfoursum{p(\f)}
\def\DeltaGfiww{\Deltaw_{\Gfiww}}
\def\DeltaGfbb{\Deltab_{\Gfbb}}
\def\trT{\Tcomp{\xs}}
\def\MPkdk{\MPop_{k,2k}}
\def\mtL_#1{\Lcal(#1)}
\def\perfsup{\mathtt{perf}}
\def\MPerfkdk{\Mcal_{k,2k}^{\perfsup}}
\def\MPerff{\Rel_{\fap}^{\perfsup}}
\def\APerfkdk{\Acal_{k-2,2k}^{\perfsup}}
\def\APerff{\ddRel_{\ddfperm}^{\perfsup}}
\def\flAPerff{\ddRel_{\ddfperm}^{\perfsup,\flagsup}}
\def\BCFWfkdk{\BCFWf_{k,2k}}
\def\trpi{p}
\def\supA{1}
\def\supB{2}
\def\supC{3}
\def\supS{\s}
\def\RGbf{\bm{\fap}}
\def\RG{\fap}
\def\RGg{g}
\def\RelA{\Rel_{\supA}}
\def\RelB{\Rel_{\supB}}
\def\RelC{\Rel_{\supC}}
\def\RelS{\Rel_{\supS}}
\def\Rclacc{\widetilde}
\def\RXcl{\Rclacc{X}}
\def\RYcl{\Rclacc{Y}}
\def\Wcl{\Rclacc{W}}
\def\RPhicl{\Rclacc{\Phi}}
\def\Relcl{\Rclacc{\Rcal}}
\def\RX{X}
\def\RY{Y}
\def\W{W}
\def\Rel{\Rcal}
\def\RPhi{\Phi}
\def\RXint{\RX^{\intsup}}
\def\RYint{\RY^{\intsup}}
\def\Relint{\Rel^{\intsup}}
\def\Relo_#1{\Relint(#1)}
\def\Wint{\W^{\intsup}}
\def\RXA{\RX_{\supA}}
\def\RXB{\RX_{\supB}}
\def\RXC{\RX_{\supC}}
\def\RXS{\RX_{\supS}}
\def\RXAcl{\RXcl_{\supA}}
\def\RXCcl{\RXcl_{\supC}}
\def\RYAcl{\RYcl_{\supA}}
\def\RYCcl{\RYcl_{\supC}}
\def\WAcl{\Wcl_{\supA}}
\def\WCcl{\Wcl_{\supC}}
\def\RelAcl{\Relcl_{\supA}}
\def\RelCcl{\Relcl_{\supC}}
\def\RYA{\RY_{\supA}}
\def\RYB{\RY_{\supB}}
\def\RYC{\RY_{\supC}}
\def\RYS{\RY_{\supS}}
\def\Rproj{\trpi}
\def\RprojY{\trpi_{\RY}}
\def\RprojX{\trpi_{\RX}}
\def\RprojB{\Rproj_{\RYB}}
\def\RprojC{\Rproj_{\RYC}}
\def\RprojS{\Rproj_{\RYS}}
\def\RprojoB_#1{\Rproj_{\RYoB_{#1}}}
\def\RprojoC_#1{\Rproj_{\RYoC_{#1}}}
\def\RprojoY_#1{\Rproj_{\RYo_{#1}}}
\def\RprojoS_#1{\Rproj_{\RYoS_{#1}}}
\def\Rphitop{\widehat\phi}
\def\Rphibot{\phi}
\def\Rpi{\pi}
\def\ReloA_#1{\Relint_{\supA}(#1)}
\def\ReloB_#1{\Relint_{\supB}(#1)}
\def\ReloC_#1{\Relint_{\supC}(#1)}
\def\ReloS_#1{\Relint_{\supS}(#1)}
\def\RXo_#1{\RXint(#1)}
\def\RXocl_#1{\RXcl(#1)}
\def\RYo_#1{\RYint(#1)}
\def\RXoA_#1{\RXint_{\supA}(#1)}
\def\RXoAcl_#1{\RXcl_{\supA}(#1)}
\def\RXoB_#1{\RXint_{\supB}(#1)}
\def\RXoC_#1{\RXint_{\supC}(#1)}
\def\RXoS_#1{\RXint_{\supS}(#1)}
\def\RYoA_#1{\RYint_{\supA}(#1)}
\def\RYoB_#1{\RYint_{\supB}(#1)}
\def\RYoC_#1{\RYint_{\supC}(#1)}
\def\RYoS_#1{\RYint_{\supS}(#1)}
\def\Wo_#1{\Wint(#1)}
\def\WoA_#1{\Wint_{\supA}(#1)}
\def\WoC_#1{\Wint_{\supC}(#1)}
\def\amtilingA{an $\RelA$-tiling\xspace}
\def\amtilingB{an $\RelB$-tiling\xspace}
\def\amtilingC{an $\RelC$-tiling\xspace}
\def\amtilingBJ{an $\RelBJ$-tiling\xspace}
\def\mtilingsB{$\RelB$-tilings\xspace}
\def\mtilingsC{$\RelC$-tilings\xspace}
\def\WA{\W_{\supA}}
\def\WC{\W_{\supC}}
\def\RGraph{\operatorname{G}}
\def\RPhiA{\RPhi_{\supA}}
\def\RPhiC{\RPhi_{\supC}}
\def\RPhiAcl{\RPhicl_{\supA}}
\def\RPhiCcl{\RPhicl_{\supC}}
\def\supX#1{%
 \ifcase#1\relax
 \or \supA %
 \or \supB %
 \or \supC %
 \fi
}
\def\Rtila#1{\hyperref[Rtiling1]{(\supX{#1}a)}}
\def\Rtilb#1{\hyperref[Rtiling2]{(\supX{#1}b)}}
\def\Rtilc#1{\hyperref[Rtiling3]{(\supX{#1}c)}}
\def\twoind{fully $2$-\independent}
\def\independent{independent\xspace}
\def\fsepsup{\mathtt{2}\text{-}\mathtt{sep}}
\def\Grfsep{\Gr^{\fsepsup}_{\geq0}}
\def\ddfsepsup{\mathtt{2}\text{-}\mathtt{ind}}
\def\Grddfsep{\Gr^{\ddfsepsup}_{\geq0}}
\def\BNDfsep{\BND^{\fsepsup}}
\def\bdconn{boundary-connected\xspace}
\def\AAiso{\Rphibot}
\def\AAshift{\overline{\Rphibot}}
\def\AAforg{\Rpi}
\def\Jij{J_{i,j}}
\def\Jijc{\comp{J_{i,j}}}
\def\rtw{\vec\tau}
\def\ltw{\cev\tau}
\def\omf{\omega_{\fap}}
\def\tripsup{\mathtt{tripod}}
\def\suptripsup{^{\tripsup}}
\def\epsww_#1{\epsK_{#1}\suptripsup}
\def\epsbb_#1{\epsK_{#1}\suptripsup}
\def\epswwf_#1{\fouracc{\epsK}_{#1}^{\,\;\tripsup}}
\def\epsbbf_#1{\fouracc{\epsK}_{#1}^{\,\;\tripsup}}
\def\epsKfp{\epsKf\vphantom{\epsK}^{\;\prime}}
\def\wtfp{\wtf\vphantom{\wt}'}
\def\nfloat{\operatorname{float}}
\def\dwcor{d^{\circ}}
\def\Amat{A}
\def\Atmat{\tilde A}
\def\regangle{\alpha^{\csop}}
\def\epsOmPathw{(-1)^{\<\Om,\Cut\>}}
\def\source{s}
\def\tink{t}
\def\Gro{\Gr^{\diamond}}
\def\perfY{y}
\def\perfZ{z}
\def\perfYT{\Tcomp{\perfY}}
\def\perfYO{\Ocomp{\perfY}}
\def\perfZT{\Tcomp{\perfZ}}
\def\perfZO{\Ocomp{\perfZ}}
\def\ABJW{ABJM-BCFW\xspace}
\def\ddr{\ddot r}
\begin{document}
\numberwithin{equation}{section}

\thinmuskip=3mu 

\title{Amplituhedra and origami, I: tree level}
\author{Pavel Galashin}
\address{Department of Mathematics, Cornell University, Ithaca, NY 14850, USA}
\email{{\href{mailto:galashin@cornell.edu}{galashin@cornell.edu}}}
\address{Department of Mathematics, University of California, Los Angeles, CA 90095, USA}
\email{{\href{mailto:galashin@math.ucla.edu}{galashin@math.ucla.edu}}}
\thanks{P.G.\ was supported by the National Science Foundation under Grant No.~DMS-2046915.}
\date{\today}

\subjclass[2020]{ 
 81T13, %
 82B20. %
}

\keywords{Momentum amplituhedron, Mandelstam variables, BCFW recursion, origami crease pattern, t-embedding, dimer model}

\begin{abstract}
We establish a precise correspondence between points of the $m=4$ tree momentum amplituhedron and origami crease patterns. 
As an application, we prove that the BCFW (Britto--Cachazo--Feng--Witten) cells triangulate the $m=4$ tree amplituhedron both in momentum space and in momentum-twistor space. As another application, we show that every nondegenerate weighted planar bipartite graph $\Gamma$ admits a t-embedding, i.e., an embedding of the planar dual of $\Gamma$ such that the sum of angles of white (equivalently, black) faces around each vertex is equal to $\pi$.
\end{abstract}

\maketitle

\setcounter{tocdepth}{1}
\hypersetup{bookmarksdepth=subsection}
\tableofcontents

\begin{figure}
\includegraphics[scale=1.3]{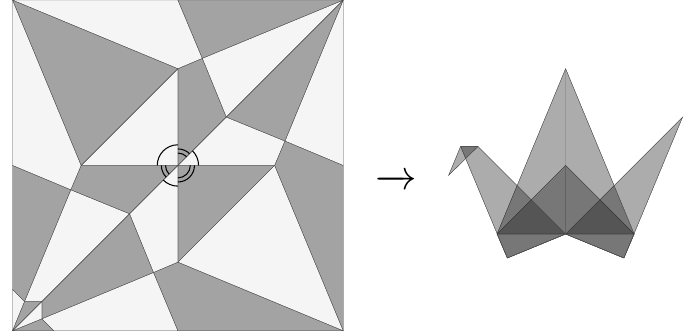}
\caption{\label{fig:orig}An origami crease pattern $\Tcal(\GD)$ (left) and its image $\Ocal(\GD)$ under the origami map (right). 
The \Kawangle condition is satisfied at each interior vertex of $\Tcal(\GD)$, which allows it to be folded consistently. 
Figure reproduced from~\cite[Figure~1]{Hull}.}
\end{figure}

\section*{Introduction}%

The goal of this work is to relate the physics of scattering amplitudes in $\Ncal=4$ supersymmetric Yang--Mills theory to origami crease patterns. In this paper, we focus on the case of \emph{tree} amplitudes; in a companion paper~\justpaptwo, we extend our results to the case of \emph{loop} amplitudes.

We begin with a brief overview of the physics side. 
Given $n$ gluons with incoming momenta $\Pmom_1,\Pmom_2,\dots,\Pmom_n$, each with either positive or negative helicity $h_i\in\{\pm1\}$, one can calculate the associated \emph{scattering amplitude} $A(\Pmom_1,\Pmom_2,\dots,\Pmom_n)$ using Feynman diagrams. A much more efficient way is to use the \emph{BCFW recursion}~\cite{BCFW} to compute $A(\Pmom_1,\Pmom_2,\dots,\Pmom_n)$ inductively. 
 The intermediate steps of the recursion are substantially more complicated than the final answer, and involve a large amount of cancellation and spurious singularities. Moreover, there are multiple ways to run the recursion, resulting in nontrivial identities between the terms. It was proposed in~\cite{AHT}, building on the ideas of Hodges~\cite{Hodges}, that the individual terms of the recursion correspond to ``volumes'' of a collection of pieces that ``triangulate'' a geometric object called the \emph{amplituhedron}, and that the scattering amplitude represents the total ``volume'' of the amplituhedron. They defined the amplituhedron for all integer parameters $m$, with the case $m=4$ being the one relevant to physics. Throughout this paper, we focus on the $m=4$ amplituhedron.

The original definition~\cite{AHT} of the amplituhedron is set up in the \emph{momentum-twistor space}, where the BCFW tiling prediction was recently confirmed in~\cite{ELT,ELPTSBW}. The problem has remained wide open in the momentum space (which is where the particle momenta $\Pmom_1,\Pmom_2,\dots,\Pmom_n$ naturally live).
In our first main result, we provide a simultaneous solution to both problems; see \cref{thm:intro:BCFW,lemma:proj_tiling} for precise statements.
\renewcommand{\thetheoremintro}{A}
\begin{theoremintro}\label{thm:intro:A}
The BCFW cells triangulate the $m=4$ tree amplituhedron, both in momentum space and in momentum-twistor space.
\end{theoremintro}
\noindent In momentum space, the above result applies only to \emph{Mandelstam-positive} (tree) momentum amplituhedra, where the planar Mandelstam variables $(\Pmom_i + \Pmom_{i+1} + \cdots + \Pmom_j)^2$ take positive values. It was conjectured in~\cite{DFLP,FL} that such momentum amplituhedra exist, and we confirm this conjecture in \cref{thm:intro:Mand_pos_exists}. Mandelstam variables appear in the denominators of scattering amplitudes, so it is natural to impose their positivity when defining the amplituhedron.

We deduce \cref{thm:intro:A} from a more general BCFW tiling result (\cref{thm:f_triang}) for \emph{ambient} momentum amplituhedra. We develop \emph{T-duality} for ambient amplituhedra in \cref{sec:TREE}, extending the results of~\cite{abcgpt,LPW}, and use it to show that the BCFW tiling conjecture for ambient momentum amplituhedra is equivalent to that for ambient momentum-twistor amplituhedra. This allows us to deduce \cref{thm:intro:A} for both types of amplituhedra simultaneously; see \cref{lemma:ddf_triang,lemma:proj_tiling}. 
In particular, the BCFW tiling results of~\cite{ELT,ELPTSBW} follow from \cref{thm:f_triang} (but not vice versa; see \cref{rmk:ELT_ELPTSBW}). We remark that our proof strategy is very different from that of~\cite{ELT,ELPTSBW}.
For example, our approach allows us to show the equivalence of the \emph{linear projection}~\cite{AHT} and the \emph{sign flip}~\cite{AHTT} definitions of the amplituhedron, confirming a conjecture of~\cite{AHTT}. 
On the other hand, the \emph{cluster adjacency} results of~\cite{ELPTSBW} remain quite mysterious from our point of view. 

An \emph{origami crease pattern}~\cite{Hull,KLRR,CLR1} is a $2$-dimensional cell complex $\Tcal(\GD)$ in the plane such that the faces of $\Tcal(\GD)$ are convex polygons colored black and white, and such that the \emph{\Kawangle condition}~\cite{Kawasaki,Hull} is satisfied: for each interior vertex $\f$ of $\Tcal(\GD)$, the sum of angles of white faces around $\f$ equals the sum of angles of black faces around $\f$, with both sums equal to $\pi$. See \figref{fig:orig}(left) and \cref{dfn:intro:t_imm}. The \Kawangle condition ensures that the crease pattern is \emph{flat-foldable}, meaning there exists a piecewise-linear map of the plane (called the \emph{origami map}) that is an isometry on each face of $\Tcal(\GD)$ preserving (resp., reversing) the orientation of all white (resp., black) faces of $\Tcal(\GD)$. See \figref{fig:orig}(right). Such origami crease patterns were studied under the name \emph{circle patterns} in~\cite{KLRR} and \emph{t-embeddings} in~\cite{CLR1}, generalizing the work of Smirnov~\cite{Smirnov_cluster} and Chelkak--Smirnov~\cite{CheSmi} on the Ising model.%

A t-embedding of a \emph{weighted} planar bipartite graph $(\G,\wt)$ is an embedded origami crease pattern $\Tcal(\GD)$ planar dual to $\G$ 
 such that the edge weights $\wt$ are gauge equivalent to the Euclidean geometric edge weights $\wtT$ associated with $\Tcal$; see~\eqref{eq:intro:wtT_dfn}. One may view a t-embedding as an explicit geometric realization of $(\G,\wt)$ in the plane. This is particularly useful for studying convergence questions on sequences of such graphs. In particular, rich asymptotic results on the convergence of various discrete observables to conformally invariant limits (such as the Gaussian Free Field) have been obtained for t-embeddings and related objects in~\cite{Chelkak_ICM,Chelkak_s_emb,KLRR,CLR1,CLR2,BNR}. 

However, one fundamental problem in the theory of t-embeddings has not been addressed: their existence is not known in general. 
Only the special case when $\G$ has $n=4$ boundary vertices was considered in~\cite{KLRR} (although see \cref{rmk:KLRR}).
\renewcommand{\thetheoremintro}{B}
\begin{theoremintro}\label{thm:intro:B}
Any sufficiently nondegenerate weighted planar bipartite graph $(\G,\wt)$ admits a t-embedding.
\end{theoremintro}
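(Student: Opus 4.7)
The plan is to derive Theorem~\ref{thm:intro:B} from the correspondence between points of momentum amplituhedra and origami crease patterns that is the main technical result of the paper. The guiding idea is that a t-embedding of $(\G, \wt)$ is precisely an origami crease pattern $\Tcal$ with $\Tcal^* = \G$ and with $\wtT$ in the gauge class of $\wt$, and that the amplituhedron-to-origami construction should be flexible enough to realize every such $\Tcal$.

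First, I would use the boundary measurement map to encode $(\G, \wt)$ as a point $X \in \Pi_{\G}^{>0}$ inside the open positroid cell of $\Grtnn(k,n)$ indexed by the strand permutation of $\G$. This encoding is faithful up to the natural gauge action at interior black and white vertices, so it records exactly the data that a t-embedding is required to reconstruct. Next, I would select external data $(\La, \Lat)$ such that all planar Mandelstam variables are nonnegative on $\PhiLL(\Pi_\G^{>0})$; the existence of such data is \cref{thm:intro:Mand_pos_exists}. Under this hypothesis the image $\PhiLL(X)$ lies in the Mandelstam-positive momentum amplituhedron, and the main origami correspondence of the paper associates to it a concrete crease pattern $\Tcal_X$ in the plane. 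Because the correspondence is set up so that the positroid stratification governs the face structure of the origami pattern, one gets $\Tcal_X^* = \G$ as planar bipartite graphs.

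The heart of the argument, and the step I expect to be the principal obstacle, is to verify that the Euclidean geometric weights $\wtT$ defined by $\Tcal_X$ lie in the gauge class of the prescribed $\wt$. I would attack this by comparing two dimer-type formulas: the boundary measurement expression of $X$ as a sum over almost perfect matchings of $(\G, \wt)$, and the formula for the crease-edge vectors of $\Tcal_X$, which in the origami framework is also governed by sums over matchings with weights depending on $X$ and $(\La, \Lat)$. After factoring out the combined white/black gauge action on both sides, the two expressions should agree edge by edge, reducing the claim to a local algebraic identity at each face of $\G$.

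The subtle point is to ensure that \emph{every} gauge class of $\wt$ can be realized, rather than only a particular representative. I would resolve this by combining two inputs: the freedom in rescaling $(\La, \Lat)$ built into \cref{thm:intro:Mand_pos_exists}, and a dimension count matching the number of independent edge weights of $\G$ modulo gauge with the dimension of $\Pi_\G^{>0}$ together with the rescaling parameters of the external data. If the map from $(X, \La, \Lat)$ to the gauge class of $\wtT$ is surjective onto all admissible gauge classes on $\G$, then $X$ can be chosen to reproduce the prescribed $\wt$ and the resulting $\Tcal_X$ is the desired t-embedding.
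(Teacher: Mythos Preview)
Your overall outline is on the right track: take $C=\Meas(\G,\wt)$, choose suitable $(\La,\Lat)$, set $(\la,\lat)=\PhiLL(C)$, and invoke the bijection of \cref{thm:intro:t_imm_vs_triples} to produce $\Tll$. However, you have misidentified where the difficulty lies.

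The step you call ``the heart of the argument''---verifying that $\wtT$ is gauge equivalent to $\wt$---is not an obstacle at all; it is built into the Kenyon--Smirnov primitive construction. By~\eqref{eq:intro:primitive_T}, each edge of $\Tll$ has length $|\Fwl(\wv)|\cdot \wt(\e)\cdot |\Fbl(\bv)|$, so $\wtT$ is gauge equivalent to $\wt$ via $g(\v)=|F(\v)|$. There is no need for matching-sum comparisons, dimension counts, or surjectivity arguments; the t-immersion is constructed directly as a map $\GD\to\C$ for the specific graph $(\G,\wt)$ you started with, and condition~\itemref{intro:t_imm_gauge} of \cref{dfn:intro:t_imm} holds by construction. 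Your entire final two paragraphs address a non-problem.

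What you are missing are the two actual obstacles. First, the bijection of \cref{thm:intro:t_imm_vs_triples} only yields a t-\emph{immersion}; upgrading this to a t-\emph{embedding} requires showing that the boundary polygon is non-self-intersecting (\cref{lemma:Jordan_curve}). The paper does this via Mandelstam positivity and the $1$-Lipschitz property of the origami map (\cref{lemma:intro:Mand_vs_norm}, \cref{lemma:non_self_intersecting_Sigma}). Second, Mandelstam-\emph{nonnegativity} of $\MomLL$ as in \cref{thm:intro:Mand_pos_exists} is not by itself sufficient to handle lower-dimensional cells $\Ptp_f\subset\Grnd(k,n)$, where some Mandelstam variables may vanish; the paper instead proves the stronger \emph{immanant-positivity} of $(\La,\Lat)$ when $(\La^\perp,\Lat)\in\Fltp(k-2,k+2)$ (\cref{thm:imm_pos_Fltp}) and uses this via \cref{lemma:imm_pos_Mand_pos} and \cref{prop:imm_pos_t_emb} to rule out boundary self-intersections even for degenerate $C$.
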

\noindent The nondegeneracy condition here is that $\helmin(\G)\geq2$ for a certain \emph{surplus} function $\helmin$ introduced in~\eqref{eq:surplus_dfn}. This class of graphs includes \emph{reduced} planar bipartite graphs of~\cite{Pos}; see \cref{rmk:reduced_helmin2}. For graphs $\G$ satisfying a weaker assumption $\helmin(\G)\geq1$ (i.e., such that every edge of $\G$ appears in an almost perfect matching), we introduce \emph{\wtembs} and prove their existence in~\justpaptwo. As we explain in~\justpaptwo, the condition $\helmin(\G)\geq2$ (resp., $\helmin(\G)\geq1$) is necessary in order for $\G$ to admit a t-embedding (resp., a \wtemb). 

Our proof of Theorems~\ref{thm:intro:A} and~\ref{thm:intro:B} relies on a bijection (called the \emph{\oacTITLE}; see \cref{thm:intro:t_imm_vs_triples}) between t-embeddings and points in the momentum amplituhedron. Every edge $\xT(\e)$ of $\Tcal(\GD)$ is a $2$-dimensional vector, and its image $\xO(\e)$ under the origami map is also a $2$-dimensional vector of the same length. Together, they form a $4$-dimensional vector $\Pmom(\e)$ which is \emph{null}, i.e., has zero norm in the Minkowski space $\R^{2,2}$. The null vectors $\Pmom(\bde_1),\Pmom(\bde_2),\dots,\Pmom(\bde_n)$ associated to the boundary edges of $\Tcal(\GD)$ correspond precisely to the particle momenta $\Pmom_1,\Pmom_2,\dots,\Pmom_n$ discussed above. 

We formulate the bijection using the spinor-helicity formalism. Namely, we show that t-embeddings (more precisely, \emph{t-immersions} introduced in \cref{dfn:intro:t_imm}) are in bijection with triples $\la\subset C\subset\latp$, where $\lalat$ is a pair of perpendicular $2$-planes in $\R^n$ satisfying a sign flip condition, and $C\in\Grtnn(k,n)$ is a totally nonnegative $k$-plane in $\R^n$; cf.~\cite{Lus2,Pos}. In the terminology of~\cite{KLRR,CLR1}, the $2$-planes $\la$ and $\lat$ extend to complex-valued \emph{discrete holomorphic functions}~\cite{Kenyon} defined on white and black vertices of $\G$, respectively, which allows one to construct the associated t-embedding. 

One may express~\cite{AHCC,abcgpt} the scattering amplitude as a certain integral over the space of triples $(\la,\lat,C)$ satisfying the above conditions. The null vectors $\Pmom(\e)$ are precisely the parameters associated to the edges of $\G$ in the integrals computing the scattering amplitudes as well as the more general \emph{on-shell functions}; see e.g.~\cite[Equation~(2.39)]{abcgpt}. 
Thus, one can think of the scattering amplitude $A(\Pmom_1,\Pmom_2,\dots,\Pmom_n)$ as a certain integral over the space of origami crease patterns. We leave this direction for future work; see \cref{rmk:omega_form}.

\subsection*{Changelog}
We highlight the major changes compared to the first version of this paper.
{
 \setlength{\leftmargini}{20pt}
\begin{itemize}
\item Replaced the original proof of the \oacTITLE (\cref{thm:intro:t_imm_vs_triples}) 
which relied on the twist map of~\cite{MuSp}
with a much simpler argument using double-dimer model formulae, and extended it to not necessarily reduced planar bipartite graphs (\crefrange{sec:dimer_KS}{sec:proof_main_bij}).
\item Interpreted the momentum amplituhedron map $C\mapsto \PhiLL(C)$ of~\cite{DFLP} in terms of gluing an annular planar bipartite graph representing $\LaLat$ to the boundary of a planar bipartite graph representing $C$
(\cref{fig:intro-annular} and \cref{ssec:annular}).
\item Developed \emph{T-duality} for amplituhedra and used it to prove that the BCFW tiling conjectures for ambient momentum and momentum-twistor amplituhedra are equivalent (\cref{sec:TREE}).
\item Added a proof of equivalence of the sign flip and linear projection definitions of the amplituhedron (\crefi{lemma:proj_tiling}{proj_tiling3}).
\item Moved all previous results on T-duality for planar bipartite graphs to~\justpaptwo.
\item Introduced \emph{T-dual perfect t-embeddings} and proved their existence and uniqueness (\cref{ssec:T_dual_perfect}).
\item Outlined how our results can be extended to study the \emph{ABJM amplituhedron} of~\cite{Huang_Wen} and \emph{s-embeddings} of~\cite{Chelkak_s_emb} (\cref{sec:ABJM}). 
\end{itemize}
}

\section{Main results}\label{sec:main_results}

We give precise statements of our main results. We start by covering some background material; see \cref{sec:backgr} for further details. 
\subsection{Background on total positivity and the dimer model}\label{sec:intro:plabic}
Let $(\G,\wt)$ be a weighted bipartite graph embedded in a disk $\Disk$ with $n$ boundary vertices of degree $1$ denoted $\bdv_1,\bdv_2,\dots,\bdv_n$ in clockwise order. We assume that the edge weights $\wt:\E\to\R_{>0}$ are positive real numbers, where $\E$ is the edge set of $\G$. Let $\GD$ be the planar dual of $\G$, and let $\bdf_1,\bdf_2,\dots,\bdf_n$ be the boundary faces of $\G$ such that $\bdf_i$ is located between $\bdv_i$ and $\bdv_{i+1}$. We always label boundary vertices and faces modulo $n$ so that $\bdv_{i+n}=\bdv_i$ and $\bdf_{i+n}=\bdf_i$.

Let $\Verts$ and $\Faces$ denote the sets of vertices and faces of $\G$, respectively. We write $\Verts = \Vint\sqcup \Vbd$ and $\Faces = \Fint \sqcup \Fbd$, where $\Vbd := \{\bdv_1,\bdv_2,\dots,\bdv_n\}$, $\Fbd := \{\bdf_1,\bdf_2,\dots,\bdf_n\}$, $\Vint :=\Verts\setminus\Vbd$, and $\Fint:=\Faces\setminus\Fbd$.
 We have a natural bijection $\e\mapsto \east$ between $\E$ and the edge set $\East$ of $\GD$.

Throughout, we denote $\brn:=\{1,2,\dots,n\}$ and ${\brn\choose k}:=\{J\subset\brn: |J| = k\}$. 

The weighted graph $(\G,\wt)$ carries a dimer model. Namely, an \emph{\APMnoacr (\APM)} of $\G$ is a collection $\Apm\subset\E$ of edges covering each interior (resp., boundary) vertex of $\G$ exactly once (resp., at most once). We let $\APMS(\G)$ be the set of \APMs of $\G$, and for $\apm\in\APMS(\G)$, let
$\partial\Apm\subset\brn$ be the set of indices $i$ such that either $\bdv_i$ is black and used in $\Apm$ or $\bdv_i$ is white and not used in $\Apm$. 
There is an integer $k=k(\G)$ such that $|\partial\Apm| = k$ for any $\Apm\in\APMS(\G)$, and we say that $\G$ is \emph{of type $(k,n)$}. For $I\in{\brn\choose k}$, 
 we set $\APMSGbd(I):=\{\apm\in\APMS(\G):\ \partial\apm=I\}$ and 
\begin{equation}\label{eq:intro:Delta_G_wt}
 \Delta_I(\G,\wt) := \sum_{\Apm\in\APMSGbd(I)} \wt(\Apm),\quad\text{where}\quad \wt(\Apm) := \prod_{\e\in\Apm} \wt(\e).
\end{equation}
\noindent Throughout this paper, $(\G,\wt)$ denotes a weighted planar bipartite graph of type $(k,n)$ embedded in a disk. For the rest of this section, we assume that $\G$ admits at least one \APM.

Let $\BV$, $\BVint$, $\WV$, $\WVint$ denote the sets of black, interior black, white, and interior white vertices of $\G$, respectively. 
Given a subset $\Rg\subset\Verts$, we let $\RgWV:=\Rg\cap\WV$ and $\RgBV:=\Rg\cap\BV$. 
We say that $\Rg\subset\Verts$ is \emph{\wclosed} if $\RgBV\subset\BVint$ and $\NeighG(\RgBV)\subset\RgWV$, where $\NeighG(\RgBV)\subset\WV$ denotes the neighborhood of $\RgBV$ in $\G$. We let $\WNEIg(\G)$ be the set of \wclosed subsets in $\G$. 
For $\Rg\in\WNEIg(\G)$, we set $\helW(\Rg):=|\RgWV| - |\RgBV|$. We set $\WNEI(\G):=\{\Rg\in\WNEIg(\G)\mid \RgBV\neq\emptyset\}$. 
We similarly define the set $\BNEIg(\G)$ of \emph{\bclosed} subsets and set $\helB(\Rg):=|\RgBV| - |\RgWV|$ for $\Rg\in\BNEIg(\G)$. We denote
\begin{equation}\label{eq:DIM:helWmin_dfn}
 \helWmin(\G) := \min\{\helW(\Rg)\mid \Rg\in\WNEI(\G)\},
 \quad 
 \helBmin(\G) := \min\{\helB(\Rg)\mid \Rg\in\BNEI(\G)\}, \quad\text{and}
\end{equation}
\begin{equation}\label{eq:surplus_dfn}
  \helmin(\G):=\min(\helWmin(\G),\helBmin(\G)).
\end{equation}
\noindent The quantity $\helmin(\G)$ is called the \emph{surplus} of $\G$; see~\cite[Section~1.3]{Lovasz_Plummer}.
Thus, for example, we have $\helmin(\G)\geq0$ if and only if $\G$ admits an \APM and $\helmin(\G)\geq1$ if and only if every edge of $\G$ appears in some \APM; see \cref{ssec:surplus} for further discussion. 
By \cref{rmk:reduced_helmin2} below, each (connected) \emph{reduced} planar bipartite graph of~\cite{Pos} satisfies $\helmin(\G)\geq1$, and each such graph satisfies $\helmin(\G)\geq2$ after contracting all edges incident to degree-$2$ vertices.

The \emph{Grassmannian} $\Gr(k,n)$ is the space of linear $k$-dimensional subspaces of $\R^n$; equivalently, it is the space of row spans of $k\times n$ matrices of rank $k$. We usually identify such matrices with their row spans. For a $k\times n$ matrix $C\in\Gr(k,n)$, we denote by $\Delta_I(C)$ its maximal $k\times k$ minor with column set $I\in{\brn\choose k}$. The \emph{totally nonnegative Grassmannian} $\Grtnn(k,n)$ is the subset of $\Gr(k,n)$ consisting of $k\times n$ matrices $C$ such that all nonzero maximal minors $\Delta_I(C)$ have the same sign. 
When $\G$ admits an \APM, 
 there exists a unique $k$-plane $C\in \Grtnn(k,n)$, denoted $\Meas(\G,\wt)$, such that $\Delta_I(\G,\wt) = \Delta_I(C)$ for all $I\in{\brn\choose k}$. 

We say that $C=\mat[C_1|C_2|\cdots|C_n]\in\Gr(k,n)$ is \emph{\twonondeg} if for all $i\in\brn$, 
$\rank\mat[C_i|C_{i+1}] = 2$ and $\rank\mat[C_{i+2}|\dots|C_{i+n-1}] = k$. Here, we set $C_{i+n}=(-1)^{k-1}C_i$ for all $i\in\brn$. Similarly, a graph $\G$ is \emph{\twonondeg} if for all $i\in\brn$, $\G$ admits \APMs $\apm_+,\apm_-$ such that $i,i+1\in\partial\apm_+$ and $i,i+1\notin \partial\apm_-$. 
 It follows from standard properties of the boundary measurement map that $\G$ is \twonondeg if and only if for some (equivalently, any) $\wt:\Edges\to\Rtp$, $C:=\Meas(\G,\wt)$ is \twonondeg. 
We denote
\begin{equation}\label{eq:intro:Grnd_dfn}
 \Grnd(k,n):=\{C\in\Grtnn(k,n)\mid C\text{ is \twonondeg}\}.%
\end{equation}

\subsection{T-embeddings and t-immersions}\label{sec:intro:t_emb}
We consider a $2$-dimensional cell complex $\SuppGD$ embedded in a disk $\Disk$, with vertex set $\Faces$, edge set $\East$, and face set $\Vint$.%

Given a map $\Tcal:\SuppGD\to\C$ such that the image of each face of $\GD$ is a convex polygon, 
we refer to the images $\Tcal(\Faces),\Tcal(\East),\Tcal(\Vint)$ as the \emph{vertices}, \emph{edges}, and \emph{faces of $\Tcal(\GD)$}, respectively. The faces of $\Tcal$ are naturally colored black and white. We also define the geometric edge weights
\begin{equation}\label{eq:intro:wtT_dfn}
 \wtT(\e) := |\Tcal(\f_1)-\Tcal(\f_2)| \quad\text{for all $\e\in\E$ with $\east = \{\f_1,\f_2\}\in\East$.}
\end{equation}

We say that $\wt,\wt':\E\to\R_{>0}$ are \emph{gauge equivalent} if there exists a function $\gauge: \Verts\to \R_{>0}$ such that $\gauge(\bdv_1) = \gauge(\bdv_2) = \dots = \gauge(\bdv_n) = 1$ and $\wt'(\e) = \gauge(\wv)\wt(\e)\gauge(\bv)$ for any edge $\e=\{\wv,\bv\}\in\E$.

\begin{definition}\label{dfn:intro:sum_angles}
For $\f\in\Faces$, we let $\sumwT(\f)$ (resp., $\sumbT(\f)$) denote the sum of angles at $\Tcal(\f)$ of all white (resp., black) faces of $\Tcal(\GD)$ incident to $\Tcal(\f)$. 
\end{definition}

\begin{definition}[\cite{KLRR,CLR1}]\label{dfn:intro:t_imm}
Assume that $\G$ is connected. 
A \emph{t-immersion} of $(\G,\wt)$ is 
a map 
$\Tcal:\SuppGD \to \C$
 such that the following conditions are satisfied.
\begin{enumerate}[label=(TE\arabic*)]
\item\label{intro:t_imm_straight_convex} $\Tcal(\east)$ is a straight line segment of nonzero length for each $\east\in\East$.
\item\label{intro:t_imm_orientation} \emph{Immersion condition}: for each face $\v\in\Vint$ of $\GD$ of degree $d$, $\Tcal(\v)$ is a convex $d$-gon\footnote{
We slightly relax this condition when $\G$ has parallel edges; see \cref{dfn:DIM:R1}.
} and the restriction of $\Tcal$ to the face $\v$ is an orientation-preserving homeomorphism.
\item\label{intro:t_imm_gauge} The edge weights $\wt$ and $\wtT$ are gauge equivalent.
\item\label{intro:t_imm_angles_int} \emph{\Kawangle condition}: for each $\f\in\Fint$,
\begin{equation}\label{eq:intro:t_imm_angles_int}
 \sumwT(\f) = \sumbT(\f) = \pi.
\end{equation}
\item\label{intro:t_imm_angles_bdry} \emph{Boundary angle condition}: for each $\bdf_i\in\Fbd$,
\begin{equation}\label{eq:intro:t_imm_angles_bdry}
 0<\sumwT(\bdf_i)<\pi \quad\text{and}\quad 0<\sumbT(\bdf_i) < \pi.
\end{equation}
\end{enumerate}
A \emph{t-embedding} of $(\G,\wt)$ is a t-immersion of $(\G,\wt)$ which is injective as a map $\SuppGD\to\C$.
\end{definition}
See \figref{fig:t-imm}(left) for an example of a t-embedding and \figref{fig:t-imm}(right) for an example of a t-immersion that is not a t-embedding. One can show that any t-immersion whose boundary is a \emph{simple} (i.e., non-self-intersecting) closed polygonal chain is a t-embedding; see \cref{lemma:Jordan_curve}.

\begin{figure}
\def\scl{1.95}
\begin{tabular}{cc}
\includegraphics[scale=\scl]{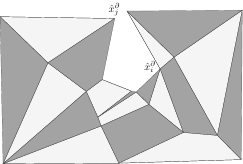}
&
\includegraphics[scale=\scl]{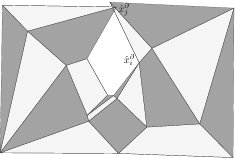}
\end{tabular}
 \caption{\label{fig:t-imm} A t-embedding (left) and a t-immersion (right).}
\end{figure}

A t-immersion may be alternatively viewed as a \emph{circle pattern}~\cite{KLRR}: one can draw a circle centered at each vertex $\Tcal(\f)$ of $\Tcal(\GD)$ in such a way that for each face $\xT(\v)$ of $\Tcal(\GD)$, the circles centered at all vertices of the polygon $\Tcal(\v)$ have a common intersection point. See e.g.~\cite[Figure~2]{KLRR}.

\subsection{\OACTITLE}
Let 
\begin{equation}\label{eq:intro:LALATS}
 \lalats:=\{\lalat\in\Gr(2,n)\times\Gr(2,n)\mid \la \perp \lat\}
\end{equation}
denote the \emph{kinematic space}, i.e., the space of pairs of perpendicular $2$-planes in $\R^n$. The condition $\la\perp\lat$ is referred to as \emph{momentum conservation}.

Fix $2\leq k\leq n-2$. Given a pair $\lalat\in\lalats$ of $2$-planes (viewed as $2\times n$ matrices), we denote their columns by $\la_1,\la_2,\dots,\la_n\in \R^2$ and $\lat_1,\lat_2,\dots,\lat_n\in\R^2$, respectively. We extend these sequences to $(\la_i)_{i\in\Z}$ and $(\lat_i)_{i\in\Z}$ periodically using the \emph{twisted cyclic symmetry}: $\la_{i+n} = (-1)^{k-1}\la_i$ and $\lat_{i+n} = (-1)^{k-1}\lat_i$ for all $i\in\Z$. For $i,j\in\Z$, we introduce the brackets 
\begin{equation*}%
 \brla<i,j>:=\det\mat[\la_i|\la_j] \quad\text{and}\quad \brlat[i,j]:=\det\mat[\lat_i|\lat_j]. 
\end{equation*}

When the columns of a $2\times n$ matrix $\la$ are all nonzero and $\la_i$ is not antiparallel to $\la_{i+1}$ (i.e., $\la_{i+1}\notin\R_{<0}\cdot \la_i$) for all $i\in\brn$, we define %
\begin{equation}\label{eq:intro:wind}
\wind(\la) := \sum_{i=1}^n \Arg_{(-\pi,\pi]}(\la_i,\la_{i+1})
\end{equation}
to be the total turning angle of the column vectors of $\la$ around the origin in the counterclockwise direction, where $\Arg_{(-\pi,\pi]}(\la_i,\la_{i+1})$ denotes the angle between $\la_i$ and $\la_{i+1}$. 
 Since $\la_{n+1}=(-1)^{k-1}\la_1$, $\wind(\la)$ equals $(k-1)\pi$ modulo $2\pi$. 
 Let 
\begin{equation}\label{eq:intro:LALAK}
 \lalak := \left\{\lalat\in\lalats\middle| \text{\begin{tabular}{l}
 $\brla<i,i+1> >0$ and $\brlat[i,i+1]>0$ for all $i\in\brn$,\\
 $\wind(\la) = (k-1)\pi$, and $\wind(\lat) = (k+1)\pi$
 \end{tabular}
 }\right\}.
\end{equation}
The latter two conditions are usually referred to as $\lalat$ having \emph{correct sign flips}~\cite{AHTT}. 

A space of interest in particle physics is the space of \emph{triples}
\begin{equation}\label{eq:intro:TRIPLES}
 \TRIPLES := \{(\la,\lat,C)\in\lalak\times\Grtnn(k,n)\mid \la \subset C \subset \lat^\perp\}.
\end{equation}

As we discuss in \cref{rmk:intro:Minkowski} below, a t-immersion $\xT:\SuppGD\to\C$ may be naturally viewed as a $2$-dimensional piecewise-linear surface $\xd:\SuppGD\to\Rdd$ inside the $4$-dimensional Minkowski spacetime\footnote{
Here, the ``space'' and ``time'' components of $\Rdd$ represent the origami crease pattern and its folding, respectively. 
 Following a suggestion of Stanislav Smirnov, we refer to these as \emph{kami} and \emph{origami} planes (Japanese for ``paper'' and ``folded paper''), respectively.} $\R^{2,2}$ with signature $(+,+,-,-)$. 
The isometry group $\SO(2,2)$ of \emph{Lorentz transformations} therefore acts naturally on the space of t-immersions. This corresponds to the left action of $\SL_2(\R)\times \SL_2(\R)$ on pairs $\lalat$ of $2\times n$ matrices (which does not change $\lalat$ as an element of $\Gr(2,n)\times\Gr(2,n)$). 

Our first main result is a correspondence between t-immersions and elements of $\TRIPLES$. 
\begin{theorem}[\OACTITLE]\label{thm:intro:t_imm_vs_triples}
Assume that $\G$ is connected and satisfies $\helmin(\G)\geq2$. 
Let $C:=\Meas(\G,\wt)\in\Grtnn(k,n)$. Then t-immersions of $(\G,\wt)$ (viewed up to rescaling and Lorentz transformations) are in bijection with pairs $\lalat\in\lalak$ such that $\la \subset C \subset \lat^\perp$. %
\end{theorem}

\noindent See \cref{sec:intro:holomorphic} for the explicit construction of the bijection in \cref{thm:intro:t_imm_vs_triples} using the theory of discrete holomorphic functions on $\G$ (cf.~\cite{CLR1,KLRR}). 
Given $C = \Meas(\G,\wt)\in\Grtnn(k,n)$, 
this construction associates a map $\Tll:\SuppGD\to\C$ to any pair $\lalat\in\lalats$ satisfying $\la\subset C \subset \latp$. 
The bulk of the proof is devoted to showing that if 
we additionally have $\lalat\in\lalak$ then $\Tll$ is indeed a t-immersion and, in particular, preserves the orientations of all faces of $\GD$. 

\subsection{Momentum amplituhedron}\label{sec:intro:mom}
Let $\Grtp(k,n):=\{C\in\Grtnn(k,n)\mid \Delta_I(C)>0\text{ for all $I$}\}$. 
For a subspace $C\in\Gr(k,n)$, we let $C^\perp\in\Gr(n-k,n)$ be the orthogonal complement of $C$. Let $\alt:\Gr(k,n)\to\Gr(k,n)$ be the map on matrices changing the sign of every second column. 
 We have $C\in\Grtnn(k,n)$ if and only if $\alt(C^\perp)\in\Grtnn(n-k,n)$; see \cref{sec:backgr:cyc}. 

Let $2\leq k\leq n-2$. Define
\begin{equation}\label{eq:intro:LaLat}
\LaLak := \alt(\Grtp(n-k+2,n)) \times \Grtp(k+2,n).
\end{equation}
Fix $\LaLat\in\LaLak$. Define the \emph{momentum amplituhedron map}
\begin{equation}\label{eq:intro:PhiLL_dfn}
 \PhiLL: \Grtnn(k,n)\to \lalats,\quad C \mapsto (C\cap \La,C^\perp\cap \Lat).
\end{equation}
The intersections $\la := C\cap \La$ and $\lat := C^\perp\cap \Lat$ are always $2$-dimensional. 
Moreover, if $C\in\Grnd(k,n)$ then $\PhiLL(C)\in\lalak$; see~\cite{DFLP} and \cref{prop:momLL_basic}. 

It follows that for any $C\in\Grnd(k,n)$, there exists a pair $\lalat\in\lalak$ (namely, $\lalat := \PhiLL(C)$) such that $\la\subset C\subset \latp$. Combining this with \cref{thm:intro:t_imm_vs_triples}, we get the following result.
\begin{corollary}
Assume that $\G$ is connected, \twonondeg, and satisfies $\helmin(\G)\geq2$. Then $(\G,\wt)$
 admits a t-immersion for all $\wt:\Edges\to\Rtp$.
\end{corollary}
\noindent This provides a proof of \cref{thm:intro:B} modulo the difference between t-immersions and t-embeddings. See \cref{cor:intro:t_emb_exists} for a full solution for t-embeddings.

\begin{figure}
\def\inputscl{1.7}
\setlength{\tabcolsep}{1pt}
\begin{tabular}{ccc}
\includegraphics[scale=\inputscl]{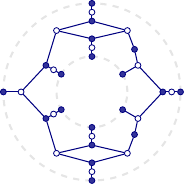}
&
\includegraphics[scale=\inputscl]{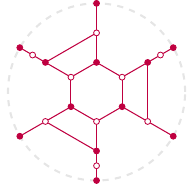}
&
\includegraphics[scale=\inputscl]{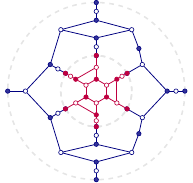}
\\
(a) Annular graph $\Gbot$ & (b) $\G$ & (c) $\Gf = \Stack(\Gbot,\G)$
\end{tabular}
 \caption{\label{fig:intro-annular} 
When $\LaLat$ is \Arep, the momentum amplituhedron map $\PhiLL$ corresponds 
to gluing an annular graph $\Gbot$ to the boundary of $\G$; see \cref{ssec:annular}.
}
\end{figure}

In \cref{ssec:annular}, we introduce a subset of \emph{\Arep} pairs $\LaLat$. For such $\LaLat$,
 $\PhiLL$ corresponds to gluing an \emph{annular bipartite graph} $\Gbot$ representing $\LaLat$ to the boundary of $\G$; see \cref{fig:intro-annular}.

\begin{remark}\label{rmk:KLRR}
Conceptually, the operation in \cref{fig:intro-annular} reduces the problem of constructing t-embeddings to the class of graphs $\Gf$ of type $(\kf,\nf)=(2,4)$ (although making this reduction fully rigorous requires significant care). The case $(\kf,\nf)=(2,4)$ was considered in~\cite[Theorem~2]{KLRR}. However, we found this approach problematic as the maps produced by the proof of~\cite[Theorem~2]{KLRR} 
 need not be injective on the faces of $\Gf$ (and may, for example, send all faces of $\G$ to a single point). See \Nref{ex:non-einj} for further discussion. 
\end{remark}

\begin{definition}[{\cite{DFLP}}]\label{dfn:intro:MomLL}
Given fixed $(\La,\Lat)\in\LaLak$,
 the \emph{(tree) momentum amplituhedron} is defined
 as the image
\begin{equation}\label{eq:intro:MomLL_dfn}
 \MomLL := \PhiLL(\Grtnn(k,n)).
\end{equation}
\end{definition}
\begin{remark}\label{rmk:dim_MomLL}
The dimension of $\MomLL$ is $2n-4$: generically, there are $2(n-k) = \dim\Gr(2,\La)$ degrees of freedom to choose $\la\subset\La$ and $2k = \dim\Gr(2,\Lat)$ degrees of freedom to choose $\lat \subset \Lat$, subject to $4$ constraints coming from momentum conservation $\la\perp\lat$. 
\end{remark}

\subsection{Origami map and Mandelstam variables}\label{sec:intro:origami_mand}
Given a t-embedding $\Tcal:\SuppGD\to\C$, the \emph{origami map}
$\Ocal:\SuppGD\to\C$ is
 the unique (up to global shift and rotation) piecewise-linear map such that
\begin{equation}\label{eq:intro:O_preserves_edges}
 |\Ocal(\f_1)-\Ocal(\f_2)| = |\Tcal(\f_1) - \Tcal(\f_2)| \quad\text{for all $\f_1,\f_2\in\SuppGD$ sharing a face of $\GD$},
\end{equation}
and such that $\Ocal$ 
preserves (resp., reverses) the orientations of all white (resp., black) faces of $\GD$.

It was pointed out in~\cite{CLR1} that the origami map ``clearly does not increase Euclidean distances in the complex plane.'' 
Indeed, if $\Tcal(\SuppGD)$ is embedded then for any two points $\pt_1,\pt_2\in\SuppGD$ such that the line segment $[\Tcal(\pt_1),\Tcal(\pt_2)]$ is contained in $\Tcal(\SuppGD)$, we have
\begin{equation}\label{eq:intro:1_Lipschitz}
 |\Ocal(\pt_1)-\Ocal(\pt_2)| \leq |\Tcal(\pt_1)-\Tcal(\pt_2)|,
\end{equation}
since folding a straight line segment cannot increase the distance between its endpoints. We emphasize that~\eqref{eq:intro:1_Lipschitz} \emph{need not hold} when the polygon $\Tcal(\SuppGD)$ is non-convex. For example,~\eqref{eq:intro:1_Lipschitz} is violated for $\pt_1=\bdf_{i}$ and $\pt_2=\bdf_{j}$ in both \figref{fig:t-imm}(left) and \figref{fig:t-imm}(right). 

\begin{definition}\label{dfn:intro:1_Lipschitz}
We say that a t-immersion $\Tcal:\SuppGD\to\C$ has \emph{\MdashTITLE nonnegative boundary} (\emph{\Mdash nonnegative boundary} for short) 
 if~\eqref{eq:intro:1_Lipschitz} holds for any two boundary vertices $\pt_1,\pt_2\in\Fbd$ of $\GD$.
\end{definition}

The \emph{Mandelstam variables} are an important family of functions on the kinematic space $\lalats$.
 For $i,j\in\Z$ satisfying $i+2\leq j\leq i+n-2$, the associated Mandelstam variable is given by
\begin{equation}\label{eq:intro:Mand_dfn}
 \Mand_{i,j}(\la,\lat) := \sum_{i< p<q\leq j} \brla<p,q>\brlat[p,q].
\end{equation}
\begin{definition}
We say that $\lalat\in\lalats$ is \emph{\Mdash nonnegative} (resp., \emph{\Mdash positive}) if $\Mand_{i,j}(\la,\lat) \geq 0$ (resp., $\Mand_{i,j}(\la,\lat) > 0$) for all $i+2\leq j\leq i+n-2$. 
\end{definition}

Mandelstam variables naturally appear in the denominators of scattering amplitudes. It was conjectured in \cite[Section~5]{HZ_notes} that the scattering amplitude should be given as an integral over some $(2n-4)$-dimensional 
slice of the ``positive region'' 
\begin{equation}\label{eq:intro:M_amb_dfn}
\lalapp:=\{\lalat\in\lalak\mid\lalat\text{ is \Mdash positive}\}. 
\end{equation}
We refer to $\lalapp$ as the \emph{ambient momentum amplituhedron}. 
Recall that the momentum amplituhedron $\MomLL$ 
 is contained inside the closure of $\lalak$ and has the correct dimension $2n-4$. However, it was observed in~\cite{DFLP} that for certain choices of $(\La,\Lat)\in\LaLak$, some Mandelstam variables have negative sign on a ``very small region'' of $\MomLL$. 

It turns out that \Mdash nonnegativity of $\lalat$ is equivalent to $\Tll$ having \Mdash nonnegative boundary, 
 as the following lemma demonstrates. %
\begin{lemma}\label{lemma:intro:Mand_vs_norm}
Let $(\la,\lat,C)\in\TRIPLES$. Then for any $i+2\leq j\leq i+n-2$,
\begin{equation}\label{eq:intro:Mand_vs_norm}
 4\Mand_{i,j}(\la,\lat) = |\bdxT_{i} - \bdxT_{j}|^2 - |\bdxO_{i} - \bdxO_{j}|^2.
\end{equation}
Here and below, we denote $\bdxT_i:=\Tll(\bdf_{i})$ and $\bdxO_i:=\Oll(\bdf_i)$.
\end{lemma}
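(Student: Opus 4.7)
The plan is to reduce both sides of~\eqref{eq:intro:Mand_vs_norm} to explicit expressions in the spinor data $(\la,\lat)$ and then to match them via the Cauchy--Binet formula.

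\smallskip
\emph{Step 1 (telescoping the boundary).} Adjacent boundary faces $\bdf_{p-1}$ and $\bdf_{p}$ share the boundary edge of $\GD$ dual to $\bde_p$, so, writing
\begin{equation*}
\vec{t}_p:=\Tll(\bdf_p)-\Tll(\bdf_{p-1})\in\R^2,\qquad \vec{o}_p:=\Oll(\bdf_p)-\Oll(\bdf_{p-1})\in\R^2,
\end{equation*}
telescoping gives $\Tll(\bdf_j)-\Tll(\bdf_i)=\sum_{p=i+1}^{j}\vec{t}_p$ and $\Oll(\bdf_j)-\Oll(\bdf_i)=\sum_{p=i+1}^{j}\vec{o}_p$.

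\smallskip
\emph{Step 2 (bispinor encoding of the boundary edges).} From the construction of $\Tll$ attached to a triple $(\la,\lat,C)\in\TRIPLES$ in the proof of~\cref{thm:intro:t_imm_vs_triples}, the concatenation $(\vec{t}_p,\vec{o}_p)\in\R^2\oplus\R^2$ is, up to a fixed overall scale, the null $4$-vector $\Pmom(\bde_p)\in\R^{2,2}$ whose bispinor representative is the rank-$1$ matrix $\la_p\lat_p^T\in\Mat_{2\times 2}(\R)$. Under the standard identification $\Mat_{2\times 2}(\R)\cong\R^{2,2}$, the Minkowski form $|\vec{t}|^2-|\vec{o}|^2$ pulls back to a positive scalar multiple of $\det$. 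Hence $|\Tll(\bdf_i)-\Tll(\bdf_j)|^2-|\Oll(\bdf_i)-\Oll(\bdf_j)|^2$ and $\det\bigl(\sum_{p=i+1}^{j}\la_p\lat_p^T\bigr)$ agree up to a universal constant.

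\smallskip
\emph{Step 3 (Cauchy--Binet).} Set $M:=\sum_{p=i+1}^{j}\la_p\lat_p^T$. Factoring $M=AB$ with $A:=(\la_{i+1}\mid\cdots\mid\la_j)\in\R^{2\times(j-i)}$ and $B\in\R^{(j-i)\times 2}$ having rows $\lat_{i+1}^T,\dots,\lat_j^T$, the Cauchy--Binet formula yields
\begin{equation*}
\det M=\sum_{i<p<q\leq j}\det(\la_p\mid\la_q)\,\det(\lat_p\mid\lat_q)=\sum_{i<p<q\leq j}\brla<p\,q>\brlat[p\,q]=\Mand_{i,j}(\la,\lat).
\end{equation*}
Combining Steps~1--3 establishes~\eqref{eq:intro:Mand_vs_norm} up to a global constant fixed in Step~2; tracking that normalization through the construction of $\Tll$ from \cref{sec:intro:holomorphic} produces exactly the factor~$4$.

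\smallskip
The step I expect to be the main obstacle is Step~2: pinning down the precise scalar relating the $4$-vector $(\vec{t}_p,\vec{o}_p)$ to the bispinor $\la_p\otimes\lat_p$, which is dictated by the normalization of the discrete holomorphic extensions of $\la$ and $\lat$ used to build $\Tll$ from the triple $(\la,\lat,C)$. Once that scalar is identified, Steps~1 and~3 are essentially formal.
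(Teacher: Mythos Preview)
Your proof is correct and matches the paper's: both telescope along the boundary, identify each boundary edge with a null vector, and expand the Minkowski norm of the sum --- the paper via the bilinear identity $\Pmom_p\cdot\Pmom_q = 2\brla<p\,q>\brlat[p\,q]$, you via Cauchy--Binet on $\det(\sum\la_p\lat_p^T)$, which is the same computation in different notation. Your flagged obstacle in Step~2 is not actually an obstacle: equation~\eqref{eq:yy=Tcal_Ocal} gives the \emph{exact} relations $\Tll(\bdf_p)-\Tll(\bdf_{p-1})=y_p\yt_p$ and $\Oll(\bdf_p)-\Oll(\bdf_{p-1})=\ovl{y_p}\yt_p$ with no scale ambiguity, and under the resulting identification $\Mat_{2\times 2}(\R)\cong\R^{2,2}$ one checks directly that $|\PmomT|^2-|\PmomO|^2=4\det$, producing the factor~$4$ on the nose.
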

\begin{remark}\label{rmk:intro:Minkowski}
Consider the Minkowski space $\R^{2,2}\cong \C^2$ with norm $\|\Pmom\|^2:=|\PmomT|^2 - |\PmomO|^2$ for $\Pmom=(\PmomT,\PmomO)\in\Rdd$. Taking a t-immersion $\T$ and its associated origami map $\O$ together, we get a map $\TO:\Faces\to\R^{2,2}$, $\f\mapsto(\T(\f),\O(\f))$, such that the image $\TO(\east)$ of every edge of $\GD$ is \emph{null}, i.e., has zero Minkowski norm; cf.~\eqref{eq:intro:O_preserves_edges}. The space of \emph{null polygons} $\Pbdx:=(\bdx_1,\bdx_2,\dots,\bdx_n)$ satisfying $(\bdx_i-\bdx_{i-1})^2=0$ for all $i\in\brn$
is known under the name \emph{dual space}. It makes the \emph{dual conformal invariance} of scattering amplitudes manifest; see e.g. \cite{DHKS}. We discuss this further in relation to t-immersions in \cref{rmk:dual_conformal}.
\end{remark}

We write $\bdx_i=(\bdxT_i,\bdxO_i)\in\Rdd$ so that~\eqref{eq:intro:Mand_vs_norm} becomes $4\Mand_{i,j}(\la,\lat)=(\bdx_i-\bdx_j)^2$ and~\eqref{eq:intro:O_preserves_edges} becomes
\begin{equation}\label{eq:inv_null}
 \left(\TO(\f_1) - \TO(\f_2)\right)^2 = 0 \quad\text{for all $\f_1,\f_2\in\SuppGD$ sharing a face of $\GD$.}
\end{equation}
We say that a map $\xd=(\xT,\xO):\Faces\to\Rdd$ is a \emph{t-immersion} (resp., \emph{t-embedding}) if so is $\xT$, where we always assume that $\xO$ is the origami map obtained from $\xT$ via~\eqref{eq:intro:O_preserves_edges}.

The following result was conjectured in~\cite{DFLP}.
\begin{theorem}\label{thm:intro:Mand_pos_exists}
 For all $2\leq k\leq n-2$, there exist matrices 
$(\La,\Lat)\in\LaLak$
 such that 
$\PhiLL(C)$ is \Mdash nonnegative for all $C\in\Grtnn(k,n)$. 
\end{theorem}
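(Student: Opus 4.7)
The plan is to reduce Mandelstam-nonnegativity to a convexity condition on the boundary polygon of the associated t-immersion (via \cref{lemma:intro:Mand_vs_norm}), and then to exhibit an explicit totally positive choice of $(\La,\Lat)$ for which this convexity is automatic.

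First, by \cref{lemma:intro:Mand_vs_norm}, a triple $(\la,\lat,C)\in\TRIPLES$ satisfies $\Mand_{i,j}(\la,\lat)\ge 0$ for every $i+2\le j\le i+n-2$ if and only if the origami map $\Oll$ is $1$-Lipschitz on the boundary vertices $\{\Tll(\bdf_1),\dots,\Tll(\bdf_n)\}$ of $\Tll$. By the discussion around~\eqref{eq:intro:1_Lipschitz}, this $1$-Lipschitz property holds automatically whenever each chord $[\Tll(\bdf_i),\Tll(\bdf_j)]$ lies inside the image $\Tll(\GD)$, and a sufficient condition for this is that the boundary polygon $\Tll(\bdf_1),\dots,\Tll(\bdf_n)$ bounds a convex region (in which case the image, by the immersion condition~\ref{intro:t_imm_orientation}, is the convex interior of this polygon). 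Since $\MomLL=\overline{\MomLLnd}$ and Mandelstam-nonnegativity is a closed condition, it suffices to construct $(\La,\Lat)\in\LaLak$ for which $\Tll$ has a convex boundary polygon for every $C\in\Grnd(k,n)$.

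For the construction, I would take $(\La,\Lat)$ of Vandermonde type. Pick distinct parameters $0<x_1<\dots<x_n$, let $V_m:=[x_i^{a-1}]$ be the $m\times n$ Vandermonde matrix, and set $\La:=\alt(V_{n-k+2})$ (as a row span) and $\Lat:=V_{k+2}$. Total positivity of Vandermonde matrices immediately gives $\alt(\La)\in\Grtp(n-k+2,n)$ and $\Lat\in\Grtp(k+2,n)$, so $(\La,\Lat)\in\LaLak$. This choice has the advantage of cyclic positivity and predictable minor factorisations, which should make the boundary polygon computable.

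To verify convexity, one can use the explicit construction of $\Tll$ via discrete holomorphic functions outlined in \cref{sec:intro:holomorphic}: each boundary edge vector $\Tll(\bdf_{i+1})-\Tll(\bdf_i)$ can be expressed in terms of $(\lambda_i,\lat_i)$ and boundary Plücker coordinates of $C$. The total winding number of the boundary polygon is $2\pi$ (computed directly from $\wind(\la)=(k-1)\pi$ and $\wind(\lat)=(k+1)\pi$ in the definition of $\lalak$), so convexity reduces to showing that each turning angle is strictly positive. I expect this positivity to follow from combining the sign conditions $\brla<i\,i+1>,\brlat[i\,i+1]>0$ built into $\lalak$ with the Vandermonde-positivity of $(\La,\Lat)$.

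The main obstacle is this final convexity verification. Since the boundary vertices depend nonlinearly on $C$ through the intersections $C\cap\La$ and $C^\perp\cap\Lat$, minor positivity does not immediately produce convex turning angles for every $C\in\Grnd(k,n)$. I would attack this by a deformation argument: $\Grnd(k,n)$ is connected, so it suffices to verify convexity for one explicit $C$ (e.g., a cyclically symmetric choice producing a near-regular-$n$-gon boundary), and then to show that convexity cannot be lost along a continuous path without forcing some $\brla<i\,i+1>$ or $\brlat[i\,i+1]$ to vanish, contradicting $\PhiLL(C)\in\lalak$. An alternative, more algebraic route would be to derive a direct formula expressing each turning angle as a nonnegative sum of products $\Delta_I(C)\cdot(\text{minor of }\La\text{ or }\Lat)$, so that convexity reduces to a total-positivity identity for Vandermonde minors.
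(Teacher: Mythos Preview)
Your reduction---convex boundary polygon $\Rightarrow$ all chords inside the image $\Rightarrow$ origami $1$-Lipschitz on boundary $\Rightarrow$ Mandelstam-nonnegative---is sound in principle (once you invoke a Jordan-curve argument such as \cref{lemma:Jordan_curve}, not merely the immersion condition, to conclude that the image equals the convex interior). But the crux, proving convexity for all $C\in\Grnd(k,n)$, has a genuine gap.

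Your deformation argument fails. The convexity condition at $\bdf_i$ is that the turning angle is negative (a clockwise turn), i.e., $\sumbT(\bdf_i)+\sumwT(\bdf_i)<\pi$. By~\eqref{eq:angle_sum_arg_rat}, $\sumbT(\bdf_i)=\arg(\y_{i+1}/\y_i)\in(0,\pi)$ and $\sumwT(\bdf_i)=\arg(\yt_{i+1}/\yt_i)+\pi\in(0,\pi)$; the brackets $\brla<i\,i+1>,\brlat[i\,i+1]$ only force each of these to lie in $(0,\pi)$ individually, not their sum to stay below $\pi$. Losing convexity means the sum passes through $\pi$, i.e., three consecutive boundary points $\Tll(\bdf_{i-1}),\Tll(\bdf_i),\Tll(\bdf_{i+1})$ become collinear. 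This is a codimension-$1$ locus in $C$ that has nothing to do with any bracket vanishing, so your proposed obstruction does not fire. Your fallback ``algebraic route'' is equally problematic: the turning angle is a transcendental function of the data, and there is no reason to expect it to be a nonnegative polynomial in Pl\"ucker coordinates. In fact, convexity of the boundary polygon for every $C$ is a strictly stronger statement than Mandelstam-nonnegativity, and you have not given evidence that \emph{any} choice of $(\La,\Lat)$ achieves it.

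The paper's proof is entirely different and avoids geometry of the boundary polygon. It expresses each Mandelstam variable as a sum $\Mand_{i,j}(\la,\lat)=\sum_{(\tau,T)} c^{(i,j]}_{\tau,T}(\La,\Lat)\,\Delta_{\tau,T}(C)$ over Temperley--Lieb immanants $\Delta_{\tau,T}$ (which are nonnegative on $\Grtnn(k,n)$), with coefficients $c^{(i,j]}_{\tau,T}$ depending only on $(\La,\Lat)$. It then shows that whenever $(\La^\perp,\Lat)$ lies in the totally positive two-step flag variety $\Fltp(k-2,k+2)$ (e.g., $\La^\perp$ and $\Lat$ are the spans of the first $k-2$ and $k+2$ rows of a single totally positive $n\times n$ matrix), all coefficients $c^{(i,j]}_{\tau,T}(\La,\Lat)$ are positive. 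The positivity of these coefficients is proved by an induction on factorizations into Chevalley generators $x_s(t),y_s(t),h_s(t)$. Note that your Vandermonde choice does not yield $\La^\perp\subset\Lat$, so it does not fit this framework.
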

 Our proof of \cref{thm:intro:Mand_pos_exists} relies on the theory of \emph{Temperley--Lieb immanants}~\cite{RhSk,Lam_dimers}. 
Specifically, we show that \cref{thm:intro:Mand_pos_exists} holds for \emph{flag-positive} pairs $(\La,\Lat)$ obtained by choosing a totally positive matrix $M\in\Gtp$ (with positive minors of all sizes) and taking $\La^\perp$ (resp., $\Lat$) to be the span of the first $k-2$ (resp., $k+2$) rows of $M$. 
More generally, we introduce a subset $\LaLaimmnn\subset\LaLak$ of \emph{immanant-nonnegative} pairs $\LaLat$ and show that the conclusion of \cref{thm:intro:Mand_pos_exists} holds for $\LaLat\in\LaLaimmnn$. 
In particular, we show that each \Arep pair $\LaLat$ belongs to $\LaLaimmnn$. 
For $\LaLat\in\LaLaimmnn$, we use \Mdash positivity of $\lalat = \PhiLL(C)$ to deduce that the boundary polygon of $\Tll(\GD)$ is simple, in which case $\Tll$ is a t-embedding. This completes the proof of \cref{thm:intro:B}.

\begin{corollary}\label{cor:intro:t_emb_exists}
Assume that $\G$ is connected, \twonondeg, and satisfies $\helmin(\G)\geq2$. Then $(\G,\wt)$
admits a t-embedding for all $\wt:\Edges\to\Rtp$. 
\end{corollary}

 \subsection{\ORATITLE and BCFW tilings}\label{sec:intro:BCFW}

The \emph{BCFW recursion} discovered in~\cite{BCFW} generates a collection $\BCFWGkn$ of planar bipartite graphs of type $(k,n)$ for each $2\leq k\leq n-2$. A single step of the recursion is shown in \figref{fig:BCFW}(a). We pick an index $i_0\in\brn$, and add either a \emph{white-black bridge} or a \emph{black-white bridge} at boundary vertices $\bdv_{i_0},\bdv_{i_0+1}$. 
For example, in \figref{fig:BCFW}(a), we added a white-black bridge. 
For each collection $(k_L,n_L,k_R,n_R)$ of integers such that $k_L+k_R = k+1$ and $n_L+n_R = n+2$, we take all pairs $(\G_L,\G_R)\in\BCFWGknL\times\BCFWGknR$ and combine them as shown in \figref{fig:BCFW}(a) to obtain a graph $\G=\G_L\otimes\G_R\in\BCFWGkn$. See \cref{sec:BCFW:backgr} for a precise description of the recursion.

After contracting the edges incident to degree-$2$ vertices (see \cref{rmk:reduced_helmin2}), we observe that each graph $\G\in\BCFWGkn$ satisfies $\helmin(\G)\geq2$. The associated \emph{positroid cell} 
$\Pip_{\G} := \{\Meas(\G,\wt)\mid \wt:\G\to\R_{>0}\}$ 
is $(2n-4)$-dimensional. We consider the \emph{tiles} $\MomLLprojG:=\PhiLL(\Pip_{\G})$. 

\begin{definition}\label{dfn:intro:triang}
Let $\Gcoll$ be a finite collection of planar bipartite graphs. We say that the tiles $\{\MomLLprojG\mid\G\in\Gcoll\}$ form a \emph{tiling} of $\MomLL$ if the following conditions are satisfied.
\begin{enumerate}[label=(\alph*)]
\item\label{intro:triang1} \emph{Injectivity:} 
For each $\G\in\Gcoll$, the map $\PhiLL$ restricts to a homeomorphism $\Pip_{\G}\xrasim\MomLLprojG$.
\item\label{intro:triang2} \emph{Disjointness:} The images $\{\MomLLprojG\mid\G\in\Gcoll\}$ are pairwise disjoint.
\item\label{intro:triang3} \emph{Surjectivity:} The union $\bigsqcup_{\G\in\Gcoll} \MomLLprojG$ is dense in $\MomLL$.
\end{enumerate}
\end{definition}

\begin{theorem}\label{thm:intro:BCFW}
 For all immanant-nonnegative $(\La,\Lat)\in\LaLaimmnn$, %
 the graphs in $\BCFWGkn$ form a tiling of the momentum amplituhedron $\MomLL$.
\end{theorem}
\begin{remark}
There are multiple ways of running the BCFW recursion, resulting in many possible collections of planar bipartite graphs for given $(k,n)$; cf. \cref{rmk:BCFW_multiple}. It was shown in~\cite{ELT} that one such collection yields a tiling of the \emph{momentum-twistor amplituhedron} $\AZproj$ introduced in~\cite{AHT}, and the result was later extended in~\cite{ELPTSBW} to all other such collections. Both of these problems have been open for the momentum amplituhedron $\MomLL$, and in \cref{thm:intro:BCFW}, we solve both.
\end{remark}

Instead of proving \cref{thm:intro:BCFW} directly, we introduce 
``multivalued'' versions of tilings in 
 \cref{dfn:BCFW:tiling_amb} and prove a 
BCFW tiling result
 for the ambient amplituhedron $\lalapp$ first (\cref{thm:f_triang}). We then deduce the results for momentum and momentum-twistor amplituhedra in \cref{lemma:proj_tiling}.

\begin{figure}
 \hspace{-0.1in}\includegraphics[scale=1.1]{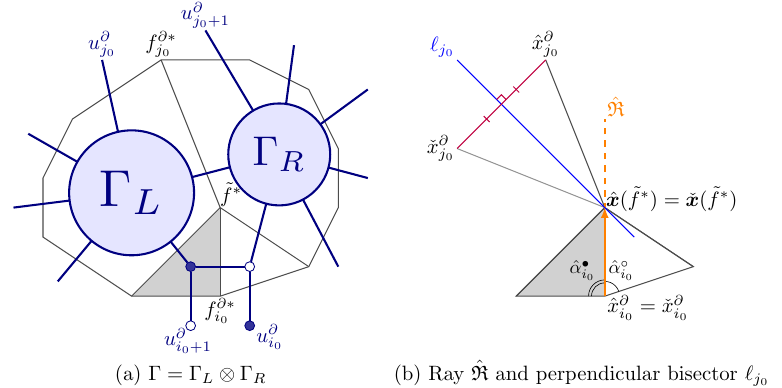}
 \caption{\label{fig:BCFW} Applying a step of the BCFW recursion (left) and of the \ora (right), 
recovering the point $\xTout(\fo)=\xOout(\fo)$ as the intersection of $\line_\jo$ and $\ray$; see \cref{sec:intro:BCFW}.}
\end{figure}

We sketch the proof of \cref{thm:intro:BCFW}. Recall from \cref{thm:intro:t_imm_vs_triples} that a t-immersion $\Tcal=\Tll$ is fully determined by a triple $(\la,\lat,C) \in \TRIPLES$. As we discuss in \cref{sec:BCFW:boundary}, specifying just a pair $\lalat$ of $2\times n$ matrices is equivalent to specifying the following partial information about $\xd=\xll$:
\begin{enumerate}[label=(\roman*)]
\item\label{lalat_bdry1} the kami boundary polygon $\PbdxT = (\bdxT_1,\bdxT_2,\dots,\bdxT_n)$, where $\bdxT_i:=\xT(\bdf_i)$;
\item\label{lalat_bdry2} the origami boundary polygon $\PbdxO = (\bdxO_1,\bdxO_2,\dots,\bdxO_n)$, where $\bdxO_i:=\xO(\bdf_i)$;
\item\label{lalat_bdry3} the boundary angle sums $\sumwT_i:=\sumwT(\bdf_i)$ and $\sumbT_i:=\sumbT(\bdf_i)$ for each $i\in\brn$; cf. \cref{sec:intro:t_emb}.
\end{enumerate} 
We note that when~\itemref{lalat_bdry1} is specified,~\itemref{lalat_bdry2} and~\itemref{lalat_bdry3} determine each other uniquely.

Next, we describe the \emph{\oraTITLE}. 
Let $\G\in\BCFWGkn$. By construction, $\G$ contains an interior face $\fo$ adjacent to two boundary faces $\bdf_{\io}$ and $\bdf_{\jo}$, where $\jo:=\io+n_L-1$. Let $\RmomT\in\C$ be such that the angle between the \emph{folding ray}
 $\ray = \{\bdxT_\io + \r \RmomT\mid \r\geq0\}$ and the edge $\bdxT_{\io-1}-\bdxT_\io$ (resp., $\bdxT_{\io+1}-\bdxT_\io$) is $\sumwT_{\io}$ (resp., $\sumbT_{\io}$). This ray is fully determined by $\lalat$ in view of~\itemref{lalat_bdry3}; see \figref{fig:BCFW}(b). 
Since the face $\bdf_{\io}$ is incident to a unique interior white (resp., black) vertex of~$\G$, the point $\xTout(\fo)$ must lie on the ray $\ray$.%

Next, recall that the origami map $\Ocal$ is defined up to shift and rotation. Let us normalize it so that $\bdxO_{\io} = \bdxT_{\io}$ and $\xOout(\fo) = \xTout(\fo)$. Since $\fo$ is adjacent to $\bdf_{\jo}$, we see from~\eqref{eq:intro:O_preserves_edges} that 
 $|\bdxO_{\jo} - \xOout(\fo)| = |\bdxT_{\jo} - \xTout(\fo)|.$
 In other words, the point $\xOout(\fo) = \xTout(\fo)$ must lie on the perpendicular bisector $\line_\jo$ between the points $\bdxT_{\jo}$ and $\bdxO_{\jo}$, both of which are determined by $\lalat$. See \figref{fig:BCFW}(b).

To summarize, the point $\xOout(\fo) = \xTout(\fo)$ is recovered uniquely from $\lalat$ as the intersection point of the ray $\ray$ and the perpendicular bisector $\line_\jo$. Assuming that the parameters $(k_L,n_L,k_R,n_R)$ are given in advance, finding $\xOout(\fo)=\xTout(\fo)$ is enough to recover the pairs of $2$-planes $(\la_L,\lat_L)$ and $(\la_R,\lat_R)$ encoding the boundary null polygons of $\G_L$ and $\G_R$ via~\itemref{lalat_bdry1}--\itemref{lalat_bdry2}. We then proceed to recover the rest of the t-immersion $\Tcal$ recursively. This verifies part~\itemref{intro:triang1} of \cref{dfn:intro:triang}. 

\begin{remark}
The original proof of~\cite{BCFW} introduces an auxiliary meromorphic function $A(z)$ of a complex variable $z$ such that $A(0)=A(\Pmom_1,\Pmom_2,\dots,\Pmom_n)$ is the scattering amplitude. One then computes the poles $z_I$ of $A(z)/z$ and expresses the residue $A(0)$ of $A(z)/z$ at $0$ as a sum of residues at the poles $z_I$. It turns out that the above description of $\xTout(\fo)$ in terms of the ray $\ray$ and the perpendicular bisector $\line_\jo$ precisely matches the formula for the corresponding pole $z_I$; see \cref{rmk:coincides_BCFW}. Thus, our construction may be viewed as a positive-geometric incarnation of the complex-analytic approach of~\cite{BCFW}.
\end{remark}

To check part~\itemref{intro:triang2} (disjointness), it suffices to show that the parameters $(k_L,n_L,k_R,n_R)$ can also be recovered from $\lalat$. 
 For $\r\in\R$, let $\ray(\r):=\bdxT_\io + \r \RmomT$ as above. For each $\io+2\leq\s\leq\io+n-2$, 
let 
$\ray(\r_\s)$ be the intersection point of the line containing $\ray$ with the perpendicular bisector $\line_\s$ between $\bdxT_\s$ and $\bdxO_\s$.
 Let $\jo$ be the index (with $\io+2\leq\jo\leq\io+n-2$) at which $\r_{\s}$ achieves its minimum positive value. Thus, $\jo$ is determined by $\lalat$. We show in \cref{prop:BCFW:from_lalat_to_r} that---assuming $\lalat$ is \Mdash positive---the parameter $n_L$ satisfies $\jo = \io + n_L - 1$, and is therefore determined by $\lalat$.

Thus, given $\lalat$, we have explained how to recover $n_L$ and the corresponding point $\xTout(\fo)=\xOout(\fo):=\ray(\r_\jo)$. As above, we then recover the rest of the t-immersion recursively.

Finally, to check part~\itemref{intro:triang3} (surjectivity),
 we show that the \ora outputs a valid t-embedding for all generic \Mdash positive points $\lalat\in\lalak$. 

\begin{remark}
The \ora gives a constructive way to extend a \emph{$1$-Lipschitz} (i.e., non-expansive) map sending $\bdxT_i\mapsto\bdxO_i$ for each $i\in\brn$ to a $1$-Lipschitz piecewise-linear isometry $\xT(\SuppGD)\to\xO(\SuppGD)$ of the entire polygon $\xT(\SuppGD)$. The statement that partial $1$-Lipschitz mappings of the Euclidean space to itself can be extended to the entire space is known as \emph{Kirszbraun’s Theorem}~\cite{Valentine}. 
Explicit constructions of such extensions that are similar in flavor to the \ora can be found in~\cite{Akop_Tar,Tasmuratov}. See also~\cite[Exercise~39.11 and Section~40.4]{Pak_book}.\footnote{We thank Igor Pak for bringing these references to our attention.}
\end{remark}

\subsection{\KSprims}\label{sec:intro:holomorphic}
We briefly explain the construction of the \oac in \cref{thm:intro:t_imm_vs_triples} following~\cite{KLRR,CLR1}. We let $\wtK:\E\to\R$ be the \emph{Kasteleyn edge weights} on $\G$: we have $\wtK(\e) = \pm\wt(\e)$ for all $\e\in\E$ so that 
 for each interior face $\f\in\Fint$ of $\G$ incident to $2d$ edges, the product of Kasteleyn signs of edges around $\f$ is $(-1)^{d-1}$. There are additional conditions for boundary faces; see \cref{dfn:DIM:Kast}.

Let $\Vspace$ be an $\R$-vector space. (We will work with $\Vspace=\R^d$ for $d\geq1$ or $\Vspace=\C$.) A function $\Fw: \WV\to\Vspace$ (resp., $\Fb:\BV\to\Vspace$) is called \emph{\wdash holomorphic} (resp., \emph{\bdash holomorphic}) if for each $\bv\in \BVint$ (resp., $\wv\in\WVint$), 
\begin{equation}\label{eq:intro:white_holom}
\sum_{\wv'\sim\bv} \wtK(\wv',\bv) \Fw(\wv') = 0,
\quad\text{resp.,}\quad
\sum_{\bv'\sim\wv} \wtK(\wv,\bv') \Fb(\bv') = 0.
\end{equation} 
Here, the summations are taken over all vertices of $\G$ adjacent to $\bv$ (resp., $\wv$), and $\wtK(\wv',\bv')$ denotes the sum of Kasteleyn weights of the edges connecting $\wv'$ to $\bv'$. 
We refer to \wdash\ and \bdash holomorphic functions collectively as \emph{($\Vspace$-valued) discrete holomorphic functions}. 
We let $\Hwspace_{\Vspace}\HtripK$ (resp., $\Hbspace_{\Vspace}\HtripK$) denote the space of $\Vspace$-valued \wdash holomorphic (resp., \bdash holomorphic) functions. 
Thus, $\Hwspace_{\R^k}\HtripK$ is closely related to the space of \emph{vector-relation configurations} on $\G$ studied in~\cite{AGPR}.
We denote $\HHspaceKV:=\Hwspace_{\Vspace}\HtripK\times\Hbspace_{\Vspace}\HtripK$. 

When $\G$ admits an \APM (i.e., $\helmin(\G)\geq0$), 
 a discrete holomorphic function $\Fw\in\Hwspace_{\Vspace}\HtripK$ (resp., $\Fb\in\Hbspace_{\Vspace}\HtripK$) can be uniquely recovered from its \emph{boundary restriction} $\pFw\in\Vspace^n$ (resp., $\pFb\in\Vspace^n$) defined in~\eqref{eq:partial_F_dfn}. 
 For $C=\Meas(\G,\wt)$, we have identifications
\begin{equation}\label{eq:intro:holom_C_and_Cp}
 \alt(C) = \{\pFw \mid \Fw\in\Hwspace_{\R}\HtripK\} \quad\text{and}\quad
\alt(C^\perp) = \{\pFb\mid \Fb\in\Hbspace_{\R}\HtripK\}
\end{equation}
as elements of $\Gr(k,n)$ and $\Gr(n-k,n)$, respectively.

Given a pair $(\Fw,\Fb)\in\HHspaceKC$ of $\C$-valued discrete holomorphic functions, the \emph{\KSprim} $\xd:\Faces\to\Rdd$ is defined up to an overall additive constant by the conditions
\begin{equation}\label{eq:intro:primitive_TO}
 \xT(\f_2)-\xT(\f_1) = \Fw(\wv) \wtK(\e) \Fb(\bv) 
 \quad\text{and}\quad
 \xO(\f_2)-\xO(\f_1) = \ovl{\Fw(\wv)} \wtK(\e) \Fb(\bv) 
 \quad\text{for all $\e\in\Edges$,}
\end{equation}
 where $\bv,\wv$ are the endpoints of $\e$ and $\f_1,\f_2$ are the faces incident to $\e$, with $\f_2$ located to the left of the oriented edge $\wv\to\bv$. 
By~\eqref{eq:intro:white_holom}, the increments in~\eqref{eq:intro:primitive_TO} add up to zero around each (black or white) face of $\GD$. Thus, $\xd$ is globally well defined on $\Faces$ up to an overall shift. 

The bijection of \cref{thm:intro:t_imm_vs_triples} is obtained as follows. Let 
 $C:=\Meas(\G,\wt)\in\Grtnn(k,n)$, and let $\lalat\in\lalak$ be such that $\la\subset C \subset \latp$. Let $(\y_i)_{i=1}^n,(\yt_i)_{i=1}^n\in\C^n$ be such that
\begin{equation}\label{eq:intro:y_to_lalat}
 \la_i = \CtoM[\y_i] := \begin{pmatrix}
\Re(\y_i)\\
\Im(\y_i)
\end{pmatrix} \quad\text{and}\quad
\lat_i = \CtoMt[\yt_i] := \begin{pmatrix}
\Re(\yt_i)\\
-\Im(\yt_i)
\end{pmatrix} \quad\text{for all $i\in\brn$.}
\end{equation}
\noindent Since $\la\subset C$ (resp., $C\subset \latp$), there exist discrete holomorphic functions $(\Fwl,\Fbl)\in\HHspaceKC$ such that 
\begin{equation}\label{eq:intro:Fw_Fb_y}
 \alt(\pFwl) = (\y_1,\y_2,\dots,\y_n),\quad\text{resp.,}\quad \alt(\pFbl) = (\yt_1,\yt_2,\dots,\yt_n).
\end{equation} 
The t-immersion $\Tll$ of $(\G,\wt)$ corresponding to $\lalat$ is obtained as the Kenyon--Smirnov primitive of $(\Fwl,\Fbl)$. Conversely, one can recover $(\Fwl,\Fbl)$---and thus $\lalat$---from $\Tll$ as we explain in \cref{sec:finishing_main_bij}.

\subsection{Outline}
We review some further background in \cref{sec:backgr}. The proof of the \oac (\cref{thm:intro:t_imm_vs_triples}) occupies \crefrange{sec:mom}{sec:proof_main_bij}. In \cref{sec:imm}, we complete the proof of existence of t-embeddings (\cref{thm:intro:B}) by studying Temperley--Lieb immanants.
 In \cref{sec:BCFW}, we prove the BCFW tiling result for ambient momentum amplituhedra. 
In \cref{sec:TREE}, we discuss T-duality for amplituhedra and complete the proof of the BCFW tiling conjecture (\cref{thm:intro:A}) for momentum and momentum-twistor amplituhedra. 
In \cref{sec:perf}, we discuss \emph{perfect t-embeddings} of~\cite{CLR2} and show existence and uniqueness of T-dual perfect t-embeddings. 
In \cref{sec:ABJM}, we discuss analogs of our results for ABJM amplituhedra of~\cite{Huang_Wen,Huang_Wen_Xie} and s-embeddings of~\cite{Chelkak_s_emb}. 
In \cref{sec:Varc}, we relate the number of t-embeddings with prescribed boundary to the number of bounded regions of a positroid hyperplane arrangement. 

\subsection*{Acknowledgments}
I am grateful to Thomas Lam for the numerous conversations throughout the years that have influenced many of the ideas in this work. I thank Terrence George for discussions related to discrete holomorphic functions. I also thank Marianna Russkikh and Misha Basok for their explanations regarding some of the concepts in~\cite{KLRR,CLR1,CLR2}. I thank 
Nima Arkani-Hamed, 
Dmitry Chelkak, 
Sasha Goncharov,
Rick Kenyon,
Allen Knutson,
Igor Pak,
Matteo Parisi,
Jara Trnka,
and Lauren Williams
for their comments on the first version of the paper. Finally, I am grateful to Daniel Galashin for the memorable one-sided discussions that facilitated the preparation of this manuscript.

\section{Preliminaries}\label{sec:backgr}
\subsection{Planar bipartite graphs}\label{sec:backgr:plabic}

We introduce some notation and basic definitions related to planar bipartite graphs. 
Let $\G$ be a planar bipartite graph embedded in the disk $\Disk$ with boundary vertices $\bdv_1,\dots,\bdv_n$ of degree $1$. We denote the \emph{boundary edges} of $\G$ by $\bde_i=\{\bdv_i,\bdvx_i\}$ for $i\in\brn$, and assume that each $\bdvx_i$ is an interior vertex. We refer to the $\bdvx_i$-s as \emph{next-to-boundary vertices}. 
We let $\Ebd:=\{\bde_1,\bde_2,\dots,\bde_n\}$ and $\Eint:=\E\setminus\Ebd$. 
For a subset $\Rg\subset\Verts$, we let $\GR$ be the induced subgraph with vertex set $\Rg$. For a subset $X\subset\Verts$, we let $\Grem X:=\G\ind[\Verts\setminus X]$ be the induced subgraph on the complement of $X$. 

We denote some graphs using accents (e.g., $\Gbot$ or $\Gf$ as in \cref{fig:intro-annular}). In this case, we implicitly assume that the same accent is used to denote the corresponding set of vertices ($\Vbot$ or $\Vf$), edges ($\Ebot$ or $\Ef$), etc. 

Since every \APM of $\G$ uses the same number of black and white vertices, we have 
\begin{equation}\label{eq:DIM:k_dfn}
 k=|\WV| - |\BVint|,\quad n-k = |\BV| - |\WVint|, \quad\text{and}\quad n = |\Verts| - |\Vint|.
\end{equation}

A connected component of $\G$ is called \emph{floating} if it does not contain any boundary vertices. We say that $\G$ is \emph{boundary-connected} if it has no floating connected components. 
When $\G$ \hasnofloat, Euler's formula yields
\begin{equation}\label{eq:Euler_no_float}
 |\Vint| - |\Edges| + |\Faces| = 1,
 \quad\text{and when $\G$ is connected,}\quad 
|\Verts| - |\Edges| + |\Fint| = 1.
\end{equation}
\begin{definition}\label{dfn:DIM:GD}
Suppose that $\G$ \hasnofloat. 
In this case, every face $\f\in\Faces$ of $\G$ is homeomorphic to an open disk, and we consider the \emph{dual graph} 
$\GD=(\Faces,\East)$ 
with $n$ boundary vertices $\bdf_1,\bdf_2,\dots,\bdf_n$ (some of which may coincide when $\G$ is not connected).
\end{definition}
The graph $\GD$ may have loop or parallel edges. For a dual edge $\east\in\East$ with endpoints $\ff,\f\in\Faces$, we denote $\ebarast:=\{\ff,\f\}$, and we denote by $\Ebarast:=\{\ebarast\mid\east\in\East\}$ the set of parallelism classes of edges of $\GD$. 
The set of edges of $\G$ incident to a face $\ff\in\Faces$ of $\G$ is denoted $\partE\ff$. We treat $\partE\ff$ as a multiset: if $\e\in\Edges$ is incident to $\ff$ on both sides 
then $\e$ appears twice in $\partE\ff$. The set of edges of $\GD$ incident to a face $\v\in\Vint$ of $\GD$ is denoted $\partEast\v$. 

\begin{definition}\label{dfn:positroid_Grneck}
The set $\Matroid_\G:=\{\partial\Apm\mid\Apm\in\APMS(\G)\}$ is called the \emph{positroid} of $\G$. For $i\in\brn$, let $\prec_i$ be the cyclically shifted total ordering on $\brn$ given by $i\prec_i i+1\prec_i\cdots\prec_i i-1$. We let $\Ibar_i$ be the lexicographically-minimal element of $\Matroid_\G$ with respect to $\prec_i$. The sequence $\Icalbar_\G:=(\Ibar_1,\Ibar_2,\dots,\Ibar_n)$ is called the \emph{Grassmann necklace} of $\G$.
\end{definition}

\subsection{Totally nonnegative Grassmannian}\label{sec:backgr:Grtnn}
We continue to review the theory of the totally nonnegative Grassmannian and refer to~\cite{Pos,LamCDM,KLS} for further details.
\begin{definition}\label{dfn:bound_kn}
A \emph{bounded affine permutation of type $(k,n)$} is a bijection $f:\Z\to\Z$ such that $f(i+n) = f(i) + n$ and $i\leq f(i)\leq i+n$ for all $i\in\Z$, and such that $\frac1n\sum_{i=1}^n(f(i)-i) = k$.
\end{definition}
\noindent The (finite) set of bounded affine permutations of type $(k,n)$ is denoted $\Boundkn$. An element $f\in\Boundkn$ is completely determined by the \emph{window} $[f(1),f(2),\dots,f(n)]$. 

The \emph{Grassmann necklace} $\Icalr_f=(\Ir_i)_{i\in\Z}$ associated to $f\in\Boundkn$ is defined by
\begin{equation}\label{eq:gr_neck_dfn}
 \Ir_i:=\{f(j)\mid j<i\text{ and }f(j)\geq i\}, \quad\text{for all $i\in\Z$.}
\end{equation}
We have $|\Ir_i| = k$ for all $i\in\Z$. 

Let $C$ be a full rank $k\times n$ matrix. Define $\fC:\Z\to\Z$ by
\begin{equation}\label{eq:dfn_fC}
 \fC(i):=\min\{j\geq i\mid C_i\in\Span(C_{i+1},\dots,C_j)\} \quad\text{for all $i\in\Z$.}
\end{equation}
It is a nontrivial fact~\cite{KLS} that the resulting map $\fC$ belongs to $\Boundkn$. 
 We get a decomposition $\Gr(k,n) = \bigsqcup_{f\in\Boundkn} \Pio_f$ of the (real) Grassmannian into \emph{open positroid varieties} given by $\Pio_f:=\{C\in\Gr(k,n)\mid \fC = f\}$.
 The \emph{positroid cells} $\Ptp_f$ are defined as
\begin{equation}\label{eq:Ptp_intersection}
 \Ptp_f:=\Pio_f\cap \Grtnn(k,n), \quad\text{and thus}\quad
\Grtnn(k,n) = \bigsqcup_{f\in\Boundkn} \Ptp_f.
\end{equation}
The above decomposition contains a unique top-dimensional piece (called the \emph{top cell}) labeled by $\fkn\in\Boundkn$, with $\fkn(i) = i+k$ for all $i\in\Z$. We have $\Ptp_{\fkn} = \Grtp(k,n)$.

Suppose that $\G$ admits an \APM. Then there exists a unique $\fG\in\Boundkn$ such that 
\begin{equation}\label{eq:Ptp_vs_Meas}
\Ptp_{\fG} = \{\Meas(\G,\wt)\mid \wt:\E\to\Rtp\}. 
\end{equation}
The Grassmann necklace $\Icalbar_\G$ in \cref{dfn:positroid_Grneck} is obtained from $\Icalr_{\fG}$ by taking all elements modulo $n$, and we denote $\Icalr_\G:=\Icalr_{\fG}$.

\begin{definition}\label{dfn:reduced}
$\G$ is called \emph{reduced} if the map $\Meas:\Rtpgauge\to\Ptp_{\fG}$ is a homeomorphism, where 
 $\Rtpgauge:=\Rtp^{\E}/\Rtp^{\Vint}$ is the space of edge weightings $\wt:\E\to\Rtp$ modulo gauge transformations at interior vertices; cf.~\eqref{eq:Euler_no_float}. Alternatively, $\G$ is reduced if and only if $\G$ has the minimal number of faces among all graphs with bounded affine permutation $\fG$.
\end{definition}
\noindent When $\G$ is reduced, $\fG$ may be computed using the combinatorics of \emph{zig-zag paths} in $\G$. We emphasize that when $\G$ is not reduced, $\fG$ must be computed via~\eqref{eq:Ptp_vs_Meas}.

\begin{definition}\label{dfn:top_cell_graph}
We say that $\G$ is a \emph{top cell graph} if $\fG=\fkn$, equivalently, if $\Meas(\G,\wt)\in\Grtp(k,n)$ for some (equivalently, any) $\wt\in\Rtpgauge$.
\end{definition}

\begin{definition}\label{dfn:perfect_orient}
A \emph{perfect orientation} $\GO$ of $\G$ is an orientation of all edges of $\G$ such that each black (resp., white) interior vertex of $\G$ has exactly one outgoing (resp., incoming) edge. 
\end{definition}
\noindent Perfect orientations of $\G$ are in bijection with \APMs of $\G$: if $\GO$ is a perfect orientation then $\Apm(\GO)\in\APMS(\G)$ consists of all edges oriented from black to white in $\GO$. The boundary $\Ir(\GO):=\partial\Apm(\GO)$ of $\Apm(\GO)$ is the set of $i\in\brn$ such that $\bdv_i$ is a source of $\GO$.
Given $\apm\in\APMS(\G)$, we denote by $\GO(\apm)$ the corresponding perfect orientation.

\begin{definition}\label{dfn:nondeg}
For $a,b\geq0$, we say that $f\in\Boundkn$ is \emph{$(a,b)^\partial$-nondegenerate} and write $\fap\in\BND_{a,b}(k,n)$ if 
\begin{equation*}%
 i+a \leq f(i) \leq i+n-b \quad\text{for all $i\in\Z$.}
\end{equation*}
A $k\times n$ matrix $C$ (resp., a planar bipartite graph $\G$) is \emph{$(a,b)^\partial$-nondegenerate} if $\fC$ (resp., $\fG$) is \emph{$(a,b)^\partial$-nondegenerate}. Let $\GrndAB(k,n)\subset\Grtnn(k,n)$ be the set of $(a,b)^\partial$-nondegenerate $C\in\Grtnn(k,n)$.
\end{definition}
\noindent In particular, a matrix $C\in\Grtnn(k,n)$ (resp., a graph $\G$) is \twonondeg as defined in \cref{sec:intro:plabic} if and only if it is $(2,2)^\partial$-nondegenerate. 

\subsection{Kasteleyn theory and \KSprims}\label{ssec:Kast_KSprims}
We discuss the classical \emph{Kasteleyn sign condition}~\cite{Kasteleyn} for planar bipartite graphs.
Assume that $\G$ \hasnofloat. For $\ff\in\Faces$, we denote by $\bdryarcs\ff:=\{i\in\brn\mid\bdf_i = \ff\}$ the set of boundary arcs incident to $\ff$. 

Given a directed path $\Pathdir$ in $\G$ with edges $(\e_1,\e_2,\dots,\e_d)$ (or, more generally, any collection of oriented edges of $\G$) and a function $h:\E\to\Cast:=\C\setminus\{0\}$, we set
\begin{equation}\label{eq:h_Path_dfn}
 h(\Pathdir) := \prod_{i=1}^d 
 \begin{cases}
 h(\e_i), &\text{if $\e_i$ is directed from white to black in $\Pathdir$,}\\
 h(\e_i)^{-1}, &\text{if $\e_i$ is directed from black to white in $\Pathdir$.}\\
 \end{cases}
\end{equation}
 For $\ff\in\Faces$, let $\bdrypath\ff$ be the collection of boundary edges of $\ff$ directed clockwise around $\ff$, and let $h(\bdrypath\ff)$ be the \emph{face weight} of $\ff$. 
Recall that $\partE\ff$ denotes the \emph{multiset} of edges incident to $\ff$. 
Thus, an edge $\e$ incident to $\ff$ on both sides appears twice in $\bdrypath\ff$ with opposite orientations and contributes~$1$ to $h(\bdrypath\ff)$. 
For $\ff\in\Faces$, we denote by $\dwcor(\ff)$ the number of white vertices of $\G$ incident to $\ff$, again counted with multiplicity. In other words, $\dwcor(\ff)$ is the number of \emph{white corners} of $\G$ incident to $\ff$. 

\begin{definition}[{\cite{AGPR,SpeyerVariations}}]\label{dfn:DIM:Kast}
We say that $\epsK:\E\to\{\pm1\}$ is a choice of \emph{Kasteleyn signs} for $\G$ if for each face $\ff\in\Faces$,
\begin{equation}\label{eq:Kast_sign}
 \epsK(\bdrypath\ff) = (-1)^{\dwcor(\ff)+\nbdryarcs\ff+\epstra(\ff)},\quad\text{where}\quad
\epstra(\ff) = 
\begin{cases}
 1, &\text{if $\ff\neq\bdf_n$;}\\
 k+n, &\text{if $\ff=\bdf_n$.}\\
\end{cases}
\end{equation}
We define \emph{Kasteleyn edge weights} $\wtK:\Edges\to\R$ by $\wtK(\e):=\epsK(\e)\wt(\e)$ for all $\e\in\Edges$. 
\end{definition}
\noindent For example, if $\ff\in\Fint$ is an interior face then $\bdryarcs\ff=\emptyset$ and $\epstra(\ff)=1$, so~\eqref{eq:Kast_sign} reduces to the classical Kasteleyn sign condition~\cite{Kasteleyn,TF_exact}. When $\G$ is connected, we have $|\bdryarcs\bdf_i|=1$ for all $i\in\brn$.

\begin{remark}\label{rmk:DIM:Kast_sign_bdfn}
By~\eqref{eq:DIM:k_dfn} and~\eqref{eq:Euler_no_float}, the choice of $\epstra(\bdf_n)$ in~\eqref{eq:Kast_sign} is consistent with the equation $\prod_{\ff\in\Faces} \epsK(\bdrypath\ff)=1$ which holds for any $\epsK:\Edges\to\{\pm1\}$.
\end{remark}

\noindent As explained in~\cite[Proposition~4.8]{AGPR}, a choice of Kasteleyn signs exists for any $\G$.

\begin{remark}\label{rmk:Kast_sign_float}
When $\G$ is not necessarily \bdconn, the Kasteleyn sign condition~\eqref{eq:Kast_sign} becomes
$\epsK(\bdrypath\ff)=(-1)^{\dwcor(\ff)+\nbdryarcs\ff+\epstra(\ff)+\nfloat(\ff)}$, where $\nfloat(\ff)$ denotes the genus of the face $\ff$. 
 By convention, isolated white vertices contained in $\ff$ contribute to $\nfloat(\ff)$ but not to $\dwcor(\ff)$.
\end{remark}

\begin{lemma}\label{lemma:OCP:Kast_even}
Let $\epsK$ be a choice of Kasteleyn signs for $\G$. Let $\G'$ be a subgraph of $\G$ with the same set of boundary vertices. Then $\epsK$ restricts to a choice of Kasteleyn signs for $\G'$ if and only if each face of $\G'$ encloses an even number of vertices in $\Vint\setminus \Vint'$. 
\end{lemma}
\begin{proof}
For interior faces of $\G'$, this is~\cite[Lemma~1]{Kenyon_lectures}. We extend the argument to arbitrary faces of $\G'$. First, we show that if $\G'$ is obtained from $\G$ by deleting any collection of edges (but keeping the set of vertices intact) then $\epsK$ restricts to a choice of Kasteleyn signs for $\G'$. By induction, it suffices to consider the case where $\G'$ is obtained by deleting a single edge $\e\in\Edges$. Let $\ff_1,\ff_2\in\Faces$ be the two faces of $\G$ incident to $\e$ and let $\f$ be the face of $\G'$ containing $\ff_1$ and $\ff_2$. If $\ff_1=\ff_2$ then $\nfloat(\f) = \nfloat(\ff_1) + 1$, $\dwcor(\f)=\dwcor(\ff_1)-1$, $\nbdryarcs{\f}=\nbdryarcs{\ff_1}$, $\epstra(\f)=\epstra(\ff_1)$, and $\epsK(\bdrypathp\f)=\epsK(\bdrypath\ff_1)$. Otherwise, $\nfloat(\f) = \nfloat(\ff_1) + \nfloat(\ff_2)$, $\dwcor(\f)=\dwcor(\ff_1)+\dwcor(\ff_2)-1$, 
$\nbdryarcs{\f}=\nbdryarcs{\ff_1}+\nbdryarcs{\ff_2}$, $\epstra(\f)=\epstra(\ff_1) + \epstra(\ff_2)-1$, and 
$\epsK(\bdrypathp\f)=\epsK(\bdrypath\ff_1) \cdot \epsK(\bdrypath\ff_2)$. In either case, it follows that the restriction of $\epsK$ to $\Edges'$ satisfies the Kasteleyn sign condition in \cref{rmk:Kast_sign_float}. 
 Finally, deleting an isolated vertex from a face $\ff$ changes $\nfloat(\ff)$ by $1$ and does not affect any other quantities in \cref{rmk:Kast_sign_float}. Thus, $\epsK$ restricts to a choice of Kasteleyn signs for $\G'$ if and only if each face of $\G'$ contains an even number of vertices in $\Vint\setminus \Vint'$.
\end{proof}

Given a discrete holomorphic function $\Fw\in\Hwspace_{\C}\HtripK$ or $\Fb\in\Hbspace_{\C}\HtripK$ (cf.~\eqref{eq:intro:white_holom}), we let
\begin{equation}\label{eq:partial_F_dfn}
 \pFw_i:=
 \begin{cases}
 -\Fw(\bdv_i), &\text{if $\bdv_i$ is white;}\\
 -\wtK(\bde_i) \Fw(\bdvx_i), &\text{if $\bdv_i$ is black;}
 \end{cases}
 \quad
 \pFb_i:=
 \begin{cases}
 \Fb(\bdv_i), &\text{if $\bdv_i$ is black;}\\
 -\wtK(\bde_i) \Fb(\bdvx_i), &\text{if $\bdv_i$ is white.}
 \end{cases}
\end{equation}
By~\eqref{eq:partial_F_dfn}, the \KSprim $\xd:\Faces\to\Rdd$ introduced in~\eqref{eq:intro:primitive_TO} satisfies
\begin{equation}\label{eq:TE:t_imm_bdry_vs_pFw_pFb}
 \bdxT_i - \bdxT_{i-1} = \pFw_i\pFb_i 
 \quad\text{and}\quad
 \bdxO_i - \bdxO_{i-1} = \ovl{\pFw_i}\pFb_i 
\quad\text{for all $i\in\brn$},
\end{equation}
where the index $i-1$ is taken modulo $n$; cf. \cref{sec:backgr:cyc} below.

\begin{definition}\label{dfn:MCE:WKmat}
Assume that $\G$ \emph{has white boundary}, i.e., $\BVint = \BV$. 
 In this case, we denote the $\WV\times\BVint$ Kasteleyn matrix with entries $\wtK(\wv,\bv)$ by $\WKmat$. 
 We assume that the rows of $\WKmat$ are ordered so that the boundary vertices $\bdv_1,\bdv_2,\dots,\bdv_n$ appear first (and in this order). For $I\in{\brn\choose k}$, we let $\Delta_{\WV\setminus I}(\WKmat)$ be the minor of $\WKmat$ with row set $\{\bdv_i\mid i\notin I\}\sqcup\WVint$ and column set $\BVint$. 
\end{definition}

For a proof of the following result, see \cite[Corollary~6.8 and Proposition~6.9]{AGPR}.

\begin{proposition}[\cite{AGPR}]%
\label{lemma:MCE:Delta_vs_Kast}
We have $\Delta_I(\G,\wt) = \eps\Delta_{\WV\setminus I}(\WKmat)$ for a fixed $\eps\in\{\pm1\}$ and all $I\in{\brn\choose k}$.
\end{proposition}
\begin{corollary}\ \label{lemma:MCE:apm_vs_Kast}
Assume that $\G$ has white boundary and admits an \APM. Then 
\begin{enumerate}[label=(\arabic*)]
\item\label{apm_vs_Kast1}
 $\WKmat$ is of full rank: $\rank\WKmat = |\BVint|$,
\item\label{apm_vs_Kast2}
 $\dim\Hwspace_{\R}\HtripK = k$, 
\item\label{apm_vs_Kast3} the linear operator $\partial:\Hwspace_{\R}\HtripK\to\R^n$ 
defined in~\eqref{eq:partial_F_dfn}
 is injective, and for $C:=\Meas(\G,\wt)$, %
\begin{equation}\label{eq:MCE:alt(C)_vs_pFw}
 \alt(C) = \{\pFw \mid \Fw\in\Hwspace_{\R}\HtripK\} \quad\text{as elements of $\Gr(k,n)$}.
\end{equation}
\end{enumerate}
\end{corollary}
\begin{proof}
Part~\itemref{apm_vs_Kast1} follows directly from \cref{lemma:MCE:Delta_vs_Kast}. For part~\itemref{apm_vs_Kast2}, observe that $\Fw\in\R^{\WV}$ is \wdash holomorphic if and only if it belongs to the left kernel of $\WKmat$, which by part~\itemref{apm_vs_Kast1} and~\eqref{eq:DIM:k_dfn} has dimension~$k$. Part~\itemref{apm_vs_Kast3} follows from~\cite[Proposition~6.9]{AGPR}.
\end{proof}

\begin{remark}\label{rmk:MCE:BKmat}
Similar results hold for the $\BV\times\WVint$ matrix $\BKmat$ when $\G$ has black boundary: for $I\in{\brn\choose k}$, we have $\Delta_I(\G,\wt) = \eps\Delta_{\{\bdv_i\mid i\in I\}\sqcup\BVint}(\BKmat)$, and if $\G$ admits an \APM then 
\begin{equation}\label{eq:MCE:alt(Cp)_vs_pFb}
\dim\Hbspace_{\R}\HtripK = n-k
\quad\text{and}\quad
 \alt(C^\perp) = \{\pFb \mid \Fb\in\Hbspace_{\R}\HtripK\} \quad\text{as elements of $\Gr(n-k,n)$}.
\end{equation}
\end{remark}

\begin{proposition}\label{lemma:DIM:remove_bivertex}
Suppose that $\w_1,\w_2\in\WVint$ share a face of $\G$, and consider the restriction $\wt'$ of $\wt$ to the edge set of $\G':=\G\rem\{\w_1,\w_2\}$. Assume that both $\G$ and $\G'$ admit \APMs. Then
\begin{equation}\label{eq:DIM:remove_bivertex_w}
 C'\subset C,\quad\text{where}\quad
 C':=\Meas(\G',\wt')\in\Grtnn(k-2,n) \quad\text{and}\quad 
C:=\Meas(\G,\wt)\in\Grtnn(k,n).
\end{equation}
Similarly, if $\b_1,\b_2\in\BVint$ share a face of $\G$ and $\G'':=\G\rem\{\b_1,\b_2\}$ admits an \APM then 
\begin{equation}\label{eq:DIM:remove_bivertex_b}
 C\subset C'',\quad\text{where}\quad
C'':=\Meas(\G'',\wt'')\in\Grtnn(k+2,n)
\quad\text{for}\quad
\wt'':=\wt|_{\Edges''}.
\end{equation}
\end{proposition}
\begin{proof}
We prove~\eqref{eq:DIM:remove_bivertex_w}. Fix a choice of Kasteleyn signs $\epsK$ for $\G$. By \cref{lemma:OCP:Kast_even}, the restriction $\epsK'$ of $\epsK$ to the edge set of $\G'$ is a choice of Kasteleyn signs for $\G'$. It follows that $\Hwspace_{\R}\HtripKp \cong \{\Fw\in\Hwspace_{\R}\HtripK\mid \Fw(\w_1)=\Fw(\w_2)=0\}$. Since both $\G$ and $\G'$ admit \APMs, \cref{lemma:MCE:apm_vs_Kast} applies to each of them. This shows~\eqref{eq:DIM:remove_bivertex_w}. 
 The proof of~\eqref{eq:DIM:remove_bivertex_b} follows by an analogous argument or via replacing $C$ with $\alt(C^\perp)$ and changing the colors of all vertices of $\G$ as discussed in \cref{sec:backgr:cyc} below.
\end{proof}
\noindent See~\cite[Section~3]{BHL} for a related construction.

\begin{definition}[Boundary restriction and extension]\label{dfn:OCP:restr_ext}
Assume that $\G$ admits an \APM and let $\Vspace=\R^d$. 
The \emph{boundary restriction} of a discrete holomorphic function $\Fw\in\Hwspace_{\Vspace}\HtripK$ (resp., $\Fb\in\Hbspace_{\Vspace}\HtripK$) is the $d\times n$ matrix $\alt(\pFw)\subset C$ (resp., $\alt(\pFb)\subset C^\perp$). 
In this case, we say that $\Fw$ (resp., $\Fb$) is the \emph{discrete holomorphic extension} of $\alt(\pFw)$ (resp., $\alt(\pFb)$). By \cref{lemma:MCE:apm_vs_Kast}, a discrete holomorphic extension of any $d\times n$ matrix $\Amat\subset C$ (resp., $\Atmat\subset C^\perp$) exists and is unique.
\end{definition}

\begin{figure}
 \includegraphics[scale=1]{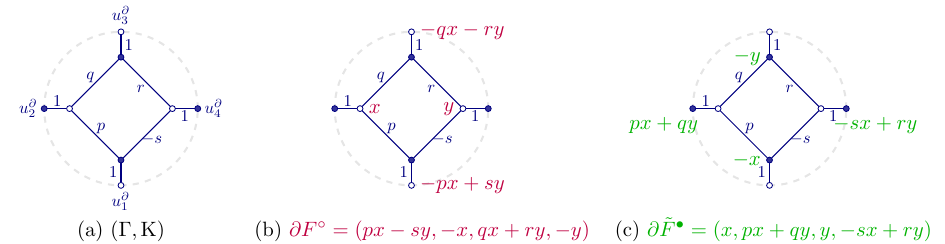}
 \caption{\label{fig:holom}Examples of discrete holomorphic functions.}
\end{figure}

\begin{example}\label{ex:KSprim24}
Let $k=2,n=4$, and let $(\G,\wt)$ be the weighted graph whose Kasteleyn edge weights are shown in \figref{fig:holom}(a), with $p,q,r,s>0$. We calculate using~\eqref{eq:intro:Delta_G_wt} that $C:=\Meas(\G,\wt) = \begin{pmatrix}
p & 1 & q & 0\\
-s & 0 & r & 1
\end{pmatrix}$, and thus $\alt(C^\perp) = \begin{pmatrix}
1 & p & 0 & -s\\
0 & q & 1 & r
\end{pmatrix}$. We clearly have $C,\alt(C^\perp)\in\Grtp(2,4)$. Examples of discrete holomorphic functions $\Fw,\Fb$ on $\G$ are shown together with $\pFw,\pFb$ in \figref{fig:holom}(b,c). 
We see that $\pFw = \begin{pmatrix}
x & y
\end{pmatrix}\cdot \alt(C)$ and $\pFb = \begin{pmatrix}
x & y
\end{pmatrix}\cdot \alt(C^\perp)$. This agrees with~\eqref{eq:intro:holom_C_and_Cp}.
\end{example}

\begin{remark}\label{rmk:DIM:Kast_gauge_eq}
The action of the gauge group $\RtpVint$ on $\wt\in\RtpE$ extends to $\HHspaceKC$: 
for $\gauge\in\RtpVint$, the value $\Fw(\wv)$ (resp., $\Fb(\bv)$) gets divided by $\gauge(\wv)$ (resp., by $\gauge(\bv)$). The \emph{sign gauge group} $\pmoneVint$ acts on $(\epsK,\wtK,\Fw,\Fb)$ by changing the signs at interior vertices. It is well known~\cite[Section~3.2]{Kenyon_lectures} that any two choices $\epsK,\epsK'$ of Kasteleyn signs on $\G$ are related by $\pmoneVint$-action. By~\eqref{eq:intro:primitive_TO}, the \KSprim $\xd$ is invariant under the action of the gauge groups $\RtpVint$ and $\pmoneVint$.
\end{remark}

\subsection{Cyclic symmetry and duality} \label{sec:backgr:cyc}
We briefly summarize the effect of the \emph{cyclic shift map $\Shift$} and the \emph{duality map $\altp$} on $\Grtnn(k,n)$ and related objects.

Let $\Shift:\R^n\to\R^n$ be the linear operator sending $(x_1,\dots,x_{n-1},x_n)\mapsto ((-1)^{k-1}x_n,x_1,\dots,x_{n-1})$. We let it act on columns of matrices in $\Gr(k,n)$ as
\begin{equation}\label{eq:Shift_dfn}
 \mat[C_1|\cdots|C_{n-1}|C_n]\cdot \Shift = \mat[(-1)^{k-1}C_n|C_1|\dots|C_{n-1}].
\end{equation}
It is easy to see that $\Shift$ preserves $\Grtnn(k,n)$. 

For $f\in\Boundkn$, $\Shift$ sends $\Ptp_f$ to $\Ptp_{\fpr}$, where $\fpr = \cShift(f) \in\Boundkn$ is defined by $\fpr(i) = f(i-1)+1$ for all $i\in\Z$. Let $\cShift(\G)$ be the graph obtained from $\G$ by cyclically relabeling the boundary vertices so that the new boundary vertices are $(\bdv_n,\bdv_1,\dots,\bdv_{n-1})$. The edge weights $\wt$ on $\G$ are preserved under $\cShift$. If $\epsK$ is a choice of Kasteleyn signs for $\G$ then we define $\epsKpr:=\cShift(\epsK)$ by
\begin{equation}\label{eq:cshift_eps_dfn}
 \epsKpr(\e) = 
 \begin{cases} 
 (-1)^{k+n-1}\epsK(\e), &\text{if $\e=\bde_n$};\\
 \epsK(\e), &\text{otherwise.}
 \end{cases}
\end{equation}
By \cref{rmk:DIM:Kast_sign_bdfn}, $\epsKpr$ is a choice of Kasteleyn signs for $\cShift(\G)$. 

We extend the vectors $\partial \Fw$ and $\pFb$ and the associated sequences $(\y_i)_{i=1}^n,(\yt_i)_{i=1}^n\in\C^n$ given by~\eqref{eq:intro:Fw_Fb_y} to sequences labeled by $i\in\Z$ via %
\begin{equation*}%
\pFw_{i+n} = (-1)^{k+n-1}\pFw_i,\quad \pFb_{i+n} = (-1)^{k+n-1}\pFb_i,\quad \y_{n+i} = (-1)^{k-1}\y_i,\quad \yt_{n+i} = (-1)^{k-1}\yt_i
\end{equation*}
for all $i\in\Z$. This is consistent with~\eqref{eq:intro:holom_C_and_Cp}, \eqref{eq:intro:Fw_Fb_y}, and~\eqref{eq:Shift_dfn}.

Next, we discuss the \emph{duality map} $\altp$. Recall from \cref{sec:intro:mom} that 
for $C = \mat[C_1|C_2|C_3|\dots|C_n]$, we have $\alt(C)=\mat[C_1|-C_2|C_3|\dots|(-1)^{n-1}C_n]$. 
 Let $\altp:\Gr(k,n)\to\Gr(n-k,n)$ be given by $\altp(C):=\alt(C^\perp) = \alt(C)^\perp$. This map sends $\Grtnn(k,n)$ homeomorphically to $\Grtnn(n-k,n)$. 
The map $\altp$ restricts to a homeomorphism
\begin{equation}\label{eq:altp_vs_fhat}
 \altp:\Ptp_f\xrasim\Ptp_{\fhat}, \quad\text{where}\ \ %
 \fhat\in\Boundxx(n-k,n) \ \ \text{is given by}\ \ %
\fhat(i) = f^{-1}(i) + n
\ \ \text{for all $i\in\Z$}.
\end{equation}
If $C = \Meas(\G,\wt)$ then $\altp(C) = \Meas(\Gvee,\wt)$, where $\Gvee$ is obtained from $\G$ by changing the colors of all vertices (i.e., swapping the roles of black and white). 
We have
\begin{equation}\label{eq:altp_Delta}
 \Delta_I(C) = \Delta_{\comp{I}} (\altp(C)) \quad\text{for all $I\in{\brn\choose k}$, where \quad $\comp{I}:=\brn\setminus I$.}
\end{equation}
 We record the following obvious property of the map $\alt$.
\begin{lemma}\label{lemma:wind_alt}
Let $\la$ be a $2\times n$ matrix satisfying $\brla<i,i+1>>0$ for all $i\in\brn$. Let $\la':=\diag(1,-1)\cdot \alt(\la)$ be obtained from $\alt(\la)$ by changing the sign of the second row. Then 
\begin{equation}\label{eq:wind_alt}
\text{$\brlapr<i,i+1>=\brla<i,i+1>$\quad for all $i\in\brn$},
\quad\text{and}\quad
 \wind(\la') = n\pi - \wind(\la),
\end{equation}
where we set $\la'_{i+n}:=(-1)^{k+n-1}\la'_i$ for all $i\in\Z$.
\end{lemma}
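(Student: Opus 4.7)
Both parts reduce to direct sign bookkeeping, and I would attack them in sequence. For the bracket identity, I would write $\la_i=(a_i,b_i)^T$ so that $(\alt(\la))_i=(-1)^{i-1}(a_i,b_i)^T$ and $\la'_i=\bigl((-1)^{i-1}a_i,\,(-1)^{i}b_i\bigr)^T$. A one-line $2\times 2$ determinant then gives
\[
\brlapr<i\,i+1> \;=\; (-1)^{i-1}a_i\cdot(-1)^{i+1}b_{i+1} \;-\; (-1)^i b_i\cdot(-1)^i a_{i+1} \;=\; a_i b_{i+1}-b_i a_{i+1} \;=\; \brla<i\,i+1>,
\]
which is positive by hypothesis. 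The same accounting, combined with $\la_{i+n}=(-1)^{k-1}\la_i$, gives $\la'_{i+n}=(-1)^{n-k-1}\la'_i$ (the factor works out since $(-1)^{i+n-1}(-1)^{k-1}(-1)^{-(i-1)}=(-1)^{n+k-1}=(-1)^{n-k-1}$), so $\la'$ satisfies the advertised twisted cyclic symmetry.

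For the winding identity \eqref{eq:wind_alt} I would argue geometrically rather than algebraically. The map $\diag(1,-1)$ is reflection across the $x$-axis, sending a vector of argument $\theta$ to one of argument $-\theta$, and the sign $(-1)^{i-1}$ rotates arguments by a multiple of $\pi$. Taking a continuous counter-clockwise lift $\theta_i$ of $\arg(\la_i)$ with $\theta_{i+1}-\theta_i=\wind(\la_i\to\la_{i+1})\in(0,\pi)$, an argument of $\la'_i$ is therefore $-\theta_i+(i-1)\pi$. The successive differences are
\[
\arg(\la'_{i+1})-\arg(\la'_i)\;\equiv\;\pi-(\theta_{i+1}-\theta_i)\pmod{2\pi},
\]
and since the right-hand side already lies in $(0,\pi)$ it must be the genuine counter-clockwise increment $\wind(\la'_i\to\la'_{i+1})$---this branch choice is forced by the positivity of $\brlapr<i\,i+1>$ just established. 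Summing over $i\in\brn$ telescopes immediately to
\[
\wind(\la') \;=\; \sum_{i=1}^{n}\bigl(\pi-(\theta_{i+1}-\theta_i)\bigr) \;=\; n\pi - \wind(\la).
\]

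There is no substantive obstacle here; the only subtlety is confirming that the successive angular increment $\pi-(\theta_{i+1}-\theta_i)$ really is the principal-branch counter-clockwise angle for $\la'$, and that follows from the first part of the lemma. I would view this as a bookkeeping warm-up whose purpose is presumably to let one transport winding hypotheses across the duality map $\altp$ used in the definition \eqref{eq:intro:LaLat} of $\LaLak$.
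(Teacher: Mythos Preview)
Your argument is correct. The paper does not actually supply a proof of this lemma; it merely introduces it with the sentence ``We record the following obvious property of the map $\alt$,'' so your direct sign and angle bookkeeping is exactly the kind of verification the paper leaves to the reader.
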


\subsection{Moves on planar bipartite graphs}\label{sec:backgr:moves}
We discuss well-known local moves on planar bipartite graphs. 
Each of these moves extends to a transformation of (Kasteleyn and ordinary) edge weights that preserves the boundary measurements, and induces appropriate transformations of discrete holomorphic functions and t-immersions. The moves are shown in \cref{fig:moves}; see also~\cite[Figures~3 and~4]{KLRR}.
The \emph{square/spider move} \MV2 will not be used in what follows.

\begin{definition}[Degree-$2$ vertex insertion and removal] \label{dfn:backgr:moves:contr_uncontr}
Let $\v$ be an interior vertex of degree $2$ in $\G$ such that both of its neighbors are interior vertices. The move \MV1 consists of removing $\v$ from $\G$ and identifying its two neighbors. The reverse move is also denoted \MV1. 
This move creates or removes a degenerate bigon in the interior of a t-immersion of $\GD$, turning it into a \emph{near t-immersion} in the terminology of~\cite{KLRR}. 
\end{definition}

\begin{definition}[Boundary edge insertion/contraction]
\label{dfn:moves:boundary_edge}
Let $i\in\brn$. The move \MVbd consists of declaring $\bdv_i$ to be an interior vertex, introducing a new boundary vertex $\bdvp_i$ of color opposite to that of $\bdv_i$, and connecting it to $\bdv_i$ by a new edge $\bdep_i$. 
 We set $\wtK(\bdep_i):=1$ if $\bdv_i$ is white and $\wtK(\bdep_i):=-1$ if $\bdv_i$ is black. This move creates or removes a degenerate bigon at the boundary of a t-immersion of $\GD$.
\end{definition}

\begin{figure}
\def\scl{1.22}
\begin{tabular}{c|c|c}
 \includegraphics[scale=\scl]{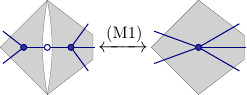}
&
 \includegraphics[scale=\scl]{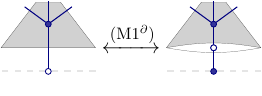}
&
 \includegraphics[scale=\scl]{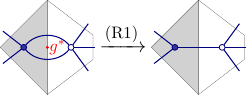}
\end{tabular}
 \caption{\label{fig:moves} Local moves on planar bipartite graphs.}
\end{figure}

\begin{definition}[Parallel edge reduction] \label{dfn:DIM:R1}
Suppose that $\G$ contains a bigonal face incident to a pair $\e_1,\e_2\in\Edges$ of parallel edges connecting vertices $\vv,\v\in\Vint$. The move \RV1 consists of replacing $\e_1,\e_2$ with a single edge from $\vv$ to $\v$. 
 This move replaces a degree-$2$ interior vertex of $\GD$ with a single edge. 
We relax condition~\itemref{intro:t_imm_orientation} so that whenever $\ff\in\Faces$ is incident to $\v\in\Vint$, the angle of the convex polygon $\xT(\v)$ at $\xT(\ff)$ is equal to $\pi$ if $\degGD(\ff)=2$ and belongs to $(0,\pi)$ if $\degGD(\ff)\geq3$. 
\end{definition}

\section{Momentum amplituhedron}\label{sec:mom}

As a first step towards our proof of the \oac (\cref{thm:intro:t_imm_vs_triples}), we develop some fundamental properties of the momentum amplituhedron and use them to show in \cref{prop:magic_homeo} that the \emph{magic projector} $\Qla$ defined in~\eqref{eq:intro:Qla} below preserves total positivity.

\subsection{Properties of the momentum amplituhedron map}
In our analysis, we will mostly consider $\la$ and $\lat$ independently. To that end, denote
 \begin{align}\label{eq:lak_latk}
 \lak &:= \{\la\in\Gr(2,n)\mid \text{$\brla<i,i+1> > 0$ for all $i\in\brn$ and $\wind(\la) = (k-1)\pi$}\};
\\
\label{eq:lak_latk2}
 \latk &:= \{\lat\in\Gr(2,n)\mid \text{$\brlat[i,i+1] > 0$ for all $i\in\brn$ and $\wind(\lat) = (k+1)\pi$} \}.
\end{align}
Thus, $\lalak = (\lak\times\latk)\cap \lalats$. Recall the map $\PhiLL$ defined in~\eqref{eq:intro:PhiLL_dfn}.

\begin{proposition}[{\cite[Section~2.2]{DFLP}}]\label{prop:momLL_basic}
Let $\LaLat\in\LaLak$ and $C\in\Grtnn(k,n)$.
\begin{enumerate}[label=(\arabic*)]
\item\label{mom1} The intersections $\la:=\La\cap C$ and $\lat:=\Lat\cap C^\perp$ are $2$-dimensional.
\item\label{mom2} If $C\in\Grnda(k,n)$ then $\la\in\lak$.
\item\label{mom3} If $C\in\Grndb(k,n)$ then $\lat\in\latk$.
\item\label{mom4} If $C\in\Grnd(k,n)$ then $\lalat\in\lalak$ (and thus $(\la,\lat,C)\in\TRIPLES$).
\end{enumerate}
\end{proposition}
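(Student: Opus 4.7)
My plan is to first prove the dimension statement in (1) by a transversality argument between totally nonnegative and totally positive subspaces; then derive the sign-flip conditions in (2)--(3) from an explicit bracket formula; handle the winding by a continuity argument using connectedness of $\Grnda(k,n)$; and finally observe that (4) follows by combining (2) and (3) with the automatic orthogonality $\la\perp\lat$. The statement for $(\Lat,\lat)$ is dual to the one for $(\La,\la)$ (via $C\leftrightarrow\altp(C)$ and $k\leftrightarrow n-k$), so I focus on the $\La$ side.

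For (1), dimension counting gives $\dim(\La\cap C)\geq (n-k+2)+k-n=2$, and equality holds iff $\La^\perp\cap C^\perp=\{0\}$. Writing $\La=\alt(D)$ with $D\in\Grtp(n-k+2,n)$ and using that $\alt$ is an orthogonal involution, we get $\La^\perp=\alt(D^\perp)=\altp(D)\in\Grtp(k-2,n)$ by the properties of $\altp$ recorded in \cref{sec:backgr:cyc}. It therefore suffices to show that for any $E\in\Grtp(k-2,n)$ and $C\in\Grtnn(k,n)$ one has $E\cap C^\perp=\{0\}$. The idea is to exhibit a nonzero $(n-2)\times(n-2)$ minor of the stacked matrix $\binom{E}{C^\perp}$: omit two columns $\{a,b\}$, expand the resulting $(n-2)\times(n-2)$ determinant by Laplace along the first $k-2$ rows, and apply the Pl\"ucker duality $\Delta_J(C^\perp)=\pm\Delta_{\comp J}(C)$ to rewrite it as a signed sum $\sum_{J_1}\epsilon_{J_1}\,\Delta_{J_1}(E)\,\Delta_{J_1\cup\{a,b\}}(C)$ over $J_1\in\binom{\brn\setminus\{a,b\}}{k-2}$. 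With the standard sign convention absorbing $\alt$ into the $C\leftrightarrow C^\perp$ duality, the Laplace signs cancel the duality signs, so every $\epsilon_{J_1}=+1$. Since $\Delta_{J_1}(E)>0$ for all $J_1$ and $\Delta_{J_1\cup\{a,b\}}(C)\geq 0$ with at least one strict inequality (we pick $\{a,b\}$ so that some $J_1\cup\{a,b\}$ is a nonvanishing Pl\"ucker label of $C$, which is possible as $C$ has rank $k$), the sum is strictly positive.

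For (2)--(3), I would parametrize $\la=\tilde\la\cdot\La$ with $\tilde\la=\ker(\La\cdot(C^\perp)^T)\subset\R^{n-k+2}$. Cramer's rule expresses $\Delta_T(\tilde\la)$ as signed $(n-k)\times(n-k)$ minors of $\La\cdot(C^\perp)^T$, and Cauchy--Binet then yields an explicit formula
\begin{equation*}
\brla<i\,j>=\sum_{S}\epsilon(S;i,j)\,\Delta_S(D)\,\Delta_{S'}(C),
\end{equation*}
summed over $S\in\binom{\brn}{n-k+2}$ with complementary $S'\in\binom{\brn}{k}$ paired in a way determined by $\{i,j\}$. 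Specializing to $j=i+1$ and using the $(2,0)$-nondegeneracy of $C$ (equivalently, $\rank[C_i|C_{i+1}]=2$), the signs $\epsilon$ collapse to $+1$, giving $\brla<i\,i+1>>0$. For the winding, I would observe that $\Grnda(k,n)$ is path-connected: any $C\in\Grnda$ can be perturbed into the top cell $\Grtp(k,n)$ by adding small positive multiples of a totally positive matrix, and this perturbation preserves $\rank[C_i|C_{i+1}]=2$ throughout. On $\Grnda$, the map $C\mapsto\wind(\la(C))$ is continuous and takes values in $(k-1)\pi+2\pi\Z$ by the twisted cyclic symmetry, hence it is locally constant, hence constant. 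The value $(k-1)\pi$ is then pinned down by a single direct computation at a chosen point of $\Grtp(k,n)$.

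Part (4) follows immediately: if $C\in\Grnd$ then $C\in\Grnda\cap\Grndb$, so (2) and (3) yield $\la\in\lak$ and $\lat\in\latk$; the orthogonality $\la\perp\lat$ is automatic from $\la\subset C$ and $\lat\subset C^\perp$. The main obstacle is the sign bookkeeping in the bracket formula for $\brla<i\,j>$: simultaneously tracking signs from $\alt$, from Cramer's rule and Cauchy--Binet, and from the duality $\Delta_I(C^\perp)=\pm\Delta_{\comp I}(C)$, and verifying that they collapse to strict positivity at $j=i+1$, is notation-heavy rather than conceptually deep, consistent with the author's decision to defer the full argument to an appendix.
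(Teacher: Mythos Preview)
Your proposal is correct, and the underlying computations coincide with those in the paper's appendix proof---the difference is packaging. The paper reduces everything to known properties of the $m=2$ momentum-twistor amplituhedron $\Ampl_{n,k',2}(Z)$: it sets $Y:=C^\perp\cdot\La^T$ and $\tilde Y:=C\cdot\Lat^T$ (your $\tilde\la=\ker(\La\cdot(C^\perp)^T)$ is precisely $Y^\perp$), observes that $Y\in\Ampl_{n,n-k,2}(\alt(\La))$ and $\tilde Y\in\Ampl_{n,k,2}(\Lat)$, and then cites (i) the full-rank property of $C'\cdot Z^T$ (for $C'\in\Grtnn$, $Z\in\Grtp$) to get part~(1), (ii) the bracket identity $\brlapr<i\,i+1>=\sum_J\Delta_J(C')\Delta_{J\sqcup\{i,i+1\}}(Z)$ from~[KW] for the positivity in (2)--(3), and (iii) the sign-flip/winding count on $\Grtp$ from~[AHTT, PSBW], extending to $\Grnda$ by exactly the continuity argument you give. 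Your Laplace/Cauchy--Binet derivations are precisely what prove (i)--(ii) from scratch, and your connectedness-plus-single-point computation replaces the citation in (iii). So you are essentially reproving the cited amplituhedron facts in place; this buys self-containedness at the cost of the sign bookkeeping you flag, while the paper trades that bookkeeping for dependence on the $m=2$ amplituhedron literature.
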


\subsection{Extending \texorpdfstring{$(\la,\lat)$}{(𝜆,𝜆̃)} to \texorpdfstring{$(\La,\Lat)$}{(𝚲,𝚲̃)}}
Given a fixed $C\in\Grnd(k,n)$, one may view \cref{prop:momLL_basic} as a convenient way to find a pair $\lalat\in\lalak$ 
 such that $\la\subset C\subset\latp$: one just needs to choose any pair $\LaLat\in\LaLak$ and set $\la:=C\cap \La$, $\lat:=C^\perp\cap \Lat$. In this subsection, given a fixed pair $\lalat\in\lalak$, we would like to find $\LaLat\in\LaLak$ such that $\la\subset\La$ and $\lat\subset\Lat$.\footnote{Given a fixed pair $\lalat\in\lalak$, it is not always possible to also find $C\in\Grnd(k,n)$ such that $\la\subset C \subset\latp$. On the other hand, it is possible to find such $C$ when $\lalat\in\lalapp$ is \Mdash positive by \cref{thm:f_triang} below.}

\begin{proposition}\label{prop:from_la_to_La} \
\begin{itemize}
\item For each $\la\in\lak$, there exists $\La\in\alt(\Grtp(n-k+2,n))$ such that $\la\subset\La$.
\item For each $\lat\in\latk$, there exists $\Lat\in\Grtp(k+2,n)$ such that $\lat\subset\Lat$.
\end{itemize}
\end{proposition}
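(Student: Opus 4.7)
The two bullets are equivalent under the duality involution $\altp$ of~\cref{sec:backgr:cyc}. Indeed, $\altp$ preserves total positivity, and \cref{lemma:wind_alt} implies that if $\la \in \lak$ then $\alt(\la) \in \Gr(2,n)$ (as a subspace) satisfies the conditions defining $\latk$ with $k$ replaced by $n-k$: it admits a matrix representative with all consecutive brackets positive (obtained by negating one row to restore signs after applying $\alt$), and its winding equals $n\pi - (k-1)\pi = ((n-k)+1)\pi$. Given $L \in \Grtp(n-k+2,n)$ containing $\alt(\la)$, as supplied by the second bullet with $k$ replaced by $n-k$, the subspace $\La := \alt(L)$ lies in $\alt(\Grtp(n-k+2,n))$ and satisfies $\la = \alt(\alt(\la)) \subset \alt(L) = \La$. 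So it suffices to prove the second bullet.

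For the second bullet, given $\lat \in \latk$, I plan to construct $\Lat \in \Grtp(k+2,n)$ containing $\lat$ by an explicit extension. Choose a matrix representative of $\lat$ with $\brlat[i\,i+1] > 0$ for all $i$, and parameterize its columns as $\lat_i = r_i(\cos\theta_i, \sin\theta_i)^T$ with $r_i > 0$ and real angles $\theta_1 < \theta_2 < \dots < \theta_n$ whose total cyclic increment equals $(k+1)\pi$ (in accordance with the winding condition and the twisted cyclic symmetry). After rescaling $\phi_i := \theta_i/(k+1) \in [0, \pi)$, the $\phi_i$ are $n$ ordered nodes on a single arc. I would extend $\lat$ to a $(k+2) \times n$ matrix $\Lat$ by appending $k$ rows of the form $(r_i\chi_j(\phi_i))_{i=1}^n$, $j = 1, \dots, k$, where $\chi_1, \dots, \chi_k$ are functions from a Chebyshev system on the arc chosen so that the top two rows of $\Lat$ span the same $2$-plane in $\R^n$ as the rows of $\lat$. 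The maximal $(k+2) \times (k+2)$ minors of $\Lat$ then factor, via generalized Vandermonde-type identities for Chebyshev systems evaluated at ordered nodes, as explicit products of positive quantities.

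The main obstacle is verifying total positivity of $\Lat$ under this construction. Classical results on totally positive matrices built from Chebyshev systems evaluated at ordered nodes provide the needed factorization and positivity, but some care is required to separate the even-$k$ and odd-$k$ cases and to ensure that the ``top two rows span $\lat$'' condition is compatible with the chosen $\chi_j$. An alternative conceptual route that I would probably pursue is a continuity argument: $\latk$ is connected (it is an open semi-algebraic component cut out by a fixed value of the discrete winding invariant), and the subset of $\lat \in \latk$ admitting an extension into $\Grtp(k+2,n)$ is open (by perturbation in a small neighborhood of any witness $\Lat$) and, by a compactness/boundary analysis in $\Grtnn(k+2,n)$, also closed. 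Exhibiting a single basepoint $(\lat_0, \Lat_0)$ (e.g., via an evenly-spaced configuration of angles with a canonical Chebyshev $\Lat_0$) then yields the result for all $\lat \in \latk$ by connectedness.
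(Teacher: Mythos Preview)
Your reduction of the first bullet to the second via $\alt$ and \cref{lemma:wind_alt} is correct and matches the paper's reduction (the paper phrases it as a single statement about $\la'\in\lakpr$ for the appropriate $k'$).

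However, both of your proposed routes for the second bullet have genuine gaps.

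\textbf{The Chebyshev route.} You fix the nodes first, setting $\phi_i=\theta_i/(k+1)$, and then ask for a Chebyshev system of $k+2$ functions on $[0,\pi)$ that contains $\cos((k+1)\phi)$ and $\sin((k+1)\phi)$. You do not name such a system, and there is no standard one: a nontrivial linear combination of $\cos((k+1)\phi)$ and $\sin((k+1)\phi)$ already has $k+1$ zeros on $(0,\pi)$, so any Chebyshev system of size $k+2$ containing both must be chosen very delicately. The paper takes the opposite tack: it \emph{fixes} the Chebyshev system to be monomials $1,y,\dots,y^{k'-1}$ (so $C'$ is a column-rescaled Vandermonde matrix, automatically in $\Grtp(k',n)$ for ordered nodes), and then \emph{solves for the nodes} $y_1<\dots<y_n$. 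The device for this is the logarithmic derivative $h(x)=P'(x)/P(x)$ of a degree-$(k'-1)$ polynomial $P$ with simple real roots: $h$ is monotone on each of the $k'$ intervals between roots, so after partitioning $\brn$ into $k'$ blocks according to which half-turn of the winding $\la'_i$ lies in, one solves $h(y_i)=-\la'_{2,i}/\la'_{1,i}$ in the correct interval. This simultaneously produces ordered $y_i$ and expresses the two rows of $\la'$ as the rows $(t_iP(y_i))_i$ and $(-t_iP'(y_i))_i$, i.e., as linear combinations of the Vandermonde rows. Your proposal is missing precisely this mechanism for converting the winding data into ordered nodes for a \emph{fixed} system.

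\textbf{The continuity route.} Openness is fine, but your closedness argument does not work. If $\lat_m\to\lat$ with $\lat_m\subset\Lat_m\in\Grtp(k+2,n)$, compactness of $\Grtnn(k+2,n)$ only gives a subsequential limit $\Lat_\infty\in\Grtnn(k+2,n)$ containing $\lat$; there is no reason $\Lat_\infty$ lies in $\Grtp(k+2,n)$, and no general principle lets you perturb $\Lat_\infty$ back into $\Grtp(k+2,n)$ while preserving the containment $\lat\subset\Lat$. (That last step is essentially the proposition you are trying to prove.) So the closedness step is circular.
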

\begin{proof}
Let $\la'\in\lakpr$. We will show that there exists $C'\in\Grtp(k',n)$ such that $\la'\subset C'$. This will imply both of the statements above: we get the first statement using~\eqref{eq:wind_alt} for $\la':=\diag(1,-1)\cdot \alt(\la)$, $k':=n-k+2$, and $\La:=\alt(C')$ and the second statement for $\la':=\lat$, $k'=k+2$, and $\Lat:=C'$.

Our goal is to find real numbers $y_1<y_2<\cdots<y_n$ and $t_1,t_2,\dots,t_n>0$ such that $\la'\subset C'$ for $C':=\VandM\cdot \tdiag$, where $\VandM:=(y_j^{i-1})_{(i,j)\in\brx{k'}\times\brn}$ is a $k'\times n$ Vandermonde matrix and $\tdiag = \diag(t_1,t_2,\dots,t_n)$. It is well known that any such matrix $C'$ belongs to $\Grtp(k',n)$. 

Let $P(x) = a_0 + a_1 x + \cdots + a_{k'-1}x^{k'-1}$ be a degree-$(k'-1)$ polynomial with real coefficients such that $a_{k'-1} = (-1)^{k'-1}$. Assume that $P(x)$ has $k'-1$ distinct real roots $x_1 < x_2 < \cdots < x_{k'-1}$. 
 Set $x_0:=-\infty$ and $x_{k'}:=+\infty$. Consider the function $h(x) := P'(x) / P(x)$. It is monotone decreasing on each interval $(x_{i-1},x_{i})$, $i\in\brx{k'}$, since the second derivative of $\log|P(x)| = \sum_{i=1}^{k'-1} \log|x-x_i|$ is negative. See \figref{fig:Cheb}(right) for an example.

\begin{figure}
 \includegraphics[scale=1]{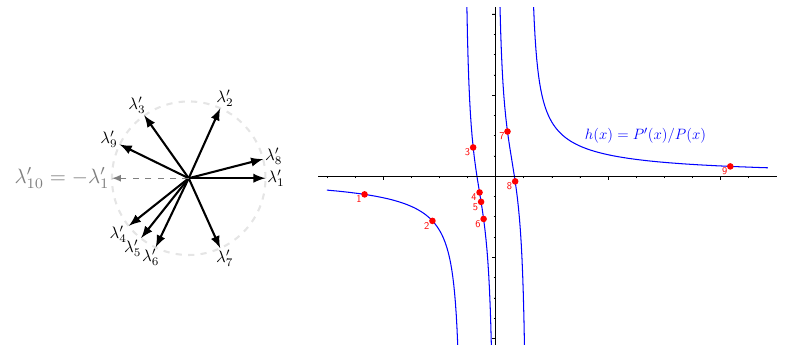}
 \caption{\label{fig:Cheb}The column vectors of $\la'$ (left) and the corresponding solutions to $h(x)=\slp_i$ (right); see the proof of \cref{prop:from_la_to_La}.}
\end{figure}

Let us adjust $\la'$ using $\SL_2(\R)$-action so that $\la'_1 = (1,\epsilon)^T$ for some small $\epsilon>0$, and so that the first row of $\la'$ has no zero entries. We denote $\la'_0:=(1,0)^T$ and $\la'_{n+1}:=(-1)^{k'-1}\la'_0$. For $j=1,\dots,n,n+1$, we let 
$\phi_j:= \sum_{i=1}^j \Arg_{(-\pi,\pi]}(\la'_{i-1},\la'_{i})$. 
 Thus, $0<\phi_1<\pi/2$ and $\phi_{n+1} = (k'-1)\pi$. Let us consider a partition $\brn=B_0\sqcup B_1\sqcup\cdots\sqcup B_{k'-1}$ of $\brn$ into $k'$ intervals, so that all $i\in B_j$ satisfy 
$\pi/2+(j-1)\pi< \phi_i< \pi/2+j\pi$. For example, if $\la'$ is given in \figref{fig:Cheb}(left) then we get the partition $\brn = \{1,2\}\sqcup\{3,4,5,6\}\sqcup\{7,8\}\sqcup\{9\}$.

Let $\slp_i:=-\la'_{2,i}/\la'_{1,i}$ for $i\in\brn$. It is clear that if $i<i'$ belong to the same block then $\slp_i > \slp_{i'}$. 
For $i\in B_j$, let $x=y_i$ be the unique solution to $h(x) = \slp_i$ on the interval $(x_{j},x_{j+1})$; see \figref{fig:Cheb}(right). 
It follows that $y_1<y_2<\cdots<y_n$, and that there exist $t_1,t_2,\dots,t_n\in\Rtp$ such that $\la'_{1,i} = t_iP(y_i)$ and $\la'_{2,i} = -t_i P'(y_i)$. Letting $A$ be the $2\times k'$ matrix comprised of the coefficients of $P$ and $-P'$, we see that $\la' = A\cdot \VandM \cdot \tdiag$, as desired. 
\end{proof}

\begin{corollary}
 The set $\TRIPLES$ is connected.
\end{corollary}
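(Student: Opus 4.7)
The plan is to exhibit $\TRIPLES$ as the continuous surjective image of a manifestly connected space. Specifically, I would consider the map
\[
\Psi: \Grnd(k,n)\times\LaLak \longrightarrow \TRIPLES, \qquad (C,\La,\Lat) \longmapsto (C\cap\La,\, C^\perp\cap\Lat,\, C).
\]
Proposition \ref{prop:momLL_basic} does most of the heavy lifting: parts \itemref{mom1} and \itemref{mom4} guarantee that for $(\La,\Lat)\in\LaLak$ and $C\in\Grnd(k,n)$ the intersections $\La\cap C$ and $\Lat\cap C^\perp$ are always $2$-dimensional, and together they form a pair in $\lalak$. In particular $\Psi$ lands in $\TRIPLES$, and since the dimension of each intersection is constant on the domain, taking intersections is continuous as a map into $\Gr(2,n)\times\Gr(2,n)$; hence $\Psi$ is continuous.

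For surjectivity of $\Psi$, I would invoke Proposition \ref{prop:from_la_to_La}. Given any $(\la,\lat,C)\in\TRIPLES$, that proposition produces $\La\in\alt(\Grtp(n-k+2,n))$ with $\la\subset\La$ and $\Lat\in\Grtp(k+2,n)$ with $\lat\subset\Lat$, so $(\La,\Lat)\in\LaLak$. Then $\la\subset \La\cap C$, and since $\La\cap C$ is $2$-dimensional by Proposition \ref{prop:momLL_basic}\itemref{mom1}, we must have $\la=\La\cap C$; similarly $\lat=\Lat\cap C^\perp$. Thus $(\la,\lat,C)=\Psi(C,\La,\Lat)$.

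It remains to argue that the domain of $\Psi$ is connected. The factor $\LaLak$ is the product of two open Schubert cells (after applying the homeomorphism $\alt$), each of which is a cell and hence connected. For the other factor, recall that $\Grtp(k,n)$ is contained in $\Grnd(k,n)$ (the top cell corresponds to $f(i)=i+k$, which is $(2,2)$-nondegenerate when $2\le k\le n-2$), is open and dense in $\Grtnn(k,n)$, and is itself connected. Since $\Grnd(k,n)$ is an open subset of the locally path-connected space $\Grtnn(k,n)$, every point of $\Grnd(k,n)$ has a neighborhood inside $\Grnd(k,n)$ that meets the dense connected subset $\Grtp(k,n)$, and can therefore be path-connected to $\Grtp(k,n)$ within $\Grnd(k,n)$. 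Hence $\Grnd(k,n)$ is path-connected, so the product $\Grnd(k,n)\times\LaLak$ is connected, and its continuous image $\TRIPLES$ is connected.

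The only place one might anticipate friction is the path-connectivity of $\Grnd(k,n)$, since it is not itself a cell but rather a union of positroid cells; but the density of $\Grtp(k,n)$ combined with local path-connectivity of $\Grtnn(k,n)$ (which is a closed ball by Galashin--Karp--Lam) bypasses this issue. Everything else is a direct application of the two preceding propositions.
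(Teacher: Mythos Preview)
Your argument is correct and follows the same route as the paper: both exhibit $\TRIPLES$ as the continuous image of $\Grnd(k,n)\times\LaLak$ under $(C,\La,\Lat)\mapsto(\PhiLL(C),C)$, use Proposition~\ref{prop:from_la_to_La} for surjectivity, and conclude from connectedness of the domain. The paper simply asserts the domain is ``clearly connected,'' whereas you supply the additional density/local-path-connectedness argument for $\Grnd(k,n)$; the minor slip of calling the factors of $\LaLak$ ``open Schubert cells'' (they are top positroid cells) does not affect the argument.
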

\begin{proof}
By \crefi{prop:momLL_basic}{mom4}, we have a continuous map $\LaLak\times\Grnd(k,n)\to\TRIPLES$ sending $(\La,\Lat,C)\mapsto (\PhiLL(C),C)$. By \cref{prop:from_la_to_La}, this map is surjective: given $(\la,\lat,C)\in\TRIPLES$, letting $\LaLat$ be obtained from $\lalat$ via \cref{prop:from_la_to_La}, we get $\la\subset\La\cap C$ and $\lat\subset\Lat\cap C^\perp$. By \crefi{prop:momLL_basic}{mom1}, $\dim(\La\cap C)=\dim(\Lat\cap C^\perp) = 2$, so $\la=\La\cap C$ and $\lat=\Lat\cap C^\perp$, i.e., $\lalat=\PhiLL(C)$. 
 Since the set $\LaLak\times\Grnd(k,n)$ is clearly connected, we conclude that $\TRIPLES$ is also connected.
\end{proof}
\begin{remark}\label{rmk:connected}
A similar argument shows that the individual sets $\{(\la,C)\in\lak\times\Ptp_f\mid\la\subset C\}$ and $\{(\lat,C)\in\latk\times\Ptp_g\mid C\subset\latp\}$
are connected for any $f\in\Bounda$ and $g\in\Boundb$.
\end{remark}

\subsection{The cyclically symmetric momentum amplituhedron}\label{sec:cs_mom_ampl}
We develop the notion of a cyclically symmetric momentum amplituhedron following~\cite[Section~5]{GKL1}. 

Let $\Shift:\R^n\to\R^n$ be the cyclic shift operator given by~\eqref{eq:Shift_dfn}. 
We will choose two special matrices $(\Lacs,\Latcs)\in\LaLak$ such that $\Momcs$ is invariant under a certain natural action of $\Shift$.

Recall from \cref{sec:backgr:cyc} that $\Shift$ preserves $\Grtnn(k,n)$ and $\Grtp(k,n)$. The operator $\Salt:=\alt\circ\Shift\circ\alt$ preserves $\alt(\Grtp(k,n))$. %
 Thus, the map $(\La,\Lat)\mapsto (\La\cdot \Salt,\Lat\cdot \Shift)$ preserves $\LaLak$. 

By~\cite{Karp}, for all $0\leq k\leq n$, there exists a unique point $\Xcs\in\Grtnn(k,n)$ such that $\Xcs = \Xcs\cdot \Shift$, and in fact, we have $\Xcs\in\Grtp(k,n)$. We will describe $\Xcs$ explicitly below.
\begin{definition}
The \emph{cyclically symmetric momentum amplituhedron} $\Momcs$ is the momentum amplituhedron associated with $\Lacs:=\alt(\XcsLa)$ and $\Latcs:=\XcsLat$.
\end{definition}

The most convenient way to work with $\Momcs$ is to use the complex Grassmannian $\Gr(k,n;\C)$. An element $X\in\Gr(k,n)$ of the real Grassmannian gives rise to an element $X\oplus\I X\in\Gr(k,n;\C)$ represented by the same $k\times n$ matrix, where $\I:=\sqrt{-1}$. Conversely, given a subspace $Y\in\Gr(k,n;\C)$, we say that $Y$ is \emph{real} and write $Y\in\Gr(k,n)$ if it is invariant under complex conjugation. In that case, the subspace (over $\R$) of vectors in $Y$ preserved by complex conjugation is an element of the real Grassmannian $\Gr(k,n)$. Alternatively, $Y\in\Gr(k,n;\C)$ is real if and only if the ratio of any two nonzero Pl\"ucker coordinates of $Y$ is real.

Let $\eig_1,\eig_2,\dots,\eig_n\in\C$ be the eigenvalues of $\Shift$. They are the $n$-th roots of $(-1)^{k-1}$, and we order them so that $\Re(\eig_1)\geq\Re(\eig_2)\geq\cdots\geq\Re(\eig_n)$. Let $\eigvec_1,\eigvec_2,\dots,\eigvec_n\in\C^n$ be the (orthonormal) eigenvectors corresponding to $\eig_1,\eig_2,\dots,\eig_n$, and let $\Eig$ be the (unitary) $n\times n$ matrix whose rows are $\eigvec_1,\eigvec_2,\dots,\eigvec_n$. Let $\Id_r$ denote the $r\times r$ identity matrix and $\bzero_{a\times b}$ denote the $a\times b$ zero matrix. The point $\Xcs\in\Grtnn(k,n)$ is given (as a real element of $\Gr(k,n;\C)$) by
\begin{equation*}%
 \Xcs := \mat[\Id_k|\bzero_{k\times(n-k)}] \cdot \Eig = \Span(\eigvec_1,\eigvec_2,\dots,\eigvec_k).
\end{equation*}
Here, $\mat[\Id_k|\bzero_{k\times(n-k)}]$ denotes a block $k\times n$ matrix. Consequently,
\begin{equation*}%
 \Lacs = \mat[\bzero_{(n-k+2)\times(k-2)} | \Id_{n-k+2}]\cdot \Eig \quad\text{and}\quad
 \Latcs = \mat[\Id_{k+2}|\bzero_{(k+2)\times(n-k-2)}]\cdot \Eig.
\end{equation*}

In these coordinates, the map $\PhiLL$ has a particularly simple form. It was shown in~\cite[Proposition~3.4]{GKL1} that any point $C\in\Grtnn(k,n)$ can be represented as $\mat[\Id_k|P_{k\times(n-k)}]\cdot \Eig$ for some $k\times(n-k)$ complex matrix $P_{k\times(n-k)}$. We have $C^\perp = \mat[-P^\Hast_{(n-k)\times k}|\Id_{n-k}]\cdot \Eig$, where $P^\Hast_{(n-k)\times k}$ denotes the conjugate transpose of $P_{k\times(n-k)}$.

\def\mmscl{1}

Let us split $P_{k\times(n-k)} = \begin{pmatrix}
A_{(k-2)\times 2} & B_{(k-2)\times(n-k-2)} \\
C_{2\times 2} & D_{2\times(n-k-2)} 
\end{pmatrix}$ into four blocks, indicating the block sizes in subscripts. 
 Computing $\lalat = \Phics(C)$ with $\la = C\cap \Lacs$ and $\lat = C^\perp\cap \Latcs$, we find
\begin{equation*}%
 \la = 
\medmatrix{\mmscl}{
 \bzero_{2\times(k-2)} & \Id_{2} & C_{2\times 2} & D_{2\times(n-k-2)} 
}\cdot \Eig,
\quad\text{and}\quad
 \lat = 
\medmatrix{\mmscl}{
-A^\Hast_{2\times(k-2)} & -C^\Hast_{2\times 2} & \Id_{2} & \bzero_{2\times(n-k-2)} 
}\cdot \Eig.
\end{equation*}
In particular, $(\lacs,\latcs):=\Phics(\Xcs)$ is obtained by setting $P_{k\times(n-k)}:=\bzero_{k\times(n-k)}$. Thus,
\begin{equation*}%
 \lacs = \Span(\eigvec_{k-1},\eigvec_k) \quad\text{and}\quad \latcs = \Span(\eigvec_{k+1},\eigvec_{k+2}).
\end{equation*}

\noindent The vectors $\eigvec_{k-1},\eigvec_k$ are complex conjugates of each other, and similarly for $\eigvec_{k+1},\eigvec_{k+2}$. Thus, $\lacs$ and $\latcs$ are indeed real elements of $\Gr(2,n)$. After acting by $\GL_2(\C)$ on $\lacs$ and $\latcs$, for all $i\in\brn$, we get
\begin{equation}\label{eq:lacs_latcs_regangle}
 \lacs_i = (\cos(i\regangle_{k-1}),\sin(i\regangle_{k-1})) \quad\text{and}\quad
 \latcs_i = (\cos(i\regangle_{k+1}),\sin(i\regangle_{k+1})),
 \quad\text{where $\regangle_{k\pm1} := \frac{k\pm1}{n}\pi$}.
\end{equation}
 Observe that $\lacs\cdot \Shift = \lacs$ and $\latcs\cdot \Shift = \latcs$. Thus, $(\lacs,\latcs)$ is a cyclically symmetric point inside the cyclically symmetric momentum amplituhedron $\Momcs$. A t-immersion of the form $\Tllcs$ is bounded by a regular $n$-gon and satisfies
$\sumwT_i = \regangle_{n-k-1}$ and $\sumbT_i = \regangle_{k-1}$ for all $i\in\brn$.

Directly adapting the argument in~\cite[Section~5]{GKL1} to the case of the momentum amplituhedron yields the following result that will not be used in the rest of the paper.
\begin{proposition}
The cyclically symmetric momentum amplituhedron $\Momcs$ is homeomorphic to a closed $(2n-4)$-dimensional ball. 
\end{proposition}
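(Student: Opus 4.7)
The plan is to adapt the argument of \cite[Section~5]{GKL1}, which established the analogous ball statement for the cyclically symmetric amplituhedron. The strategy has three parts: compactness of $\Momcs$, construction of a contracting flow onto $(\lacs,\latcs)$, and a standard topological argument that identifies a compact space admitting such a flow with a closed ball.

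First, I would observe that $\Momcs$ is compact as the image of the compact totally nonnegative Grassmannian $\Grtnn(k,n)$ under the continuous map $\Phics$, and that $\dim\Momcs=2n-4$ by \cref{rmk:dim_MomLL}. Next, I would use the one-parameter subgroup $\{\exp(t\Shift)\}_{t\in\R}$ to produce a flow on $\Momcs$. By Karp's theorem, the right action $C\mapsto C\cdot\exp(t\Shift)$ preserves $\Grtnn(k,n)$; moreover, both $\Lacs$ and $\Latcs$ are $\Shift$-invariant, being spanned by subsets of the eigenvectors $\eigvec_j$ of $\Shift$. Combining these with the identity $(C\cdot M)^\perp = C^\perp\cdot(M^T)^{-1}$ and the fact that $\Shift^T=\Shift^{-1}$, the action descends through $\Phics$ to the well-defined flow
\[
(\la,\lat)\;\longmapsto\;\bigl(\la\cdot\exp(t\Shift),\ \lat\cdot\exp(-t\Shift^{-1})\bigr)
\]
on $\Momcs$.

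Second, I would show that this flow retracts $\Momcs$ onto the fixed point $(\lacs,\latcs)$ as $t\to+\infty$. Working in the eigenbasis $\Eig$ introduced in the paper, the flow scales the coefficient of $\eigvec_j$ in each row of $\la$ by $\exp(t\eig_j)$. For $\la\subset\Lacs=\Span(\eigvec_{k-1},\dots,\eigvec_n)$, the dominant pair of eigenvalues in the real-part ordering is $\{\eig_{k-1},\eig_k\}$ (using that the $\eig_j$'s are the $n$-th roots of $(-1)^{k-1}$, arranged so that $\Re(\eig_1)\ge\cdots\ge\Re(\eig_n)$ with strict inequality between consecutive complex-conjugate pairs). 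After quotienting by the $\GL_2$-action on $\Gr(2,n)$, the $2$-plane $\la\cdot\exp(t\Shift)$ thus converges to $\Span(\eigvec_{k-1},\eigvec_k)=\lacs$. The dual computation applies to $\lat\subset\Latcs=\Span(\eigvec_1,\dots,\eigvec_{k+2})$: the flow $\exp(-t\Shift^{-1})$ scales by $\exp(-t\bar\eig_j)$, which reverses the real-part ordering, so the dominant pair within $\{\eig_1,\dots,\eig_{k+2}\}$ becomes $\{\eig_{k+1},\eig_{k+2}\}$, giving the limit $\latcs$.

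Finally, with compactness and a contracting flow toward a unique fixed point in hand, I would invoke the topological argument executed in \cite[Section~5]{GKL1}: using the flow orbits to parametrize $\Momcs\setminus\{(\lacs,\latcs)\}$ as an open cone on a transversal ``link'', and verifying that this link is a sphere of dimension $2n-5$, yields $\Momcs\cong B^{2n-4}$. The main obstacle will be this last step: constructing a codimension-one transversal slice in $\Momcs$ that is met transversely exactly once by each nontrivial flow orbit and is itself homeomorphic to a sphere. In \cite{GKL1} this is achieved using the boundary stratification of the amplituhedron, and the analogous boundary structure of $\Momcs$ (inherited from the positroid stratification of $\Grtnn(k,n)$ via $\Phics$) should make the argument transfer.
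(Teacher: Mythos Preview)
Your proposal is correct and follows exactly the approach the paper indicates: the paper does not give a proof beyond the sentence ``Directly adapting the argument in~\cite[Section~5]{GKL1},'' and your sketch fills in precisely that adaptation---the $\exp(t\Shift)$-flow on $\Grtnn(k,n)$, its compatibility with $\Phics$ via the $\Shift$-invariance of $\Lacs,\Latcs$, and the dominant-eigenvalue contraction to $(\lacs,\latcs)$. The only caveat is that your final step (identifying the link with a sphere) is, as you note, the part requiring the most care; this is handled in \cite{GKL1} and transfers as you anticipate.
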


\subsection{The magic projector \texorpdfstring{$\Qla$}{Q\_lambda}}\label{ssec:MOM:magic_Qla_mom_ampl}
Fix a $2$-plane $\la\in\Gr(2,n)$ satisfying $\brla<i,i+1> \neq0$ for all $i\in\brn$. 
We discuss properties of a certain linear operator $\Qla$ introduced in~\cite{AHCC}, where it was used to rewrite the amplitude as an integral over the momentum-twistor space rather than the momentum space. Following the wording of~\cite{abcgpt}, we refer to $\Qla$ as \emph{the magic projector}. By~\cite[Equation~(8.23)]{abcgpt}, for a matrix $C=\mat[C_1|C_2|\cdots|C_n]\in\Gr(k,n)$, the matrix $C\cdot \Qla$ has columns
\begin{equation}\label{eq:intro:Qla}
 (C\cdot \Qla)_i = \frac{1}{\brla<i-1,i>\brla<i,i+1>} \left(
C_{i-1}\brla<i,i+1> + C_i\brla<i+1,i-1> + C_{i+1}\brla<i-1,i>
\right) 
\quad\text{for $i\in\brn$.}
\end{equation}
 Here, we again use the twisted cyclic symmetry and set $C_{i+n} := (-1)^{k-1}C_i$ for all $i\in\Z$.

We denote $\Gr(k-2,\lap):=\{\ddC\in\Gr(k-2,n)\mid \ddC\subset\lap\}$ and $\Grsupla(k,n):=\{C\in\Gr(k,n)\mid \la\subset C\}$ and set 
$\Grtnn(k-2,\lap):=\Grtnn(k-2,n)\cap\Gr(k-2,\lap)$ and $\Grtnnsupla(k,n):=\Grsupla(k,n)\cap \Grtnn(k,n)$. 
 It is well known that $\Qla^T = \Qla$ and $\Ker\Qla = \la$. 
Thus, for $C\in\Grsupla(k,n)$, we have $C\cdot \Qla\in\Gr(k-2,\lap)$. Abusing notation, we denote the resulting map $\Qla:\Grsupla(k,n)\to\Gr(k-2,\lap)$ by the same symbol.

\begin{definition}\label{dfn:TREE:Qlapp}
Let $\Qlapp:\Gr(k-2,\lap)\to\Grsupla(k,n)$ be the operator sending 
$\ddC\in\Gr(k-2,\lap)$ to the set-theoretic preimage $\Qlapp(\ddC)$ of $\ddC$ under the linear map $\Qla:\R^n\to\lap$. 
\end{definition}

The following result is straightforward.
\begin{lemma}\label{lemma:TREE:magic_homeo}
$\Qla$ and $\Qlapp$ are mutually inverse isomorphisms between $\Grsupla(k,n)$ and $\Gr(k-2,\lap)$. 
\end{lemma}

Recall from~\eqref{eq:altp_Delta} that we also have an isomorphism $\altp:\Gr(k,n)\xrasim \Gr(n-k,n)$ that restricts to a homeomorphism $\altp:\Grtnn(k,n)\xrasim\Grtnn(n-k,n)$. 

\begin{lemma}\label{lemma:Qlapp_duality}
Let $\la\in\Matdnr$ be such that $\brla<i,i+1> >0$ for all $i\in\brn$, and let $\ladual:=\alt(\la)$. Then 
\begin{equation}\label{eq:Qlapp_duality}
 \Qlapp = \altp\circ \Qladual \circ \altp \quad\text{as maps from $\Gr(k-2,\lap)$ to $\Grsupla(k,n)$}.
\end{equation}
\end{lemma}
\begin{proof}
Let $J:=\diag(1,-1,1,\dots,(-1)^{n-1})$. Then $\ladual = \la\cdot J$, so $\Qladual = J \Qla J$. Let $\ddC\in\Gr(k-2,\lap)$. 
Since $\altp(\ddC) = \ddC^\perp J = (\ddC J)^\perp$, 
\begin{equation}\label{eq:Qlapp_duality_proof}
 C:=(\altp\circ\Qladual\circ\altp)(\ddC) = (\ddC^\perp J\cdot J\Qla J\cdot J)^\perp = (\ddC^\perp\Qla)^\perp.
\end{equation}
We have $\ddC^\perp\in\Grsupla(n-k+2,n)$ and $\ddC^\perp\Qla\in\Gr(n-k,\lap)$, so $C\in\Grsupla(k,n)$. 
 Since $\Ker\Qla = \la$, it suffices to check that $C\Qla = \ddC$, or equivalently, $C\Qla\perp \ddC^\perp$. Indeed, since $\Qla=\Qla^T$, we have $C\Qla\cdot (\ddC^\perp)^T = C\cdot (\ddC^\perp \Qla)^T$, which is zero by~\eqref{eq:Qlapp_duality_proof}. 
\end{proof}

For the following definition, see~\cite[Equation~(8.25)]{abcgpt} or~\cite[Section~5.1]{LPW}.
\begin{definition}[T-duality for bounded affine permutations]\label{dfn:T_dual_perm}
For any $\fap\in\Bounda$, we define $\fdd\in\BND_{0,2}(k-2,n)$ by
\begin{equation}\label{eq:fdd_dfn}
 \fdd(i) := f(i-1)-1 \quad\text{for all $i\in\Z$.}
\end{equation}
\end{definition}

\begin{lemma}[{\cite[Lemma~5.8]{LPW}}]
\label{lemma:la_vs_Bounda_Boundb}
Let $\la\in\Matdnr$ be such that $\brla<i,i+1> \neq0$ for all $i\in\brn$. Then $\Grsupla(k,n)\cap \Ptp_\fap=\emptyset$ unless $\fap\in \Bounda$ and $\Gr(k-2,\lap)\cap\Ptp_{\ddfperm}=\emptyset$ unless $\ddfperm\in\Bounddb$. 
\end{lemma}
\begin{proof}
If $C\in\Grsupla(k,n)\cap\Ptp_{\fap}$ then since $\brla<i,i+1> \neq0$, we have $\rank\mat[C_i|C_{i+1}]=2$, which by~\eqref{eq:dfn_fC} implies that $\fap(i)\geq i+2$. The case of $C\in\Gr(k-2,\lap)\cap\Ptp_{\ddfperm}$ is similar. 
\end{proof}

\begin{lemma}[{\cite[Equation~(8.25)]{abcgpt}}]\label{lemma:Qla_Pio}
Let $\la\in\Matdnr$ be such that $\brla<i,i+1> \neq0$ for all $i\in\brn$. Let $C\in\Pio_f$ for some $f\in\Bounda$ be such that $\la\subset C$. Then $C\cdot \Qla\in\Pio_{\fdd}$.
\end{lemma}
\begin{proof}
Write $C = \begin{pmatrix}
\la \\
\Chat
\end{pmatrix}$ for some $(k-2)\times n$ matrix $\Chat$. Let $\Icalr_f = (\Ir_i)_{i\in\Z}$ and $\Icalr_{\ddfperm} = (\ddIr_i)_{i\in\Z}$. By~\eqref{eq:gr_neck_dfn}, they are related as
$\ddIr_{i+1} = \{j-1\mid j\in \Ir_i\setminus\{i,i+1\}\}$ for all $i\in\Z$. 
Our goal is to show that $\Icalr_{\ddfperm}$ coincides with the Grassmann necklace $\Icalr_{C\cdot \Qla}=(\Ir'_i)_{i\in\Z}$ of $C\cdot \Qla$. 
By the twisted cyclic symmetry, it suffices to check that $\Ir'_2 = \ddIr_2$. Recall that $\la$ satisfies $\brla<1,2>\neq0$. This implies that $1,2\in\Ir_1$, and so $f(1)\geq 3$. After applying some row operations to $C$, we may assume that the columns $\Chat_1,\Chat_2$ are zero. We further put $\Chat$ into reduced row echelon form so that the columns $(\Chat_j)_{j\in\Ir_1\setminus\{1,2\}}$ form an identity submatrix. The submatrix of $\Chat\cdot \Qla$ with column set $\ddIr_2$ is upper triangular with diagonal entries $\frac1{\brla<j,j+1>}$ for $j\in \ddIr_2$. 
This implies that $\Ir'_2 = \ddIr_2$ and $\Delta_{\Ir_1}(C)=\brla<1,2>\cdot \Delta_{\ddIr_2}(\Chat\cdot\Qla)\cdot \prod_{j\in\ddIr_2}\brla<j,j+1> $. More generally, %
\begin{equation}\label{eq:gr_neck_minors_C_vs_Chat_Qla}
 \Delta_{\Ir_i}(C) = \Delta_{\ddIr_{i+1}}(\Chat\cdot\Qla) \cdot \prod_{j\in\{i\}\sqcup\ddIr_{i+1}} \brla<j,j+1> \quad\text{for $i\in\Z$.}
\end{equation}
This yields a generalization of~\cite[Equation~(8.24)]{abcgpt}.
\end{proof}

Next, we show
 that $\Qla$ preserves total positivity when $\la\in\lak$.
\begin{proposition}\label{prop:Qla_Ptp}\ 
 If $\la\in\lak$ and $C\in\Ptp_f\cap\Grsupla(k,n)$ then $C\cdot \Qla\in\Ptp_{\fdd}\cap\Gr(k-2,\lap)$.
\end{proposition}
\begin{proof}
Let $\laCf:=\{(\la,C)\in\lak\times\Ptp_f\mid\la\subset C\}$. We have a continuous map $\laCf\to\Gr(k-2,n)$ sending $(\la,C)\mapsto C\cdot \Qla$. By \cref{lemma:Qla_Pio}, this map lands inside $\Pio_{\fdd}$. By \cref{rmk:connected}, the set $\laCf$ is connected, and therefore its image is contained inside a single connected component of $\Pio_{\fdd}$. It is well known~\cite{Rietsch_alg} that $\Ptp_{\fdd}$ is a connected component of $\Pio_{\fdd}$. Thus, it is enough to show that $C\cdot \Qla\in\Ptp_{\fdd}$ for at least one pair $(\la,C)\in\laCf$.

We first prove the result for the case of the top cell where $f = \fkn$ and $\Ptp_f = \Grtp(k,n)$; see \cref{sec:backgr:Grtnn}. In this case, we may take cyclically symmetric $\la:=\lacs\in\lak$ and $C:=\Xcs\in\Grtp(k,n)$ defined in \cref{sec:cs_mom_ampl}. We have $\Qla = \frac{-2}{\tan(\regangle_{k-1})}\Id_n + \frac1{\sin(\regangle_{k-1})} (\Shift+\Shift^{-1})$ for $\regangle_{k-1} = \frac{(k-1)\pi}{n}$ as in~\eqref{eq:lacs_latcs_regangle}. 
Thus, $\Qla$ is a linear combination of $\Id_n,\Shift,\Shift^{-1}$. Since $\la$ and $C$ are spanned by eigenvectors of $\Shift$ and since $\Qla$ annihilates $\la$, we see that $C\cdot \Qla$ is the span of the eigenvectors $\eigvec_1,\eigvec_2,\dots,\eigvec_{k-2}$ of $\Shift$ contained in $C$ but not in $\la$. Therefore, $C\cdot \Qla = \Xcsx{k-2} \in \Grtp(k-2,n)$. This completes the proof for the top cell $f = \fkn$. 

Now, let $f\in\Bounda$ be arbitrary. Using \cref{prop:from_la_to_La}, we find $\La\in\alt(\Grtp(n-k+2,n))$ such that $\la = C \cap \La$. Approximating $C$ by elements $C'\in\Grtp(k,n)$, we let $\la':=C'\cap \La$.
By \cref{prop:momLL_basic}, we have $\la'\in\lak$ and the map $\PhiLL$ is continuous at $C$, so $\la'$ approximates $\la$. 
Thus, $C\cdot \Qla$ can be approximated by elements $C'\cdot Q_{\la'}$ that were shown above to belong to $\Grtp(k-2,n)$. It follows that $C\cdot \Qla\in\Grtnn(k-2,n)$. By \cref{lemma:Qla_Pio}, $C\cdot \Qla\in\Pio_{\fdd}$, so by~\eqref{eq:Ptp_intersection}, we get $C\cdot \Qla\in\Ptp_{\fdd}$.
\end{proof}

\begin{corollary}\label{prop:magic_homeo}
Let $\la\in\lak$. 
Then the operators $\Qla$ and $\Qlapp$ restrict to mutually inverse homeomorphisms between $\Grtnnsupla(k,n)$ and $\Grtnn(k-2,\lap)$. Furthermore, for any $\fap\in\Bounda$, they restrict to mutually inverse homeomorphisms between $\Grsupla(k,n)\cap \Ptp_\fap$ and $\Gr(k-2,\lap)\cap\Ptp_{\ddfperm}$.
\end{corollary}
\begin{proof}
By \cref{lemma:TREE:magic_homeo}, 
the operators $\Qla$ and $\Qlapp$ are mutually inverse and continuous. 
By \cref{prop:Qla_Ptp}, $\Qla$ restricts to a map 
$\Grsupla(k,n)\cap \Ptp_\fap\to\Gr(k-2,\lap)\cap\Ptp_{\ddfperm}$. 
The statement that $\Qlapp$ restricts to a map $\Gr(k-2,\lap)\cap\Ptp_{\ddfperm}\to\Grsupla(k,n)\cap \Ptp_\fap$ follows from~\eqref{eq:altp_vs_fhat} and~\eqref{eq:Qlapp_duality}.
\end{proof}

\section{Dimer formulae for \KSprims}\label{sec:dimer_KS}
We give explicit identities for discrete holomorphic functions and their \KSprims introduced in \cref{sec:intro:holomorphic} in terms of single-dimer and double-dimer configurations on $\G$. 

\subsection{Surplus}\label{ssec:surplus}

We study the surplus function $\helmin(\G)$ introduced in~\eqref{eq:surplus_dfn}. 
See~\cite[Section~4.1]{Kenyon_Sheffield} for closely related analysis.\footnote{We thank Richard Kenyon for bringing the results of~\cite[Section~4.1]{Kenyon_Sheffield} to our attention.} 
The following result is a variant of Hall's theorem; see~\cite[Theorem~1.3.1]{Lovasz_Plummer}.
\begin{proposition}\label{prop:DIM:helWBmin_geq_0=>apm_exists}
$\G$ admits an \APM if and only if $\helmin(\G)\geq0$.
\end{proposition}

\begin{lemma}\label{lemma:DIM:deleting_hel_verts}
Fix integers $\kw,\kb\geq0$. The following are equivalent:
\begin{enumerate}[label=(\arabic*)]
\item\label{DIM:del1} $\helWmin(\G)\geq\kw$ and $\helBmin(\G)\geq\kb$;
\item\label{DIM:del2} for any $\XW\subset\WV$ and $\XB\subset\BV$ with $|\XW|\leq \kw$ and $|\XB|\leq\kb$, $\Grem{(\XW\sqcup\XB)}$ 
 admits an \APM.\footnote{Some of the boundary vertices of $\Grem{(\XW\sqcup\XB)}$ may have degree $0$. The definition of an \APM extends verbatim to graphs whose boundary vertices have arbitrary degree.
}
\end{enumerate}
\end{lemma}
\begin{proof}
 \itemref{DIM:del1}$\Longrightarrow$\itemref{DIM:del2}: Let $\G':=\Grem{(\XW\sqcup\XB)}$. By \cref{prop:DIM:helWBmin_geq_0=>apm_exists}, it suffices to show that $\helmin(\G')\geq0$. Let $\Rg\in\WNEI(\G')$. Thus, $\RgBV\neq\emptyset$. Let 
$\Rg_+:=\Rg\cup\NeighG(\RgBV)$. 
It follows that $\Rg_+\in\WNEI(\G)$, so $\helW(\Rg_+)\geq\helWmin(\G)$. We have $|\RgWV|\geq |\RgWV_+| - |\XW|\geq |\RgWV_+| - \helWmin(\G)\geq |\RgWV_+| - \helW(\Rg_+) = |\RgBV_+| = |\RgBV|$. Thus, $\helWp(\Rg)\geq0$ for all $\Rg\in\WNEI(\G')$, so $\helWmin(\G')\geq0$. We similarly obtain $\helBmin(\G')\geq0$.

 \itemref{DIM:del2}$\Longrightarrow$\itemref{DIM:del1}: Let $\Rg\in\WNEI(\G)$. We need to show that $\helW(\Rg)\geq\kw$. Applying \cref{prop:DIM:helWBmin_geq_0=>apm_exists} and part~\itemref{DIM:del2} for $\XW=\XB=\emptyset$, we get $\helW(\Rg)\geq0$. Thus, $|\RgWV|\geq|\RgBV|\geq1$.
Let $\XB:=\emptyset$ and let $\XW\subset\RgWV$ be any subset with $|\XW|\leq\kw$. Then $\Rg\setminus\XW \in\WNEI(\G\rem\XW)$ and $\helWsub_{\G\rem\XW}(\Rg\setminus\XW)=\helW(\Rg)-|\XW|$. 
Applying \cref{prop:DIM:helWBmin_geq_0=>apm_exists} and part~\itemref{DIM:del2} again, we get
$\helWmin(\G\rem\XW)\geq0$, so $\helW(\Rg)\geq|\XW|$. If $|\RgWV|<\kw$ then we can set $\XW:=\RgWV$. This gives $\helW(\Rg)\geq|\RgWV|$, contradicting $\RgBV\neq\emptyset$. Otherwise, if $|\RgWV|\geq\kw$ then we can take $\XW$ to be any $\kw$-element subset of $\RgWV$ which gives $\helW(\Rg)\geq\kw$, as desired. 
\end{proof}

\begin{corollary}\label{lemma:hel_geq_1}
Suppose that $\helmin(\G)\geq1$. Then every edge of $\G$ appears in an \APM of $\G$.
\end{corollary}

We discuss the effect of moves \MVbd and \MV1 on $\helWmin(\G)$ and $\helBmin(\G)$. First, observe that for any graph $\G$ satisfying $\helWmin(\G)\geq2$, we have $\degG(\b)\geq3$ for all $\b\in\BVint$. In \cref{lemma:Tll_is_t_imm}, we will need to apply uncontraction moves \MV1 to make all interior black vertices of $\G$ trivalent. The following lemma shows that such an operation preserves the inequality $\helWmin(\G)\geq2$. 

\begin{lemma}\label{lemma:DIM:moves_vs_helWmin_trivalent}
Suppose that $\G_2$ is obtained from $\G_1$ by applying move~\MV1, replacing a black vertex $\b\in\BVint$ 
of degree at least $4$
 with a pair $(\b_{\Lop},\b_{\Rop})$ of black vertices 
 adjacent to a degree-$2$ white vertex $\w$. Suppose that 
$|\Neigh_{\G_2}(\b_{\Lop})|,|\Neigh_{\G_2}(\b_{\Rop})|\geq3$. 
If $\helWmin(\G_1)\geq2$ and $\helBmin(\G_1)\geq0$ then $\helWmin(\G_2)\geq2$ and $\helBmin(\G_2)\geq0$. 
\end{lemma}
\begin{proof}
Let $\Rg_2 \in \WNEI(\G_2)$. We show $\helWsub_{\G_2}(\Rg_2) \geq 2$. 
If $\b_{\Lop}, \b_{\Rop} \in \RB_2$ then $\w \in \RW_2$. Letting $\Rg_1:=(\Rg_2\setminus\{\b_{\Lop},\w,\b_{\Rop}\})\sqcup\{\b\}$, we have $\Rg_1 \in \WNEI(\G_1)$ and $\helWsub_{\G_2}(\Rg_2) = \helWsub_{\G_1}(\Rg_1) \geq 2$. 
If $\b_{\Lop}, \b_{\Rop} \notin \RB_2$
 then $\Rg_1:=\Rg_2\setminus\{\w\} \in \WNEI(\G_1)$ and $\helWsub_{\G_2}(\Rg_2)\geq\helWsub_{\G_1}(\Rg_1) \geq 2$. 
Suppose exactly one of $\b_{\Lop},\b_{\Rop}$ is in $\Rg_2$, say, $\b_{\Lop}\in\RB_2$. Then $\w \in \RW_2$. Let $\Rg_1 := \Rg_2 \setminus \{\b_{\Lop}, \w\}$. We have $\b \notin \Rg_1$ and $\Rg_1 \in \WNEIg(\G_1)$. If $\RB_1 = \emptyset$ then $\helWsub_{\G_2}(\Rg_2) = |\RW_2| - 1\geq2$ since $\RB_2=\{\b_{\Lop}\}$ and thus $|\RW_2|\geq|\Neigh_{\G_2}(\b_{\Lop})| \geq 3$ by assumption. 
 If $\RB_1 \neq \emptyset$ then $\Rg_1 \in \WNEI(\G_1)$. In this case, $\helWsub_{\G_2}(\Rg_2) = \helWsub_{\G_1}(\Rg_1) \geq 2$. Thus $\helWmin(\G_2) \geq 2$.

Now let $\Rg_2 \in \BNEI(\G_2)$. We show $\helBsub_{\G_2}(\Rg_2) \geq 0$. 
If $\w \in \Rg_2$ then $\b_{\Lop}, \b_{\Rop} \in \Rg_2$. Setting $\Rg_1:=(\Rg_2\setminus\{\b_{\Lop},\w,\b_{\Rop}\})\sqcup\{\b\}$ yields $\Rg_1 \in \BNEIg(\G_1)$ with $\helBsub_{\G_2}(\Rg_2) = \helBsub_{\G_1}(\Rg_1) \geq 0$. Here, we have used that if $\helBmin(\G_1)\geq0$ then $\helBsub_{\G_1}(\Rg_1) \geq 0$ for all $\Rg_1\in\BNEIg(\G_1)$. 
If $\w \notin \Rg_2$, let $\Rg_1 \in \BNEI(\G_1)$ be obtained by replacing $\Rg_2 \cap \{\b_{\Lop}, \b_{\Rop}\}$ with $\{\b\}$ (if nonempty). We have $|\RW_2| = |\RW_1|$ and $|\RB_2| \geq |\RB_1|$. Thus, $\helBsub_{\G_2}(\Rg_2) \geq \helBsub_{\G_1}(\Rg_1) \geq 0$, and so $\helBmin(\G_2) \geq 0$.
\end{proof}

\begin{lemma}\label{lemma:DIM:moves_vs_helWmin}
Suppose that $\G_1,\G_2$ are related by moves \MVbd or \MV1. Then for all $\kw,\kb\in\{0,1\}$, 
\begin{equation*}%
 \helWmin(\G_1)\geq\kw,\ \helBmin(\G_1)\geq\kb \quad \Longleftrightarrow\quad 
 \helWmin(\G_2)\geq\kw,\ \helBmin(\G_2)\geq\kb.
\end{equation*}
\end{lemma}
\begin{proof}
By \cref{lemma:DIM:deleting_hel_verts}, $\helWmin(\G_1)\geq\kw,\helBmin(\G_1)\geq\kb$ is equivalent to the graph $\G_1\rem{(\XW_1\sqcup\XB_1)}$ admitting \APMs for all $\XW_1\subset\WV$ and $\XB_1\subset\BV$ with $|\XW_1|\leq \kw$ and $|\XB_1|\leq\kb$. This property is invariant under moves \MVbd and \MV1: as long as $|\XW_1|,|\XB_1|\leq 1$, the \APMs of $\G_1\rem{(\XW_1\sqcup\XB_1)}$ are in bijection with \APMs of $\G_2\rem{(\XW_2\sqcup\XB_2)}$ for suitably defined subsets $\XW_2,\XB_2$.
\end{proof}

\begin{lemma}\label{lemma:reduced_helmin}
Let $\G$ be a reduced (\cref{dfn:reduced}) planar bipartite graph of type $(k,n)$. 
\begin{enumerate}[label=(\arabic*)]
\item\label{reduced_helmin1} If $\degG(\v)\geq2$ for all $\v\in\Vint$ then $\helmin(\G)\geq1$.
\item\label{reduced_helmin2} If $\degG(\v)\geq3$ for all $\v\in\Vint$ then $\helmin(\G)\geq2$.
\end{enumerate}
\end{lemma}

\begin{remark}\label{rmk:reduced_helmin2}
It is well known that a connected reduced graph $\G$ of type $(k,n)$ with $1\leq k\leq n-1$ contains no internal leaves, and thus satisfies the assumption of \crefi{lemma:reduced_helmin}{reduced_helmin1}. Furthermore, if $n\geq3$ and $\G$ is connected then no two boundary vertices of $\G$ are connected by a path consisting entirely of interior degree-$2$ vertices. Thus, one can apply moves \MV1 and \MVbd until each interior vertex of $\G$ has degree at least $3$. The resulting reduced graph is called \emph{contracted}, and \crefi{lemma:reduced_helmin}{reduced_helmin2} applies to it.
\end{remark}

\begin{proof}[Proof of \cref{lemma:reduced_helmin}]
\itemref{reduced_helmin1}: It is well known that every reduced planar bipartite graph admits an \APM, so by \cref{prop:DIM:helWBmin_geq_0=>apm_exists}, $\helmin(\G)\geq0$. 
As we explain in \Nref{lemma:BCFW:Rg_simply_conn_lower_bound}, if $\helmin(\G)=0$ then one can find a simple cycle $\Cyc$ in $\GD$ such that the graph $\G\ind[\Cyc]$ obtained by intersecting $\G$ with the disk enclosed by $\Cyc$ admits an \APM and is of type $(k',n')$ with $k'\in\{0,n'\}$. Thus, any \APM of $\G$ restricts to an \APM of $\G\ind[\Cyc]$. Removing all vertices used by any \APM of $\G\ind[\Cyc]$ from $\G$ 
 results in a graph $\G'$ with the same boundary measurements as $\G$. 
If $\G\ind[\Cyc]$ is not reduced then one can replace it by a reduced graph with the same boundary measurements but strictly fewer faces by \cref{dfn:reduced}. In this case, $\G$ is also not reduced, a contradiction. On the other hand, if $\G\ind[\Cyc]$ is reduced of type $(0,n')$ or $(n',n')$ then it must contain an interior leaf, contradicting the assumption that $\degG(\v)\geq2$ for all $\v\in\Vint$.

\itemref{reduced_helmin2}: By part~\itemref{reduced_helmin1}, $\helmin(\G)\geq1$. 
Suppose that $\helmin(\G)=1$ so that, say, $\helWmin(\G)=1$ and $\helBmin(\G)\geq1$. Then there exists $\Rg\in\WNEI(\G)$ such that $\helW(\Rg)=1$. By \Nref{lemma:DIM:coll_sconn}, we may assume that $\Rg$ is \emph{\sconn} meaning one can find a simple cycle $\Cyc$ in $\GD$ separating the vertices in $\Rg$ from those in $\Verts\setminus\Rg$. 
 Since $\helW(\Rg)=1$, $\G\ind[\Cyc]$ is of type $(1,n')$ for some $n'\geq1$. Similarly to the above, we see that $\G\ind[\Cyc]$ must itself be reduced. A reduced graph of type $(1,n')$ cannot contain an interior black vertex of degree $\geq3$. Since $\Rg\in\WNEI(\G)$, $\G\ind[\Cyc]$ contains an interior black vertex $\b$, and by assumption, $\degG(\b)\geq3$, a contradiction.
\end{proof}

\subsection{Type $(2,4)$: single-dimer configurations}\label{ssec:TE:Kast}
Throughout this subsection, let $(\Gf,\wtf)$ be a weighted planar bipartite graph of type $(\kf,\nf)=(2,4)$;
 see e.g. \figref{fig:intro-annular}(c).
 Assume that $\Gf$ admits an \APM and has black boundary.
 We set $\Gfint:=\Gf\ind[\Vfint]$ and fix a choice $\epsKf$ of Kasteleyn signs on $\Gf$. 

By \cref{lemma:MCE:apm_vs_Kast}, $(\Gf,\wtf)$ admits unique (up to $\GL_2(\R)\times\GL_2(\R)$-action) discrete holomorphic functions 
$(\laextf,\latextf)\in\HHspaceKRdf$. 
 For $\w_1,\w_2\in\WVf$ (resp., $\b_1,\b_2\in\BVf$), we denote
\begin{equation*}%
 \brlaw<\w_1,\w_2>:=\det\mat[\laextf(\w_1)|\laextf(\w_2)]\quad\text{resp.,}\quad \brlatb[\b_1,\b_2]:=\det\mat[\latextf(\b_1)|\latextf(\b_2)].
\end{equation*}
Similarly to~\eqref{eq:intro:y_to_lalat}, we let $(\Fwf,\Fbf)\in\HHspaceKCf$ be given by
\begin{equation}\label{eq:TE:y_to_lalat}
 \laextf(\w)=\CtoM[\Fwf(\w)] \quad\text{and}\quad \latextf(\b)=\CtoMt[\Fbf(\b)] \quad\text{for all $\w\in\WVf$ and $\b\in\BVf$}.
\end{equation} 
Let $\xdGf$ be the \KSprim of $(\Fwf,\Fbf)$. Thus, $\datrQLf:=\datrQf$ is an \emph{\datr} of $\Gf$ in the sense of \cref{dfn:TE:datr} below.

\begin{figure}
\includegraphics[scale=1.2]{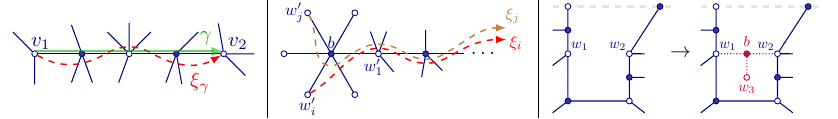}
 \caption{\label{fig:zig-cut} A path $\Path$ and the corresponding zig-zag cut $\CutP$ (left); the zig-zag cuts $\Cut_i,\Cut_j$ in the proof of \cref{lemma:TE:single_dimer_formula} (middle); 
tripod insertion (right).
}
\end{figure}

A \emph{cut} $\Cut$ in $\Gf$ is a walk in $\GDf$. 
 Abusing notation, we identify $\Cut$ with the (multi)set of edges of $\Gf$ that it crosses. 
 For $\v_1,\v_2\in\Vf$, let $\f_1,\f_2$ be the faces of $\Gfrem{\v_1}{\v_2}$ containing $\v_1$ and $\v_2$, respectively. To any simple path $\Path$ in $\Gf$ from $\v_1$ to $\v_2$ we associate a \emph{zig-zag cut} $\CutP$ in $\Gfrem{\v_1}{\v_2}$ that intersects every edge of $\Path$ not incident to $\v_1$ or $\v_2$, keeping all white (resp., black) vertices of $\Path$ other than $\v_1,\v_2$ to the right (resp., to the left); see \figref{fig:zig-cut}(left). More precisely, denote the edges and vertices of $\Path$ by $\Path = (\v_1=\vv_0,\e_1,\vv_1,\e_2,\dots,\e_m,\vv_m = \v_2)$. Then for each intermediate vertex $\vv_i$, $i\in\brx{m-1}$, if $\vv_i$ is white (resp., black) then $\CutP$ crosses all edges incident to $\vv_i$ that are located weakly between $\e_i$ and $\e_{i+1}$ in clockwise (resp., counterclockwise) order around $\vv_i$.

Observe that for any two distinct vertices $\w_1,\w_2\in\WVf$, the graph $\Gfiww$ is planar bipartite of type $(0,0)$, so $\APMS(\Gfiww)$ is the set of perfect matchings of $\Gfiww$. Similarly, for distinct $\b_1,\b_2\in\BVf$, $\Gfbb$ is of type $(4-d,4-d)$ for $d:=|\{\b_1,\b_2\}\cap\Vfbd|$ and $\APMS(\Gfbb)$ is the set of perfect matchings of $\Gfbb$. 
\begin{proposition}\label{lemma:TE:single_dimer_formula}
There exists a constant $\constw\neq0$ such that for all $\w_1,\w_2\in\WVf$ and for any simple path $\Pathw$ from $\w_1$ to $\w_2$ in $\Gfint$, 
\begin{equation}\label{eq:IP:brlaw}
 \brlawf<\w_1,\w_2> = \constw \cdot \epsKf(\Pathw)\cdot 
 \sum_{\apmwf\in\APMS(\Gfiww)} (-1)^{|\apmwf\cap \CutPw|} \wtf(\apmwf).
\end{equation}
Similarly,
there exists $\constb\neq0$ such that for all $\b_1,\b_2\in\BVf$ and any simple path $\Pathb$ from $\b_1$ to $\b_2$ in $\Gf$,
\begin{equation}\label{eq:IP:brlatb}
 \brlatbf[\b_1,\b_2] = \constb \cdot \epsKf(\Pathb)\cdot 
 \sum_{\apmbf\in\APMS(\Gfbb)} (-1)^{|\apmbf\cap \CutPb|} \wtf(\apmbf).
\end{equation}
\end{proposition}

Recall that we have not assumed that $\Gf$ is connected or \bdconn. The case of arbitrary $\Gf$ can be reduced to the case of connected $\Gf$ using the following operation. 
\begin{definition}[Tripod insertion]\label{dfn:eksbb}
Suppose that $\w_1,\w_2\in\WVf$ share a face of $\Gf$. 
Let $\Gf'$ be obtained from $\Gf$ by adding a trivalent black vertex $\b$ adjacent to $\w_1$, $\w_2$, and a white leaf $\w_3$ by edges $\e_1,\e_2,\e_3$ in clockwise order. 
See \figref{fig:zig-cut}(right).
 Let $\epsKfp$ be an extension of $\epsKf$ to a choice of Kasteleyn signs on $\Gf'$. 
(Such an extension $\epsKfp$ always exists; cf. \cref{lemma:OCP:Kast_even}.)
We set $\epswwf_{\w_1,\w_2}:=\epsKfp(\e_1)\epsKfp(\e_2)$. Similarly, if $\b_1,\b_2\in\BVf$ share a face of $\Gf$, we let $\Gf'$ be obtained by adding a trivalent white vertex $\w$ adjacent to $\b_1$, $\b_2$, and a black leaf $\b_3$ by edges $\e_1,\e_2,\e_3$ in clockwise order. We set $\epsbbf_{\b_1,\b_2}:=\epsKfp(\e_1)\epsKfp(\e_2)$. 
\end{definition}

\begin{remark}\label{rmk:tripods_commute}
Let $\vv_1,\vv_2$ (resp., $\v_1,\v_2$) be vertices of $\Gf$ of the same color and sharing a face. Suppose in addition that the pairs $(\vv_1,\vv_2)$ and $(\v_1,\v_2)$ are \emph{non-crossing}, i.e., adding edges connecting $\vv_1$ to $\vv_2$ and $\v_1$ to $\v_2$ results in a planar (non-bipartite) graph. 
Then tripod insertions at $(\vv_1,\vv_2)$ and $(\v_1,\v_2)$ commute; that is, applying them in any order yields the same choice of Kasteleyn signs on the resulting graph.
This is the case since by \cref{lemma:OCP:Kast_even}, deleting a tripod vertex and its leaf neighbor preserves the Kasteleyn signs on the remaining graph. 
For any graph $\Gf$, one can always insert enough tripods to make $\Gf$ connected. 
Tripod insertions do not affect the boundary measurements of $(\Gf,\wtf)$ as every \APM must use the leaf edge of the tripod. 
Similarly, every discrete holomorphic function on $\Gf$ extends uniquely to the vertices of the inserted tripods.
Thus, for arbitrary $\Gf$, we obtain an analog of~\eqref{eq:IP:brlaw}--\eqref{eq:IP:brlatb}, with the sign of each term adjusted by a product of signs $\epswwf_{\w'_1,\w'_2}$ and $\epsbbf_{\b'_1,\b'_2}$ for the tripods that appear along the paths $\Pathw$ and $\Pathb$. 
\end{remark}

\begin{proof}[Proof of \cref{lemma:TE:single_dimer_formula}]
We prove~\eqref{eq:IP:brlaw}; the proof of~\eqref{eq:IP:brlatb} is similar. 
In view of \cref{rmk:tripods_commute}, we may assume that $\Gf$ is connected. 
Denote the right-hand side of~\eqref{eq:IP:brlaw} (omitting the constant $\constw$) by $\brguess<\w_1,\w_2>_{\Pathw}=\brguess<\w_1,\w_2>_{\Pathw,\CutPw}$. More generally, we let $\brguess<\w_1,\w_2>_{\Pathw,\Cut}$ be obtained from $\brguess<\w_1,\w_2>_{\Pathw,\CutPw}$ by replacing the zig-zag cut $\CutPw$ with an arbitrary cut $\Cut$ from the face of $\Gfiww$ containing $\w_1$ to the face containing $\w_2$.

Our first goal is to show that $\brguess<\w_1,\w_2>_{\Pathw}$ does not depend on the choice of $\Pathw$. Suppose that a cut $\Cut'$ is obtained by passing another cut $\Cut$ through a vertex $\v$ of $\Gfiww$.
 Since every $\apmwf\in\APMS(\Gfiww)$ uses $\v$ exactly once, we see that $(-1)^{|\apmwf\cap\Cut|} = -(-1)^{|\apmwf\cap\Cut'|}$. Thus, $\brguess<\w_1,\w_2>_{\Pathw,\Cut} = -\brguess<\w_1,\w_2>_{\Pathw,\Cut'}$. Similarly, if $\Pathwp$ is another path from $\w_1$ to $\w_2$ such that $\Gf$ contains a unique (necessarily interior) face $\f$ located between $\Pathw$ and $\Pathwp$ then $\epsKf(\Pathw)$ differs from 
$\epsKf(\Pathwp)$ by 
$\epsKf(\bdrypath\f)=(-1)^{\dwcor(\f)+1}$; 
 see~\eqref{eq:Kast_sign}. A straightforward check shows that there are exactly $\dwcor(\f)-1$ vertices of $\Gfint$ located between $\CutPw$ and $\CutPwp$. Thus, $\brguess<\w_1,\w_2>_{\Pathw} = \brguess<\w_1,\w_2>_{\Pathwp}$, as desired. From now on, we denote $\brguess<\w_1,\w_2>:=\brguess<\w_1,\w_2>_{\Pathw}$ for any choice of a simple path $\Pathw$ from $\w_1$ to $\w_2$.

Next, we claim that $\brguess<\w_1,\w_2>$ is a \wdash holomorphic function of $\w_1$ and of $\w_2$ satisfying $\brguess<\w_1,\w_2>=-\brguess<\w_2,\w_1>$. First, since every path $\Pathw$ from $\w_1$ to $\w_2$ contains an odd number of vertices, and since reversing the path $\Pathw$ corresponds to passing $\CutPw$ through each vertex on $\Pathw$ other than $\w_1$ and $\w_2$, we see that indeed $\brguess<\w_1,\w_2>=-\brguess<\w_2,\w_1>$. 
We now check \wdash holomorphicity in each argument. Let $\w\in\{\w_1,\w_2\}$. Let $\b\in\BVfint$, and let $\e_1,\e_2,\dots,\e_m$ be the edges connecting $\b$ to white vertices $\w'_1,\w'_2,\dots,\w'_m$, respectively, listed in clockwise order around $\b$. Pick a shortest path $\Pathw$ from $\b$ to $\w$ and suppose that it first passes through $\w'_1$. Let $\Pathw_1$ be the subpath of $\Pathw$ connecting $\w'_1$ to $\w$. For $2\leq i\leq m$, let $\Pathw_i$ be the concatenation of $\e_i$ and $\Pathw$. For $i\in\brm$, denote $\Cut_i:=\Cut_{\Pathw_i}$, $\APMSop_{\b,i}:= \APMS(\Gfirem{\w'_i}{\w})$, and set
$\APMSop_\b:=\bigsqcup_{i=1}^m \APMSop_{\b,i}$. For $\apmwf\in\APMSop_{\b,i}$, let $j\in\brm\setminus\{i\}$ be such that $\apmwf$ contains the edge $\e_j$. Let $\phi(\apmwf)\in\APMSop_{\b,j}$ be obtained by replacing $\e_j$ with $\e_i$. We see that $\phi:\APMSop_\b\to\APMSop_\b$ is an involution. We claim that it is sign-reversing, i.e., if $\apmwf\in\APMSop_{\b,i}$ and $\phi(\apmwf)\in\APMSop_{\b,j}$ then 
$(-1)^{|\apmwf\cap \Cut_i|} = -(-1)^{|\phi(\apmwf)\cap \Cut_j|}$. 
Indeed, $\apmwf$ and $\phi(\apmwf)$ differ by one edge ($\e_j\in\apmwf$ and $\e_i\in\phi(\apmwf)$), and assuming $i<j$, the intersections $\apmwf\cap \Cut_i$ and $\phi(\apmwf)\cap \Cut_j$ differ only in the edge $\e_i\in\phi(\apmwf)\cap \Cut_j$; see \figref{fig:zig-cut}(middle). Thus, $\phi$ is sign-reversing. We need to check that the function $\brguess<\w,\cdot>$ satisfies~\eqref{eq:intro:white_holom} at $\b$, i.e., $\sum_{\s=1}^m \wtKf(\e_\s) \brguess<\w,\w'_\s> = 0$. 
The respective contributions of $\apmwf$ and $\phi(\apmwf)$ to the terms $\wtKf(\e_\s) \brguess<\w,\w'_\s>$ for $\s=i$ and $\s=j$ are 
$\wtKf(\e_i)\cdot \epsKf(\e_i)\cdot \epsKf(\Pathw) \cdot (-1)^{|\apmwf\cap \Cut_i|} \wtf(\apmwf)$ and 
$\wtKf(\e_j)\cdot \epsKf(\e_j)\cdot \epsKf(\Pathw) \cdot (-1)^{|\phi(\apmwf)\cap \Cut_j|} \wtf(\phi(\apmwf))$. Noting that $\wtKf(\e_i) \epsKf(\e_i)=\wtf(\e_i)$ and $\wtKf(\e_j) \epsKf(\e_j)=\wtf(\e_j)$, we see that these contributions cancel out. Thus, the function $\brguess<\w,\cdot>$ is \wdash holomorphic. 

Recall from \cref{lemma:MCE:apm_vs_Kast} that $\Hwspace_{\R}\HtripKf$ is a two-dimensional vector space. Since $\brguess<\w_1,\w_2>$ is \wdash holomorphic in each argument, it belongs to $\Hwspace_{\R}\HtripKf^{\otimes2}$. Since it is antisymmetric, it belongs to $\bigwedge^2\Hwspace_{\R}\HtripKf$. Clearly, the same is true for $\brlawf<\w_1,\w_2>$. 
It remains to show that both $\brguess<\w_1,\w_2>$ and $\brlawf<\w_1,\w_2>$ are nonzero elements of the $1$-dimensional space $\bigwedge^2\Hwspace_{\R}\HtripKf$, since in this case, they must differ by multiplication by a nonzero constant~$\constw$.

For $\Cf:=\Meas(\Gf,\wtf)$, we have $\Delta_{i,j}(\Cf)\neq0$ for some $1\leq i<j\leq 4$.
Let $\bdwx_i,\bdwx_j$ be the corresponding next-to-boundary vertices of $\Gf$. By~\eqref{eq:MCE:alt(C)_vs_pFw}, $\brlawf<\bdwx_i,\bdwx_j>\neq0$. 
 Let
 $\Pathw$ be any path from $\bdwx_i$ to $\bdwx_j$ and let
 $\Cut$ be the cut in $\Gfrem{\bdwx_i}{\bdwx_{j}}$ that only crosses boundary edges of $\Gfrem{\bdwx_i}{\bdwx_{j}}$. We showed above that $\brguess<\bdwx_i,\bdwx_j>=\pm\brguess<\bdwx_i,\bdwx_j>_{\Pathw,\Cut}$. The graph $\Gfrem{\bdwx_i}{\bdwx_{j}}$ is of type $(0,4)$, so none of the edges crossed by $\Cut$ appear in any $\apmwf\in\APMS(\Gfrem{\bdwx_i}{\bdwx_j})$. Thus, the signs $(-1)^{|\apmwf\cap \Cut|}$ of all terms contributing to $\brguess<\bdwx_i,\bdwx_j>_{\Pathw,\Cut}$ are the same. Since $\Delta_{i,j}(\Cf)>0$, we have $\APMS(\Gfrem{\bdwx_i}{\bdwx_j})\neq\emptyset$, so $\brguess<\bdwx_i,\bdwx_j>=\pm\brguess<\bdwx_i,\bdwx_j>_{\Pathw,\Cut}\neq0$.
\end{proof}

After adjusting $\laextf$ and $\latextf$ by an element of $\GL_2(\R)$, from now on we assume that 
\begin{equation}\label{eq:TE:constw=constb=1}
 \constw=\constb=1.
\end{equation}

The next result specializes \cref{lemma:TE:single_dimer_formula} to the case where the vertices $\w_1,\w_2$ or $\b_1,\b_2$ share a face of $\Gf$. We denote 
$\DeltaGfiww:=\sum_{\apmwf\in\APMS_\emptyset(\Gfiww)} \wtf(\apmwf)$ and 
$\DeltaGfbb:=\sum_{\apmbf\in\APMS_{\brx4}(\Gfbb)} \wtf(\apmbf)$.

\begin{corollary}\label{lemma:TE:single_dimer_same_face}
If $\w_1,\w_2\in\WVf$ (resp., $\b_1,\b_2\in\BVf$) share a face of $\Gf$ then 
\begin{equation}\label{eq:TE:same_face}
\brlawf<\w_1,\w_2> = -\epswwf_{\w_1,\w_2} \DeltaGfiww,
\quad\text{resp.,}\quad
 \brlatbf[\b_1,\b_2] = \epsbbf_{\b_1,\b_2} \DeltaGfbb.
\end{equation}
\end{corollary}

\begin{proof}%
We prove the first identity in~\eqref{eq:TE:same_face}. Let $(\Gf',\epsKfp)$ be obtained from $(\Gf,\epsKf)$ by inserting a black tripod $\b$ adjacent to $\w_1,\w_2,\w_3$ as in \cref{dfn:eksbb}. 
By \cref{rmk:tripods_commute}, 
 $\laextf$ and $\latextf$ extend to discrete holomorphic functions on $(\Gf',\wtfp,\epsKfp)$. The result follows from~\eqref{eq:IP:brlaw} by choosing the path $\Pathw=(\w_1,\b,\w_2)$ since $|\apmwf\cap \CutPw|=1$ for all $\apmwf\in\APMS(\Gfiww)$. The proof of the second identity is similar, except here we have $|\apmbf\cap \CutPb|=0$ for all $\apmbf\in\APMS(\Gfbb)$.
\end{proof}

We will be particularly interested in the case where $\w_1,\w_2$ (resp., $\b_1,\b_2$) share not only a common face but also a common neighbor in $\Gf$. 
 Let $\b\in\BVfint$ (resp., $\w\in\WVfint$) be connected by edges $\e_1,\e_2,\dots,\e_d$ to $\w_1,\w_2,\dots,\w_d$ (resp., $\b_1,\b_2,\dots,\b_d$) in clockwise order.
Then $\epswwf_{\w_\s,\w_{\s+1}} = \epsKf(\e_\s)\epsKf(\e_{\s+1})$ 
(resp., $\epsbbf_{\b_\s,\b_{\s+1}} = \epsKf(\e_\s)\epsKf(\e_{\s+1})$). 
 Applying~\eqref{eq:TE:same_face} and taking the extra minus sign in~\eqref{eq:TE:y_to_lalat} into account, we obtain
\begin{equation}\label{eq:*}
 \epsKf(\e_\s)\epsKf(\e_{\s+1})\det\mat[\Fwf(\w_{\s})|\Fwf(\w_{\s+1})]\leq0,
 \quad\text{resp.,}\quad
 \epsKf(\e_\s)\epsKf(\e_{\s+1})\det\mat[\Fbf(\b_{\s})|\Fbf(\b_{\s+1})]\leq0
\end{equation}
for all $\s\in\brd$, 
where $\det\mat[\y|\y']:=\Re(\y)\Im(\y') - \Im(\y)\Re(\y')$ for $\y,\y'\in\C$.

\subsection{Type $(2,4)$: double-dimer configurations}\label{ssec:TE:Kast2}
We continue to assume that $\Gf$ is a planar bipartite graph of type $(2,4)$ 
that admits an \APM and has black boundary, 
 and continue to study the \quintuple $\datrQLf=\datrQf$.

\begin{definition}\label{dfn:double_dimer_Omf}
A \emph{double-dimer configuration} on $\Gf$ is a multiset $\Omf$ of edges of $\Gf$ that covers every interior (resp., boundary) vertex of $\Gf$ exactly twice (resp., at most twice). The set of double-dimer configurations on $\Gf$ using every boundary vertex $\bdvf_1,\bdvf_2,\bdvf_3,\bdvf_4$ exactly once is denoted $\OmsGf$.
\end{definition}
\noindent Thus, $\Omf\in\OmsGf$ consists of (i) doubled edges, (ii) cycles of length $\geq4$, and (iii) two \emph{\bdbd paths} $\Pathbd_1(\Omf),\Pathbd_2(\Omf)$ starting and ending at the boundary of $\Gf$. We denote $\wtf(\Omf):=2^{\ncycOmf}\prod_{\e\in\Omf} \wtf(\e)$, where the product is taken with multiplicity and $\ncycOmf$ is the number of cycles in $\Omf$ of length $\geq4$.

An \emph{orientation} $\Omfvec$ of $\Omf\in\OmsGf$ is a choice of an orientation for each of the two \bdbd paths $\Pathbd_1(\Omf),\Pathbd_2(\Omf)$ of $\Omf$. We say that $\Omfvec$ is \emph{alternating} if the induced orientations on the four boundary edges $\bdef_1,\bdef_2,\bdef_3,\bdef_4$ of $\Gf$ alternate between ``in'' and ``out''; otherwise, $\Omfvec$ is called \emph{non-alternating}.

Given two edges $\e_1,\e_2$ of $\Gf$, let 
\begin{equation*}%
 \OmsGf(\e_1,\e_2):=\{\Omf\in\OmsGf\mid \e_1\in\Pathbd_1(\Omf)\text{ and }\e_2\in\Pathbd_2(\Omf)
\text{ or }
\e_1\in\Pathbd_2(\Omf)\text{ and }\e_2\in\Pathbd_1(\Omf)\}.
\end{equation*}
For each $\Omf\in\OmsGf(\e_1,\e_2)$, there is a unique orientation $\Omfvecbw_{\e_1,\e_2}$ of $\Omf$ such that the edges $\e_1,\e_2$ are both oriented from black to white. We set $\epsOmf(\Omfvecbw_{\e_1,\e_2}):=+1$ if $\Omfvecbw_{\e_1,\e_2}$ is non-alternating and $\epsOmf(\Omfvecbw_{\e_1,\e_2}):=-1$ if $\Omfvecbw_{\e_1,\e_2}$ is alternating; see \cref{fig:eps-alt}. 

\begin{figure}
\def\inputscl{1.3}
\setlength{\tabcolsep}{2pt}
\begin{tabular}{cccc}
 \includegraphics[scale=\inputscl]{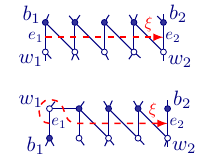}
& 
\includegraphics[scale=\inputscl]{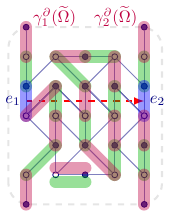}
&
 \includegraphics[scale=\inputscl]{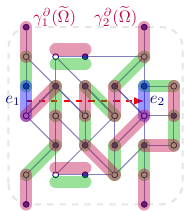}
&
 \includegraphics[scale=\inputscl]{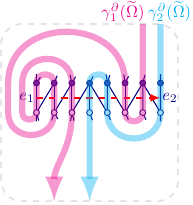}
\\
(a) \textcolor{red}{$\CutPw=\CutPb$} 
& 
(b) $\epsOmf(\Omfvecbw_{\e_1,\e_2})=+1$
& 
(c) $\epsOmf(\Omfvecbw_{\e_1,\e_2})=-1$
& 
(d) $\epsOmf(\Omfvecbw_{\e_1,\e_2})=+1$
\end{tabular}
 \caption{\label{fig:eps-alt} Computing $\epsOmf(\Omfvecbw_{\e_1,\e_2})$; see \cref{lemma:TE:Omvec,lemma:TE:OmsGf_pos}.}
\end{figure}

\begin{lemma}\label{lemma:TE:Omvec}
Let $\e_1,\e_2$ be two edges of $\Gf$ with $\ebar_1=\{\b_1,\w_1\}$ and $\ebar_2=\{\b_2,\w_2\}$. Then 
\begin{equation}\label{eq:TE:Omvec}
 \wtKf(\e_1)\wtKf(\e_2)\cdot \brlawf<\w_1,\w_2> \cdot \brlatbf[\b_1,\b_2] = 
\sum_{\Omf\in\OmsGf(\e_1,\e_2)}\epsOmf(\Omfvecbw_{\e_1,\e_2})\wtf(\Omf). 
\end{equation}
\end{lemma}
\begin{proof}
By \cref{rmk:tripods_commute}, we may assume that $\Gf$ is connected. 
Pick a path $\Pathw$ from $\w_1$ to $\w_2$ and let $\Pathb$ be the path from $\b_1$ to $\b_2$ that differs from $\Pathw$ by either adding or removing each of $\e_1$ and $\e_2$; see \figref{fig:eps-alt}(a). 
Applying~\eqref{eq:IP:brlaw}--\eqref{eq:TE:constw=constb=1} 
 and taking into account that $\wtKf(\e_\s)=\epsKf(\e_\s)\wtf(\e_\s)$ for $\s=1,2$, we find 
\begin{equation*}%
 \wtKf(\e_1)\wtKf(\e_2)\cdot \brlawf<\w_1,\w_2> \cdot \brlatbf[\b_1,\b_2] = 
\wtf(\e_1)\wtf(\e_2) 
\sum_{\substack{\apmwf\in\APMS(\Gfiww),\\\apmbf\in\APMS(\Gfbb)}}
(-1)^{|\apmwf\cap \CutPw| +|\apmbf\cap\CutPb|} \wtf(\apmwf)\wtf(\apmbf).
\end{equation*}
Let $\Omf = \apmwf\cup\apmbf\cup\{\e_1,\e_2\}$ (multiset union). We see that $\Omf\in\OmsGf$. Since $\Gf$ has black boundary, each of $\Pathbd_1(\Omf),\Pathbd_2(\Omf)$ contains an even number of edges. However, all boundary edges belong to $\apmbf$ and none belong to $\apmwf$. Thus, for parity reasons, each \bdbd path $\Pathbd_1(\Omf),\Pathbd_2(\Omf)$ must pass through either $\e_1$ or $\e_2$; see \figref{fig:eps-alt}(b,c) for examples. Thus, $\Omf\in\OmsGf(\e_1,\e_2)$. 
 Conversely, any $\Omf\in\OmsGf(\e_1,\e_2)$ can be split into a union $\apmwf\cup\apmbf\cup\{\e_1,\e_2\}$ for $\apmwf\in\APMS(\Gfiww)$ and $\apmbf\in\APMS(\Gfbb)$ in $2^{\ncycOmf}$-many ways. It remains to show that $(-1)^{|\apmwf\cap \CutPw| +|\apmbf\cap\CutPb|} = \epsOmf(\Omfvecbw_{\e_1,\e_2})$. 

Since $\CutPw$ is a zig-zag cut in $\Gfiww$ and $\CutPb$ is a zig-zag cut in $\Gfbb$, we can deform them so that they coincide, both starting in the middle of $\e_1$ and ending in the middle of $\e_2$, and so that both $\w_1$ and $\w_2$ are located to the right of $\CutPw=\CutPb$; see \figref{fig:eps-alt}(a). Since the cuts coincide, every edge in $\Omf\setminus\{\e_1,\e_2\}$ that intersects one of them contributes a minus sign to either $(-1)^{|\apmwf\cap \CutPw|}$ or $(-1)^{|\apmbf\cap\CutPb|}$. Thus, we have a path $\Pathbd_\s(\Omf)$ passing through the edge $\e_\s$ for $\s=1,2$ and a directed cut $\CutPw=\CutPb$ connecting the midpoints of $\e_1$ and $\e_2$ and keeping both $\w_1$ and $\w_2$ to the right, and $(-1)^{|\apmwf\cap \CutPw| +|\apmbf\cap\CutPb|}$ counts the parity of the number of intersections between $\CutPw=\CutPb$ and $\Pathbd_1(\Omf)\cup\Pathbd_2(\Omf)$. By a topological argument (see \cref{fig:eps-alt}), it follows that the sign $(-1)^{|\apmwf\cap \CutPw| +|\apmbf\cap\CutPb|}$ is $+1$ precisely when $\Omfvecbw_{\e_1,\e_2}$ is non-alternating.
\end{proof}

Given two faces $\ff,\f$ of $\Gf$, let $\OmsGf(\ff\ssep\f)$ be the set of $\Omf\in\OmsGf$ such that both \bdbd paths $\Pathbd_1(\Omf),\Pathbd_2(\Omf)$ separate $\ff$ from $\f$. 

\begin{proposition}\label{lemma:TE:OmsGf_pos}
For any two faces $\ff,\f$ of $\Gf$, 
\begin{equation}\label{eq:TE:OmsGf_pos}
 \frac14(\xdGf(\ff) - \xdGf(\f))^2 = \sum_{\Omf\in\OmsGf(\ff\ssep\f)} \wtf(\Omf).
\end{equation}
\end{proposition}
\begin{proof}
 Consider a simple path $\Cut$ in $\GDf$ from $\ff$ to $\f$. After possibly adjusting the topology of the embedding of $\Gf$ in the disk, we may assume that $\Cut$ is horizontal and oriented from left to right as in \cref{fig:eps-alt}. Let
$E(\Cut):=\{\e_1,\e_2,\dots,\e_m\}$
 be the edges crossed by $\Cut$, listed from left to right, and let $\ebar_i=\{\b_i,\w_i\}$ for $i\in\brm$. For each $i$, we set $\bweps_i:=+1$ (resp., $\bweps_i:=-1$) if $\w_i$ is located above (resp., below) the cut $\Cut$. Let $\ff=\f_0,\f_1,\dots,\f_m=\f$ be the faces appearing along the cut, so that $\e_i$ separates $\f_{i-1}$ from $\f_i$. 
 By~\eqref{eq:intro:primitive_TO} and~\eqref{eq:bispinor_cdot} below,
\begin{equation*}%
 \frac14(\xdGf(\ff) - \xdGf(\f))^2 = \frac14\biggl(\sum_{i=1}^m (\xdGf(\f_i)-\xdGf(\f_{i-1}))\biggr)^2 
= \sum_{1\leq i<j\leq m} \bweps_i\bweps_j\wtKf(\e_i)\wtKf(\e_j)\brlawf<\w_i,\w_j>\brlatbf[\b_i,\b_j].
\end{equation*}
By \cref{lemma:TE:Omvec}, 
\begin{equation}\label{eq:TE:Mand_RHS_proof}
 \frac14(\xdGf(\ff) - \xdGf(\f))^2 
= \sum_{1\leq i<j\leq m} \sum_{\Omf\in\OmsGf(\e_i,\e_j)}\bweps_i\bweps_j\epsOmf(\Omfvecbw_{\e_i,\e_j})\wtf(\Omf).
\end{equation}

Let $\Omf\in\OmsGf$. Our goal is to show that the contribution of $\Omf$ to the right-hand side of~\eqref{eq:TE:Mand_RHS_proof} equals $\wtf(\Omf)$ if $\Omf\in\OmsGf(\ff\ssep\f)$ and $0$ otherwise. Pick a non-alternating orientation $\Omfvec$ of $\Omf$ and let
 $E^{(\s)}:=\{\e_i\in E(\Cut)\mid\e_i\in\Pathbd_\s(\Omf)\}$ for $\s=1,2$. Thus, the orientation $\Omfvec$ induces an orientation on $\e_i$ for each $\e_i\in E^{(\s)}$. For $\s=1,2$, the orientations of edges in $E^{(\s)}$ alternate between up and down along the cut $\Cut$ since $\Pathbd_\s(\Omf)$ separates the disk into two connected components. 
We write $\orUD_i=+1$ (resp., $\orUD_i=-1$) if $\e_i\in E^{(1)}\sqcup E^{(2)}$ is oriented up (resp., down) in $\Omfvec$.

A given double-dimer configuration $\Omf\in\OmsGf$ contributes to the $(i,j)$-th term on the right-hand side of~\eqref{eq:TE:Mand_RHS_proof} whenever $\e_i\in E^{(1)}$ and $\e_j\in E^{(2)}$ or vice versa. Assume without loss of generality that $\e_i\in E^{(1)}$ and $\e_j\in E^{(2)}$. A simple check shows that $\bweps_i\bweps_j\epsOmf(\Omfvecbw_{\e_i,\e_j})=\orUD_i\orUD_j$. 
 Thus, the total contribution of $\Omf$ to the right-hand side of~\eqref{eq:TE:Mand_RHS_proof} is 
$\left(\sum_{\e_i\in E^{(1)}} \orUD_i\right)\left(\sum_{\e_j\in E^{(2)}} \orUD_j\right)\wtf(\Omf)$. This contribution is 
zero when at least one path $\Pathbd_\s(\Omf)$ does not separate $\ff$ from $\f$, since in this case we have $\sum_{\e_i\in E^{(\s)}} \orUD_i = 0$. Otherwise, $\Omf\in\OmsGf(\ff\ssep\f)$, and since we chose $\Omfvec$ to be a non-alternating orientation of $\Omf$, we have $\sum_{\e_i\in E^{(1)}} \orUD_i = \sum_{\e_j\in E^{(2)}} \orUD_j\in\{\pm1\}$, 
 so the total contribution is $\wtf(\Omf)$.
\end{proof}

\subsection{Annular graphs and momentum amplituhedron map}\label{ssec:annular}
We explain how to represent $\LaLat$ as a bipartite graph embedded in an annulus, and use it to reduce the case of a general graph $\G$ of type $(k,n)$ to the case of a graph $\Gf$ of type $(2,4)$.

\begin{definition}
An \emph{annular bipartite graph} of type $(\kbot,\nin,\nout)$ is a planar bipartite graph $\Gbot$ embedded in an annulus $\Ann$, with $\nin$ degree-$1$ boundary vertices $\bdvin_1,\bdvin_2,\dots,\bdvin_{\nin}$ located on the inner boundary circle $\Annin$ of $\Ann$ and $\nout$ degree-$1$ boundary vertices $\bdvout_1,\bdvout_2,\dots,\bdvout_{\nout}$ located on the outer boundary circle $\Annout$. Similarly to~\eqref{eq:DIM:k_dfn}, the parameters $\kbot$ and $\nbot:=\nin+\nout$ satisfy 
\begin{equation*}%
 \kbot:=|\WVbot| - |\BVintbot| \quad\text{and}\quad \nbot-\kbot = |\BVbot| - |\WVintbot|.
\end{equation*}
\end{definition}
As before, an \emph{\APM} $\apmbot$ of $\Gbot$ is required to cover each interior vertex of $\Gbot$ exactly once. Given a positive edge weight function $\wtbot:\Ebot\to\Rtp$ and subsets $\Iin\subset\brx{\nin}$ and $\Iout\subset\brx{\nout}$ with $|\Iin| + |\Iout|=\kbot$, we consider partition functions $\Delta_{\Iin,\Iout}(\Gbot,\wtbot)$ obtained as sums of weights of \APMs $\apmbot$ of $\Gbot$ with $\partin\apmbot=\Iin$ and $\partout\apmbot=\Iout$, where $(\partin\apmbot,\partout\apmbot)$ are defined analogously to the boundary $\partial\apm$ of an ordinary \APM using the inner and outer boundary vertices, respectively.

Let $(\Gbot,\wtbot)$ be a weighted connected annular bipartite graph of type 
$(\kbot,\nin,\nout) = (n-k+2,n,4)$
 with black boundary. Assume that $\Gbot$ admits \APMs $\apmbotp,\apmbotm$ such that $\partout\apmbotp=\Ioutp:=\emptyset$ and $\partout\apmbotm=\Ioutm:=\brx4$. Define $\Cbotpm\in\Grtnn(n-k\pm2,n)$ so that %
\begin{equation}\label{eq:OAC:Cbotpm_meas}
 \Delta_I(\Cbotpm) = \Delta_{I,\Ioutpm}(\Gbot,\wtbot) \quad\text{for all $I\in{\brn\choose n-k\pm2}$}.
\end{equation}
\begin{remark}\label{rmk:OAC:turn_Gbot_into_sphere}
If we embed $\Gbot$ into a sphere, declare the outer boundary vertices $\bdvout_1,\bdvout_2,\bdvout_3,\bdvout_4$ interior, and treat the inner circle $\Annin$ as the boundary of a disk, we obtain a graph whose boundary measurements are given by $\Cbotm$. Since $\Cbotp$ is obtained via the same process after deleting the four (now interior) black vertices $\bdvout_1,\bdvout_2,\bdvout_3,\bdvout_4$ that share a face, \cref{lemma:DIM:remove_bivertex} implies that
\begin{equation}\label{eq:OCA:Cbotm_subset_Cbotp}
 \Cbotm\subset\Cbotp.
\end{equation}
\end{remark}
\begin{definition}\label{dfn:OAC:Arep}
We say that $\LaLat\in\LaLak$ is \emph{\Arep} if there exists a weighted connected annular graph $(\Gbot,\wtbot)$ as above with boundary measurements $\Cbotpm$ given by~\eqref{eq:OAC:Cbotpm_meas} such that 
\begin{equation}\label{eq:OAC:Arep}
\Lap=\altp(\Cbotp)\in\Grtp(k-2,n) \quad\text{and}\quad \Lat=\altp(\Cbotm)\in\Grtp(k+2,n).
\end{equation}
In this case, we say that $\LaLat$ is \emph{\Areped} by $(\Gbot,\wtbot)$. We set 
\begin{equation*}%
 \LaLaArep:=\{\LaLat\in\LaLak\mid \LaLat\text{ is \Arep}\}.
\end{equation*}
\end{definition}
\noindent In particular, by~\eqref{eq:OCA:Cbotm_subset_Cbotp}, we have $\Lap\subset\Lat$ when $\LaLat$ is \Arep (cf. \cref{que:immp_vs_flag_vs_Arep} below).

\begin{definition}[Gluing planar bipartite graphs]\label{def:OAC:stacking_graphs}
Assume that $\G$ admits an \APM, is of type $(k,n)$ with $2\leq k\leq n-2$, and has black boundary. 
Let $\LaLat$ be \Areped by $(\Gbot,\wtbot)$. %
Let $\Gf = \Stack(\Gbot,\G)$ be obtained by identifying the boundary vertex $\bdvtop_i$ of $\G$ with the inner boundary vertex $\bdvin_i$ of $\Gbot$ for each $i\in\brn$; see \cref{fig:intro-annular}. We have $\Ef=\Etop\sqcup\Ebot$ and we let $\wtf:\Ef\to\Rtp$ be equal to $\wttop$ (resp., $\wtbot$) on $\Etop$ (resp., $\Ebot$). 
\end{definition}
\noindent It follows that $(\Gf,\wtf)$ is a weighted planar bipartite graph of type $(2,4)$. 

\begin{definition}
We say that $\epsKbot$ is a choice of \emph{Kasteleyn signs} for $\Gbot$ if it satisfies~\eqref{eq:Kast_sign} with $\epstra(\ffbot)=1$ for all faces $\ffbot$ of $\Gbot$ unless $\ffbot=\bdfbotin_n$ is adjacent to the arc of $\Annin$ between $\bdvin_n$ and $\bdvin_1$ or $\ffbot=\bdfbotout_4$ is adjacent to the arc of $\Annout$ between $\bdvout_4$ and $\bdvout_1$. If $\bdfbotin_n\neq\bdfbotout_4$ then we set $\epstra(\bdfbotin_n)=k$ and $\epstra(\bdfbotout_4)=0$, and if $\bdfbotin_n=\bdfbotout_4$ then we set $\epstra(\bdfbotin_n)=\epstra(\bdfbotout_4)=k+1$.\footnote{This choice ensures that when $\Annout$ turns into an interior face $\ffbotout$ as in \cref{rmk:OAC:turn_Gbot_into_sphere}, $\epsKbot(\bdrypathbot\ffbotout)$ satisfies~\eqref{eq:Kast_sign} with $\epstra(\ffbotout)=1$. It is also consistent with $\prod_{\ffbot\in\Fbot} \epsKbot(\bdrypathbot\ffbot)=1$ since $\kbot+\nbot\equiv k\pmod 2$.}
\end{definition}
Fix Kasteleyn signs $\epsKtop$ on $\Gtop$ and $\epsKbot$ on $\Gbot$. 
Let $\epsKf:\Ef\to\{\pm1\}$ be given by
\begin{equation}\label{eq:OAC:epsKf_stacked}
 \epsKf(\e):=
 \begin{cases}
 \epsKtop(\e), &\text{if $\e\in\Etop$;}\\
 (-1)^i\epsKbot(\e), &\text{if $\e=\bdebotin_i\in\Ebot$ for some $i\in\brn$;}\\
 \epsKbot(\e), &\text{otherwise,}
 \end{cases}
\end{equation}
where for $i\in\brn$, $\bdebotin_i$ is the sole edge of $\Gbot$ incident to $\bdvin_i$.

\begin{lemma}
$\epsKf$ is a valid choice of Kasteleyn signs for $\Gf$.
\end{lemma}
\begin{proof}
For simplicity, we assume that $\Gtop$ is \bdconn; the argument below extends to arbitrary $\Gtop$ using the Kasteleyn sign condition in \cref{rmk:Kast_sign_float}. 

Let $\ff$ be a face of $\Gf$. If $\ff$ is an interior face of $\Gtop$ or a face of $\Gbot$ not incident to $\Annin$ 
then~\eqref{eq:Kast_sign} is clearly satisfied for it. 
Otherwise, 
since $\Gbot$ is connected, $\ff$ is a union of a single (necessarily boundary) face $\f$ 
 of $\Gtop$ together with a disjoint union of $\nbdryarcs\f$-many boundary faces of $\Gbot$. 
Since $\G$ and $\Gbot$ have black boundary, $\dwcor(\ff) = \dwcor(\f) + \sum_{i\in\bdryarcs\f}\dwcor(\bdfbotin_i)$. 

For $i\in\brn$, let $\onebdfour_i$ be $1$ if $\bdfbotin_i = \bdfbotout_4$ and $0$ otherwise. Thus, $\epstra(\bdfbotin_i) \equiv 1+\onebdfour_i\pmod2$ for $i\in\brx{n-1}$. Let $\onebdfoursum:=\sum_{i\in\bdryarcs\f} \onebdfour_i$. Since $\Gbot$ is connected, $\onebdfoursum$ is either $1$ or $0$ depending on whether $\ff$ contains $\bdfbotout_4$ or not. 
 If $n\notin\bdryarcs\f$ then $(-1)^{\epstra(\f)+\sum_{i\in\bdryarcs\f}\epstra(\bdfbotin_i)} = (-1)^{\nbdryarcs\f+1+\onebdfoursum}$. 
If $n\in\bdryarcs\f$ then $\epstra(\f)=k+n$ and $\epstra(\bdfbotin_n)= k+\onebdfour_n$. In this case, $(-1)^{\epstra(\f)+\sum_{i\in\bdryarcs\f}\epstra(\bdfbotin_i)} = (-1)^{\nbdryarcs\f+n+1+\onebdfoursum}$. 

 On the other hand, for each $i\in\bdryarcs\f\setminus\{n\}$, the extra $(-1)^i$ sign for $\bdebotin_i$ in~\eqref{eq:OAC:epsKf_stacked} appears in $\epsKf(\bdrypathf\ff)$ together with the sign $(-1)^{i+1}$ for $\bdebotin_{i+1}$. If $n\notin\bdryarcs\f$ then the combined contribution of these signs to $\epsKf(\bdrypathf\ff)$ is $(-1)^{\nbdryarcs\f}$, and taking the $(-1)^{\nbdryarcs\f+1+\onebdfoursum}$ sign computed above into account, we find that indeed $\epsKf(\bdrypathf\ff)$ satisfies~\eqref{eq:Kast_sign} with $\epstra(\ff):=1+\onebdfoursum$. If $n\in\bdryarcs\f$ then the extra signs for $\bdebotin_1$ and $\bdebotin_n$ in~\eqref{eq:OAC:epsKf_stacked} contribute $(-1)^{n+1}$, so the total contribution of signs in~\eqref{eq:OAC:epsKf_stacked} is $(-1)^{|\bdryarcs\f|+n}$. Multiplying it by the sign $(-1)^{\nbdryarcs\f+n+1+\onebdfoursum}$ computed above, we see again that~\eqref{eq:Kast_sign} is satisfied with $\epstra(\ff):=1+\onebdfoursum$. It remains to note that $1+\onebdfoursum\equiv1\pmod2$ when $\bdfbotout_4$ is not contained in $\ff$ and $1+\onebdfoursum\equiv0\pmod2$ otherwise. This agrees with~\eqref{eq:Kast_sign} since $\kf+\nf = 6$ is even. 
\end{proof}

\begin{lemma}\label{lemma:OAC:Gf_satisfies_ass}
Let $\LaLat\in\LaLaArep$ be \Areped by $(\Gbot,\wtbot)$.
Then $\Gf = \Stack(\Gbot,\G)$ 
 admits an \APM. 
\end{lemma}
\begin{proof}

 For any $I\in{\brn\choose k}$ such that $\Delta_I(\Gtop,\wttop)\neq0$, let $\Iin:=\brn\setminus I$. We claim that $\Gbot$ admits an \APM $\apmbot$ with $(\partin\apmbot,\partout\apmbot)=(\Iin,\Iout)$ for some $\Iout\in{\brx4\choose 2}$. Fix any distinct $a,b\in I$ and $c,d\in\brn\setminus I$; such indices exist since $2\leq k\leq n-2$. Let $\Iinp:=\Iin\sqcup\{a,b\}$ and $\Iinm:=\Iin\setminus\{c,d\}$, and recall that we set $\Ioutp:=\emptyset$ and $\Ioutm:=\brx4$. By~\eqref{eq:OAC:Arep}, $\Gbot$ admits \APMs $\apmbotp,\apmbotm$ such that 
$(\partin\apmbotpm,\partout\apmbotpm)=(\Iinpm,\Ioutpm)$. Since $\Gbot$ has black boundary, $\Ombot:=\apmbotm\cup\apmbotp$ contains four \bdbd paths connecting $\bdvout_1,\bdvout_2,\bdvout_3,\bdvout_4$ to $\bdvin_a,\bdvin_b,\bdvin_c,\bdvin_d$ in some order. Swapping the roles of $\apmbotp$ and $\apmbotm$ along the paths that involve $\bdvin_a$ and $\bdvin_b$, we obtain an \APM $\apmbot$ of $\Gbot$ with boundary $(\Iin,\Iout)$, as desired. Choosing any $\apm\in\APMSGbd(I)$, we find that $\apmbot\sqcup\apm\in\APMS(\Gf)$. 
\end{proof}

 The next result allows one to ``split'' boundary measurements of $\Gf$ into those of $\G$ and $\Gbot$. 
\begin{corollary}\label{cor:Ann_split_Delta}
Assume that $\LaLat$ is \Areped by $(\Gbot,\wtbot)$ and let $(\Gf,\wtf)$ be obtained by gluing $(\G,\wt)$ and $(\Gbot,\wtbot)$ as above. 
For all distinct $\w_1,\w_2\in\WVint$ (resp., $\b_1,\b_2\in\BVint$), 
\begin{align}
\label{eq:Ann_split_Deltaw}
 \DeltaGfiww &= \sum_{L\in{\brn\choose k-2}} \Delta_L(\Giww,\wt)\cdot \Delta_L(\Lap),\quad\text{resp.,}\\
\label{eq:Ann_split_Deltab}
 \DeltaGfbb &= \sum_{R\in{\brn\choose k+2}} \Delta_R(\Gbb,\wt)\cdot \Delta_R(\Lat).
\end{align}
In particular, $\DeltaGfiww>0$ (resp., $\DeltaGfbb>0$) if and only if $\Giww$ (resp., $\Gbb$) admits an \APM.
\end{corollary}
\begin{proof}
Every perfect matching $\apmwf$ of $\Gfiww$ is a union of an \APM $\apm$ of $\Giww$ and an \APM $\apmbot$ of $\Gbot$, with $\wtf(\apmwf) = \wt(\apm)\cdot \wtbot(\apmbot)$. Since $\Giww$ is of type $(k-2,n)$, we have $L:=\partial\apm \in {\brn\choose k-2}$ and $\comp{L} = \partin\apmbot$. By~\eqref{eq:OAC:Cbotpm_meas}, \eqref{eq:altp_Delta}, and~\eqref{eq:OAC:Arep}, $\Delta_{\comp{L},\Ioutp}(\Gbot,\wtbot) = \Delta_{\comp{L}}(\Cbotp) = \Delta_L(\Lap)$. This shows~\eqref{eq:Ann_split_Deltaw}. Since $\Lap\in\Grtp(k-2,n)$ by~\eqref{eq:OAC:Arep}, the right-hand side of~\eqref{eq:Ann_split_Deltaw} is nonzero if and only if $\Giww$ admits an \APM. The argument for~\eqref{eq:Ann_split_Deltab} is similar.
\end{proof}
\noindent See~\eqref{eq:OmsGf_vs_Ttaucoef} for an analogous ``splitting'' formula for double-dimer measurements introduced in~\eqref{eq:TE:OmsGf_pos}.

\subsection{Flag-positive pairs \texorpdfstring{$(\La,\Lat)$}{(𝚲,𝚲̃)}} \label{ssec:flag_positive}

Let $\Gtp\subset\GL_n(\R)$ be the subset consisting of totally positive $n\times n$ matrices (with positive minors of all sizes). For $M\in\GL_n(\R)$ and $0\leq d\leq n$, let $\pi_{d}:\GL_n(\R)\to\Gr(d,n)$ be the map sending $M$ to the span $M_{\brd}$ of the first $d$ rows of $M$. For $2\leq k\leq n-2$, let $\pifl(M):=(M_{\brx{k-2}},M_{\brx{k+2}})$. 
Let $\Fltpn(k-2,k+2)$ be the totally positive part of the $2$-step flag variety $\Fln(k-2,k+2)$ as defined in~\cite{Lus2}. Explicitly, for each $0\leq d\leq n$ and $2\leq k\leq n-2$,
\begin{equation}\label{eq:OAC:pifl}
 \Grtp(d,n)=\pi_d(\Gtp) \quad\text{and}\quad \Fltpn(k-2,k+2)=\pifl(\Gtp).
\end{equation}
Here, the first identity is due to K.~Rietsch; see~\cite[Remark~3.8]{LamCDM} for a proof and~\cite[Theorem~1.1(i) and Section~1.4]{BlKa} for further discussion.

\begin{definition}
We write $\LaLafl:=\{\LaLat\mid(\Lap,\Lat)\in\Fln(k-2,k+2)\}$ and 
 $\LaLaflp:=\{\LaLat\mid(\Lap,\Lat)\in\Fltpn(k-2,k+2)\}$. We say that $\LaLat$ is \emph{flag-positive} if $\LaLat\in\LaLaflp$.
\end{definition}

\begin{lemma}\label{lemma:OAC:flag_pos=>Arep}
If $\LaLat\in\LaLaflp$ is flag-positive then it is \Arep.
\end{lemma}
\begin{proof}
Let $M\in\Gtp$ be such that $\pifl(M)=(\Lap,\Lat)$, so that
 $\Lap = M_{\brx{k-2}}$ and $\Lat = M_{\brx{k+2}}=\begin{pmatrix}
\Lap \\ \Latrest
\end{pmatrix}$ as $(k\pm2)\times n$ matrices. 
Consider a $(k+2)\times(n+4)$ matrix $\Xmat:=\begin{pmatrix}
\Lap & \bzero_{(k-2)\times4}\\
\Latrest & I'_4\\
\end{pmatrix}$ with $I'_4 = \left(\begin{smallmatrix}
0 & 0 & 0 & -1\\
0 & 0 & 1 & 0\\
0 & -1 & 0 & 0\\
1 & 0 & 0 & 0
\end{smallmatrix}\right)$. As explained in~\cite[Lemma~3.9]{Pos}, $\Xmat\in\Grtp(k+2,n+4)$. Let $(\Gbot',\wtbot)$ be a (reduced) weighted planar bipartite graph in a disk such that $\Xmat = \Meas(\Gbot',\wtbot)$. Assume that $\Gbot'$ has white boundary. Denote the boundary vertices of $\Gbot'$ by $(\bdvin_1,\bdvin_2,\dots,\bdvin_n,\bdvout_4,\bdvout_3,\bdvout_2,\bdvout_1)$. 
Let $\Gbot$ be obtained from $\Gbot'$ by reflecting it and changing the colors of all vertices. Choose a cut $\Cut$ in $\Ann$ connecting $\Annin$ to $\Annout$. Cutting $\Ann$ along $\Cut$, we obtain a disk $\Disk$, and we embed $\Gbot$ in $\Disk$ so that $\bdvin_1,\bdvin_2,\dots,\bdvin_n$ appear clockwise around $\Annin$ and $\bdvout_1,\bdvout_2,\bdvout_3,\bdvout_4$ appear clockwise around $\Annout$. 
As explained in \cref{sec:backgr:cyc}, applying $\altp$ corresponds to swapping the colors of the vertices of $\Gbot$, so the elements $\Cbotpm\in\Grtnn(n-k\pm2,n)$ defined in~\eqref{eq:OAC:Cbotpm_meas} satisfy~\eqref{eq:OAC:Arep}. Thus, $\LaLat$ is \Arep.
\end{proof}

Recall from \cref{ssec:TE:Kast} that $\Gf$ admits unique (up to $\GL_2(\R)\times\GL_2(\R)$-action) discrete holomorphic functions $(\laextf,\latextf)\in\HHspaceKRdf$. Their restrictions $(\laextf|_{\WV},\latextf|_{\BV})$ to the vertices of $\G$ belong to $\HHspaceKRd$. Let 
$\lalat = (\alt\partial(\laextf|_{\WV}),\alt\partial(\latextf|_{\BV}))\in\Gr(2,n)\times\Gr(2,n)$ be their boundary restrictions.

\begin{lemma}\label{lemma:OAC:stacking_vs_PhiLL}
The pair $\lalat$ defined above satisfies $\lalat = \PhiLL(C)$, where $C:=\Meas(\Gtop,\wttop)$.
\end{lemma}
\begin{proof}
 By~\eqref{eq:MCE:alt(C)_vs_pFw}--\eqref{eq:MCE:alt(Cp)_vs_pFb}, $\la\subset C\subset\latp$. 
Let $\Gbotm$ (resp., $\Gbotp$) be obtained from $\Gbot$ by making the vertices $\bdvout_1,\bdvout_2,\bdvout_3,\bdvout_4$ interior (resp., deleting the vertices $\bdvout_1,\bdvout_2,\bdvout_3,\bdvout_4$); cf. \cref{rmk:OAC:turn_Gbot_into_sphere}. By~\eqref{eq:OAC:Cbotpm_meas}, 
 $\Cbotpm = \Meas(\Gbotpm,\wtbot)$. Let $\latextalt$ be obtained from $\latextf|_{\BVbot}$ by substituting $\latextf(\bdvin_i)\mapsto(-1)^i\latextf(\bdvin_i)$ for $i\in\brn$. This is consistent with the extra $(-1)^i$ sign in~\eqref{eq:OAC:epsKf_stacked}. We have $\laextf|_{\WVbot}\in\Hwspace_{\R^2}\HtripKbotp$ and $\latextalt\in\Hbspace_{\R^2}\HtripKbotm$. Furthermore, $\partial(\laextf|_{\WVbot}) = \la$ and $\partial(\latextalt) = \lat$. By~\eqref{eq:MCE:alt(C)_vs_pFw}--\eqref{eq:MCE:alt(Cp)_vs_pFb}, 
 we have $\la\subset\alt\Cbotp$ and $\lat\subset\altp\Cbotm$. By~\eqref{eq:OAC:Arep}, $\la\subset C\cap \La$ and $\lat\subset C^\perp\cap\Lat$. By \cref{prop:momLL_basic}, both intersections are $2$-dimensional, so $\lalat = \PhiLL(C)$. 
\end{proof}

We summarize our single-dimer model results.

\begin{corollary}\label{cor:lak_implies_black_imm}
Assume that $\G$ admits an \APM, and let $\wt\in\Rtpgauge$ and $C:=\Meas(\G,\wt)$. Consider $2$-planes $\lalat\in\lalats$ satisfying $\la\subset C\subset\latp$. Let $(\laext,\latext)\in\HHspaceKRd$ be the discrete holomorphic extensions of $\lalat$ to the vertices of $\G$.
\begin{enumerate}[label=(\arabic*)]
\item\label{lak_implies1} If $\la\in\lak$ then for each $\w_1,\w_2\in\WV$ sharing a face of $\G$, we have $\epsww_{\w_1,\w_2}\brlaw<\w_1,\w_2> < 0$ if $\Giww$ admits an \APM and $\brlaw<\w_1,\w_2>=0$ otherwise.
\item\label{lak_implies2} If $\lat\in\latk$ then for each $\b_1,\b_2\in\BV$ sharing a face of $\G$, we have $\epsbb_{\b_1,\b_2}\brlatb[\b_1,\b_2]>0$ if $\Gbb$ admits an \APM and $\brlatb[\b_1,\b_2]=0$ otherwise.
\item\label{lak_implies3} If $\la\in\lak$ (resp., $\lat\in\latk$) and $\b\in\BVint$ (resp., $\w\in\WVint$) is connected to $\w_1,\dots,\w_d$ (resp., $\b_1,\dots,\b_d$) 
by edges $\e_1,\dots,\e_d$ in clockwise order then %
\begin{equation}\label{eq:TE:same_face_and_same_vertex}
 \epsK(\e_{\s})\epsK(\e_{\s+1})\brlaw<\w_{\s},\w_{\s+1}>\leq0,\quad\text{resp.,}\quad
 \epsK(\e_{\s})\epsK(\e_{\s+1})\brlatb[\b_{\s},\b_{\s+1}]\geq0
 \quad\text{for all $\s\in\brd$},
\end{equation} 
with equality if and only if $\G\rem\{\w_\s,\w_{\s+1}\}$ (resp., $\G\rem\{\b_\s,\b_{\s+1}\}$) does not admit an \APM.
\item\label{lak_implies4} Suppose that $\helmin(\G)\geq1$ and the edge weights $\wt\in\Rtpgauge$ are generic. If $\la\in\lak$ (resp., $\lat\in\latk$) then $\laext(\w)\neq0$ for all $\w\in\WV$ (resp., $\latext(\b)\neq0$ for all $\b\in\BV$). 
\end{enumerate}
\end{corollary}
\begin{proof}
We start by showing~\itemref{lak_implies1}--\itemref{lak_implies3}.
Let $\la\in\lak$. By \cref{prop:from_la_to_La}, we may extend $\la$ to $\La\in\alt(\Grtp(n-k+2,n))$, so by~\eqref{eq:altp_Delta}, $\Lap\in\Grtp(k-2,n)$. By~\eqref{eq:OAC:pifl}, there exists $M\in\Gtp$ such that $\pi_{k-2}(M)=\Lap$. Letting $\Lat':=\pi_{k+2}(M)$, we get $(\La,\Lat')\in\LaLaflp$. By \cref{lemma:OAC:flag_pos=>Arep}, $(\La,\Lat')$ is \Areped 
 by some weighted annular graph $(\Gbot,\wtbot)$. Let $(\Gf,\wtf)$ be obtained by gluing $(\Gbot,\wtbot)$ and $(\G,\wt)$ via \cref{def:OAC:stacking_graphs}. By \cref{lemma:OAC:stacking_vs_PhiLL}, the unique (up to $\GL_2(\R)\times\GL_2(\R)$-action) discrete holomorphic functions 
$(\laextf,\latextf)\in\HHspaceKRdf$ restrict to discrete holomorphic extensions $(\laextf|_{\WV},\lat^{\prime\bullet}|_{\BV})\in\HHspaceKRd$ of $(\la,\lat') = \Phi_{\La,\Lat'}(C)$. 
Part~\itemref{lak_implies1} and the first inequality in~\eqref{eq:TE:same_face_and_same_vertex} now follow from \cref{lemma:TE:single_dimer_same_face,cor:Ann_split_Delta}. 
The proof of~\itemref{lak_implies2}--\itemref{lak_implies3} in the case $\lat\in\latk$ is similar.

Next, we show~\itemref{lak_implies4}. 
Let $\w\in\WV$. After applying moves \MVbd if necessary (cf. \cref{lemma:DIM:moves_vs_helWmin}), we may assume that $\G$ has black boundary. By \cref{lemma:DIM:deleting_hel_verts}, $\G\rem\{\w\}$ admits an \APM $\apm_1$. Since $2\leq k\leq n-2$ and $\G$ has black boundary, $\apm_1$ uses some boundary vertex $\bdv_i$. By \cref{lemma:DIM:deleting_hel_verts}, $\Grem\{\w,\bdv_i\}$ admits an \APM $\apm_2$. The union $\Om:=\apm_1\cup\apm_2$ contains a \bdbd path $\Path$ connecting $\bdv_i$ to $\bdv_j$ for some $j\neq i$.
 Since $\G$ has black boundary, $\Path$ contains at least one white vertex $\w'$. Since $\w$ has degree $0$ in $\Om$, we must have $\w'\neq\w$.
 Let $\apm_3$ be obtained from $\apm_1$ by swapping the edges of $\apm_1$ for the edges of $\apm_2$ on the part of $\Path$ from $\bdv_i$ to $\w'$.
 Then $\apm_3\in\APMS(\Grem\{\w,\w'\})$. Since $\Grem\{\w,\w'\}$ admits \APMs, by~\eqref{eq:Ann_split_Deltaw}, so does $\Gfirem{\w}{\w'}$. 

Let us now insert some tripods as in \cref{rmk:tripods_commute} so that $\G$ would contain a path $\Pathw$ connecting $\w$ to $\w'$. 
When $\wt$ is generic, no cancellation can occur in the right-hand side of~\eqref{eq:IP:brlaw}. Indeed, if two \APMs $\apmwf_1,\apmwf_2$ of $\Gfirem{\w}{\w'}$ contribute to the right-hand side of~\eqref{eq:IP:brlaw} with opposite signs then their restrictions to $\Grem\{\w,\w'\}$ must be different, so when e.g. all edge weights of $\G$ are algebraically independent, we get $\brlawf<\w,\w'>\neq0$, and in particular, $\laext(\w)\neq0$ (cf. also \cref{lemma:OAC:stacking_vs_PhiLL}). 
 The proof of $\latext(\b)\neq0$ when $\lat\in\latk$ is similar.
\end{proof}

\begin{corollary}
Assume that $\G$ admits an \APM, and let $\wt\in\Rtpgauge$, $C:=\Meas(\G,\wt)$, $\LaLat\in\LaLak$, and $\lalat:=\PhiLL(C)$. 
If $\w_1,\w_2\in\WVint$ are connected by a path $\Pathw$ in $\G$ then 
\begin{equation}\label{eq:IP:brlaw_G}
 \brlaw<\w_1,\w_2> = \epsK(\Pathw)\cdot 
 \sum_{\apmw\in\APMS(\Giww)} (-1)^{|\apmw\cap \CutPw|} \wt(\apmw)\cdot \Delta_{\partial\apmw}(\Lap).
\end{equation}
Similarly, if $\b_1,\b_2\in\BVint$ are connected by a path $\Pathb$ in $\G$ then 
\begin{equation}\label{eq:IP:brlatb_G}
 \brlatb[\b_1,\b_2] = \epsK(\Pathb)\cdot 
 \sum_{\apmb\in\APMS(\Gbb)} (-1)^{|\apmb\cap \CutPb|} \wt(\apmb)\cdot \Delta_{\partial\apmb}(\Lat).
\end{equation}
\end{corollary}
\begin{proof}
Similarly to the proof of parts~\itemref{lak_implies1}--\itemref{lak_implies3} of \cref{cor:lak_implies_black_imm} above, we can extend $\La$ to a pair $(\La,\Lat')\in\LaLaflp$ that is \Areped by $(\Gbot,\wtbot)$. In this case,~\eqref{eq:IP:brlaw_G} follows from~\eqref{eq:IP:brlaw} and~\eqref{eq:Ann_split_Deltaw}. Similarly, extending $\Lat$ to a pair $(\La',\Lat)\in\LaLaflp$, we deduce~\eqref{eq:IP:brlatb_G} from~\eqref{eq:IP:brlatb} and~\eqref{eq:Ann_split_Deltab}. Here, we are continuing to assume that $\lalat$ has been adjusted by $\GL_2(\R)\times\GL_2(\R)$-action so that~\eqref{eq:TE:constw=constb=1} holds. 
\end{proof}

\begin{remark}\label{rmk:can_add_edges}
As before, when $\w_1,\w_2\in\WVint$ or $\b_1,\b_2\in\BVint$ are not connected by a path in $\G$, one can apply~\eqref{eq:IP:brlaw_G}--\eqref{eq:IP:brlatb_G} to the graph $\G'$ obtained from $\G$ by inserting tripods as in \cref{rmk:tripods_commute}. 
Furthermore, one can extend~\eqref{eq:IP:brlaw_G}--\eqref{eq:IP:brlatb_G} to boundary vertices of $\G$ by first making them interior using move~\MVbd.
\end{remark}

\section{Proof of \oac}\label{sec:proof_main_bij}

The goal of this section is to prove \cref{thm:intro:t_imm_vs_triples}. 

\subsection{From t-immersions to triples of subspaces}\label{sec:proof_main:from_t_imm_to_triples}
Assume that $\G$ is connected and admits an \APM. 
We start by introducing an algebraic version of the notion of a t-immersion. 

\begin{definition}\label{dfn:TE:datr}
An \emph{\datr} of $\G$ is a \quintuple $\datrQL:=\datrQ$, where $\wt\in\Rtpgauge$, $\epsK$ is a choice of Kasteleyn signs for $\G$, $(\Fw,\Fb)\in\HHspaceKC$ is a pair of $\C$-valued discrete holomorphic functions (where $\wtK(\e)=\epsK(\e)\wt(\e)$ as before), 
 and $\xd:\Faces\to\Rdd$ is the \KSprim of $(\Fw,\Fb)$. The set of \datrs of $\G$ is denoted $\Mdatr(\G)$. 
We view elements of $\Mdatr(\G)$ up to the action of gauge groups $\RtpVint$ and $\{\pm1\}^{\Vint}$ on $\datrQnox$; cf. \cref{rmk:DIM:Kast_gauge_eq}. 
The subset of \datrs with fixed $\wt$ (resp., $\wt$ and $\epsK$) is denoted by $\Mdatr(\G,\wt)$ (resp., $\Mdatr(\G,\wt,\epsK)$). 
\end{definition}

\begin{proposition}[{\cite[Section~3.2]{KLRR} and \cite[Sections~2 and~3]{CLR1}}]\label{prop:t_imm=>holom}
Any t-immersion $\Tcal$ of $(\G,\wt)$ extends to \adatr $\datrQL=\datrQ\in\Mdatr(\G,\wt)$. 
\end{proposition}
\begin{proof}%
Fix a choice $\epsK$ of Kasteleyn signs on $\G$ and let $\wtK$ be the corresponding Kasteleyn edge weights. 
Let $\bv_0\in\BV$ be a fixed vertex, and choose $\Fb(\bv_0)$ to be any nonzero value. 
 For each $\wv\in\WV$ connected by edges $\e_1,\e_2,\dots,\e_d$ to vertices $\bv_1,\bv_2,\dots,\bv_d$, 
 the tuple $(\Fb(\bv_1)\wtK(\e_1):\Fb(\bv_2)\wtK(\e_2):\cdots:\Fb(\bv_d)\wtK(\e_d))$ is determined by $\Tcal$ up to multiplication by an unknown constant (namely, $\Fw(\wv)$). Note that since $\xT$ is a t-immersion, it is injective on the edges of $\GD$, and thus we have $\Fw(\wv)\neq0$ and $\Fb(\bv_\s)\neq0$ for all $\s\in\brd$. 
Since $\G$ is connected, knowing the value $\Fb(\bv_0)$ allows us to recover the values of $\Fb$ at any other black vertex $\bv$ (by choosing a path from $\bv_0$ to $\bv$). The \Kawangle condition at each interior face together with the Kasteleyn sign condition~\eqref{eq:Kast_sign} guarantee that this procedure is consistent, i.e., does not depend on the choice of the path from $\bv_0$ to $\bv$. Taking $\wv_0$ to be any white vertex adjacent to $\bv_0$, we recover the value $\Fw(\wv_0)$ from~\eqref{eq:intro:primitive_TO}, and then proceed as above to recover the values $\Fw(\wv)$ at all other white vertices $\wv\in\WV$. Since the faces $\xT(\wv),\xT(\bv)$ of $\xT(\GD)$ are closed polygons for all $\wv\in\WVint$ and $\bv\in\BVint$, we get $(\Fw,\Fb)\in\HHspaceKC$. Thus, $\datrQL\in\Mdatr(\G,\wt)$. 
See \cite{KLRR,CLR1} for full details.
\end{proof}

\begin{remark}\label{rmk:Fw_Fb_determined_up_to_const}
The pair $(\Fw,\Fb)$ in \cref{prop:t_imm=>holom} is uniquely determined by $(\wt,\epsK,\Tcal)$ up to multiplication $(\Fw,\Fb)\mapsto (z\Fw,z^{-1}\Fb)$ by a global constant $z\in\Cast$. Furthermore, observe that the corresponding origami map $\Ocal$ given by~\eqref{eq:intro:primitive_TO} is invariant under gauge groups $\RtpVint$ and $\{\pm1\}^{\Vint}$. In particular, $\Ocal$ is determined by $\Tcal$ up to global shift and rotation.
\end{remark}

From now on, we fix a t-immersion $\xT$ of $(\G,\wt)$ and let $\datrQL=\datrQ\in\Mdatr(\G,\wt,\epsK)$ be the \datr provided by \cref{prop:t_imm=>holom}. 
 We define the tuples $(\y_i)_{i=1}^n := \alt(\pFw)$ and $(\yt_i)_{i=1}^n := \alt(\pFb)$ as in~\eqref{eq:intro:Fw_Fb_y}, and let $\lalat$ be the pair of $2\times n$ matrices given by~\eqref{eq:intro:y_to_lalat}. We let $C:=\Meas(\G,\wt)$. The goal of this subsection is to show the following result.

\begin{proposition}\label{prop:t_imm=>TRIPLES}
We have $\lalat\in\lalak$ and $(\la,\lat,C)\in\TRIPLES$.
\end{proposition}
We will obtain the proof of \cref{prop:t_imm=>TRIPLES} via a series of lemmas. First, observe that by~\eqref{eq:TE:t_imm_bdry_vs_pFw_pFb}, 
\begin{equation}\label{eq:yy=Tcal_Ocal}
 \y_i\yt_i = \bdxT_i - \bdxT_{i-1} \quad\text{and}\quad
 \ovl{\y_i}\yt_i = \bdxO_i - \bdxO_{i-1} \quad\text{for all $i\in\brn$.}
\end{equation}
Thus, $\sum_{i=1}^n\y_i\yt_i = \sum_{i=1}^n (\bdxT_i - \bdxT_{i-1}) = 0$ and $\sum_{i=1}^n\ovl{\y_i}\yt_i = \sum_{i=1}^n (\bdxO_i - \bdxO_{i-1}) = 0$,
so
 $\lalat\in\lalats$. %

Next, we analyze the brackets $\brla<i,i+1>$ and $\brlat[i,i+1]$.
 In what follows, the argument $\arg(z)$ of $z\in\C$ by definition belongs to $(-\pi,\pi]$. 
Recall the notation $\sumwT_i:=\sumwT(\bdf_i)$ and $\sumbT_i:=\sumbT(\bdf_i)$ from \cref{sec:intro:BCFW}.
\begin{lemma}\label{lemma:angle_sum_arg_rat}
For each $i\in\brn$,
\begin{equation}\label{eq:angle_sum_arg_rat}
 \sumbT_i = \arg(\y_{i+1} / \y_i)
 \in (0,\pi) \quad\text{and}\quad \sumwT_i -\pi = \arg(\yt_{i+1}/\yt_i) 
 \in (-\pi,0).
\end{equation}
\end{lemma}
\noindent In other words, by~\eqref{eq:intro:y_to_lalat}, $\Arg(\la_i,\la_{i+1})=\sumbT_i\in(0,\pi)$ and $\Arg(\lat_i,\lat_{i+1})=(\pi-\sumwT_i)\in(0,\pi)$.
\begin{proof}
Applying \MVbd at $\bdv_i$ and $\bdv_{i+1}$ if necessary, we may assume that both $\bdv_i$ and $\bdv_{i+1}$ are white. This results in potentially adding some boundary bigons to $\Tcal$, and has no effect on any of the terms appearing in~\eqref{eq:angle_sum_arg_rat}. 

Let $\fp_1,\fp_2,\dots,\fp_{2d}$ be the faces of $\G$ adjacent to $\bdf_i$ in counterclockwise order. Let $\e_1,\e_2,\dots,\e_{2d}$ be the edges separating them from $\bdf_i$, and let $K_j:=\wtK(\e_j)$ for $j\in\brx{2d}$ be their Kasteleyn weights. Let $\bdv_i=\v_0,\v_1,\dots,\v_{2d}=\bdv_{i+1}$ be the vertices incident to $\bdf_i$ in counterclockwise order. By~\eqref{eq:intro:Fw_Fb_y} and~\eqref{eq:partial_F_dfn}, setting $\epsvar := (-1)^{i}$, we have $\Fw(\v_0) = \epsvar\y_i$, $\Fb(\v_1) = \epsvar\yt_i/K_1$, $\Fb(\v_{2d-1}) = -\epsvar\yt_{i+1}/K_{2d}$, and $\Fw(\v_{2d}) = -\epsvar\y_{i+1}$.
 We have
\begin{equation*}%
 \sumbT_i = \arg\left(\prod_{j=1}^d\frac{\T(\fp_{2j}) - \bdxT_i}{\T(\fp_{2j-1}) - \bdxT_i}\right)
 \quad\text{and}\quad
 \sumwT_i = \arg\left(\prod_{j=1}^{d-1}\frac{\T(\fp_{2j+1}) - \bdxT_i}{\T(\fp_{2j}) - \bdxT_i}\right),
\end{equation*}
where both arguments belong to $(0,\pi)$ by~\itemref{intro:t_imm_angles_bdry} in \cref{dfn:intro:t_imm}. By~\eqref{eq:intro:primitive_TO},
\begin{equation*}%
\frac{\T(\fp_{2j}) - \bdxT_i}{\T(\fp_{2j-1}) - \bdxT_i} = -\frac{K_{2j}\Fw(\v_{2j})}{K_{2j-1}\Fw(\v_{2j-2})}
\quad\text{and}\quad
\frac{\T(\fp_{2j+1}) - \bdxT_i}{\T(\fp_{2j}) - \bdxT_i} = -\frac{K_{2j+1}\Fb(\v_{2j+1})}{K_{2j}\Fb(\v_{2j-1})}.
\end{equation*}
Comparing this with \cref{dfn:DIM:Kast}, we find 
\begin{equation}\label{eq:T_Kast_bdry}
 \prod_{j=1}^d\frac{\T(\fp_{2j}) - \bdxT_i}{\T(\fp_{2j-1}) - \bdxT_i} = \frac{\y_{i+1}}{\y_i} \FX_{\bdf_i}(\wt) ,
\quad
 \prod_{j=1}^{d-1}\frac{\T(\fp_{2j+1}) - \bdxT_i}{\T(\fp_{2j}) - \bdxT_i} = -\frac{\yt_{i+1}}{\yt_i} \FX_{\bdf_i}(\wt)^{-1}.
\end{equation}
We note that one does not need to treat the case $i=n$ differently since the Kasteleyn signs are consistent with the twisted cyclic symmetry~\eqref{eq:cshift_eps_dfn}.
\end{proof}
\begin{corollary}\label{cor:t_imm_=>_<ii+1>>0}
We have $\brla<i,i+1> >0$ and $\brlat[i,i+1]>0$ for all $i\in\brn$.
\end{corollary}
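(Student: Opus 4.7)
The plan is to deduce the corollary directly from the preceding lemma~\eqref{eq:angle_sum_arg_rat} by translating the brackets $\brla<i\,i+1>$ and $\brlat[i\,i+1]$ into imaginary parts of ratios of the complex numbers $\y_i$ and $\yt_i$. This is a short bookkeeping argument; the real content is in the lemma.

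For the $\la$-bracket, I would first unpack the definition. Writing $\la_i = (\Re(\y_i), \Im(\y_i))^T$ from~\eqref{eq:intro:y_to_lalat}, a direct determinant computation gives
\[
  \brla<i\,i+1> = \Re(\y_i)\Im(\y_{i+1}) - \Im(\y_i)\Re(\y_{i+1}) = \Im(\ovl{\y_i}\,\y_{i+1}).
\]
Since $\y_{i+1}/\y_i = \ovl{\y_i}\,\y_{i+1}/|\y_i|^2$ and $|\y_i|^2 > 0$, we have $\arg(\y_{i+1}/\y_i) = \arg(\ovl{\y_i}\,\y_{i+1})$. The first half of~\eqref{eq:angle_sum_arg_rat} then gives $\arg(\ovl{\y_i}\,\y_{i+1}) = \sumbT(\bdf_i) \in (0,\pi)$, so $\Im(\ovl{\y_i}\,\y_{i+1}) > 0$, proving $\brla<i\,i+1> > 0$.

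For the $\lat$-bracket, the sign twist in the second row of~\eqref{eq:intro:y_to_lalat} must be tracked carefully. With $\lat_i = (\Re(\yt_i), -\Im(\yt_i))^T$, I would compute
\[
  \brlat[i\,i+1] = -\Re(\yt_i)\Im(\yt_{i+1}) + \Im(\yt_i)\Re(\yt_{i+1}) = \Im(\yt_i\,\ovl{\yt_{i+1}}).
\]
Now $\arg(\yt_i\,\ovl{\yt_{i+1}}) = \arg(\yt_i/\yt_{i+1}) = -\arg(\yt_{i+1}/\yt_i)$, and the second half of~\eqref{eq:angle_sum_arg_rat} says $\arg(\yt_{i+1}/\yt_i) = \sumwT(\bdf_i) - \pi \in (-\pi, 0)$. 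Hence $\arg(\yt_i\,\ovl{\yt_{i+1}}) \in (0,\pi)$, so $\Im(\yt_i\,\ovl{\yt_{i+1}}) > 0$, proving $\brlat[i\,i+1] > 0$.

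There is no substantive obstacle here beyond keeping the sign conventions in~\eqref{eq:intro:y_to_lalat} straight; the asymmetric placement of the minus sign in the $\lat$ row is what ensures that $\brla$ and $\brlat$ become positive simultaneously (rather than one being positive and the other negative), which is precisely the source of the interval swap $(0,\pi) \leftrightarrow (-\pi, 0)$ in~\eqref{eq:angle_sum_arg_rat}. All other steps are mechanical.
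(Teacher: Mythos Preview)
Your proof is correct and follows exactly the route the paper intends: the corollary is stated in the paper without proof, as an immediate consequence of~\eqref{eq:angle_sum_arg_rat}, and your argument simply spells out the determinant-to-argument translation that makes this inference precise. The sign bookkeeping through~\eqref{eq:intro:y_to_lalat} is handled correctly in both cases.
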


\begin{lemma}
We have 
\begin{equation}\label{eq:sumw_sumb_pi_k}
 \sum_{i=1}^n \sumwT_i = \pi(n-k-1) \quad\text{and}\quad
 \sum_{i=1}^n \sumbT_i = \pi(k-1).
\end{equation}
\end{lemma}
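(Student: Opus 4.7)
The plan is to derive both identities simultaneously by a global angle count, using the interior angle condition~\eqref{eq:intro:t_imm_angles_int} to trade per-face data for per-vertex data. First, I would compute the total white-face angle sum of the t-immersion in two different ways. On one hand, each interior white vertex $\wv\in\WVint$ of $\G$ corresponds to a face of $\GD$ with $\deg_{\G}(\wv)$ vertices; by the immersion condition (part~\itemref{intro:t_imm_orientation} of \cref{dfn:intro:t_imm}), its image $\Tcal(\wv)$ is a convex $\deg_\G(\wv)$-gon whose interior angles sum to $(\deg_\G(\wv)-2)\pi$. Summing gives
\begin{equation*}
\sum_{\wv\in\WVint}(\deg_\G(\wv)-2)\pi.
\end{equation*}
On the other hand, grouping the same angles by vertex of $\GD$ (i.e.\ by face of $\G$) and using~\eqref{eq:intro:t_imm_angles_int} gives $|\Fint|\pi+\sum_{i=1}^n\sumwT(\bdf_i)$.

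Next, I would rewrite the left-hand side using two combinatorial identities. Since each edge of $\G$ has exactly one white endpoint, and each of the $n_W:=|\WVbd|$ boundary white vertices has degree $1$, we have $\sum_{\wv\in\WVint}\deg_\G(\wv)=|\E|-n_W$. Euler's formula for the closed disk (counting the $n$ boundary arcs as $1$-cells along with the edges of $\G$) gives $|\Verts|-(|\E|+n)+|\Faces|=1$, hence $|\Fint|=|\E|-|\Vint|+1-n$. Combining these yields
\begin{equation*}
\sum_{i=1}^n\sumwT(\bdf_i)=\bigl(|\BVint|-|\WVint|+n-n_W-1\bigr)\pi.
\end{equation*}

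The final ingredient is the type $(k,n)$ condition: for any almost perfect matching $\Apm$, counting black and white vertices covered yields $|\WVint|+m_W=|\BVint|+m_B$, while $|\partial\Apm|=m_B+(n_W-m_W)=k$; subtracting gives $|\BVint|-|\WVint|=n_W-k$. Substituting produces $\sum_i\sumwT(\bdf_i)=(n-k-1)\pi$. The dual identity $\sum_i\sumbT(\bdf_i)=(k-1)\pi$ follows either by applying the same argument to the color-reversed graph $\Ghat$ (which is of type $(n-k,n)$ by \cref{sec:backgr:cyc}), or more directly by adding the two sums: the total interior angle $\sum_i(\sumwT(\bdf_i)+\sumbT(\bdf_i))$ equals $(n-2)\pi$ by the same double-count applied to all faces, and subtracting the white contribution isolates the black one.

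I do not expect a real obstacle here; the argument is essentially an Euler-characteristic bookkeeping, and the only point requiring care is to correctly account for the $n$ boundary arcs in Euler's formula, so that the relation $|\BVint|-|\WVint|=n_W-k$ extracted from the type condition cleanly matches the discrepancy in the angle count.
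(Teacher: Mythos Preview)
Your proposal is correct and follows essentially the same route as the paper: both proofs double-count the total interior angles of the white (resp.\ black) faces of $\Tcal$, equating $\sum_{\wv\in\WVint}(\deg(\wv)-2)\pi$ with $\pi|\Fint|+\sum_i\sumwT(\bdf_i)$, and then finish using Euler's formula together with the identity $|\BVint|-|\WVint|=|\WVbd|-k$ coming from the type $(k,n)$ condition. The only cosmetic differences are that the paper treats the two colors in parallel (writing out both equations~\eqref{eq:sum_up_angles1}--\eqref{eq:sum_up_angles2}) rather than deducing one from the other, and states Euler's formula as $|\Verts|-|\E|+|\Faces|=n+1$ instead of explicitly subtracting the $n$ boundary arcs.
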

\begin{proof}

The sum of angles in each white (resp., black) face $\xT(\v)$ of $\xT(\GD)$ of degree $d=\degG(\v)$ is $\pi(d-2)$.
 On the other hand, the sum of white (resp., black) angles around each interior vertex $\xT(\ff)$ of $\xT(\GD)$ is $\pi$ by~\eqref{eq:intro:t_imm_angles_int}. 
Counting the sum of angles of all white (resp., black) faces of $\Tcal(\GD)$ in two different ways, we get
\begin{align}%
\label{eq:sum_up_angles1}
 \sum_{i=1}^n \sumwT_i + \pi |\Fint| &= \pi \sum_{\wv\in\WVint}(\deg(\wv)-2) = \pi(|\E| - |\WVbd| - 2|\WVint|);\\
\label{eq:sum_up_angles2}
 \sum_{i=1}^n \sumbT_i + \pi |\Fint| &= \pi \sum_{\bv\in\BVint}(\deg(\bv)-2) = \pi(|\E| - |\BVbd| - 2|\BVint|).
\end{align}
The result follows by applying~\eqref{eq:DIM:k_dfn}--\eqref{eq:Euler_no_float}. 
\end{proof}
\begin{proof}[Proof of \cref{prop:t_imm=>TRIPLES}]
We have already shown that $\lalat\in\lalats$. In view of \cref{cor:t_imm_=>_<ii+1>>0}, in order to prove that $\lalat\in\lalak$, it remains to check that 
$\wind(\la) = (k-1)\pi$
 and
$\wind(\lat) = (k+1)\pi$. 
 These statements 
readily follow from~\eqref{eq:angle_sum_arg_rat} and~\eqref{eq:sumw_sumb_pi_k}. Since the functions $\Fw$ and $\Fb$ are discrete holomorphic, by~\eqref{eq:MCE:alt(C)_vs_pFw}--\eqref{eq:MCE:alt(Cp)_vs_pFb}, $\la\subset C\subset \latp$, and thus $(\la,\lat,C)\in\TRIPLES$.
\end{proof}

\subsection{From triples of subspaces to t-immersions}\label{sec:finishing_main_bij}
\begin{notation}\label{notn:Tll_xll}
 Assume that $\G$ \hasnofloatAPM. For $C=\Meas(\G,\wt)$ and any $\lalat\in\lalats$ such that $\la\subset C\subset\latp$, let $\datrQLll:=\datrQ\in\Mdatr(\G)$ be the associated \datr of $(\G,\wt)$, where $\epsK$ is any choice of Kasteleyn signs for $\G$, $(\Fw,\Fb)=(\Fwl,\Fbl)\in\HHspaceKC$ are discrete holomorphic extensions of $\lalat$ (cf. \cref{dfn:OCP:restr_ext}), 
 and $\xd=\xll:\Faces\to\Rdd$ is the \KSprim of $(\Fw,\Fb)$.
\end{notation}
We first show that it suffices to only consider \twonondeg graphs $\G$ and matrices $C$. 
\begin{lemma}\label{lemma:OAC:nondeg}
 Assume that $\G$ admits an \APM. 
 If $\G$ is not \twonondeg then it admits no t-immersions. 
On the other hand, if $C\in\Grtnn(k,n)\setminus\Grnd(k,n)$ is not \twonondeg then the set of $\lalat\in\lalak$ satisfying $\la\subset C\subset\latp$ is empty.
\end{lemma}
\begin{proof}
Suppose that $\G$ is not \twonondeg. Let $\wt\in\Rtpgauge$ and $C:=\Meas(\G,\wt)$. If for some $i\in\brn$ we have, say, $\rank\mat[C_i|C_{i+1}] < 2$ then $\det\mat[\pFw_i|\pFw_{i+1}]=0$ for any $\Fw\in\Hwspace_{\C}\HtripK$ by~\eqref{eq:MCE:alt(C)_vs_pFw}. Similarly to~\eqref{eq:angle_sum_arg_rat}, we get $\sumbT_i\equiv0\pmod{\pi}$ for any \datr of $\G$, so by \cref{prop:t_imm=>holom}, $\G$ admits no t-immersions. 
The statement for $C\in\Grtnn(k,n)$ follows from \cref{lemma:la_vs_Bounda_Boundb}. 
\end{proof}

From now on, we assume that $\G$ and $C:=\Meas(\G,\wt)$ are \twonondeg, and that $\G$ is connected and satisfies $\helmin(\G)\geq2$. 
Let $\lalat$ and $\datrQLll=\datrQll$ be as in \cref{notn:Tll_xll}. 
 Let $(\y_i)_{i=1}^n,(\yt_i)_{i=1}^n\in\C^n$ be given by~\eqref{eq:intro:y_to_lalat}.

\begin{proposition}\label{lemma:Tll_is_t_imm}
 The map $\Tll$ is a t-immersion of $(\G,\wt)$.
\end{proposition}
\begin{proof}
First, applying moves \RV1 (\cref{dfn:DIM:R1}), we may assume that $\G$ has no parallel edges. 
Since $\helmin(\G)\geq2$, we have $\degG(\v)\geq3$ for each $\v\in\Vint$. 
Applying uncontraction moves \MV1, we obtain a graph $\G'$ in which every black vertex is trivalent. This corresponds to triangulating black faces of $\GD$ by inserting bigonal white faces. 
Since $\helmin(\G)\geq2$, by \cref{lemma:DIM:moves_vs_helWmin_trivalent,lemma:DIM:moves_vs_helWmin}, $\helWmin(\G')\geq2$ and $\helBmin(\G')\geq1$. Note that \cref{lemma:DIM:moves_vs_helWmin_trivalent} applies since $\G$ has no parallel edges and thus each trivalent black vertex $\b'$ of $\G'$ satisfies $|\Neigh_{\G'}(\b')|=\deg_{\G'}(\b')=3$.

By \cref{lemma:DIM:deleting_hel_verts}, $\G'\rem\{\w_1,\w_{2}\}$ admits an \APM for any two distinct white vertices $\w_1,\w_2$ of $\G'$. 
By \crefi{cor:lak_implies_black_imm}{lak_implies3}, $\epsK(\e_{\s})\epsK(\e_{\s+1})\brlaw<\w_{\s},\w_{\s+1}> <0$ in the notation of~\eqref{eq:TE:same_face_and_same_vertex}. Thus, $\Tll$ is orientation-preserving and injective on each black (triangular) face of $\GDp$. 
 Since the triangulation $\GDp$ of the black faces of $\GD$ was chosen in an arbitrary fashion, it follows that the image of each black face $\b\in\BVint$ of $\GD$ under $\Tll$ is a nondegenerate convex $d$-gon (where $d=\degG(\b)\geq3$) whose orientation is preserved by $\Tll$.

We similarly obtain the result for white faces of $\GD$. Thus, $\Tll$ satisfies conditions~\itemref{intro:t_imm_straight_convex}--\itemref{intro:t_imm_orientation} of \cref{dfn:intro:t_imm}. Condition~\itemref{intro:t_imm_gauge} is satisfied by~\eqref{eq:intro:primitive_TO}. 

Since $\Tll$ satisfies~\itemref{intro:t_imm_straight_convex}--\itemref{intro:t_imm_gauge}, the angle sums $\sumwTll(\f)$ and $\sumbTll(\f)$ are well defined for any vertex $\f\in\Faces$. 
 Similarly to~\eqref{eq:T_Kast_bdry}, for $\f\in\Fint$ adjacent to faces $\f_1,\f_2,\dots,\f_{2d}\in\Faces$ in counterclockwise order, we get $\prod_{j=1}^d\frac{\T(\f_{2j}) - \xT(\f)}{\T(\f_{2j-1}) - \xT(\f)} = - \wt(\bdrypath\f)$. 
This implies that $\sumwTll(\f)$ and $\sumbTll(\f)$ are equal to $\pi$ modulo $2\pi$. Since $\sumwTll(\f)>0$ and $\sumbTll(\f)>0$ for all $\f\in\Faces$, we have
\begin{equation}\label{eq:angle_ineqs_inside}
 \sumwTll(\f)\geq\pi \quad\text{and}\quad \sumbTll(\f)\geq\pi \quad\text{for all $\f\in\Fint$.}
\end{equation}
 For $i\in\brn$, by~\eqref{eq:T_Kast_bdry}, we find that $\sumbT_i \equiv \arg(\y_{i+1} / \y_i)$ and $\sumwT_i \equiv \arg(\yt_{i+1}/\yt_i) + \pi$ modulo $2\pi$. Since $\lalat\in\lalak$, we have 
$\arg(\y_{i+1} / \y_i)\in(0,\pi)$ and $\arg(\yt_{i+1}/\yt_i)\in (-\pi,0)$. 
Thus, 
\begin{equation}\label{eq:angle_ineqs_bdry}
\sumbT_i \geq \arg(\y_{i+1} / \y_i) \quad\text{and}\quad
\sumwT_i \geq \arg(\yt_{i+1}/\yt_i) + \pi \quad\text{for all $i\in\brn$.} 
\end{equation}
Summing up the angles of all white and black faces of $\Tll(\GD)$ as we did in~\eqref{eq:sum_up_angles1}--\eqref{eq:sum_up_angles2}, we see that each of the inequalities in~\eqref{eq:angle_ineqs_inside}--\eqref{eq:angle_ineqs_bdry} must in fact be an equality. This implies that $\Tll$ satisfies the angle conditions~\itemref{intro:t_imm_angles_int}--\itemref{intro:t_imm_angles_bdry}.
\end{proof}

\begin{proof}[Proof of \cref{thm:intro:t_imm_vs_triples}.]
Fix $\lalat\in\lalak$ such that $\la\subset C \subset \latp$. We view $\lalat$ as a pair of $2\times n$ matrices. 
By \cref{lemma:Tll_is_t_imm}, $\Tll$ is a t-immersion of $\G$. Even though the discrete holomorphic functions $(\Fwl,\Fbl)$ depend on a choice of Kasteleyn signs $\epsK$ for $\G$ (all of which are gauge equivalent; cf. \cref{rmk:DIM:Kast_gauge_eq}), the map $\Tll$ is invariant under gauge transformations and thus does not depend on the choice of $\epsK$. 

Let us now view $\lalat$ as a pair of $2$-planes, i.e., elements of $\Gr(2,n)$. In order to construct $\Tll$, we must choose $2\times n$ matrix representatives satisfying $\brla<i,i+1> >0$ and $\brlat[i,i+1]>0$ for all $i\in\brn$. Such representatives are determined up to the action of $\GL^+_2(\R)\times\GL^+_2(\R)$ on $\lalat$, where
$\GL^+_2(\R):=\{g\in\GL_2(\R)\mid\det(g)>0\}\cong \SL_2(\R)\times\Rtp$. By definition, acting by $\SL_2(\R)\times \SL_2(\R)$ on $\lalat$ corresponds to applying Lorentz transformations to $\xll$. Rescaling $\lalat\mapsto (t\cdot \la,\tilde t\cdot \lat)$ for some $t,\tilde t\in\Rtp$ results in rescaling $\xll$ by $t\cdot \tilde t$. Thus, $\xll$ is indeed defined up to global rescaling and Lorentz transformations.

Conversely, let $\Tcal$ be a t-immersion of $(\G,\wt)$. As we showed in \cref{prop:t_imm=>TRIPLES}, it gives rise to a triple $(\la,\lat,C)\in\TRIPLES$. Specifically, from $\Tcal$ we recover the pair $(\Fw,\Fb)$ via \cref{prop:t_imm=>holom} which is then converted into $\lalat$ by 
applying~\eqref{eq:intro:y_to_lalat}--\eqref{eq:intro:Fw_Fb_y}.
 The only ambiguity in this process arises from the 
gauge group action $\RtpVint\times\{\pm1\}^{\Vint}$ and the action $(\Fw,\Fb)\mapsto(z\Fw,z^{-1}\Fb)$ of $z\in\Cast$ from \cref{rmk:Fw_Fb_determined_up_to_const}. The pair $\lalat$ is invariant under $\RtpVint\times\{\pm1\}^{\Vint}$, and the action $(\Fw,\Fb)\mapsto(z\Fw,z^{-1}\Fb)$ corresponds to multiplying $(\la,\lat)\mapsto (g_z\cdot \la,(g_z^{-1})^T\cdot \lat)$ for $g_z\in\GL_2(\R)$ with $\det(g_z)=|z|^2$, and thus leaves the pair $(\la,\lat)\in\Gr(2,n)\times\Gr(2,n)$ invariant. 
\end{proof}

\section{Mandelstam variables, immanants, and t-embeddings}\label{sec:imm}
The goal of this section is to prove \cref{thm:intro:Mand_pos_exists} and use it to establish existence of t-embeddings, completing the proofs of \cref{thm:intro:B} and \cref{cor:intro:t_emb_exists}.

\subsection{Temperley--Lieb immanants}\label{sec:TL_imm}
Building on the results of~\cite{RhSk}, Lam~\cite{Lam_dimers} introduced a family of functions $\Delta_{\tau,T}$ on (the affine cone over) $\Gr(k,n)$. These functions are nonnegative on $\Grtnn(k,n)$, and in fact constitute the \emph{canonical basis} of the degree-$2$ part of the coordinate ring of $\Gr(k,n)$. Our goal is to express the Mandelstam variables $\Mijll$ in terms of the functions $\Delta_{\tau,T}$.

Recall that $\tau:\brn\to\brn$ is an \emph{involution} if $\tau(l) = r$ implies $\tau(r) = l$ for all $l,r\in\brn$. Alternatively, we may think of $\tau$ as a partial matching of the elements of $\Stau:=\{l\in\brn\mid \tau(l)\neq l\}$. Thus, we can represent $\tau$ as a collection of \emph{arcs} $\{\{l,\tau(l)\}\mid l\in\Stau\}$.
 We say that an involution $\tau:\brn\to\brn$ is \emph{non-crossing} if there do not exist indices $1\leq a<b<c<d\leq n$ such that $\tau(a) = c$ and $\tau(b) = d$. 
A \emph{$(k,n)$-partial non-crossing matching} is a pair $\tauT$ such that $T\subset\brn$ and $\tau$ is a non-crossing involution satisfying $T\cap\Stau = \emptyset$ and $2|T| + |\Stau| = 2k$. We denote by $\Ttaukn$ the set of $(k,n)$-partial non-crossing matchings.
Given $A,B\in{\brn\choose k}$ and $\tauT\in\Ttaukn$, we say that $\tauT$ is \emph{compatible} with $(A,B)$ if $T = A\cap B$, $\Stau = (A\setminus B)\sqcup(B\setminus A)$, and $\tau(A\setminus B) = B\setminus A$.

The \emph{Temperley--Lieb immanants} $\{\Delta_{\tau,T}\mid \tauT\in\Ttaukn\}$ are uniquely defined by the relations
\begin{equation}\label{eq:TL_imm_dfn}
 \Delta_A(C)\Delta_B(C) = \sum_{\tauT} \Delta_{\tau,T}(C) \quad\text{for all $A,B\in{\brn\choose k}$ and $C\in\Gr(k,n)$,}
\end{equation}
where the summation is over all $\tauT\in\Ttaukn$ compatible with $(A,B)$. 

We denote by $\OmsG$ the set of double-dimer configurations $\Om$ on $\G$ (with no restrictions on which boundary vertices are used by $\Om$); cf. \cref{dfn:double_dimer_Omf}. 
For $\Om\in\OmsG$, we let $\tauTOm$ be the induced $(k,n)$-partial non-crossing matching, where $\TOm$ is the set of black boundary vertices used twice and white boundary vertices not used in $\Om$, and the matching $\tauOm$ records the \bdbd path connectivity of $\Om$. The set $\StauOm$ consists of boundary vertices used once by $\Om$.
 For each fixed $\tauT$, we let $\OmsGtauT:=\{\Om\in\OmsG\mid \tauTOm = \tauT\}$. It was shown in~\cite{Lam_dimers} that for $C:=\Meas(\G,\wt)$, 
\begin{equation}\label{eq:Delta_tauT_vs_OmsGtauT}
 \Delta_{\tau,T}(C) = \sum_{\Om\in\OmsGtauT} \wt(\Om).
\end{equation} 
For each $\tauT$ compatible with $(A,B)$, every $\Om\in\OmsGtauT$ can be split (in $2^{\ncycOm}$-many ways) into a union $\Apm_1\cup\Apm_2$ of two \APMs of $\G$, with $A = \partial\Apm_1$ and $B=\partial\Apm_2$. Thus,~\eqref{eq:Delta_tauT_vs_OmsGtauT} implies~\eqref{eq:TL_imm_dfn}.

\subsection{Immanants of \texorpdfstring{$(\La,\Lat)$}{(𝚲,𝚲̃)}}
We introduce a family of functions of $\LaLat\in\LaLak$ that we also call \emph{immanants}; cf. \cref{dfn:immanants_LaLat}. As we show in \cref{lemma:Arep_subset_immnn}, when $\LaLat\in\LaLaArep$ is \Areped by a weighted annular graph $(\Gbot,\wtbot)$, immanants of $\LaLat$ recover Temperley--Lieb immanants of $(\Gbot,\wtbot)$. 

\begin{figure}
\def\inputscl{1}
\setlength{\tabcolsep}{10pt}
\begin{tabular}{cc}
 \includegraphics[scale=\inputscl]{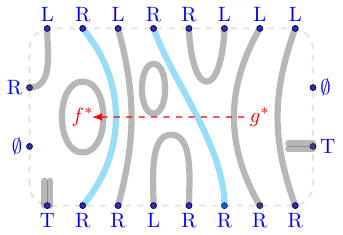}
&
 \includegraphics[scale=\inputscl]{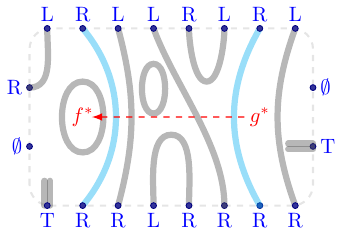}
\\
(a) $\dmu = 1$ & (b) $\dmu=2$
\end{tabular}
 \caption{\label{fig:marking} Two $\Omfg$-markings (see \cref{dfn:markings_Om}). The elements in $\TOm$ (resp., $\brn\setminus(\TOm\sqcup\StauOm)$) are marked by $\textcolor{blue}{\Tmark}$ (resp., $\textcolor{blue}{\Emark}$). The paths $\Pathbd_1,\Pathbd_2$ are shown in blue. 
}
\end{figure}

\begin{definition}[Markings]\label{dfn:markings_Om}
Let $\ff,\f\in\Faces$. We denote by $\OmsG(\ff\ssep\f)$ the set of double-dimer configurations $\Om\in\OmsG$ that contain at least two \bdbd paths separating $\ff$ from $\f$. Given $\Om\in\OmsG(\ff\ssep\f)$, an \emph{$\Omfg$-marking} is a map $\marking:\StauOm\to\{\Lmark,\Rmark\}$ satisfying the following conditions. 
\begin{enumerate}[label=(\arabic*)]
\item\label{marking1} There are exactly two \bdbd paths $\Pathbd_1,\Pathbd_2$ in $\Om$ that have both endpoints marked by $\Rmark$. Both paths $\Pathbd_1,\Pathbd_2$ must separate $\ff$ from $\f$.
\item For all other \bdbd paths in $\Om$, one endpoint is marked by $\Lmark$ and the other one by $\Rmark$.
\end{enumerate}
See \cref{fig:marking} for examples. 
We denote the set of $\Omfg$-markings by $\musOmfg$.
Given $\marking\in\musOmfg$, 
 we denote $\Lmu:=\TOm\sqcup\marking^{-1}(\Lmark)$ and $\Rmu:=\TOm\sqcup\marking^{-1}(\Rmark)$. 
Let $\dmufg$ be the number of \bdbd paths in $\Om$ separating $\ff$ from $\f$ and located strictly between $\Pathbd_1$ and $\Pathbd_2$. 
\end{definition}

\begin{proposition}\label{lemma:immanant_fg}
Assume that $\G$ admits an \APM. Let $C:=\Meas(\G,\wt)$, $\LaLat\in\LaLak$, and $\lalat:=\PhiLL(C)$. Then for all $\ff,\f\in\Faces$, we have (cf. \cref{notn:Tll_xll})
\begin{align}
\label{eq:immanant_fg}
 \frac14(\xll(\ff)-\xll(\f))^2 &= \sum_{\Om\in\OmsG(\ff\ssep\f)} \TtaucoefOmfgof\LaLat\cdot \wt(\Om),
 \quad\text{where}\quad\\
\label{eq:immanant_fg_dfn}
 \TtaucoefOmfgof\LaLat:&=\sum_{\mu\in\musOmfg} (-1)^{\dmufg} \Delta_{\Lmu}(\Lap)\Delta_{\Rmu}(\Lat).
\end{align}
\end{proposition}
\begin{proof}
Our first goal is to give an analog of~\eqref{eq:TE:Omvec} for $\G$ (as opposed to $\Gf$). Let $\e_1,\e_2\in\Edges$
be edges of $\G$ with endpoints $\ebar_1=\{\b_1,\w_1\}$ and $\ebar_2=\{\b_2,\w_2\}$. 
 Let $\OmsG(\e_1,\e_2)$ be the set of $\Om\in\OmsG$ such that $\e_1,\e_2$ belong to two different \bdbd paths $\Pathbd_1,\Pathbd_2$ of $\Om$. For $\Om\in\OmsG(\e_1,\e_2)$, we let $\Omvecbw_{\e_1,\e_2}$ be the unique orientation of $\Pathbd_1,\Pathbd_2$ such that the edges $\e_1,\e_2$ are both oriented from black to white. 
Similarly to \cref{ssec:TE:Kast2}, we set $\epsOm(\Omvecbw_{\e_1,\e_2}):=-1$ if $\Omvecbw_{\e_1,\e_2}$ is \emph{alternating} (i.e., if the endpoints of $\Pathbd_1,\Pathbd_2$ alternate between ``in'' and ``out'' around the boundary of the disk) and $\epsOm(\Omvecbw_{\e_1,\e_2}):=+1$ otherwise. 
 Let $\musOmee$ be the set of maps $\marking:\StauOm\to\{\Lmark,\Rmark\}$ such that both endpoints of each of $\Pathbd_1,\Pathbd_2$ are marked with $\Rmark$, and for all other \bdbd paths in $\Om$, one endpoint is marked by $\Lmark$ and the other one by $\Rmark$. 
Let $\dOmee$ be the number of \bdbd paths in $\Om$ not passing through $\e_1,\e_2$ and separating $\e_1$ from $\e_2$. 
We claim that
\begin{align}\label{eq:TE:Omvec_G}
 \wtK(\e_1)\wtK(\e_2)\cdot \brlaw<\w_1,\w_2> \cdot \brlatb[\b_1,\b_2] &= 
 \sum_{\Om\in\OmsG(\e_1,\e_2)}\wt(\Om)\cdot \TtaucoefOmeeof\LaLat, 
\quad\text{where}\\
\label{eq:TE:Omvec_G_TtaucoefOmee}
 \TtaucoefOmeeof\LaLat :&=(-1)^{\dOmee}\epsOm(\Omvecbw_{\e_1,\e_2}) \sum_{\mu\in\musOmee}\Delta_{\Lmu}(\Lap)\Delta_{\Rmu}(\Lat). 
\end{align}
Indeed, by \cref{rmk:can_add_edges}, we may assume that $\w_1,\w_2$ are connected by a path $\Pathw$ in $\G$.
Following the proof of~\eqref{eq:TE:Omvec}, we choose a path $\Pathb$ from $\b_1$ to $\b_2$ that differs from $\Pathw$ only in the edges $\e_1,\e_2$ and let $\Cut:=\CutPw=\CutPb$ be the associated zig-zag cut. 
 We apply~\eqref{eq:IP:brlaw_G}--\eqref{eq:IP:brlatb_G} and let $\Om:=\apmw\cup\apmb\cup\{\e_1,\e_2\}$ (multiset union), where $\apmw\in\APMS(\Giww)$ and $\apmb\in\APMS(\Gbb)$. We claim that $\Om\in\OmsG(\e_1,\e_2)$. Indeed, let us orient all edges in $\apmw$ (resp., $\apmb$) from black to white (resp., from white to black). The vertices $\w_1,\b_1,\w_2,\b_2$ are sources in the resulting directed graph, and every other interior vertex $\v\in\Vint\setminus\{\w_1,\b_1,\w_2,\b_2\}$ has one incoming and one outgoing edge, so each of the four paths in $\apmw\cup\apmb$ starting at $\w_1,\b_1,\w_2,\b_2$ must terminate at the boundary. In particular, every cycle in $\Om$ intersects the cut $\CutPw=\CutPb$ an even number of times. 
It follows that the left-hand side of~\eqref{eq:TE:Omvec_G} equals
$\sum_{\Om\in\OmsG(\e_1,\e_2)} \epsOmPathw \wt(\Om)\cdot \sum_{\mu\in\musOmee}\Delta_{\Lmu}(\Lap)\Delta_{\Rmu}(\Lat)$, where $\epsOmPathw$ counts the parity of the number of intersections between $\Cut$ and the union of \bdbd paths of $\Om$. Similarly to the proof of~\eqref{eq:TE:Omvec}, we see that $\epsOmPathw=\epsOm(\Omvecbw_{\e_1,\e_2})\cdot (-1)^{\dOmee}$. This shows~\eqref{eq:TE:Omvec_G}--\eqref{eq:TE:Omvec_G_TtaucoefOmee}. 
We deduce~\eqref{eq:immanant_fg}--\eqref{eq:immanant_fg_dfn}
from~\eqref{eq:TE:Omvec_G}--\eqref{eq:TE:Omvec_G_TtaucoefOmee} using an argument entirely analogous to the one in the proof of~\eqref{eq:TE:OmsGf_pos}.
\end{proof}

We will be particularly interested in the case where $\ff=\bdf_i$ and $\f=\bdf_j$ are boundary faces of~$\G$. For $i+2\leq j\leq i+n-2$, we denote 
$\OmsGij:=\OmsG(\bdf_i\ssep\bdf_j)$ and 
$\Ttaucoef(\La,\Lat):=\TtaucoefOmfgijof\LaLat$ and refer to $\Omfgij$-markings as \emph{$\Ttauij$-markings}, 
where $\tauT:=\tauTOm$. 
Specializing \cref{lemma:immanant_fg} to this case, we obtain the following result. 
\begin{corollary}\label{thm:Mand_Ttau}
For $\LaLat\in\LaLak$, $C\in\Grtnn(k,n)$, and $\lalat = \PhiLL(C)$, we have
\begin{equation}\label{eq:Mand_Ttau}
 \frac14\Mijll = \sum_{\tauT\in\Ttaukn} \Ttaucoef(\La,\Lat)\cdot \Delta_{\tau,T}(C) 
\quad\text{for all $i+2\leq j\leq i+n-2$.}
\end{equation}
\end{corollary}

\begin{figure}
 \includegraphics[scale=1]{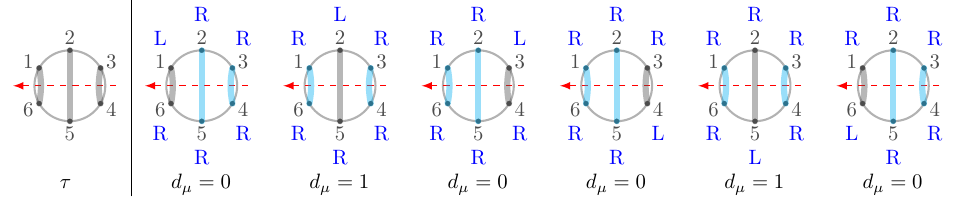}
 \caption{\label{fig:tau} A matching $\tau$ and the six $\Ttauij$-markings; see \cref{ex:tau_markings}.}
\end{figure}

\begin{example}\label{ex:tau_markings}
Let $k=3$, $n=6$, $i=0$, $j=3$, $T = \emptyset$, and 
$\tau=\{\{1,6\},\{2,5\},\{3,4\}\}$. 
 The six possible $\Ttauij$-markings are shown in \cref{fig:tau}. Exactly two of them have $\dmu=1$, and the rest have $\dmu=0$. Thus,~\eqref{eq:immanant_fg_dfn} specializes to
\begin{equation}\label{eq:intro:imm_example}
 \Ttaucoefxx{0}{3}(\La,\Lat) = 
 \BRLAP<1> \BRLATP<1>
- \BRLAP<2> \BRLATP<2>
+ \BRLAP<3> \BRLATP<3>
+ \BRLAP<4> \BRLATP<4>
- \BRLAP<5> \BRLATP<5>
+ \BRLAP<6> \BRLATP<6>,
\end{equation}
where $\BRLAP<i>$ (resp., $\BRLATP<i>$) denotes the $i$-th entry of the totally positive $1\times 6$ matrix $\La^\perp$ (resp., $\altp(\Lat)$). As discussed in~\cite[Equation~(3.24)]{DFLP} and~\cite[Equation~(6.1.52)]{Parisi_thesis}, the condition $\Ttaucoefxx{0}{3}(\La,\Lat)>0$ is sufficient for the Mandelstam variable $\Mxxll03$ to be nonnegative on $\MomLL$.
\end{example}

\begin{definition}[Immanants of $\LaLat$]\label{dfn:immanants_LaLat}
Let $i,j\in\Z$ be such that $i+2\leq j\leq i+n-2$. 
An arc of $\tau$ is called \emph{\Iijspec} if 
 the faces $\bdf_i,\bdf_j$ are on the opposite sides of it. 
Let $\Ttauknij$ be the set of $\tauT\in\Ttaukn$ such that $\tau$ has at least two \Iijspec arcs. We refer to the functions $\{\Ttaucoef\mid i+2\leq j\leq i+n-2\text{ and }\tauT\in\Ttauknij\}$ as \emph{immanants of $\LaLat$}.
\end{definition}

\begin{definition}\label{dfn:Ttauknij}
We say that $(\La,\Lat)\in\LaLak$ is \emph{immanant-positive} (resp., \emph{immanant-nonnegative}) if $\Ttaucoef(\La,\Lat)>0$ (resp., $\Ttaucoef(\La,\Lat)\geq0$) for all $i+2\leq j \leq i+n-2$ and all $\tauT\in\Ttauknij$. We denote by $\LaLaimmp$ (resp., $\LaLaimmnn$) the set of immanant-positive (resp., immanant-nonnegative) pairs $\LaLat\in\LaLak$.
\end{definition}

\begin{remark}\label{rmk:Ttauknij_>0_vs_0}
It follows from~\eqref{eq:immanant_fg_dfn} that if $\tauT\in\Ttauknij$ has exactly two \Iijspec arcs then $\Ttaucoef(\La,\Lat)>0$ for all $\LaLat\in\LaLak$ since there exists a unique $\Ttauij$-marking $\marking$ and it satisfies $\dmu=0$. Furthermore, if $\tauT\notin\Ttauknij$ then $\Ttaucoef(\La,\Lat)=0$ since in this case, the set of $\Ttauij$-markings is empty.
\end{remark}

\begin{remark}
Let $\ff,\f\in\Faces$, $\Om\in\OmsG(\ff\ssep\f)$, and $\tauT:=\tauTOm$. Pick any indices $i,j\in\brn$ such that $\ff$ and $\bdf_{i}$ (resp., $\f$ and $\bdf_{j}$) belong to the same connected component of
the complement of the union of \bdbd paths of $\Om$ inside the disk $\Disk$.
 It follows from~\eqref{eq:immanant_fg_dfn} that 
\begin{equation}\label{eq:Ttaucoef_fg_vs_ij}
 \TtaucoefOmfgof\LaLat = \Ttaucoef\LaLat.
\end{equation}
Thus, each coefficient $\TtaucoefOmfgof\LaLat$ on the right-hand side of~\eqref{eq:immanant_fg} equals some immanant $\Ttaucoef\LaLat$.
\end{remark}

\subsection{Immanants of annular graphs}\label{ssec:annular_immanants}
Assume that $(\Gf,\wtf)$ is a planar bipartite graph of type $(2,4)$ obtained by gluing graphs $(\G,\wt)$ and $(\Gbot,\wtbot)$ as in \cref{ssec:annular}. 
For $i+2\leq j\leq i+n-2$, we denote $\OmsGfij:=\OmsGf(\bdf_i\ssep\bdf_j)$, where $\bdf_i,\bdf_j$ denote boundary faces of $\G$ viewed as faces of $\Gf$.

Let $\OmsGbot$ be the set of double-dimer configurations $\Ombot$ on $\Gbot$ 
 such that each (black) outer boundary vertex $\bdvout_1,\bdvout_2,\bdvout_3,\bdvout_4$ is used exactly once by $\Ombot$.
We similarly record the boundary connectivity of $\Ombot\in\OmsGbot$ by a pair $\tauTOmbot$, where $\TOmbot:=\{i\in\brn\mid \bdvin_i\text{ is used twice by $\Ombot$}\}$ and $\tauOmbot$ is a non-crossing matching
(when drawn inside $\Ann$)
 on the set $\StauOmbot=\StauOmbotin\sqcup\StauOmbotout$ with
 $\StauOmbotout:=\{\botoutacc1,\botoutacc2,\botoutacc3,\botoutacc4\}$.
\begin{definition}
We say that $\Om\in\OmsG$ and $\Ombot\in\OmsGbot$ are \emph{gluable} if their union $\Omf$ belongs to $\OmsGf$.
\end{definition}
\noindent If $\Omf\in\OmsGfij$ then its restriction to $\G$ belongs to $\OmsGij$. 
 Conversely, for gluable $\Om\in\OmsGij$ and $\Ombot\in\OmsGbot$, one can recover the topological connectivity $\tauTOmf$ of their union $\Omf$ and determine whether $\Omf$ belongs to $\OmsGfij$ by ``gluing'' $\tauTOm$ and $\tauTOmbot$ and checking whether both arcs of the resulting matching $\tauOmf$ on $\{\botoutacc1,\botoutacc2,\botoutacc3,\botoutacc4\}$ separate $\bdf_i$ from $\bdf_j$. 
\begin{definition}
For $\tauT\in\Ttauknij$, we let $\OmsGbotcoef$ be the set of $\Ombot\in\OmsGbot$ such that for some (equivalently, any, by the above discussion) $\Om\in\OmsGtauT$, the union $\Omf=\Ombot\cup\Om$ belongs to $\OmsGfij$. 
\end{definition}

Since each $\Omf\in\OmsGfij$ splits uniquely as a union $\Om\cup\Ombot$, we obtain the following result.
\begin{corollary}
Suppose that $(\Gf,\wtf)$ is obtained by gluing $(\Gbot,\wtbot)$ and $(\G,\wt)$ as above. 
Let $C:=\Meas(\G,\wt)$. 
Then for each $i+2\leq j\leq i+n-2$,
\begin{equation}\label{eq:OmsGf_vs_Ttaucoef}
 \sum_{\Omf\in\OmsGfij} \wtf(\Omf) = 
 \sum_{\tauT\in\Ttauknij} \Ttaucoef(\Gbot,\wtbot)\cdot \Delta_{\tau,T}(C),
 \quad\text{where}\quad
 \Ttaucoef(\Gbot,\wtbot):=\sum_{\Ombot\in\OmsGbotcoef} \wtbot(\Ombot).
\end{equation}
\end{corollary}

\begin{figure}
\def\inputscl{1.3}
\begin{tabular}{ccc}
 \includegraphics[scale=\inputscl]{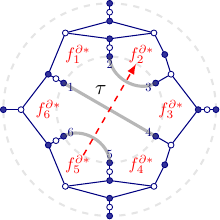}
& 
 \includegraphics[scale=\inputscl]{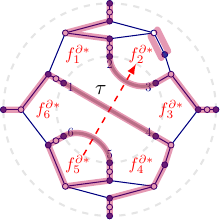}
& 
 \includegraphics[scale=\inputscl]{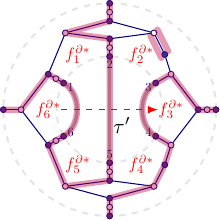}
\\
 (a) $\tau=\{\{1,4\},\{2,3\},\{5,6\}\}$
& (b) \textcolor{magenta}{$c^{2\ssep5}_{\tau,\emptyset}(\Gbot,\wtbot) >0$}
& (c) \textcolor{magenta}{$c^{3\ssep6}_{\tau',\emptyset}(\Gbot,\wtbot)=0$}
\end{tabular}
 \caption{\label{fig:Arep-not-immp} The graph $\Gbot$ gives rise to an \Arep pair $\LaLat$ which is 
immanant-nonnegative but not immanant-positive; see \cref{ex:Arep-not-immp}.
}
\end{figure}

\begin{example}\label{ex:Arep-not-immp}
Let $k=3$, $n=6$, and let $\LaLat$ be \Areped by $(\Gbot,\wtbot)$ for $\Gbot$ shown in \figref{fig:intro-annular}(a). One can check that the boundary measurements $\Cbotpm$ defined in~\eqref{eq:OAC:Cbotpm_meas} belong to $\Grtp(1,6)$ and $\Grtp(5,6)$, so by~\eqref{eq:OAC:Arep}, $\LaLat\in\LaLaArep$. Let $T=\emptyset$ and consider matchings $\tau=\{\{1,4\},\{2,3\},\{5,6\}\}$ and $\tau':=\{\{1,6\},\{2,5\},\{3,4\}\}$ shown in \cref{fig:Arep-not-immp}. We see that $c^{2\ssep5}_{\tau,T}(\Gbot,\wtbot)>0$ since $\bdf_2$ and $\bdf_5$ are separated by two \bdbd paths in some double-dimer configuration $\Omf$ whose restriction to $\G$ is $\tau$. An example of such $\Omf$ is shown in \figref{fig:Arep-not-immp}(b). On the other hand, an example of a double-dimer configuration $\Omf'$ whose restriction to $\G$ is $\tau'$ is shown in \figref{fig:Arep-not-immp}(c). One can check that at most one \bdbd path of any such $\Omf'$ separates $\bdf_3$ from $\bdf_6$, so $c^{3\ssep6}_{\tau',T}(\Gbot,\wtbot)=0$, and thus $\LaLat$ is not immanant-positive. This shows that $\LaLaArep\not\subset\LaLaimmp$.
\end{example}

\begin{proposition}\label{lemma:Arep_subset_immnn}
We have $\LaLaArep\subset\LaLaimmnn$, and when $\LaLat\in\LaLaArep$ is \Areped by $(\Gbot,\wtbot)$,
\begin{equation}\label{eq:Ttaucoef(LaLat)=Ttaucoef(Gbot,wtbot)}
 \Ttaucoef(\La,\Lat) = \Ttaucoef(\Gbot,\wtbot) \quad\text{for all $i+2\leq j \leq i+n-2$ and all $\tauT\in\Ttauknij$}.
\end{equation}
\end{proposition}
\begin{proof}
We show~\eqref{eq:Ttaucoef(LaLat)=Ttaucoef(Gbot,wtbot)}. Fix $\LaLat\in\LaLak$. For $C\in\Grtnn(k,n)$ and $\lalat:=\PhiLL(C)$, by~\eqref{eq:Mand_Ttau}, $\Mijll$ is a polynomial function of degree $2$ in the Pl\"ucker coordinates of $C$. By~\cite[Theorem~3.10]{Lam_dimers}, the functions $\{\Delta_{\tau,T}\mid \tauT\in\Ttaukn\}$ form a basis of the degree-$2$ part of the coordinate ring of $\Gr(k,n)$. 
In particular, they are linearly independent. 
Assume that $\LaLat\in\LaLaArep$ is \Areped by $(\Gbot,\wtbot)$. By~\eqref{eq:TE:OmsGf_pos} and~\eqref{eq:OmsGf_vs_Ttaucoef},
\begin{equation}\label{eq:Mand_vs_OmsGf}
 \frac14\Mijll %
 = \sum_{\Omf\in\OmsGfij} \wtf(\Omf) 
=\sum_{\tauT\in\Ttauknij} \Ttaucoef(\Gbot,\wtbot)\cdot \Delta_{\tau,T}(C).
\end{equation} 
Since the functions $\Delta_{\tau,T}$ are linearly independent, comparing~\eqref{eq:Mand_vs_OmsGf} and~\eqref{eq:Mand_Ttau},
 we get~\eqref{eq:Ttaucoef(LaLat)=Ttaucoef(Gbot,wtbot)}. Since the coefficients $\Ttaucoef(\Gbot,\wtbot)$ are manifestly nonnegative, we get $\LaLat\in\LaLaimmnn$.
\end{proof}
\noindent We remark that~\eqref{eq:Ttaucoef(LaLat)=Ttaucoef(Gbot,wtbot)} can also be shown by a direct combinatorial argument similar to the one in the proof of \cref{lemma:TE:OmsGf_pos}.

\begin{lemma}\label{lemma:Ttaucoef_nonzero_on_Fl}
For all $i+2\leq j\leq i+n-2$ and $\tauT\in\Ttauknij$, the immanant $\Ttaucoef$ is not identically zero on $\LaLafl$.
\end{lemma}
\begin{proof}
Let us fix two \Iijspec arcs $\{l_1,r_1\},\{l_2,r_2\}$ of $\tau$.
\def\zr{_0}
 Let $M\zr\in\GL_n(\R)$ be an $n\times n$ matrix defined as follows. Let $e_1,e_2,\dots,e_n\in\R^n$ be the standard basis of $\R^n$. 
The first $k-2$ rows of $M\zr$ consist of vectors $\{e_\s\mid\s\in T\}$ together with $\{\e_l+\e_r\mid \{l,r\}\text{ is an arc of $\tau$ with }\{l,r\}\neq\{l_1,r_1\},\{l_2,r_2\}\}$. The next four rows of $M\zr$ consist of vectors $e_{l_1},e_{r_1},e_{l_2},e_{r_2}$. The remaining rows are chosen arbitrarily. Next, we change the signs of entries of $M\zr$ so that $(\Lap\zr,\Lat\zr):=\pifl(M\zr)$ would belong to $\Grtnn(k-2,n)\times\Grtnn(k+2,n)$; such a choice of signs is possible because $\tau$ is non-crossing. 

Let $\marking$ be a $\Ttauij$-marking. In order to have $\Delta_{\Lmu}(\La\zr^\perp)\Delta_{\Rmu}(\Lat\zr)\neq0$, we must have $l_1,r_1,l_2,r_2\in\Rmu$. 
For any $\Ttauij$-marking $\marking$ satisfying $\marking(l_1)=\marking(l_2)=\marking(r_1)=\marking(r_2)=\Rmark$, we have $\Delta_{\Lmu}(\La\zr^\perp)=\Delta_{\Rmu} (\Lat\zr)=1$.
Since $\{l_1,r_1\}$ and $\{l_2,r_2\}$ are arcs of $\tau$, all nonzero terms on the right-hand side of~\eqref{eq:immanant_fg_dfn} have the same sign (because $\dmu$ is fixed). 
 Thus, $\Ttaucoef(\La\zr,\Lat\zr)\neq0$, and so $\Ttaucoef$ is not identically zero on the $2$-step flag variety $\LaLafl$. 
\end{proof}

\begin{corollary}\label{thm:imm_pos_Fltp}
We have $\LaLaflp\subset\LaLaimmp$.
\end{corollary}
\begin{proof}
Recall from \cref{lemma:OAC:flag_pos=>Arep,lemma:Arep_subset_immnn} that $\LaLaflp\subset\LaLaArep\subset\LaLaimmnn$. Fix $i+2\leq j\leq i+n-2$ and $\tauT\in\Ttauknij$. The function $\Ttaucoef$ is nonnegative on $\LaLaflp$. By \cref{lemma:Ttaucoef_nonzero_on_Fl}, it is not identically zero on $\LaLafl$. Since $\LaLaflp$ is Zariski dense inside $\LaLafl$, there exists $\LaLat\in\LaLaflp$ such that $\Ttaucoef\LaLat>0$. 
Let $(\Gbot,\wtbot)$ be the annular bipartite graph constructed in the proof of \cref{lemma:OAC:flag_pos=>Arep} so that $\LaLat$ is \Areped by $(\Gbot,\wtbot)$.
By~\eqref{eq:Ttaucoef(LaLat)=Ttaucoef(Gbot,wtbot)}, 
$\Ttaucoef(\Gbot,\wtbot) = \Ttaucoef\LaLat > 0$.
 Thus, $\OmsGbotcoef\neq\emptyset$ and so 
$\Ttaucoef(\Gbot,\wtbotp)>0$ for all $\wtbotp\in\Rtp^{|\Ebot|}$. 
It follows from the proof of \cref{lemma:OAC:flag_pos=>Arep} that any other pair $(\La',\Lat')\in\LaLaflp$ may be \Areped by the same annular graph $(\Gbot,\wtbotp)$ for a suitable choice $\wtbotp\in\Rtp^{|\Ebot|}$ of edge weights. Thus, $\Ttaucoef(\La',\Lat')>0$ for all $(\La',\Lat')\in\LaLaflp$.
\end{proof}

The subset $\LaLaimmp\subset\LaLak$ is open (by definition) but \emph{a priori}, it could be empty. On the other hand, $\LaLaflp$ is manifestly nonempty (since $\Gtp\neq\emptyset$) but it is contained in a proper subvariety $\LaLafl$ of $\LaLatbf_{k,n}:=\Gr(n-k+2,n)\times\Gr(k+2,n)$. Thanks to \cref{thm:imm_pos_Fltp}, we obtain the following. 

\begin{corollary}\label{lemma:LaLaimmp_nonempty_Z_dense}
The open subset $\LaLaimmp\subset\LaLak$ is nonempty (and therefore Zariski dense).
\end{corollary}

Since $\Delta_{\tau,T}(C)\geq0$ for all $C\in\Grtnn(k,n)$ (cf. \cite{Lam_dimers}), we see from~\eqref{eq:immanant_fg} that $\PhiLL(C)$ is \Mdash nonnegative when 
$\LaLat\in\LaLaimmnn$ 
 and $C\in\Grtnn(k,n)$, finishing the proof of \cref{thm:intro:Mand_pos_exists}.

\begin{remark}
A specific pair $\LapLat$ of Vandermonde matrices was conjectured in~\cite[Equation~(2.33)]{DFLP} 
to satisfy the conclusion of \cref{thm:intro:Mand_pos_exists}. 
 \Cref{thm:imm_pos_Fltp} confirms this prediction. 
\end{remark}

We have shown that 
\begin{equation}\label{eq:inclusions_LaLa}
 \LaLaflp\subset\LaLaimmp \quad\text{and}\quad \LaLaflp\subset\LaLaArep\subset\LaLaimmnn.
\end{equation}
Observe also that $\LaLaflp\subset\LaLafl$ and by~\eqref{eq:OCA:Cbotm_subset_Cbotp}--\eqref{eq:OAC:Arep}, $\LaLaArep\subset\LaLafl$. Since $\LaLafl$ is a proper subvariety of $\LaLatbf_{k,n}$, we get $\LaLaflp\subsetneq\LaLaimmp$ and $\LaLaArep\subsetneq\LaLaimmnn$ by \cref{lemma:LaLaimmp_nonempty_Z_dense}. By \cref{ex:Arep-not-immp}, $\LaLaArep\not\subset\LaLaimmp$, so $\LaLaflp\subsetneq\LaLaArep$. Thus, all inclusions in~\eqref{eq:inclusions_LaLa} are proper. 
This motivates the following question.
\begin{question}\label{que:immp_vs_flag_vs_Arep}
Do we have $(\LaLaimmp\cap\LaLafl)\subset\LaLaArep$ or
$(\LaLaimmnn\cap\LaLafl)=\LaLaArep$?
\end{question}

\subsection{\Twosep faces}

\begin{definition}%
\label{dfn:PROP:2sep}
We say that $\ff,\f\in\Faces$ are \emph{\twosep} in $\G$ if $\OmsG(\ff\ssep\f)\neq\emptyset$ (cf. \cref{dfn:markings_Om}).
\end{definition}

\begin{remark}\label{rmk:exactly_twosep}
Equivalently, $\ff,\f\in\Faces$ are \twosep if and only if there exists $\Om\in\OmsG$ containing \emph{exactly} two \bdbd paths separating $\ff$ from $\f$. Indeed, if $\Om=\apm_1\cup\apm_2$ contains more than two such paths, we can swap the edges of $\apm_1$ for the edges of $\apm_2$ along each of the remaining \bdbd paths.
\end{remark}

\begin{corollary}\label{lemma:PROP:Mpos}
Assume that $\G$ admits an \APM. Let $(\La,\Lat)\in\LaLaimmnn$, $C:=\Meas(\G,\wt)$, and $\lalat = \PhiLL(C)$. 
Then for any $\ff,\f\in\Faces$, we have $(\xll(\ff)-\xll(\f))^2>0$ if $\ff,\f$ are \twosep in $\G$ and $(\xll(\ff)-\xll(\f))^2=0$ otherwise. 
\end{corollary}
\begin{proof}
Since $\LaLat\in\LaLaimmnn$, by~\eqref{eq:Ttaucoef_fg_vs_ij}, all coefficients on the right-hand side of~\eqref{eq:immanant_fg} are nonnegative. If $\ff,\f$ are not \twosep in $\G$ then the right-hand side of~\eqref{eq:immanant_fg} is zero. Otherwise, by \cref{rmk:exactly_twosep}, there exists $\Om\in\OmsG$ separating $\ff$ from $\f$ by exactly two \bdbd paths, in which case the coefficient $\TtaucoefOmfgof\LaLat$ of $\wt(\Om)$ is strictly positive by \cref{rmk:Ttauknij_>0_vs_0}. 
\end{proof}

\begin{definition}\label{dfn:fullysep}
We say that $\G$ \emph{\isfullysep} if for all $i+2\leq j\leq i+n-2$, the faces $\bdf_i$ and $\bdf_j$ of $\G$ are \twosep.%
\end{definition}

\begin{definition}
Let $\fap\in\Boundkn$. For $i+2\leq j\leq i+n-2$, let
\begin{equation*}%
 \rhoij(\fap):= |\{i<s\leq j\mid j<\fap(s)\leq i+n\}|.
\end{equation*}
\end{definition}

In the following lemma, we denote $\bbdvij:=\{\bdv_s\mid s\in\Iij\}$ and $\bbdvji:=\{\bdv_s\mid s\in\Iji\}$, where $\Iij,\Iji\subset\brn$ are cyclic intervals obtained by taking the elements of $(i,j]$ and $(j,i+n]$ modulo $n$. 
\begin{lemma}\label{lemma:TREE:fullysep_vs_rhoij}
Assume that $\G$ admits an \APM and let $\fap=\fG\in\Boundkn$. 
Let $\Icalbar_\G=(\Ibar_1,\Ibar_2,\dots,\Ibar_n)$ (cf. \cref{dfn:positroid_Grneck}). 
Fix $i+2\leq j\leq i+n-2$. Pick an \APM $\apm_{i+1}$ with $\partial\apm_{i+1} = \Ibar_{i+1}$
 and let $\GO_{i+1}$ be the associated perfect orientation of $\G$ (cf. \cref{dfn:perfect_orient}). 
The following are equivalent.
\begin{enumerate}[label=(\arabic*)] 
\item\label{rho1:twosep} $\bdf_i$ and $\bdf_j$ are \twosep in $\G$;
\item\label{rho2:GO} $\GO_{i+1}$ contains a pair of vertex-disjoint directed paths from $\bbdvij$ to $\bbdvji$;
\item\label{rho3:rhoij} $\rhoij(\fap)\geq2$.
\end{enumerate}
Furthermore, if $\helmin(\G)\geq2$ then the above conditions are equivalent to 
\begin{enumerate}[label=(\arabic*)]
 \setcounter{enumi}{3}
\item\label{rho4:share_face} $\bdf_i,\bdf_j$ do not share a common face of $\GD$.
\end{enumerate}
\end{lemma}
\begin{proof}
Let $r:2^{\brn}\to\Z$ be the rank function of the positroid $\MatroidG$ given by $r(I):=\max_{J\in\MatroidG}|I\cap J|$ for all $I\subset\brn$. 
It is well known that 
$r(\Iij) = |\Ibar_{i+1}\cap \Iij|=|\{s\in(i,j]\mid \fap(s)> j\}|$. 
 Thus,
\begin{equation}\label{eq:rhoij_k_Iij_Iji}
 k+\rhoij(\fap) = r(\Iij)+r(\Iji). 
\end{equation}
Assume that~\itemref{rho1:twosep} holds. 
Let $\Om\in\OmsGij$. 
We can represent $\Om=\apm_1\cup\apm_2$ 
as a union of two \APMs such that $J_1:=\partial\apm_1$ and $J_2:=\partial\apm_2$ satisfy 
$|J_1\cap\Iij|\geq |J_2\cap\Iij|+2$. 
We have $r(\Iij)\geq|J_1\cap\Iij|$ and 
$r(\Iji)\geq |J_2\cap\Iji|=k - |J_2\cap\Iij|$ by the definition of $r$.
Substituting this into~\eqref{eq:rhoij_k_Iij_Iji}, we get $\rhoij(\fap)\geq2$. 
Thus,~\itemref{rho1:twosep}$\Longrightarrow$\itemref{rho3:rhoij}. 
Next, assume that~\itemref{rho3:rhoij} holds and let $\apm_{i+1}$ and $\apm_{j+1}$ be \APMs with boundaries $\Ibar_{i+1}$ and $\Ibar_{j+1}$. 
By~\eqref{eq:rhoij_k_Iij_Iji}, we get $|\Ibar_{i+1}\cap\Iij|\geq|\Ibar_{j+1}\cap\Iij|+2$. 
Recall from \cite{Talaska} that double-dimer configurations of the form $\apm_{i+1}\cup\apm$ for some $\apm\in\APMS(\G)$ are in bijection with \emph{flows} in $\GO_{i+1}$.
Here, since $\GO_{i+1}$ is perfectly oriented, each flow is a vertex-disjoint union of directed \bdbd paths and directed cycles.
The perfect orientation $\GO(\apm)$ (cf. \cref{dfn:perfect_orient}) is obtained from $\GO_{i+1}$ by reversing all edges of the corresponding flow. 
In particular, since $|\Ibar_{i+1}\cap\Iij|\geq|\Ibar_{j+1}\cap\Iij|+2$, $\GO_{j+1}:=\GO(\apm_{j+1})$ is obtained by reversing at least two vertex-disjoint directed paths in $\GO_{i+1}$ from $\bbdvij$ to $\bbdvji$. 
This shows~\itemref{rho3:rhoij}$\Longrightarrow$\itemref{rho2:GO}. The implication~\itemref{rho2:GO}$\Longrightarrow$\itemref{rho1:twosep} follows similarly: if $\GO$ is obtained from $\GO_{i+1}$ by reversing two vertex-disjoint paths from $\bbdvij$ to $\bbdvji$ then $\Om:=\apm_{i+1}\cup\Apm(\GO)$ contains two \bdbd paths separating $\bdf_i$ from $\bdf_j$. 

It is clear that if~\itemref{rho4:share_face} is violated (for any $\G$) then $\bdf_i$ and $\bdf_j$ cannot be \twosep. Assume now that $\helmin(\G)\geq2$ and that~\itemref{rho4:share_face} holds for $\G$. 
Suppose for contradiction that~\itemref{rho2:GO} is violated. 
Let $\GOp_{i+1}$ be obtained from $\GO_{i+1}$ by adding a source vertex $\source$ adjacent to all sources of $\GO_{i+1}$ in $\bbdvij$ and a sink vertex $\tink$ adjacent to all sinks of $\GO_{i+1}$ in $\bbdvji$. 
Thus, $\GOp_{i+1}$ is still planar. By Menger's theorem for directed graphs (vertex version), there exists $\v\in\Verts$ such that $\GOp_{i+1}\rem\{\v\}$ contains no directed path from $\source$ to $\tink$. Thus, there exists a walk $\Cut$ in $\GD$ from $\bdf_i$ to $\bdf_j$ such that every edge of $\GOp_{i+1}\rem\{\v\}$ crossed by $\Cut$ is oriented from the $\Iji$ side towards the $\Iij$ side. (For example, one can choose $\Cut$ to trace the boundary of the subgraph of $\GOp_{i+1}\rem\{\v\}$ reachable from $\source$.) Suppose that $\Cut$ crosses at least one edge $\e$ of $\GOp_{i+1}\rem\{\v\}$. By \cref{lemma:DIM:deleting_hel_verts}, $\G\rem(\{\v\}\sqcup\ebar)$ admits an \APM $\apm'$. Thus, $\apm:=\apm'\sqcup\{\e\}$ is an \APM of $\G\rem\{\v\}$. 

The union $\apm\cup\apm_{i+1}$ contains a path $\Pathdir_\v$ connecting $\v$ to some boundary vertex 
 and another \bdbd path $\Pathdir_\e$ directed from $\bbdvji$ to $\bbdvij$ in $\GOp_{i+1}$, passing through $\e$, and vertex-disjoint from $\Pathdir_\v$. Here, $\Pathdir_\e$ cannot be a directed cycle because it does not pass through $\v$ and all edges of $\GOp_{i+1}\rem\{\v\}$ cross the cut $\Cut$ with the same orientation. Swapping the roles of $\apm$ and $\apm_{i+1}$ along the path $\Pathdir_\e$, we see that $\Ibar_{i+1}$ is not lexicographically-minimal in $\MatroidG$ with respect to $\prec_{i+1}$ (cf. \cref{dfn:positroid_Grneck}), a contradiction. Thus, $\Cut$ does not cross any edges of $\GOp_{i+1}\rem\{\v\}$, so $\bdf_i$ and $\bdf_j$ share the vertex $\v$ of $\G$. 
\end{proof}

\begin{corollary}\label{lemma:imm_pos_Mand_pos}
Assume that ${\helmin(\G)\geq2}$. 
 Let $(\La,\Lat)\in\LaLaimmnn$, $C:=\Meas(\G,\wt)$, and $\lalat = \PhiLL(C)$. Then for each $i+2\leq j \leq i+n-2$, either $\Mijll > 0$ and $\bdf_i,\bdf_j$ do not share a face of $\GD$, or $\Mijll = 0$ and $\bdf_i,\bdf_j$ share a face of $\GD$.
\end{corollary}
\begin{proof}
By \cref{lemma:PROP:Mpos}, it suffices to show that $\bdf_i,\bdf_j$ do not share a face of $\GD$ if and only if they are \twosep. This follows from the equivalence~\itemref{rho1:twosep}$\Longleftrightarrow$\itemref{rho4:share_face} in \cref{lemma:TREE:fullysep_vs_rhoij}.
\end{proof}

\begin{corollary}\label{lemma:fullysep=>twonondeg}
Suppose that $\fap\in\Boundkn$ \emph{\isfullysep}, i.e., satisfies $\rhoij(\fap)\geq2$ for all $i+2\leq j\leq i+n-2$. Then $\fap$ is \emph{\twonondeg} (i.e., $(2,2)^\partial$-nondegenerate; cf. \cref{dfn:nondeg}).
\end{corollary}
\begin{proof}
Pick a planar bipartite graph $\G$ with $\fG = \fap$. By \cref{lemma:TREE:fullysep_vs_rhoij}, for each $i\in\brn$, $\bdf_{i-1}$ and $\bdf_{i+1}$ are \twosep in $\G$. Let $\Om\in\OmsG$ be a double-dimer configuration containing \bdbd paths starting at $\bdv_i$ and $\bdv_{i+1}$. It follows that we can split $\Om=\apm_-\cup\apm_+$ into a union of two \APMs satisfying $i,i+1\in\partial\apm_+$ and $i,i+1\notin\partial\apm_-$. Thus, $\G$ (and therefore $\fap$) is \twonondeg.
\end{proof}

\begin{definition}\label{dfn:Grfsep}
For $2\leq k\leq n-2$, we denote $\BNDfsep(k,n):=\{\fap\in\Boundkn\mid \fap\text{ \isfullysep}\}$ and 
\begin{equation}\label{eq:Grfsep_dfn}
 \Grfsep(k,n):=\bigsqcup_{\fap\in\BNDfsep(k,n)} \Ptp_{\fap} = \{\Meas(\G,\wt)\mid \G\text{ \isfullysep and }\wt\in\Rtpgauge\}.
\end{equation}
\end{definition}

\begin{lemma}\label{lemma:lalat_MP=>C_fsep}
Let $(\la,\lat,C)\in\TRIPLES$ be such that $\lalat\in\MPkntree$. Then $C\in\Grfsep(k,n)$. 
\end{lemma}
\begin{proof}
By \cref{prop:momLL_basic,prop:from_la_to_La}, there exists $\LaLat\in\LaLak$ such that $\lalat=\PhiLL(C)$. Pick a weighted graph $(\G,\wt)$ such that $C=\Meas(\G,\wt)$. By \cref{lemma:TREE:fullysep_vs_rhoij}, $\G$ \isfullysep if and only if $C\in\Grfsep(k,n)$. If $\bdf_i,\bdf_j$ are not \twosep in $\G$ for some $i+2\leq j\leq i+n-2$ then by~\eqref{eq:immanant_fg}, $(\bdx_i - \bdx_j)^2=0$, contradicting $\lalat\in\MPkntree$.
\end{proof}

We calculate the image of the subset $\Grfsep(k,n)$ under T-duality.
\begin{lemma}\label{lemma:TREE:rank_of_iijj}
Let $\fap\in\Bounda$ and $\ddC\in\Ptp_{\ddfperm}$. For $i+2\leq j\leq i+n-2$, let ${\Jij:=\{i,i+1,j,j+1\}\subset\brn}$, where the indices $i,i+1,j,j+1$ are taken modulo $n$. Then
the submatrix $\ddC_{\Jijc}:=\mat[\ddC_\s]_{\s\in \Jijc}$ has rank
\begin{equation}\label{eq:TREE:rank_of_iijj}
 \rank\ddC_{\Jijc} = \min(k-2,k+\rhoij(\fap) - 4).
\end{equation}
\end{lemma}
\begin{proof}
One can compute the rank function $\ddr:=\rank_{\ddot\Mcal}$ of the positroid $\ddot\Mcal$ of $\ddfperm$ using the results of~\cite{Suho_rank}. Since $\Jijc=\Jijyy_{i+2,j}\sqcup\Jijyy_{j+2,i}$ is a union of two cyclic intervals, \cite[Theorem~25]{Suho_rank} yields $\ddr(\Jijc) = \min\left(k-2,\ddr(\Jijyy_{i+2,j})+\ddr(\Jijyy_{j+2,i})\right)$, and a straightforward computation shows that $\ddr(\Jijyy_{i+2,j})+\ddr(\Jijyy_{j+2,i})=k+\rhoij(\fap) - 4$.
\end{proof}

\begin{definition}\label{dfn:Grfind}
 In view of
\cref{lemma:TREE:fullysep_vs_rhoij,lemma:TREE:rank_of_iijj}, we say that $\ddC\in\Grtnn(k-2,n)$ is \emph{\twoind} if $\rank\ddC_{\Jijc} = k-2$ for all $i+2\leq j\leq i+n-2$. Similarly to \cref{dfn:Grfsep},
 we set
\begin{equation*}%
 \Grddfsep(k-2,n):=\bigsqcup_{\fap\in\BNDfsep(k,n)}\Ptp_{\ddfperm} = 
\{\ddC\in\Grtnn(k-2,n)\mid \ddC\text{ is \twoind}\}.
\end{equation*}
\end{definition}

\noindent Thus, by \cref{prop:magic_homeo}, for $\la\in\lak$, $C\in\Grtnnsupla(k,n)$, and $\ddC:=C\cdot \Qla\in\Grtnn(k-2,\lap)$, $C$ is \fullysep if and only if $\ddC$ is \twoind. 

\subsection{\MbdTITLE and t-embeddings}\label{ssec:Mbd_and_temb}
Motivated by \cref{dfn:intro:1_Lipschitz,lemma:imm_pos_Mand_pos}, we introduce the following notion. 

\begin{definition}
We say that a map $\xd:\Faces\to\Rdd$ 
 has \emph{\Mbd} if for all $i+2\leq j\leq i+n-2$, we have $(\bdx_i - \bdx_j)^2>0$. 
We say that $\xd$ has \emph{\weakMbd} if $(\bdx_i - \bdx_j)^2>0$ is satisfied for all $i+2\leq j\leq i+n-2$ such that $\bdf_i$ and $\bdf_j$ do not share a face of $\GD$.
\end{definition}

We start with a general topological lemma.
\begin{lemma}\label{lemma:non_self_intersecting_Sigma}
Let $\Sig$ be a simple undirected graph and $\xd=(\xT,\xO):\Sig\to\Rdd$ be such that the image of any edge of $\Sig$ under $\xT$ is a straight line segment of positive length. Suppose that for any two vertices $\f_1,\f_2$ of $\Sig$, we have $(\xd(\f_1)-\xd(\f_2))^2\geq0$, with equality if and only if $\{\f_1,\f_2\}$ is an edge of $\Sig$. Suppose further that the image under $\xT$ of every triangle of $\Sig$ is a non-degenerate triangle in the plane, and that $\Sig$ does not contain a subgraph isomorphic to $K_4$ (the complete graph on $4$ vertices). Then $\xT$ is a straight-line embedding of $\Sig$.%
\end{lemma}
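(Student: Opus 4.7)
The plan is to prove that $\Tcal$ is injective as a map on the $1$-complex $\Sig$; since the hypotheses guarantee that $\Tcal$ restricts on each edge to a homeomorphism onto a positive-length straight segment, injectivity automatically upgrades to an embedding. First I would handle vertex injectivity: if $\Tcal(f_1)=\Tcal(f_2)$ for distinct vertices $f_1,f_2$, then $0=|\Tcal(f_1)-\Tcal(f_2)|\geq|\Ocal(f_1)-\Ocal(f_2)|\geq 0$, so the distance hypothesis forces $\{f_1,f_2\}$ to be an edge, whose image then has length zero, contradicting the positive-length assumption.

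Next I would show that no two edges share an interior point of their $\Tcal$-images. If two edges $\{a,b\}$ and $\{a,c\}$ share a vertex $a$ and overlap beyond $\Tcal(a)$, then $\Tcal(a),\Tcal(b),\Tcal(c)$ are collinear with one strictly between the others; the resulting equality $|\Tcal(a)-\Tcal(b)|=|\Tcal(a)-\Tcal(c)|+|\Tcal(c)-\Tcal(b)|$ combines with the edge equalities at $\{a,b\}$ and $\{a,c\}$, the planar triangle inequality on $\Ocal$, and the bound $|\Tcal(c)-\Tcal(b)|\geq|\Ocal(c)-\Ocal(b)|$ to force a full string of equalities, so $\{b,c\}$ is also an edge. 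Then $\{a,b,c\}$ is a triangle of $\Sig$ with collinear $\Tcal$-image, contradicting the triangle non-degeneracy hypothesis.

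The final case is that two edges $\{a,b\},\{c,d\}$ with four distinct endpoints have overlapping $\Tcal$-images. If the four image points are collinear along the common line, I would iterate the above cascade of equalities along the line to force all six pairs in $\{a,b,c,d\}$ to be edges, producing $K_4\subset\Sig$ and contradicting the no-$K_4$ hypothesis. If instead the two segments cross transversally at an interior point $p$, then strict triangle inequality applied to the non-degenerate apex triangles with vertex $p$ yields
\[
  |\Tcal(a)-\Tcal(c)|+|\Tcal(b)-\Tcal(d)|<|\Tcal(a)-\Tcal(b)|+|\Tcal(c)-\Tcal(d)|
\]
together with the analogous bound with the pairing $\{a,d\},\{b,c\}$. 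Substituting the two edge equalities $|\Tcal(a)-\Tcal(b)|=|\Ocal(a)-\Ocal(b)|$ and $|\Tcal(c)-\Tcal(d)|=|\Ocal(c)-\Ocal(d)|$, and the bound $|\Tcal|\geq|\Ocal|$ at the remaining four pairs, I obtain the same pair of strict inequalities on the $\Ocal$-distances. Combined with the elementary bound $|\Ocal(a)-\Ocal(c)|+|\Ocal(a)-\Ocal(d)|+|\Ocal(b)-\Ocal(c)|+|\Ocal(b)-\Ocal(d)|\geq|\Ocal(a)-\Ocal(b)|+|\Ocal(c)-\Ocal(d)|$ (two sums of triangle inequalities), these pin the planar cyclic order of $\Ocal(a),\Ocal(c),\Ocal(b),\Ocal(d)$ and force the segments $\Ocal(a)\Ocal(b)$ and $\Ocal(c)\Ocal(d)$ themselves to cross. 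Ptolemy's inequality applied to both the $\Tcal$- and the $\Ocal$-convex quadrilaterals (in the same cyclic order, sharing the two crossing-diagonal lengths) should then force each of the four cross-pair inequalities $|\Tcal|\geq|\Ocal|$ to be tight, making every such pair an edge and producing the forbidden $K_4$.

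The main obstacle is this last step of the transversal-crossing case: rigorously deducing from the combined strict $\Ocal$-inequalities and the two Ptolemy relations that each non-edge pair among $\{a,c\},\{a,d\},\{b,c\},\{b,d\}$ must be tight. The delicate point is that a priori Ptolemy on the $\Ocal$-quadrilateral is slack, so one must carefully exploit that the crossing-diagonal lengths are identical in the two quadrilaterals to show that any residual slack at a non-edge pair would propagate back and violate one of the already-established edge equalities, thus completing the $K_4$ and finishing the proof.
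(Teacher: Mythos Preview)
Your case breakdown and the handling of the easy cases (coinciding vertices; two edges overlapping along a line) are fine, and essentially match what the paper does. There is, however, a small omission in your case analysis: a vertex $f$ with $\Tcal(f)$ in the interior of an edge $\{g,h\}$ need not arise from two \emph{edges} overlapping (e.g.\ $f$ may not be adjacent to $g$ or $h$). This is the paper's ``type~(ii)'' case; it is dispatched by the same telescoping argument you already use, so it is a gap in bookkeeping rather than in ideas.

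The substantive gap is the transversal-crossing case. Your Ptolemy approach does not close: having the same diagonal lengths and weakly smaller side lengths in the $\Ocal$-quadrilateral does \emph{not}, via Ptolemy, force all four side inequalities to be tight. Ptolemy is an inequality in both quadrilaterals, with slack in general, and there is nothing that pins the slacks against each other. You have correctly identified this as the obstacle, and I do not see a way to push Ptolemy through. The paper avoids this entirely with a short normalization trick. Since $\{\f_1,\f_3\}$ is an edge, $|\Tcal(\f_1)-\Tcal(\f_3)|=|\Ocal(\f_1)-\Ocal(\f_3)|$, so after a rigid motion one may assume $\Ocal(\f_1)=\Tcal(\f_1)$ and $\Ocal(\f_3)=\Tcal(\f_3)$. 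Let $p$ be the crossing point of $\Tcal(\f_1)\Tcal(\f_3)$ and $\Tcal(\f_2)\Tcal(\f_4)$. Because $p$ lies on the segment $\Tcal(\f_1)\Tcal(\f_3)$ and $\Ocal(\f_2)$ lies in the intersection of the disks centered at $\Tcal(\f_1),\Tcal(\f_3)$ of radii $|\Tcal(\f_2)-\Tcal(\f_1)|,|\Tcal(\f_2)-\Tcal(\f_3)|$, a convex-combination argument gives $|\Ocal(\f_2)-p|\le|\Tcal(\f_2)-p|$, and similarly for $\f_4$. Since $\Sig$ has no $K_4$, at least one of the four ``side'' inequalities is strict, which forces at least one of these two to be strict; then
\[
|\Tcal(\f_2)-\Tcal(\f_4)|=|\Ocal(\f_2)-\Ocal(\f_4)|\le|\Ocal(\f_2)-p|+|\Ocal(\f_4)-p|<|\Tcal(\f_2)-p|+|\Tcal(\f_4)-p|=|\Tcal(\f_2)-\Tcal(\f_4)|,
\]
a contradiction. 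This replaces your Ptolemy step by a two-line computation and is the missing idea in your proposal.
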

\begin{proof}
We classify the possible self-intersections of $\xT(\Sig)$ into three types:
\begin{enumerate}[label=(\roman*),wide,labelindent=5pt]
\item\label{bdrysi1} two noncollinear line segments $\Tcal(\east_1)$ and $\Tcal(\east_2)$ intersecting in their relative interiors;
\item\label{bdrysi2} a vertex $\Tcal(\f)$ belonging to the relative interior of an edge $\Tcal(\east)$;
\item\label{bdrysi3} two vertices $\Tcal(\f_1) = \Tcal(\f_2)$ coinciding, for $\f_1\neq \f_2$. 
\end{enumerate}
We first show that no self-intersections of type~\itemref{bdrysi1} are possible. Suppose otherwise that such a self-intersection occurs, and denote $\ebarast_1 = \{\f_1,\f_3\}$ and $\ebarast_2 = \{\f_2,\f_4\}$. 
Consider a convex quadrilateral with boundary vertices $(\Tcal(\f_1),\Tcal(\f_2),\Tcal(\f_3),\Tcal(\f_4))$ listed in cyclic order. The map $\Ocal$ preserves the lengths of the diagonals but weakly decreases the lengths of the four sides. Moreover, since $\Sig$ does not contain $K_4$, $\Ocal$ strictly decreases the length of at least one of the sides of the quadrilateral. Without loss of generality, we may assume that $\Ocal(\f_1)=\Tcal(\f_1)$ and $\Ocal(\f_3) = \Tcal(\f_3)$. Let $p$ be the intersection point of $\Tcal(\east_1)$ and $\Tcal(\east_2)$. Then $|\Ocal(\f_2)-p|\leq|\Tcal(\f_2)-p|$ and $|\O(\f_4)-p|\leq|\Tcal(\f_4)-p|$, and at least one of these inequalities must be strict, so $|\O(\f_2)-p| + |\O(\f_4)-p| < |\Tcal(\f_2)-p| + |\Tcal(\f_4)-p|$. On the other hand, $|\Tcal(\f_2)-p| + |\Tcal(\f_4)-p| = |\Tcal(\f_2)-\Tcal(\f_4)| = |\O(\f_2)-\O(\f_4)|\leq |\O(\f_2)-p| + |\O(\f_4)-p|$, a contradiction.

Similarly, in a self-intersection of type~\itemref{bdrysi2}, the point $\Tcal(\f)$ belongs to the line segment $\Tcal(\east)$, and $\Ocal$ weakly decreases the distance from $\Tcal(\f)$ to each of the endpoints of the line segment while preserving the length of the line segment itself, a contradiction. (Note that $\f$ cannot be adjacent in $\Sig$ to both endpoints of $\east$ by the assumption that $\Tcal$ is injective on the triangles of $\Sig$.) Finally, if a boundary self-intersection of type~\itemref{bdrysi3} occurs then we have $|\Tcal(\f_1)-\Tcal(\f_2)|=0$, so we must have $|\O(\f_1)-\O(\f_2)|=0$ and therefore $\{\f_1,\f_2\}$ must be an edge of $\Sig$. But then the image of this edge under $\Tcal$ was assumed to be a line segment of nonzero length, a contradiction.
\end{proof}

\begin{remark}\label{rmk:K4}
It follows from the above proof that when $\Sig$ is not necessarily $K_4$-free but satisfies all other conditions in the lemma then
 $\xT(\Sig)$ does not contain self-intersections of types~\itemref{bdrysi2} and~\itemref{bdrysi3}, and for any self-intersection of type~\itemref{bdrysi1} between edges $\Tcal(\east_1)$ and $\Tcal(\east_2)$, the vertices in $\ebarast_1\sqcup\ebarast_2$ form a clique in $\Sig$.
\end{remark}

Applying \cref{lemma:non_self_intersecting_Sigma} to the case where $\Sig$ is either an $n$-cycle or an $n$-cycle with one extra isolated vertex, 
 we obtain the following. 
\begin{corollary}\label{lemma:Mbd=>simple}
Suppose that $\Pbdx=(\bdx_1,\bdx_2,\dots,\bdx_n)$ is an \emph{\Mdash positive} null polygon in $\Rdd$, meaning $(\bdx_i-\bdx_{i-1})^2=0$, $\bdx_i\neq\bdx_{i-1}$, and $(\bdx_i-\bdx_j)^2>0$ for all $i+2\leq j\leq i+n-2$.
 Then the polygon $\PbdxT$ is simple. 
Furthermore, if $\ys\in\Rdd$ satisfies $(\ys-\bdx_i)^2>0$ for all $i\in\brn$ then the point $\ysT$ 
is located either strictly inside or strictly outside $\PbdxT$.
\end{corollary}

\begin{corollary}\label{lemma:fullysep=>simple_and_Mbd}
Let $C\in\Grfsep(k,n)$. Let $\LaLat\in\LaLaimmnn$ and $\lalat:=\PhiLL(C)$. Then $\lalat\in\MPkntree$ is \Mdash positive and the polygon $\PllT$ is simple.
\end{corollary}
\begin{proof}
By \cref{lemma:fullysep=>twonondeg}, $C$ is \twonondeg, so $\lalat\in\lalak$ by \cref{prop:momLL_basic}. 
Pick a weighted graph $(\G,\wt)$ such that $C=\Meas(\G,\wt)$. 
By \cref{lemma:TREE:fullysep_vs_rhoij} (\itemref{rho3:rhoij}$\Longrightarrow$\itemref{rho1:twosep}), $\G$ is \fullysep. 
By \cref{lemma:PROP:Mpos}, $\lalat$ is \Mdash positive, so $\lalat\in\MPkntree$. 
By \cref{lemma:Mbd=>simple}, $\PllT$ is simple.
\end{proof}

\begin{lemma}\label{lemma:helmin>2=>no_triangles}
Suppose that $\helmin(\G)\geq2$ and that $1\leq a<b<c\leq n$ are such that $\bdf_i$ and $\bdf_j$ share a face of $\GD$ for each $i,j\in\{a,b,c\}$. Then the three vertices $\bdf_a,\bdf_b,\bdf_c$ share a common face of $\GD$. 
\end{lemma}
\begin{proof}
If the faces $\bdf_a,\bdf_b,\bdf_c$ are not pairwise distinct then the result follows trivially, so assume that they are pairwise distinct. 
Suppose that $\G$ contains pairwise distinct interior vertices $\v_{ab},\v_{ac},\v_{bc}$ such that for $\{i,j\}\subset\{a,b,c\}$, $\bdf_i$ and $\bdf_j$ are incident to $\v_{ij}$. Let $\Rg$ be the union of $\{\v_{ab},\v_{ac},\v_{bc}\}$ with the set of faces of $\GD$ contained inside the triangle with vertices $\bdf_a,\bdf_b,\bdf_c$ and sides given by curves inside $\v_{ij}$ connecting $\bdf_i$ to $\bdf_j$ for each pair 
$\{i,j\}\subset\{a,b,c\}$. 
 Thus, $\G\ind[\Rg]$ is a planar bipartite graph of type $(k',n')$ for some $0\leq k'\leq n'=3$ with boundary vertices $\v_{ab},\v_{bc},\v_{ac}$. By~\eqref{eq:DIM:k_dfn}, $|\RW|-|\RBint| = k'$ and $|\RB|-|\RWint| = n'-k'$, where $\RWint:=\RW\setminus\{\v_{ab},\v_{bc},\v_{ac}\}$ and $\RBint:=\RB\setminus\{\v_{ab},\v_{bc},\v_{ac}\}$. 
Observe that $\min(k',n'-k')\leq1$. Suppose that, say, $k'\leq 1$. Let $\Rg_2:=\RW\sqcup\RBint\in\WNEIg(\G)$ so that $\helW(\Rg_2) = k'$. If $\RBint=\emptyset$ then $|\RW|=k'\leq 1$. Since 
$\bdf_a,\bdf_b,\bdf_c$ are pairwise distinct, we cannot have $|\RW|=0$, so letting $\w\in\RW$ be the sole white vertex of $\GR$, we see that $\bdf_a,\bdf_b,\bdf_c$ share the face $\w$ of $\GD$. If $\RBint\neq\emptyset$ then $\Rg_2\in\WNEI(\G)$ so $\helWmin(\G)\leq1$, a contradiction. The case $n'-k'\leq1$ is handled similarly by setting $\Rg_2:=\RB\sqcup\RWint\in\BNEIg(\G)$. 
\end{proof}

\begin{lemma}\label{lemma:LaLaimmn+helmin=simple}
 Assume that $\G$ is connected and $\helmin(\G)\geq2$.
 Let $\xd:\Faces\to\Rdd$ be a t-immersion of $\G$ with \weakMbd. Then the boundary polygon $\PbdxT$ is simple.
\end{lemma}
\begin{proof}
Since $\G$ is connected, its boundary faces are pairwise distinct. 
Consider a simple graph $\Sig$ on vertex set $\{\bdf_1,\bdf_2,\dots,\bdf_n\}$, with $\bdf_i,\bdf_j$ connected by an edge of $\Sig$ if and only if they share a face of $\GD$. By \cref{lemma:helmin>2=>no_triangles}, every triangle in $\Sig$ is contained in a face $\v\in\Vint$ of $\GD$. Since $\xd$ is a t-immersion of $\G$, $\xT(\v)$ is a nondegenerate convex polygon, so the image of every such triangle under $\xT$ is nondegenerate. 
Thus, $\Sig$ and $\xd$ satisfy all assumptions of \cref{lemma:non_self_intersecting_Sigma} except that $\Sig$ need not be $K_4$-free. By \cref{rmk:K4}, the only possible self-intersections in $\xT(\Sig)$ may occur between edges $\{\bdf_a,\bdf_c\}$ and $\{\bdf_b,\bdf_d\}$ for some $a,b,c,d\in\brn$ such that $\bdf_a,\bdf_b,\bdf_c,\bdf_d$ form a clique in $\Sig$. Since the graph $K_4$ is not outerplanar, by \cref{lemma:helmin>2=>no_triangles}, it follows that the vertices $\bdf_a,\bdf_b,\bdf_c,\bdf_d$ all share a common face $\v\in\Vint$ of $\GD$. Since $\xT(\v)$ is a nondegenerate convex polygon, no self-intersections of type~\itemref{bdrysi1} are possible between its boundary edges. 
Thus, $\xT$ is indeed injective on the boundary polygon of $\GD$.
\end{proof}

\begin{lemma}\label{lemma:Jordan_curve}
 Assume that $\G$ is connected and $\helmin(\G)\geq2$.
A t-immersion $\xd:\Faces\to\Rdd$ of $\G$ is a t-embedding if and only if the boundary polygon $\PbdxT$ is simple.
\end{lemma}
\begin{proof}
By definition, if $\xT$ is a t-embedding then $\PbdxT$ must be simple. Conversely, by 
 the angle conditions~\eqref{eq:intro:t_imm_angles_int}--\eqref{eq:intro:t_imm_angles_bdry} and the immersion condition~\itemref{intro:t_imm_orientation}, $\xT:\SuppGD\to\C$ is a local homeomorphism (i.e., each point in $\SuppGD$ has an open neighborhood $U$ such that $\xT|_U$ is a homeomorphism onto its image). By the argument principle, the cardinality of the preimage of a point $z\in\C$ under $\xT$ equals the winding number of the boundary polygon $\PbdxT$ around $z$. It follows that every point inside $\PbdxT$ has a unique preimage, and thus $\xT$ is injective, i.e., a t-embedding. See the (last paragraph of the) proof of~\cite[Theorem~4.1]{CLR2} for a similar argument.
\end{proof}

The next result follows from \cref{lemma:imm_pos_Mand_pos} and \crefrange{lemma:LaLaimmn+helmin=simple}{lemma:Jordan_curve}.
\begin{corollary}\label{lemma:LaLaimmn+helmin=t_emb}
Assume that $\G$ is connected, \twonondeg, and satisfies $\helmin(\G)\geq2$. 
Let $C:=\Meas(\G,\wt)$, $\LaLat\in\LaLaimmnn$, and $\lalat=\PhiLL(C)$. 
 Then the t-immersion $\xll$ is a t-embedding.
\end{corollary}

\begin{proof}[Proof of \cref{cor:intro:t_emb_exists}]
Set $C:=\Meas(\G,\wt)\in\Grnd(k,n)$. Fix $\LaLat\in\LaLaimmnn$ and let $\lalat:=\PhiLL(C)$. (Note that $\LaLaimmnn\neq\emptyset$ by \cref{lemma:LaLaimmp_nonempty_Z_dense}.) 
Applying \cref{thm:intro:t_imm_vs_triples}, we obtain a t-immersion $\xll:\SuppGD\to\Rdd$ of $(\G,\wt)$. By \cref{lemma:LaLaimmn+helmin=t_emb}, 
 $\xll$ is a t-embedding.
\end{proof}

\begin{lemma}\label{lemma:weakMbd=>boundary>0_or_=0}
Assume that $\helmin(\G)\geq2$. 
Suppose that $\xd:\SuppGD\to\Rdd$ is a t-embedding of $\G$ with \weakMbd. Then for any two points $p,q$ located on the closed polygonal chain $\Pbdx$ in $\Rdd$, we have $(p-q)^2>0$ if $\Tcomp{p},\Tcomp{q}$ do not share a face of $\xT(\GD)$ 
 and $(p-q)^2=0$ otherwise.
\end{lemma}
\begin{proof}
Let $i,j\in\brn$ be distinct and write $p(t):=(1-t)\bdx_{i-1}+t\bdx_i$ and $q(s):=(1-s)\bdx_{j-1}+s\bdx_{j}$. Since $(\bdx_{j-1}-\bdx_{j})^2=0$, we see that the function $h_t(s):=(p(t)-q(s))^2$ is affine linear in $s$ for each fixed $t\in[0,1]$. Since $\xd$ has \weakMbd, for all $(t',s')\in\{0,1\}^2$, we have $h_{t'}(s')>0$ if $p(t')$ and $q(s')$ do not share a face of $\xd(\GD)$ and $h_{t'}(s')=0$ otherwise. 
 For $t'\in\{0,1\}$, since $h_{t'}(s)$ takes nonnegative values on the endpoints $s'\in\{0,1\}$, we have $h_{t'}(s)\geq0$ for all $s\in[0,1]$. 

Next, we prove the lemma for $t'\in\{0,1\}$ and $0<s<1$. 
If $p(t')$ and $q(s)$ share a face of $\xd(\GD)$ for $0<s<1$ then this face must include the entire boundary edge $[q(0),q(1)]$, so we have $h_{t'}(s)=0$. Conversely, suppose $h_{t'}(s)=0$ for some $0<s<1$. If $p(t')$ does not share a face with $q(s')$ for either $s'=0$ or $s'=1$ then we have $h_{t'}(s')>0$ and therefore $h_{t'}(s)>0$ for all $0<s<1$, a contradiction. Thus, assume that $p(t')$ shares a face with $q(0)$ and with $q(1)$. By \cref{lemma:helmin>2=>no_triangles}, the points $p(t'),q(0),q(1)$ all share a common face, so $p(t')$ and $q(s)$ share a face of $\xd(\GD)$ for all $0<s<1$.

We have shown the result of the lemma when $p=p(t')$ with $t'\in\{0,1\}$ is a boundary vertex of $\xd(\GD)$ and $q=q(s)$ is an arbitrary point on the 
closed polygonal chain $\Pbdx$. 
 By symmetry, the result also holds when $q$ is a boundary vertex and $p$ is an arbitrary point on the boundary. Consider now the case where $p=p(t)$ and $q=q(s)$ with $0<t,s<1$. If $p$ and $q$ share a face of $\xd(\GD)$ then $(p-q)^2=0$ by~\eqref{eq:inv_null}. Otherwise, there must exist $(t',s')\in\{0,1\}^2$ such that $p(t')$ and $q(s')$ do not share a face of $\xd(\GD)$. Thus, $h_{t'}(s')>0$ which implies that $h_{t'}(s)>0$ for $0<s<1$. Since for each fixed $s$, $h_{t}(s)$ is an affine linear function of $t$ which takes a strictly positive value at one endpoint $t=t'$ and a nonnegative value at another endpoint, we see that $h_t(s)>0$ for all $0<t,s<1$. 
\end{proof}

\begin{proposition}\label{lemma:Mbd=>face_Mnn}
Assume that $\helmin(\G)\geq2$. 
Suppose that $\xd:\Faces\to\Rdd$ is a t-embedding of $\G$ with \weakMbd. Then for all $\ff,\f\in\Faces$, we have $(\xd(\ff)-\xd(\f))^2>0$ if $\ff,\f$ do not share a face of $\GD$ and $(\xd(\ff)-\xd(\f))^2=0$ otherwise.
\end{proposition}
\begin{proof}
If $\ff,\f$ share a face of $\GD$ then $(\xd(\ff)-\xd(\f))^2=0$ by~\eqref{eq:inv_null}. Suppose that they do not share a face of $\GD$. 
If the line segment $[\T(\ff),\T(\f)]$ is fully contained in $\T(\SuppGD)$ then we have $(\xd(\ff)-\xd(\f))^2\geq 0$ by~\eqref{eq:intro:1_Lipschitz}. Since $\ff,\f$ do not share a face of $\GD$, this line segment either crosses an edge of $\xT(\GD)$ or it passes through an interior vertex of $\xT(\GD)$. In either case, the line segment gets folded at least once under the origami map so we get $(\xd(\ff)-\xd(\f))^2 > 0$. 

Suppose now that $[\T(\ff),\T(\f)]$ is not fully contained in $\T(\SuppGD)$. Consider a sequence 
$(\T(p_0)=\T(\ff),\T(p_1),\dots,\T(p_d)=\T(\f))$ of points on $[\T(\ff),\T(\f)]$, with $p_1,\dots,p_{d-1}$ located on the boundary of $\GD$,
such that the interior of each line segment $[\T(p_{i-1}),\T(p_i)]$ is either fully inside or fully outside $\xT(\SuppGD)$. For two points $p,q\in\SuppGD$, let $\Sone(p,q):=|\T(p)-\T(q)| - |\O(p)-\O(q)|$. Thus, the sign of $\Sone(p,q)$ coincides with the sign of $(\xd(p)-\xd(q))^2=|\T(p)-\T(q)|^2 - |\O(p)-\O(q)|^2$. Furthermore, by the triangle inequality, $\Sone(\ff,\f)\geq\sum_{i=1}^d \Sone(p_{i-1},p_i)$. 
By~\eqref{eq:intro:1_Lipschitz} and \cref{lemma:weakMbd=>boundary>0_or_=0}, $\Sone(p_{i-1},p_i)\geq0$ for each $i\in\brd$. 
 Furthermore, there exists $i\in\brd$ such that both $p_{i-1}$ and $p_i$ lie on the boundary of $\GD$ and such that the interior of the line segment $[\T(p_{i-1}),\T(p_i)]$ is fully outside $\xT(\SuppGD)$. 
The points $p_{i-1}$ and $p_i$ cannot share a face of $\GD$ because the faces of $\T(\GD)$ are convex since $\xT$ is a t-immersion. Thus, $\Sone(p_{i-1},p_i)>0$, which implies that $\Sone(\ff,\f)>0$ and thus $(\xd(\ff)-\xd(\f))^2>0$.
\end{proof}

\section{BCFW tilings of ambient momentum amplituhedra}\label{sec:BCFW}
Our next goal is to prove \cref{thm:intro:BCFW} (and thus \cref{thm:intro:A}). 
In this section, we prove a BCFW tiling result (\cref{thm:f_triang}) for the \emph{ambient} momentum amplituhedron $\lalapp$ defined in~\eqref{eq:intro:M_amb_dfn}.

\subsection{Background on the BCFW recursion}\label{sec:BCFW:backgr}
Our exposition follows~\cite[Section~17.2]{abcgpt}. 
 The BCFW recursion produces a collection $\BCFWGkn$ of reduced graphs of type $(k,n)$ for each $2\leq k\leq n-2$. For the base case, we let $\BCFWG_{1,3}$ (resp., $\BCFWG_{2,3}$) contain a single graph consisting of an interior degree-$3$ white (resp., black) vertex adjacent to three boundary vertices of opposite color. For the rest of this section, we assume that $2\leq k \leq n-2$.
\begin{definition}[BCFW recursion]\label{dfn:BCFW}
The collection $\BCFWG_{k,n}$ is obtained as follows.
\begin{enumerate}[label=(\arabic*)]
\item\label{step_BCFW1} Fix an index $\io\in\brn$
 and a choice of either a black-white or a white-black bridge at $(\bdv_{\io},\bdv_{\io+1})$.\footnote{Here, a \emph{bridge} is an edge connecting the next-to-boundary vertices $\bdvx_{\io}$ and $\bdvx_{\io+1}$. For a \emph{black-white bridge}, $\bdvx_{\io}$ is black and $\bdvx_{\io+1}$ is white, and for a \emph{white-black bridge}, $\bdvx_{\io}$ is white and $\bdvx_{\io+1}$ is black. 
}
\item\label{step_BCFW2} $\BCFWGkn$ consists of amalgamations
 $\G=\G_L\otimes\G_R$ (cf. \figref{fig:BCFW}(a)) of pairs of graphs $(\G_L,\G_R)\in\BCFWG_{k_L,n_L}\times\BCFWG_{k_R,n_R}$ for each $(k_L,n_L,k_R,n_R)$ satisfying conditions~\itemref{BCFW1}--\itemref{BCFW3} below, where the collections $\BCFWG_{k_L,n_L},\BCFWG_{k_R,n_R}$ are assumed to have already been constructed.%
\begin{enumerate}[label=(BCFW\arabic*)]
\item\label{BCFW1} $n_L,n_R\geq3$, $n_L+n_R = n+2$, $k_L,k_R\geq1$, and $k_L+k_R = k+1$.
\item\label{BCFW2} If $n_L\geq4$ then $2\leq k_L\leq n_L-2$. If $n_R\geq4$ then $2\leq k_R\leq n_R-2$.
\item\label{BCFW3} If $n_L = 3$ (resp., $n_R=3$) then $k_L = 2$ (resp., $k_R=1$) in the case of a black-white bridge and $k_L=1$ (resp., $k_R=2$) in the case of a white-black bridge. 
\end{enumerate}
\end{enumerate}
Let $\BCFWfkn:=\{\fG\mid \G\in\BCFWGkn\}$ be the collection of associated bounded affine permutations. 
\end{definition}
\noindent For $\G\in\BCFWGkn$, we denote by $\fo$ the newly created interior face of $\G$ adjacent to $\bdf_{\io}$ and $\bdf_{\jo}$; see \cref{fig:BCFW}. 

\begin{remark}\label{rmk:jo}
The choice of $(n_L,n_R)$ is equivalent to a choice of $\jo\in\Z$ satisfying $\io+2\leq \jo\leq \io+n-2$, with $n_L=\jo-\io+1$ and $n_R=\io-\jo+n+1$. 
\end{remark}

\begin{remark}\label{rmk:BCFW_multiple}
The above definition involves a choice of $\io$ and a black-white/white-black bridge at each step of the recursion. We do not assume that these choices are consistent in any way. Thus, \cref{dfn:BCFW} gives rise to many possible collections of reduced graphs for each $k,n$. One such choice is shown for $4\leq n\leq 6$ in~\cite[Figure~7]{KWZ}.
\end{remark}

\begin{remark}\label{rmk:BCFW_reduced}
It is well known (see e.g.~\cite[Section~3.2]{abcgpt} or~\cite[Section~6]{KWZ}) that the graphs in $\BCFWGkn$ are reduced of type $(k,n)$. In view of \cref{lemma:reduced_helmin} and \cref{rmk:reduced_helmin2}, after possibly applying some contraction moves \MV1 and \MVbd, we assume from now on that $\helmin(\G)\geq2$ for each $\G\in\BCFWGkn$. 
\end{remark}

The following result can be shown by induction.
\begin{lemma}\label{lemma:BCFW_fullysep}
Let $\G\in\BCFWGkn$. Then $\bdf_i,\bdf_j$ do not share a face of $\GD$ for all $i+2\leq j\leq i+n-2$. 
In other words, each $\G\in\BCFWGkn$ is \fullysep 
(cf. \cref{dfn:fullysep,lemma:TREE:fullysep_vs_rhoij}).
\end{lemma}
\noindent Applying this lemma to the graphs $\G_L,\G_R$ in \cref{dfn:BCFW}, we get the following result.
\begin{corollary}\label{cor:BCFW:share_face}
For $\G\in\BCFWGkn$, $\fo$ and $\bdf_\s$ share a face of $\GD$ if and only if $\s\in\{\io-1,\io,\io+1,\jo\}$.
\end{corollary}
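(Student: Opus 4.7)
The plan is to combine the preceding lemma---applied to the subgraphs $\G_L,\G_R$ from \cref{dfn:BCFW}---with a direct local analysis at the bridge $\{\bdvx_\io,\bdvx_{\io+1}\}$.

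For the containment $\{\io-1,\io,\io+1,\jo\}\subseteq\{\s:\fo\text{ and }\bdf_\s\text{ share a face of }\GD\}$, I would exhibit a shared interior vertex in each case. The bridge edge lies on $\partial\fo\cap\partial\bdf_\io$, so both its endpoints $\bdvx_\io,\bdvx_{\io+1}$ are interior vertices common to $\fo$ and $\bdf_\io$. As $\bdvx_\io$ is next-to-boundary of $\bdv_\io$, it also lies on the boundary of $\bdf_{\io-1}$; symmetrically $\bdvx_{\io+1}$ lies on $\bdf_{\io+1}$. Finally, $\fo$ and $\bdf_\jo$ share a vertex by the very definition of $\fo$ in \cref{rmk:jo}.

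For the reverse containment, I would pick an arbitrary interior vertex $v\in\partial\fo$ and show any $\bdf_\s\ni v$ satisfies $\s\in\{\io-1,\io,\io+1,\jo\}$. The bridge vertices $\bdvx_\io,\bdvx_{\io+1}$ have degree exactly $3$ in $\G$ (their neighbors being one boundary vertex, the opposite bridge vertex, and one vertex inside $\G_L$ or $\G_R$); a local inspection shows the three faces around them are $\{\bdf_{\io-1},\bdf_\io,\fo\}$ and $\{\bdf_\io,\bdf_{\io+1},\fo\}$ respectively, handling those two vertices. Any other $v\in\partial\fo$ lies in the interior of $\G_L$ or $\G_R$. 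By the construction of $\G=\G_L\otimes\G_R$, the portion of $\partial\fo$ running through $\G_L$ coincides with the boundary (inside $\G_L$) of the unique boundary face $F_L$ of $\G_L$ that wraps between the two legs of $\G_L$ modified by the bridge and the amalgamation identification; analogously define $F_R$ in $\G_R$. The preceding lemma applied to $\G_L\in\BCFWGknL$ then forces any such $v$ to be shared inside $\G_L$ only with the two boundary faces of $\G_L$ neighboring $F_L$, which map under the amalgamation to $\bdf_{\io+1}$ and $\bdf_\jo$ of $\G$. The symmetric argument in $\G_R$ yields $\bdf_{\io-1}$ and $\bdf_\jo$.

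The main obstacle will be verifying the face-correspondence dictionary used above: that the two boundary faces of $\G_L$ neighboring $F_L$ are absorbed into $\bdf_{\io+1}$ and $\bdf_\jo$ in $\G$, and likewise the two $\G_R$-faces neighboring $F_R$ are absorbed into $\bdf_{\io-1}$ and $\bdf_\jo$. This requires careful unwinding of \cref{dfn:BCFW}: the bridge creates the new boundary face $\bdf_\io$ and converts the distinguished legs of $\G_L,\G_R$ into the bridge vertices $\bdvx_\io,\bdvx_{\io+1}$, while the identification at position $\jo$ merges two pairs of boundary faces (one pair on each side of the identified leg pair), producing $\bdf_\jo$ on one side and extending $\fo$ on the other.
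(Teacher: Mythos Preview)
Your proposal is correct and follows exactly the approach the paper indicates: the paper's entire proof is the one-line remark ``Applying this lemma to the graphs $\G_L,\G_R$,'' and you have simply spelled out what that application entails, including the local analysis at the bridge vertices and the face-correspondence dictionary under the amalgamation. The dictionary you describe in your final paragraph is accurate (the two $\G_L$-neighbors of $F_L$ become $\bdf_{\io+1}$ and $\bdf_{\jo}$, and the two $\G_R$-neighbors of $F_R$ become $\bdf_{\io-1}$ and $\bdf_{\jo}$), and the degree-$3$ claim for $\bdvx_{\io},\bdvx_{\io+1}$ is correct from the bridge construction; the only vertex your case split does not name explicitly is the amalgamation vertex where $\G_L$ and $\G_R$ are glued, but that vertex is incident only to $\fo$ and $\bdf_{\jo}$, so it causes no trouble.
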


\subsection{BCFW tilings}\label{sec:BCFW:triangulation_MP}
Recall from~\eqref{eq:intro:M_amb_dfn} that the ambient momentum amplituhedron $\lalapp$ is the space of \Mdash positive pairs $\lalat\in\lalak$. For $\fap\in\Boundkn$, we introduce a \emph{tile}
\begin{equation}\label{eq:lalappf_dfn}
 \lalappf:=\{\lalat\in\lalapp\mid \la\subset C\subset \latp\text{ for some $C\in\Ptp_f$}\}. 
\end{equation}

To work with ambient amplituhedra, we consider the following version of the notion of a tiling (\cref{dfn:intro:triang}) where the \emph{momentum amplituhedron map} $\PhiLL$ is replaced with the \emph{momentum amplituhedron correspondence} $\Rel$ (which one may view as a multivalued map) introduced in~\eqref{eq:TREE:Rel_dfn}. 

\begin{definition}[\Mtiling]\label{dfn:BCFW:tiling_amb}
Let $\RX,\RY$ be topological spaces equipped with a relation $\Rel\subset \RX\times \RY$. 
Let $\RGbf$ be a finite set and let $\{\RXo_\RG\mid\RG\in\RGbf\}$ be a collection of subsets of $\RX$. 
For $\RG\in\RGbf$, define a \emph{tile}
\begin{equation}\label{eq:BCFW:tile_amb_dfn}
 \RYo_\RG:=\{y\in \RY\mid (x,y)\in\Rel\text{ for some $x\in \RXo_\RG$}\}.
\end{equation}
In other words, the tile $\RYo_\RG=\Rproj_\RY(\Relo_\RG)$ is the image of $\Relo_\RG:=\Rel\cap(\RXo_\RG\times \RY)$ under the projection map $\Rproj_\RY:\RX\times \RY\to \RY$. 
We say that the tiles $\{\RYo_\RG\mid \RG\in\RGbf\}$ form an \emph{\mtiling} of $\RY$ if the following conditions are satisfied.
\begin{enumerate}[label=(\alph*)]
\item\label{Rtiling1} \emph{Injectivity:} For each $\RG\in\RGbf$, $\Rproj_\RY$ restricts to a homeomorphism $\Relo_\RG\xrasim\RYo_\RG$. 
\item\label{Rtiling2} \emph{Disjointness:} The tiles $\{\RYo_\RG\mid \RG\in\RGbf\}$ are pairwise disjoint.
\item\label{Rtiling3} \emph{Surjectivity:} The union $\bigsqcup_{\RG\in\RGbf}\RYo_\RG$ is dense in $\RY$.
\end{enumerate}
\end{definition}

We will apply this definition to 
$\RX:=\Grfsep(k,n)$ (\cref{dfn:Grfsep}), $\RY:=\MPkn$, and
\begin{align}
\label{eq:TREE:Rel_dfn}
 \Rel&:=\{(C,\la,\lat)\in\Grfsep(k,n)\times\MPkn\mid \la\subset C\subset\latp\}.
\end{align}
For each $\fap\in\BCFWfkn$, we set $\RXo_\fap :=\Ptp_\fap$. Thus,~\eqref{eq:BCFW:tile_amb_dfn} gives rise to tiles $\RYo_\fap=\MPf\subset\MPkn$; cf.~\eqref{eq:lalappf_dfn}. 
The goal of the next several subsections is to prove the following result.
\begin{theorem}[BCFW tilings of ambient momentum amplituhedra]\label{thm:f_triang}
 The tiles $\{\lalappf\mid f\in\BCFWfkn\}$ form an \Mtiling of $\lalapp$.
\end{theorem}
\noindent \Cref{thm:intro:BCFW} will be deduced from \cref{thm:f_triang}; see \cref{lemma:proj_tiling}.
\begin{remark}\label{rmk:BCFW:T_emb_unique}
 Suppose that $f = \fG$ with $\G\in\BCFWGkn$. 
By \cref{lemma:BCFW_fullysep}, $\fap$ \isfullysep. 
The set 
\begin{equation}\label{eq:lalaC_ppf}
 \Relo_f:=\{(C,\la,\lat)\in\Ptp_f\times\lalapp\mid \la\subset C\subset\latp\}
\end{equation}
 is thus identified via \cref{thm:intro:t_imm_vs_triples} with the set of t-immersions $\Tll$ of $\G$ (with unspecified edge weights) with \Mbd. 
By \cref{lemma:LaLaimmn+helmin=t_emb}, each such t-immersion is a t-embedding. 
When $\G$ and $\lalat$ are fixed, $C$ and $\Tll$ determine each other uniquely. Thus, the content of \crefi{dfn:BCFW:tiling_amb}{Rtiling1} is that a t-embedding of $\G\in\BCFWGkn$ is uniquely recovered from its \Mdash positive boundary data $\lalat\in\MPkntree$.%
\end{remark}
\begin{remark}\label{rmk:dim_MomLL2}
Recall from \cref{rmk:dim_MomLL} that the dimension of the momentum amplituhedron $\MomLL$ is $2n-4$. 
It is well known that $\dim\Ptp_f = 2n-4$ for each $f\in\BCFWfkn$. On the other hand, $\dim\lalapp = \dim\lalats = 4n-12$. 
 For a given $C\in\Ptp_f$, the dimension of the set $\{\lalat\in\lalats\mid\la\subset C\subset\latp\}$ is $2 (k-2)+2 (n-k-2) = 2n-8$. 
Combining this with $\dim\Ptp_f=2n-4$, we get $\dim\Relo_f = 4n-12$. 
This is consistent with the claim of \cref{thm:f_triang} that the tiles $\lalappf\cong\Relo_f$ for $\fap\in\BCFWfkn$ are full-dimensional open subsets of $\lalapp$.
\end{remark}

\begin{remark}[Ambient momentum amplituhedron canonical form]\label{rmk:omega_form}
 Each positroid cell $\Ptp_\fap$, $\fap\in\BCFWfkn$, admits a $(2n-4)$-differential form $\omf$ called the \emph{canonical top form}, defined up to sign. 
The sum $\Omega_{\MomLL}=\sum_{\fap\in\BCFWfkn}\Phi_{\La,\Lat\ast} \omf$ of pushforwards of these forms along $\PhiLL$ (for a particular choice of signs) conjecturally yields the \emph{canonical top form}~\cite{ABL} of $\MomLL$. 
One can similarly define a $(2n-4)$-form $\Omega_{\MPkntree} := \sum_{\fap\in\BCFWfkn} \Rproj_{\RY\ast} \Rproj_{\RX}^\ast \omf$ on $\MPkntree$ by first pulling back the form $\omf$ on $\RXo_\fap=\Ptp_\fap$ along the projection $\Rproj_{\RX}:\Relo_{\fap}\to\RXo_{\fap}$ and then pushing it forward along the diffeomorphism (by \cref{thm:f_triang}) $\Rproj_{\RY}:\Relo_{\fap}\xrasim\RYo_{\fap}$. The form $\Omega_{\MomLL}$ is recovered as the pullback of $\Omega_{\MPkntree}$ under the natural inclusion $\MomLL\hookrightarrow \MPkntree$. Conjecturally, $\Omega_{\MPkntree}$ does not depend on the choice $\BCFWfkn$ of \amtiling. However, since $\Omega_{\MPkntree}$ is not a \emph{top form} (because $\dim\MPkntree=4n-12$), it is not a canonical form in the sense of~\cite{ABL}. It would be interesting to extend the theory of positive geometries and their canonical forms (defined recursively via taking residues) to this natural setting and to express $\Omega_{\MPkntree}$ directly in terms of t-embeddings. 
\end{remark}

\subsection{Particle momenta and Mandelstam variables}\label{sec:particle_momenta}
We recall some background on the spinor-helicity formalism. We refer to e.g.~\cite[Section~2]{ElHu} for further details.

Consider the Minkowski space $\R^{2,2}\cong\C^2$. For $\Pmom\in\Rdd$, we denote its coordinates by $(\Tcomp\Pmom,\Ocomp\Pmom)\in\C^2$. We equip $\Rdd$ with Minkowski norm $\|\Pmom\|^2 = |\PmomT|^2 - |\PmomO|^2$.
 Define the associated symmetric bilinear form of signature $(+,+,-,-)$ by 
\begin{equation}\label{eq:Pmom_cdot_Qmom}
 \Pmom\cdot \Qmom:=\Re(\PmomT\ovl{\QmomT} - \PmomO\ovl{\QmomO}) \quad\text{for $\Pmom,\Qmom\in\Rdd$,}
\end{equation}
so that $\Pmom^2 := \Pmom\cdot \Pmom = \|\Pmom\|^2$ for all $\Pmom\in\Rdd$.

We say that $\Pmom\in\Rdd$ is a \emph{nonzero null vector} if $\Pmom^2 = 0$ and $\Pmom\neq0$. Given such $\Pmom\in\Rdd$, a \emph{decoration} of $\Pmom$ is a pair $(\y,\yt)$ of complex numbers such that 
\begin{equation}\label{eq:Pmom_vs_yyt}
 (\PmomT,\PmomO) = (\y\yt,\ovl{\y}\yt). 
\end{equation}
 The decoration $(\y,\yt)$ is defined up to the \emph{little group action} $(\y,\yt)\mapsto(t\cdot \y,t^{-1}\yt)$ for $t\in\Rast$. When a decoration $(\y,\yt)$ of $\Pmom$ is fixed, we 
denote $(\y,\yt)$ by $(\Pmomy,\Pmomyt)$ and set 
\begin{equation}\label{eq:Pmom_Qmom_det}
 \<\Pmom\,\Qmom\> := \det\mat[\Pmomy|\Qmomy] \quad\text{and}\quad 
 [\Pmom\,\Qmom] := -\det\mat[\Pmomyt|\Qmomyt].
\end{equation}
Here, recall that $\det\mat[\y|\y']:=\Re(\y)\Im(\y') - \Im(\y)\Re(\y')$. 
It follows that the scalar product~\eqref{eq:Pmom_cdot_Qmom} may be expressed as
\begin{equation}\label{eq:bispinor_cdot}
 \Pmom\cdot\Qmom = 2\<\Pmom\,\Qmom\>[\Pmom\,\Qmom].
\end{equation}

Recall from \cref{rmk:intro:Minkowski} that a \emph{null polygon} in $\Rdd$ is a collection $\Pcurve=(\bdx_1,\bdx_2,\dots,\bdx_n=\bdx_0)$ of points in $\Rdd$ satisfying 
$\Pmom_i^2 = 0$, where $\Pmom_i:=\bdx_i - \bdx_{i-1}\neq0$ is a nonzero null vector for each $i\in\brn$. 
Recall the notion of an \emph{\Mdash positive} null polygon $\Pcurve$ from \cref{lemma:Mbd=>simple}.

\begin{definition}\label{dfn:Pll}
A \emph{decoration} of $\Pcurve$ is a pair $(\la,\lat)\in\Matdnr^2$ of $2\times n$ matrices such that for each $i\in\brn$, $(\bisy_i,\bisyt_i)$ given by~\eqref{eq:intro:y_to_lalat} is a decoration of $\Pmom_i$. 
We say that $\Pcurve=(\bdx_1,\dots,\bdx_n)$ is in \emph{normal form} if $\bdx_1=0$. In this case, $\Pcurve$ is uniquely determined by its decoration $\lalat$ via~\eqref{eq:Pmom_vs_yyt}, and when $\lalat$ is fixed, we denote the corresponding null polygon by $\Pll$. 
\end{definition}
\noindent Note that the momentum conservation condition $\la\perp\lat$ is equivalent to $\Pmom_1+\Pmom_2+\dots+\Pmom_n = 0$.

\begin{proof}[Proof of \cref{lemma:intro:Mand_vs_norm}]
Let $\Pll=(\bdx_1=0,\bdx_2,\dots,\bdx_n)$ and $\Pmom_i:= \bdx_i - \bdx_{i-1}$. By~\eqref{eq:bispinor_cdot},
\begin{equation}\label{eq:Pmom_scalar}
 \Pmom_i^2 = 0 \quad\text{and}\quad \Pmom_i\cdot \Pmom_j = \Pmom_j\cdot \Pmom_i = 2\brla<i,j>\brlat[i,j] \quad\text{for all $i,j\in\brn$.}
\end{equation}
By~\eqref{eq:intro:Mand_dfn} and~\eqref{eq:Pmom_scalar}, we get
\begin{equation}\label{eq:Mand_vs_norm_proof}
(\bdx_i-\bdx_j)^2
= \Big(\sum_{\s=i+1}^j \Pmom_\s\Big)^2 
= 4\Mand_{i,j}(\la,\lat)
\quad\text{for all $i\leq j\leq i+n$.}
\qedhere
\end{equation}
\end{proof}

\begin{remark}[Inversion]\label{rmk:dual_conformal}
Fix a t-immersion $\xd:\Faces\to\Rdd$ of $\G$. 
Assume that $\TO(\f)^2\neq0$ for all $\f\in\Faces$. Then we can define the \emph{inversion} of $\TO$ to be the map $\ITO:\Faces\to\Rdd$ given by $\ITO(\f):=\TO(\f) / \TO(\f)^2$. It is not hard to check that~\eqref{eq:inv_null} also holds for the map $\ITO$. Moreover, if $\TO(\f)^2>0$ for all $\f\in\Faces$ then $(\ITO(\f_1)-\ITO(\f_2))^2$ has the same sign as $(\TO(\f_1)-\TO(\f_2))^2$ for any $\f_1,\f_2\in\Faces$. This ensures that \Mdash positivity is preserved under inversion. We expect that the condition $\TO(\f)^2>0$ for all $\f\in\Faces$ is sufficient in order for $\ITO$ to also be a t-immersion (modulo applying a global reflection), and that this property explains dual conformal invariance of scattering amplitudes after expressing scattering amplitudes in terms of t-immersions. We leave this for future work.
\end{remark}

\begin{remark}
It would be interesting to investigate the relation between the Lorentz-maximal surfaces of~\cite{CheRa,CLR1,CLR2} and the Lorentz-minimal surfaces of~\cite{AlMa}.
\end{remark}

\subsection{Boundary data}\label{sec:BCFW:boundary}
We discuss how the various structures associated with $\lalat$ interact with each other. 
We start by summarizing which boundary data is determined by a pair $\lalat\in\lalak$ of $2\times n$ matrices satisfying the conditions in~\eqref{eq:intro:LALAK}.
{
\begin{itemize}
\item Complex numbers $(\y_i)_{i=1}^n,(\yt_i)_{i=1}^n\in(\Cast)^n$ given by~\eqref{eq:intro:y_to_lalat}.
\item Mandelstam variables $\Mijll$ for $i+2\leq j\leq i+n-2$.
\item Boundary angle sums $\sumbT_i := \arg(\y_{i+1}/\y_i)$ and $\sumwT_i := \arg(\yt_{i+1}/\yt_i) + \pi$; cf.~\eqref{eq:angle_sum_arg_rat}.
\item A decorated null polygon $\Pll=(\bdx_1=0,\bdx_2,\dots,\bdx_n)$ in $\Rdd$ in normal form.
\end{itemize}
}

\begin{remark}\label{rmk:bdry:alpha}
Sometimes we will allow multiplication $(\y_i,\yt_i)\mapsto (\alpha\y_i,\ovl{\alpha}\yt_i)$ by a fixed unit complex number $\alpha\in\C$, $|\alpha| = 1$, which preserves the boundary polygon $\PllT$ but rotates the origami boundary polygon $\PllO$. This results in an $\SL_2(\R)\times\SL_2(\R)$-transformation of $\lalat$ and a Lorentz transformation of $\Pll$. The Mandelstam variables and the boundary angle sums are preserved. 
\end{remark}

\begin{definition}\label{dfn:BCFW:little_group}
The \emph{little group} is the subgroup $\LG\subset\GL_n(\R)$ of diagonal matrices with nonzero real entries. We denote by $\LGp,\LGm\subset\LG$ the subsets consisting of matrices with all diagonal entries positive (resp., negative). We denote $\LGpm:=\LGp\sqcup\LGm$. We refer to the subgroup $\LGp$ (resp., $\LGpm$) as the \emph{positive} (resp., \emph{sign-constant}) little group.
\end{definition}

\begin{remark}\label{rmk:BCFW:little_group}
The little group action on $\lalats$ is given by $(\la,\lat)\mapsto(\la\cdot \tdiag,\lat\cdot \tdiag^{-1})$ for $\tdiag\in\LG$. For $\lalat\in\lalak$, we have $(\la\cdot \tdiag,\lat\cdot \tdiag^{-1})\in\lalak$ if and only if $\tdiag\in\LGpm$. For $\tdiag\in\LG$, the transformation $(\la,\lat)\mapsto(\la\cdot \tdiag,\lat\cdot \tdiag^{-1})$ preserves the Mandelstam variables and the boundary polygon $\Pll$. For $\tdiag\in\LGpm$, it additionally preserves the boundary angle sums. 
It does not affect the null polygon $\Pll$ itself but affects its decoration.
In particular, $\Pll$ determines its decoration $\lalat\in\lalak$ up to the action of $\LGpm$.
\end{remark}

\begin{lemma}\label{lemma:BCFW:null}
Let $\Pcurve = (\bdx_1=0,\bdx_2,\dots,\bdx_n)$ be a null polygon 
satisfying $(\bdx_{i+1}-\bdx_{i-1})^2>0$ 
 for $i\in\brn$. Assume further that
\begin{equation}\label{eq:sum_arg_pmom_-2pi}
 \TURN(\PcurveT) := \sum_{i=1}^n \arg_{(-\pi,\pi]}(\PmomT_{i+1} / \PmomT_{i}) = -2\pi.
\end{equation}
Then there exists a unique $2\leq k\leq n-2$ such that $\Pcurve=\Pll$ for some $\lalat\in\lalak$.
\end{lemma}
\begin{proof}
By \cref{rmk:BCFW:little_group}, the null polygon $\Pcurve$ determines the decoration $\lalat$ up to $\LG$-action.
 Since $(\bdx_{i+1}-\bdx_{i-1})^2=2\Pmom_i\cdot \Pmom_{i+1}$, by~\eqref{eq:bispinor_cdot}, we have $\brla<i,i+1>\brlat[i,i+1]>0$ for all $i\in\brnm$ and also $\brla<n,1>\brlat[n,1]>0$. Thus, after acting by some element of $\LG$, we may assume that $\brla<i,i+1> >0$ and $\brlat[i,i+1]>0$ for all $i\in\brnm$. The resulting pair $\lalat$ is determined up to the action of $\LGpm$.

Let $\eps\in\{\pm1\}$ be such that $\eps\brla<n,1> >0$ and $\eps\brlat[n,1]>0$. Let $k$ be such that
$\Arg(\la_1,\la_2) + \cdots + \Arg(\la_{n-1},\la_n) + \Arg(\la_n,\eps\la_1) = (k-1)\pi$, 
where $\Arg:=\Arg_{(-\pi,\pi]}$.
 Thus, $k$ and $\eps=(-1)^{k-1}$ are invariant under the action of $\LGpm$. Letting $(\y_i)_{i=1}^n,(\yt_i)_{i=1}^n\in(\Cast)^n$ be obtained by~\eqref{eq:intro:y_to_lalat}, we have $\arg(\PmomT_{i+1}/\PmomT_i) = \arg(\y_{i+1}/\y_{i}) + \arg(\yt_{i+1}/\yt_{i})$, with $\arg:=\arg_{(-\pi,\pi]}$, $\arg(\y_{i+1}/\y_{i})\in(0,\pi)$, and $\arg(\yt_{i+1}/\yt_{i})\in(-\pi,0)$. By~\eqref{eq:sum_arg_pmom_-2pi} we get $\wind(\lat) = (k+1)\pi$, so $\lalat\in\lalak$.
\end{proof}

\begin{remark}\label{rmk:arg_Pmom_ratio}
For nonzero null vectors $\Pmom,\Qmom\in\Rdd$, $\Pmom\cdot \Qmom>0$ is equivalent to
 $\cos(\arg(\PmomT/\QmomT)) > \cos(\arg(\PmomO/\QmomO))$. In particular, for $i\in\brn$, the argument $\arg_{(-\pi,\pi]}(\PmomT_{i+1} / \PmomT_{i})$ in~\eqref{eq:sum_arg_pmom_-2pi} is never equal to~$\pi$. Let $\lalat$ and $(\y_i)_{i=1}^n,(\yt_i)_{i=1}^n$ be as in the proof of \cref{lemma:BCFW:null}.
Following~\eqref{eq:angle_sum_arg_rat}, we denote
\begin{equation}\label{eq:BCFW:sumwP_sumbP_dfn}
 \sumbP_i:=\arg(\y_{i+1}/\y_i)\in(0,\pi) \quad\text{and}\quad \sumwP_i:=\arg(\yt_{i+1}/\yt_i) + \pi \in(0,\pi).
\end{equation}
(These angle sums are invariant under $\LGpm$-action.) We get
\begin{equation}\label{eq:wind_Pcurve_vs_wind_lalat}
 \arg(\PmomT_{i+1} / \PmomT_{i}) + \pi= \sumbP_i + \sumwP_i
 \quad\text{and}\quad
 \TURN(\PcurveT) = \wind(\la) - \wind(\lat).
\end{equation}
\end{remark}

\subsection{BCFW boundary data}\label{ssec:BCFW_boundary}

Let us fix a collection $\BCFWG_{k,n}$ as in \cref{dfn:BCFW}. We will always assume that we have chosen a white-black bridge in \crefi{dfn:BCFW}{step_BCFW1} as shown in \cref{fig:BCFW}; the case of a black-white bridge is similar. For $\r\in\R$, we denote $\ray(\r):=\bdxT_\io + \r \y_\io\yt_{\io+1}$. We consider the \emph{folding ray}
$\ray:=\{\ray(\r)\mid\r\geq0\}$.
 We shift the origami boundary polygon so that $\bdxO_{\io} = \bdxT_{\io}$. We furthermore choose the unit complex number $\alpha$ as in \cref{rmk:bdry:alpha} so that $\y_\io$ is real and so that the folding ray $\ray$ is ``fixed by the origami map.'' To be more precise, we let $\Rmom\in\Rdd$ be the null vector with decoration $(\y_\io,\yt_{\io+1})$ and set
\begin{equation}\label{eq:Rmom_dfn}
 \Rmomy = \y_\io, \quad \Rmomyt = \yt_{\io+1},\quad \RmomT = \y_\io\yt_{\io+1},\quad \RmomO = \ovl{\y_\io}\yt_{\io+1}, \quad\text{and}\quad
\rayTO(\r):=\bdx_\io + \r \Rmom.
\end{equation}
 Then, since $\y_\io$ is real, $\RmomT = \RmomO$. For each $\io+2\leq \s\leq \io+n-2$, we denote
\begin{equation}\label{eq:Psum_r_dfn}
 \Psum_\s:=\bdx_\s - \bdx_{\io} = \Pmom_{\io+1}+\cdots+\Pmom_\s, \quad%
 \r_\s:=\frac{\Psum_\s^2}{2\Psum_\s\cdot \Rmom}, \quad%
\Qmom_\s:=\bdx_\s - \rayTO(\r_\s) = \Psum_\s - \r_\s\Rmom.
\end{equation}
See \figref{fig:ABC}(right). 
 When the denominator $\Psum_\s\cdot \Rmom$ is zero, we set $\r_\s:=\infty$.
We let $\line_\s$ be the perpendicular bisector between $\bdxT_\s$ and $\bdxO_\s$.%

\begin{remark}\label{rmk:coincides_BCFW}
The above formula for $\r_\s$ coincides with the formula for the pole $z_I$ of the deformed scattering amplitude $A(z)/z$ in~\cite{BCFW}; see e.g.~\cite[Equation~(3.3)]{ElHu}.
\end{remark}

\begin{lemma}\label{lemma:bisector}
The bisector $\line_\s$ intersects the line containing $\ray$ at the point $\raypoint_\s$. The vector $\Qmom_\s\in\Rdd$ is null. (If $\r_\s=\infty$ then $\line_\s$ is parallel to $\ray$ and $\Qmom_\s$ is undefined.)
\end{lemma}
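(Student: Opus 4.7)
The plan is to prove both assertions by direct computation using the two key normalizations that were built into the setup: $\Ocal(\bdf_\io)=\Tcal(\bdf_\io)$ and $\RmomT=\RmomO$ (the latter coming from the choice of the unit rotation $\alpha$ so that $\y_\io\in\R$). From the bispinor definition~\eqref{eq:Rmom_dfn} the vector $\Rmom\in\C^2$ is null, i.e.\ $\Rmom^2=0$, and this is the workhorse identity.

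For the second claim (that $\Qmom_\s$ is null) I would simply expand in the Minkowski inner product~\eqref{eq:Pmom_cdot_Qmom}:
\[
\Qmom_\s^{\,2}=(\Psum_\s-\r_\s\Rmom)^2=\Psum_\s^{\,2}-2\r_\s\,\Psum_\s\!\cdot\!\Rmom+\r_\s^{\,2}\,\Rmom^2.
\]
The last term vanishes because $\Rmom$ is null, and substituting $\r_\s=\Psum_\s^{\,2}/(2\Psum_\s\!\cdot\!\Rmom)$ from~\eqref{eq:Psum_r_dfn} kills the remaining two terms. So $\Qmom_\s^{\,2}=0$.

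For the first claim, I would parametrize a point on the line containing $\ray$ as $z=\Tcal(\bdf_\io)+r\,\RmomT$ with $r\in\R$, then use~\eqref{eq:yy=Tcal_Ocal} together with the normalization $\Ocal(\bdf_\io)=\Tcal(\bdf_\io)$ to get the two identifications
\[
\Tcal(\bdf_\s)=\Tcal(\bdf_\io)+\PmomT_{\bm{(\io,\s]}},\qquad
\Ocal(\bdf_\s)=\Tcal(\bdf_\io)+\PmomO_{\bm{(\io,\s]}}.
\]
The defining equation of $\line_\s$ is $|z-\Tcal(\bdf_\s)|^2=|z-\Ocal(\bdf_\s)|^2$. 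Substituting $z$ and expanding in real/imaginary parts (and using $\RmomT=\RmomO$, so that the $|r\RmomT|^2$ terms cancel with the $|r\RmomO|^2$ terms), this reduces to the linear-in-$r$ equation
\[
|\PmomT_{\bm{(\io,\s]}}|^2-|\PmomO_{\bm{(\io,\s]}}|^2=2r\,\Re\!\bigl(\RmomT\,\ovl{\PmomT_{\bm{(\io,\s]}}}-\RmomO\,\ovl{\PmomO_{\bm{(\io,\s]}}}\bigr),
\]
whose left-hand side is $\Psum_\s^{\,2}$ by definition of the Minkowski norm, and whose right-hand side is $2r\,(\Psum_\s\!\cdot\!\Rmom)$ by~\eqref{eq:Pmom_cdot_Qmom}. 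Solving for $r$ gives precisely $r=\r_\s$, showing the intersection is the point $\raypoint_\s$.

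There is no real obstacle — everything is a one-line expansion once the two normalizations are in place. The only thing worth flagging separately is the degenerate case $\Psum_\s\!\cdot\!\Rmom=0$: in that case the coefficient of $r$ on the right-hand side of the bisector equation vanishes, so the equation has no finite solution in $r$ (or is satisfied identically when $\Psum_\s^{\,2}=0$). This is exactly the statement that $\line_\s$ is parallel to $\ray$, with $\r_\s=\infty$ and $\Qmom_\s$ undefined, as recorded in the lemma.
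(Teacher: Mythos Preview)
Your proof is correct and uses essentially the same ingredients as the paper: the nullity of $\Rmom$, the normalization $\RmomT=\RmomO$, and the identification of $\Tcal(\bdf_\s),\Ocal(\bdf_\s)$ via~\eqref{eq:yy=Tcal_Ocal}. The only cosmetic difference is that the paper first proves $\Qmom_\s^2=0$ and then deduces the bisector claim as an immediate corollary (since $\Tcal(\bdf_\s)-\raypoint_\s=\Tcomp{\Qmom_\s}$ and $\Ocal(\bdf_\s)-\raypoint_\s=\Ocomp{\Qmom_\s}$, nullity gives equidistance directly), whereas you prove the two claims independently by solving the bisector equation for $r$.
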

\begin{proof}
Since $\Rmom^2=0$, we see that $\Qmom_\s^2 = \Psum_\s^2 - 2\r_\s\Psum_\s\cdot \Rmom = 0$, so $|\QmomT_\s| = |\QmomO_\s|$. By~\eqref{eq:Psum_r_dfn}, $\bdxT_\s - \raypoint_\s = \QmomT_\s$ and $\bdxO_\s - \raypointO_\s = \QmomO_\s$. 
 Thus, $\raypoint_\s=\raypointO_\s$ is located at equal distance from $\bdxT_\s$ and $\bdxO_\s$. 
\end{proof}

\begin{lemma}\label{lemma:BCFW:rs_not_in_interval}
Suppose that $\lalat$ is \Mdash positive and let $\r\in(0,\infty)$. For each $\io+2\leq\s\leq\io+n-2$, we have $(\bdx_\s - \rayTO(\r))^2 > 0$ if and only if $\r_\s \notin [0,\r]$.
\end{lemma}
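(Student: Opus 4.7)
The plan is to reduce the inequality $\MandXX(\ray(\r),\bdf_\s)>0$ to a linear inequality in $\r$ and then compare with the definition of $\r_\s$ by a short case analysis on the sign of the denominator $\Psum_\s\cdot\Rmom$. Concretely, since $\Rmom$ is null (\cref{lemma:bisector} or directly from~\eqref{eq:Rmom_dfn}), we have
\begin{equation*}
  4\MandXX(\ray(\r),\bdf_\s) = (\Psum_\s - \r\Rmom)^2 = \Psum_\s^2 - 2\r\,\Psum_\s\cdot\Rmom,
\end{equation*}
so setting $a:=\Psum_\s^2$ and $b:=2\Psum_\s\cdot\Rmom$ we get $4\MandXX(\ray(\r),\bdf_\s)=a-\r b$ and $\r_\s=a/b$. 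By \cref{lemma:intro:Mand_vs_norm} and Mandelstam-positivity, $a=4\Mand_{\io,\s}(\la,\lat)>0$, which is the one nontrivial input.

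I would then split into three cases. If $b>0$, then $\r_\s=a/b>0$ and $a-\r b>0\iff \r<\r_\s$, i.e. $\r_\s\in(\r,\infty)$, which is the same as $\r_\s\notin[0,\r]$. If $b<0$, then $\r_\s=a/b<0$ so $\r_\s\notin[0,\r]$ automatically, and also $a-\r b>0$ since $a>0$, $\r>0$, $b<0$, so $\MandXX(\ray(\r),\bdf_\s)>0$ automatically; both sides of the claimed equivalence hold. Finally, if $b=0$, then by the convention stated just before the lemma $\r_\s=\infty\notin[0,\r]$, and $a-\r b=a>0$, so again both sides hold.

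There is no real obstacle here; the only subtlety is the $b=0$ edge case, which is handled by the $\r_\s=\infty$ convention, and the positivity $a>0$, which uses the Mandelstam-positivity hypothesis (without it, the case $b>0$, $a=0$ would give a counterexample since then $\r_\s=0\in[0,\r]$ while $\MandXX(\ray(\r),\bdf_\s)=0$ is not strictly positive—this is consistent with the equivalence but shows why the hypothesis is natural). I would present the argument as a single display computing $4\MandXX(\ray(\r),\bdf_\s)=a-\r b$, followed by the three-line case analysis above.
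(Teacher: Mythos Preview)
Your proposal is correct and follows essentially the same approach as the paper: expand $(\Psum_\s-\r\Rmom)^2$ using $\Rmom^2=0$, use Mandelstam-positivity to get $\Psum_\s^2>0$, and do a short case analysis on the sign of $\Psum_\s\cdot\Rmom$ (equivalently, of $\r_\s$). The paper phrases the cases as $\r_\s<0$, $\r_\s\in[0,\infty)$, $\r_\s=\infty$, which is the same split as your $b<0$, $b>0$, $b=0$ since $a>0$.
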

\begin{proof}
We have $(\bdx_\s-\rayTO(\r))^2 = (\Psum_\s - \r\Rmom)^2 = \Psum_\s^2 - 2\r\Psum_\s\cdot \Rmom$. 
 Since $\lalat$ is \Mdash positive, $\Psum_\s^2>0$, and from~\eqref{eq:Psum_r_dfn}, we see that the sign of $\r_\s$ equals the sign of $\Psum_\s\cdot \Rmom$. Thus, if $\r_\s<0$ then $(\bdx_\s-\rayTO(\r))^2>0$. If $\r_\s\in[0,\infty)$ then $\Psum_\s^2 - 2\r\Psum_\s\cdot \Rmom>0$ if and only if $\r_\s>\r$. If $\r_\s=\infty$ then $\Psum_\s\cdot \Rmom=0$ so $(\bdx_\s-\rayTO(\r))^2>0$.
\end{proof}

\subsection{T-embeddings of BCFW graphs}\label{sec:BCFW:proof_a_b}
Our goal is to prove that the tiles $\{\lalappf\mid f\in\BCFWfkn\}$ in \cref{thm:f_triang} satisfy parts~\itemref{Rtiling1}--\itemref{Rtiling2} of \cref{dfn:BCFW:tiling_amb}. 

Let $\G\in\BCFWGkn$, $f=\fG\in\BCFWfkn$, $\lalat\in\lalappf$, and $C\in\Ptp_f$ be such that $\la\subset C \subset\latp$. Let $\Tll:\SuppGD\to\C$ be the associated t-immersion. 
Since $\lalat$ is \Mdash positive, $\Tll$ is a t-embedding by 
\cref{lemma:Mbd=>simple,lemma:Jordan_curve}. 
 By \cref{rmk:BCFW_reduced}, $\helmin(\G)\geq2$.

We continue to assume that $\G$ has a white-black bridge at $(\bdv_{\io},\bdv_{\io+1})$ and let $\jo$ be as in \cref{rmk:jo}. Recall that $\fo$ is the interior face of $\G$ adjacent to $\bdf_\io$ and $\bdf_\jo$. 
 Thus, $\xdoutll(\fo)$ lies on the ray $\rayTO$.
\begin{proposition}\label{prop:BCFW:from_lalat_to_r}
Let $\rcrit\in(0,\infty)$ be such that $\xToutll(\fo) = \ray(\rcrit)$. Then $\rcrit = \r_\jo\in(0,\infty)$, and for all $\s\neq\jo$ such that $\io+2\leq \s\leq \io+n-2$, we have $\r_\s\notin[0,\r_\jo]$.
\end{proposition}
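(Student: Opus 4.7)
The first claim, $\r = \r_\jo$, follows directly from the BCFW structure. By \cref{cor:BCFW:share_face}, $\fo$ and $\bdf_\jo$ share a face of $\GD$, so~\eqref{eq:intro:O_preserves_edges} gives $|\Tll(\fo) - \Tll(\bdf_\jo)| = |\Oll(\fo) - \Oll(\bdf_\jo)|$. The normalization introduced in \cref{sec:BCFW:boundary} ensures $\Oll(\bdf_\io) = \Tll(\bdf_\io)$ and that $\Oll$ pointwise fixes the ray $\ray$, so $\Oll(\fo) = \Tll(\fo) = \ray(\r)$. Thus $\Tll(\fo)$ lies on the perpendicular bisector $\line_\jo$, and \cref{lemma:bisector} yields $\Tll(\fo) = \raypoint_\jo$, i.e.\ $\r = \r_\jo \in (0,\infty)$.

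For the second claim, fix $\s \neq \jo$ with $\io+2 \leq \s \leq \io+n-2$. By \cref{lemma:BCFW:rs_not_in_interval} applied with $\r = \r_\jo > 0$, the assertion $\r_\s \notin [0,\r_\jo]$ is equivalent to $\MandXX(\ray(\r_\jo),\bdf_\s) > 0$. A direct bispinor calculation, using the normalization identity $\RmomT = \RmomO$ together with~\eqref{eq:yy=Tcal_Ocal}, gives
\begin{equation*}
\bigl(\Tll(\bdf_\s) - \Tll(\fo),\ \Oll(\bdf_\s) - \Oll(\fo)\bigr) = \Psum_\s - \r_\jo \Rmom \in \C^2,
\end{equation*}
so that $4\MandXX(\ray(\r_\jo),\bdf_\s) = |\Tll(\bdf_\s) - \Tll(\fo)|^2 - |\Oll(\bdf_\s) - \Oll(\fo)|^2$. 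Hence the task reduces to showing that the origami map \emph{strictly} contracts the Euclidean distance from $\fo$ to $\bdf_\s$.

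I plan to carry this out by induction on $n$, using the BCFW decomposition $\G = \G_L \otimes \G_R$. The restriction of $\Tll$ to the cell complex of $\G_L \in \BCFWG_{k_L,n_L}$ (with $n_L < n$) is itself a t-embedding, whose boundary null polygon consists of the original null vectors $\Pmom_{\io+1},\ldots,\Pmom_\jo$ together with the bridge null vector $\Qmom' = \r_\jo \Rmom - \Psum_\jo$ from \cref{lemma:bisector}; this defines sub-data $(\la_L,\lat_L)$. For $\s$ in the left range $\io+2 \leq \s \leq \jo-1$, applying \cref{lemma:intro:Mand_vs_norm} to the sub-t-embedding identifies the bridge Mandelstam variable of $(\la_L,\lat_L)$ involving the new boundary face $\fo$ and $\bdf_\s$ with $\MandXX(\ray(\r_\jo),\bdf_\s)$; the non-bridge Mandelstam variables of $(\la_L,\lat_L)$ are inherited from $(\la,\lat)$ and are positive by hypothesis. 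The analogous construction handles $\G_R$ and the right range $\jo+1 \leq \s \leq \io+n-2$.

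The main obstacle will be establishing strict positivity of the bridge Mandelstam variables of $(\la_L,\lat_L)$ and $(\la_R,\lat_R)$: a direct appeal to \cref{thm:f_triang} for smaller $n$ is circular, since what we wish to prove is exactly the Mandelstam-positivity of the sub-data. My plan to circumvent this is a continuity/connectedness argument inside $\lalappf$: the weak inequality $\MandXX(\ray(\r_\jo),\bdf_\s) \geq 0$ holds throughout $\lalappf$ via the 1-Lipschitz shortening property of the origami along a geometric path inside the sub-t-embedding (whose existence uses injectivity of $\Tll$, from \cref{prop:imm_pos_t_emb}); strict inequality can be verified directly at a cyclically symmetric configuration in $\Momcs$ (cf.\ \cref{sec:cs_mom_ampl}); and a suitable connectedness statement for $\lalappf$, extending \cref{rmk:connected}, promotes the weak to the strict inequality everywhere. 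Formalizing this last step---in particular establishing the appropriate form of connectedness and ruling out equality on a positive-codimension locus---will be the most delicate technical point.
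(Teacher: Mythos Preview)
Your treatment of the first claim ($\r=\r_\jo$) matches the paper's argument. For the second claim, however, the paper takes a much more direct route that avoids induction, sub-t-embeddings, and connectedness entirely.

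The paper argues as follows. Since $\lalat$ is Mandelstam-positive, $\Tll$ is a t-embedding (by \cref{lemma:Jordan_curve,lemma:non_self_intersecting_Sigma}, not \cref{prop:imm_pos_t_emb}, which needs immanant-positivity). Now fix $\s\neq\jo$. If the straight segment $[\Tll(\fo),\Tll(\bdf_\s)]$ lies entirely in $\Tll(\GD)$, the $1$-Lipschitz property~\eqref{eq:intro:1_Lipschitz} gives $\MandXX(\fo,\bdf_\s)\geq0$; by \cref{cor:BCFW:share_face}, $\fo$ and $\bdf_\s$ do not share a face, so the segment crosses at least one crease and the inequality is strict. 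If the segment exits $\Tll(\GD)$, let $\Tll(p)$ be the first intersection with the boundary polygon. Writing $\Sone(a,b):=|\Tll(a)-\Tll(b)|-|\Oll(a)-\Oll(b)|$, the triangle inequality gives $\Sone(\fo,\bdf_\s)\geq\Sone(\fo,p)+\Sone(p,\bdf_\s)$. The first term is $\geq0$ since $[\Tll(\fo),\Tll(p)]\subset\Tll(\GD)$. For the second term, $p$ and $\bdf_\s$ are both on the boundary and cannot share a face (else the segment between them would be inside a convex face), so the Mandelstam-positivity of $\lalat$ gives $\Sone(p,\bdf_\s)>0$.

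Your inductive plan is not only more elaborate but has a real gap at the final step. Connectedness of $\lalappf$ together with ``$\geq0$ everywhere and $>0$ at one point'' does not yield ``$>0$ everywhere'': a nonnegative continuous (even real-analytic) function on a connected set can vanish on a proper subvariety. You acknowledge this, but offer no mechanism---no maximum principle, no algebraic irreducibility argument---to rule out such vanishing. The paper sidesteps this entirely: the strictness comes for free from the crease-crossing observation when the segment stays inside, and from the \emph{assumed} Mandelstam-positivity of the boundary data when it exits.
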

\begin{proof}
Since $\fo$ is adjacent to $\bdf_\jo$ and $\xdoutll(\fo) = \rayTO(\rcrit)$, we see from~\eqref{eq:intro:O_preserves_edges} that $|\ray(\rcrit) - \bdxT_\jo| = |\rayO(\rcrit) - \bdxO_\jo|$. By \cref{lemma:bisector}, we get $\rcrit = \r_{\jo}$. 
 Suppose that $\s\neq\jo$. 
 By \cref{lemma:BCFW:rs_not_in_interval}, in order to show $\r_\s\notin[0,\r_\jo]$, it suffices to show that $(\bdx_\s-\rayTO(\rcrit))^2=(\bdx_\s-\xdoutll(\fo))^2 > 0$. By \cref{cor:BCFW:share_face}, $\fo$ and $\bdf_\s$ do not share a face of $\GD$, so by \cref{lemma:Mbd=>face_Mnn}, we indeed have $(\bdx_\s-\xdoutll(\fo))^2 > 0$.
\end{proof}

\begin{proof}[{Proof of \cref{thm:f_triang} (injectivity)}]
We prove that the tiles $\{\lalappf\mid f\in\BCFWfkn\}$ satisfy \crefi{dfn:BCFW:tiling_amb}{Rtiling1}. 
Recall from \cref{rmk:BCFW:T_emb_unique} that it is enough to show that the t-embedding $\Tll$ of $\G$ (with unspecified edge weights) is uniquely recovered from $\lalat$. First, by construction, $\G$ determines $\jo$. By \cref{prop:BCFW:from_lalat_to_r}, we can recover $\xdoutll(\fo) = \rayTO(\r_{\jo})$. Let $(\G_L,\G_R)\in\BCFWGknL\times\BCFWGknR$ be such that $\G = \G_L\otimes\G_R$ as in \cref{dfn:BCFW}. The t-embedding $\xll$ restricts to t-embeddings $\xd_L,\xd_R$ of $\G_L,\G_R$, respectively, and each of $\xd_L,\xd_R$ \hasMbd by \cref{lemma:BCFW:rs_not_in_interval,prop:BCFW:from_lalat_to_r}.
By \cref{rmk:BCFW:little_group}, the boundary null polygons of 
$\xd_L(\GD_L)$ and $\xd_R(\GD_R)$ 
 determine the decorations $(\la_L,\lat_L)\in\lalakx_{k_L,n_L}$ and $(\la_R,\lat_R)\in\lalakx_{k_R,n_R}$ up to $\LGpm$-action. (Here, $(\la_L,\lat_L)\in\lalakx_{k_L,n_L}$ and $(\la_R,\lat_R)\in\lalakx_{k_R,n_R}$ by \cref{prop:t_imm=>TRIPLES} since we know that $\xd_L,\xd_R$ are t-immersions.) We recover the t-embeddings $\xd_L,\xd_R$ inductively from $(\la_L,\lat_L)$ and $(\la_R,\lat_R)$.
\end{proof}

\begin{proof}[{Proof of \cref{thm:f_triang} (disjointness)}]
We verify \crefi{dfn:BCFW:tiling_amb}{Rtiling2}. 
Suppose that $f,g\in\BCFWfkn$ are distinct and let $\G,\G'\in\BCFWGkn$ be the corresponding reduced graphs. Suppose for contradiction that $\lalat\in\lalappf\cap\lalappg$. Let $(k_L,n_L,k_R,n_R)$ (resp., $(k_L',n_L',k_R',n_R')$) and $\G = \G_L\otimes\G_R$ (resp., $\G' = \G'_L\otimes\G'_R$) be as in \cref{dfn:BCFW}.
 If $(k_L,n_L,k_R,n_R)=(k_L',n_L',k_R',n_R')$ then we must have either $\G_L\neq \G'_L$ or $\G_R\neq\G'_R$, in which case we continue our analysis inductively. Thus, we may assume that $(k_L,n_L,k_R,n_R)\neq(k_L',n_L',k_R',n_R')$. Suppose first that $n_L\neq n_L'$, and let $\jo:=\io+n_L-1$ and $\jop:=\io+n_L'-1$ as in \cref{rmk:jo}. Applying \cref{prop:BCFW:from_lalat_to_r} to $\jo$ and $\jop$, we get $\r_\jo,\r_\jop\in(0,\infty)$ with $\r_\jop\notin[0,\r_\jo]$ and $\r_\jo\notin[0,\r_\jop]$, a contradiction. Thus, $n_L = n_L'$ and $\jo = \jop$. By \cref{prop:BCFW:from_lalat_to_r}, the unique t-embeddings $\Tll,\Tllp$ of $\G,\G'$ with boundary data $\lalat$ satisfy $\xdoutll(\fo) = \xdoutllp(\fo)$. Let $\xd_L,\xd'_L$ be the restrictions of $\xll,\xllp$ to $\G_L,\G'_L$, respectively. Similarly to our proof of injectivity above, 
 we see that $\xd_L,\xd'_L$ have the same (\Mdash positive by \cref{lemma:BCFW:rs_not_in_interval}) boundary null polygons, and by \cref{lemma:BCFW:null}, the parameters $k_L=k_L'$ are uniquely determined by these boundary polygons. 
 Since $k_L+k_R = k_L'+k_R' = k+1$, we have $k_R = k_R'$. This contradicts our assumption that $(k_L,n_L,k_R,n_R)\neq(k_L',n_L',k_R',n_R')$.
\end{proof}

\subsection{\ORATITLE}\label{sec:BCFW:from_lalat_to_t_emb}
Fix $\lalat\in\lalapp$. Our goal is to show that if $\lalat$ is \emph{generic} in a certain sense (\cref{dfn:BCFW:generic,dfn:msgen}) then there exists a (necessarily unique) $f\in\BCFWfkn$ such that $\lalat\in\lalappf$. In other words, we want to find a t-embedding $\xd$ of some (necessarily unique) graph $\G\in\BCFWGkn$ such that $\Pll$ is the boundary polygon of $\xd(\GD)$. This will complete the proof of \cref{thm:f_triang} by verifying \crefi{dfn:BCFW:tiling_amb}{Rtiling3}. 

The index $\io$ was fixed during the construction of $\BCFWGkn$. Our first goal is to find the index~$\jo$. Recall from~\eqref{eq:Psum_r_dfn} that we have defined $\r_\s\in\R\cup\{\infty\}$ for each $\io+2\leq\s\leq\io+n-2$. 

\begin{lemma}\label{lemma:ray_inside_polygon}
There exists an index $\io+2\leq\s'\leq\io+n-2$ such that $\r_{\s'}\in(0,\infty)$ and such that $\rayT(\r_{\s'})$ is located inside the polygon $\PllT$.
\end{lemma}
\begin{proof}
By \cref{lemma:non_self_intersecting_Sigma}, the polygon $\PllT$ is simple. Since $\wind(\la) = (k-1)\pi$ and $\wind(\lat) = (k+1)\pi$, we have 
$\TURN(\PllT)=-2\pi$ 
by~\eqref{eq:wind_Pcurve_vs_wind_lalat}. 
 It follows from the definition~\eqref{eq:Rmom_dfn} of $\Rmom$ that for small $\r>0$, the point $\ray(\r)$ is located to the right of the portion $\bdxT_{\io-1}\to\bdxT_{\io}\to\bdxT_{\io+1}$ of $\PllT$. In other words, $\ray(\r)$ is inside $\PllT$ for small $\r>0$. Let $\r'\in(0,\infty)$ be the minimal value for which $\rayT(\r')$ belongs to the boundary of $\PllT$. Let $\io+2\leq \s\leq \io+n-1$ be such that 
 the line segment $[\bdxT_{\s-1},\bdxT_\s]$ contains the point $\rayT(\r')$. 
 Let $t'\in[0,1]$ be such that $(1-t')\bdxT_{\s-1}+t'\bdxT_\s = \rayT(\r')$.

For $t\in[0,1]$, let $\Pmom(t):=\Psum_{\s-1} + t\Pmom_\s$. Thus, $\PmomT(t') = \r'\RmomT$. Similarly to the proof of \cref{lemma:weakMbd=>boundary>0_or_=0}, $h_0(t):=\Pmom(t)^2$ is an affine linear function of $t$ which takes nonnegative values at $t\in\{0,1\}$, and at least one of these values must be positive since $\lalat$ is \Mdash positive. We claim that $h_0(t')>0$. Indeed, if $\io+3\leq\s\leq\io+n-2$ then $h_0(0)>0$ and $h_0(1)>0$ so $h_0(t)>0$ for all $t\in[0,1]$. 
 Otherwise, if $\s = \io+2$ (resp., $\s = \io+n-1$) then we get $h_0(0) = 0$ (resp., $h_0(1)=0$) but $t'>0$ (resp., $t'<1$), so we still have $h_0(t')>0$. 

Consider another affine linear function $h_{\r'}(t):=(\Pmom(t) - \r'\Rmom)^2$. 
 Since $\PmomT(t') = \r'\RmomT$ and $\Rmom$ is null, $|\r'\RmomO|=|\r'\RmomT|=|\PmomT(t')|>|\PmomO(t')|$ because $\Pmom(t')^2=h_0(t')>0$ as we showed above. Thus, $\PmomO(t') - \r'\RmomO\neq0$, and so $h_{\r'}(t')<0$. Since $h_{\r'}(t)$ is affine linear in $t$, we must have $h_{\r'}(t'')<0$ for some $t''\in\{0,1\}$. Let $\s':=\s-1$ if $t''=0$ and $\s':=\s$ if $t''=1$, so that $(\Psum_{\s'}-\r'\Rmom)^2<0$. We must have $\io+2\leq \s'\leq \io+n-2$ because $(\Psum_{\io+1}-\r'\Rmom)^2$ and $(\Psum_{\io+n-1}-\r'\Rmom)^2$ are identically zero as functions of $\r'$. 

Since $\Rmom$ is null, $(\Psum_{\s'}-\r'\Rmom)^2 = \Psum_{\s'}^2 - 2\r'\Psum_{\s'}\cdot \Rmom$. Thus, $\Psum_{\s'}^2 < 2\r'\Psum_{\s'}\cdot \Rmom$. Since $\lalat$ is \Mdash positive and $\io+2\leq \s'\leq \io+n-2$, we have $\Psum_{\s'}^2>0$. It follows that $\Psum_{\s'}\cdot \Rmom>0$. By~\eqref{eq:Psum_r_dfn}, we get $0<\r_{\s'}<\r'$. Thus, we have found $\io+2\leq \s'\leq \io+n-2$ such that $\r_{\s'}\in(0,\infty)$, and since $\r_{\s'}<\r'$, the point $\rayT(\r_{\s'})$ is inside the polygon $\PllT$.
\end{proof}

From now on, we let $\jo\in\Z$ be such that $\io+2\leq\jo\leq\io+n-2$ and $\r_\jo\in(0,\infty)$ is minimal possible. 
\begin{definition}\label{dfn:BCFW:generic}
We say that the pair $\lalat$ (resp., the polygon $\Pll$) is \emph{\ssgen} if $\r_\s\neq \r_\jo$ for all $\io+2\leq\s\leq\io+n-2$ with $\s\neq\jo$.
\end{definition}
We assume that $\lalat$ is \ssgen. Since $\r_\s\neq0$ because $\lalat\in\lalapp$, we have $\r_\s\notin[0,\r_\jo]$ for all $\s$ as in \cref{dfn:BCFW:generic}. We set $\xdout(\fo):=\raypointTO_\jo$. %
By \cref{lemma:BCFW:rs_not_in_interval}, we get the following.
\begin{corollary}\label{cor:BCFW:Mand_pos}
For all $\io+2\leq\s\leq\io+n-2$ with $\s\neq\jo$, we have $(\bdx_\s-\xdout(\fo))^2>0$. 
\end{corollary}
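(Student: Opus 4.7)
The corollary follows almost immediately by combining \cref{lemma:BCFW:rs_not_in_interval} with the definition of $\jo$ and the genericity assumption. My plan is as follows.

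First, I would apply \cref{lemma:BCFW:rs_not_in_interval} with $\r:=\r_\jo\in(0,\infty)$: since $\lalat\in\lalapp$, the lemma says that $\MandXX(\ray(\r_\jo),\bdf_\s)>0$ is equivalent to $\r_\s\notin[0,\r_\jo]$. By construction $\Tcal(\fo)=\raypoint_\jo$, so $\MandXX(\fo,\bdf_\s)=\MandXX(\ray(\r_\jo),\bdf_\s)$, and the problem reduces to verifying $\r_\s\notin[0,\r_\jo]$ for every $\s\neq\jo$ in the stated range.

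Next, I would rule out the endpoints. The equality $\r_\s=\r_\jo$ is excluded directly by the genericity condition in \cref{dfn:BCFW:generic}. The equality $\r_\s=0$ is excluded by Mandelstam-positivity of $\lalat$: from the defining formula $\r_\s=\Psum_\s^2/(2\Psum_\s\cdot\Rmom)$ and the identity $\Psum_\s^2=4\Mand_{\io,\s}(\la,\lat)$, having $\r_\s=0$ would force $\Mand_{\io,\s}(\la,\lat)=0$, contradicting $\lalat\in\lalapp$.

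Finally, for the interior of the interval, I would use the minimality in the choice of $\jo$. By assumption $\jo$ was chosen so that $\r_\jo\in(0,\infty)$ is the minimum positive value attained among all $\r_\s$ with $\io+2\leq\s\leq\io+n-2$. Therefore if $\r_\s\in(0,\infty)$, then $\r_\s\geq\r_\jo$, and combined with the genericity strict inequality $\r_\s>\r_\jo$ holds. On the other hand, if $\r_\s\in(-\infty,0)\cup\{\infty\}$, then trivially $\r_\s\notin[0,\r_\jo]$ once we have excluded $\r_\s=0$. In every case $\r_\s\notin[0,\r_\jo]$, and \cref{lemma:BCFW:rs_not_in_interval} yields the desired strict positivity.

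There is no real obstacle here; the statement is essentially a bookkeeping corollary of \cref{lemma:BCFW:rs_not_in_interval} combined with the two defining properties of $\jo$ (positivity-minimality and genericity). The substantive inputs---Mandelstam-positivity, the geometric meaning of the ray $\ray$ through $\Tcal(\bdf_\io)$ intersecting the perpendicular bisectors $\line_\s$ at the points $\raypoint_\s$, and the nullity of $\Qmom_\s$---have all been established earlier, so this step is purely a logical assembly used to initiate the recursive reconstruction of $\Tcal$ in the remainder of \cref{sec:BCFW:from_lalat_to_t_emb}.
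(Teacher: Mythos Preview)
Your proposal is correct and follows the same route as the paper: the paper observes that $\r_\s\neq0$ by Mandelstam-positivity, invokes genericity and the minimality of $\r_\jo$ to conclude $\r_\s\notin[0,\r_\jo]$, sets $\Tcal(\fo):=\raypoint_\jo$, and then applies \cref{lemma:BCFW:rs_not_in_interval}. Your write-up is slightly more explicit about the case split but the argument is identical.
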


Let $n_L:=\jo-\io+1$, $n_R:=\io-\jo+n+1$, and $\r:=\r_\jo$. Let $\Phmom_\io:=\xdout(\fo)-\bdx_{\io-1}$, $\Phmom_{\io+1}:=\bdx_{\io+1}-\xdout(\fo)$, and $\Qhmom:=\Qmom_\jo$. All three of these vectors are null and nonzero. We would like to apply the induction hypothesis to construct a t-embedding $\xd_L(\GD_L)$ (resp., $\xd_R(\GD_R)$) with \Mdash positive boundary null polygon $\AcurveL=(\bdx_{\io+1},\dots,\bdx_{\jo},\xdout(\fo))$
 (resp., $\AcurveR=(\bdx_{\jo},\dots,\bdx_{\io-1},\xdout(\fo))$); see \cref{fig:ABC}. 
We make the following inductive definition.
\begin{definition}\label{dfn:msgen}
We say that $\lalat$ (resp., $\Pll$) is \emph{\msgen} if it is \ssgen and the null polygons $\AcurveL,\AcurveR$ are also \msgen. 
For the base case $(k,n)\in\{(1,3),(2,3)\}$, any $\lalat$ is considered \msgen.
\end{definition}
\noindent In other words, $\lalat$ is \msgen if one never encounters an equality $\r_{\jo}=\r_\s$ for some $\s\neq\jo$ throughout the entire branch of the \ora starting at $\lalat$. 
The notion of being \ssgen depends on the choice ($\io$ and a white-black vs black-white bridge) 
that was made in \crefi{dfn:BCFW}{step_BCFW1} during the corresponding step of the BCFW recursion. 
 Similarly, the notion of being \msgen depends on the choices in \crefi{dfn:BCFW}{step_BCFW1} 
throughout the entire branch of the BCFW recursion terminating in $\G\in\BCFWGkn$. 
All of these choices have been fixed once and for all when we fixed the collection $\BCFWGkn$. 

\begin{remark}\label{rmk:msgen}
By~\eqref{eq:Psum_r_dfn}, for $\io+2\leq\jo,\s\leq\io+n-2$, the conditions $\r_{\jo}\in(0,\infty)$ and $\r_\s\not\in[0,\r_{\jo}]$ translate into
\begin{equation}\label{eq:ssgen>0}
 \Psum_{\jo}\cdot \Rmom > 0 \quad\text{and}\quad
 \Psum_{\s}^2 \cdot \Psum_{\jo}\cdot \Rmom - \Psum_{\jo}^2 \cdot \Psum_\s\cdot \Rmom >0 
\quad\text{for $\s\neq\jo$}.
\end{equation}
The left-hand sides of these inequalities are regular functions on the partial flag variety $\lalats\cong\Fln(2,n-2)$ which are clearly not identically zero. 
 The open subset of \ssgen elements $\lalat\in\lalats$ contains the non-vanishing locus of these functions. Thus, this subset is nonempty 
 (and therefore dense). Similarly, the subset of \msgen elements $\lalat\in\lalats$ is the non-vanishing locus of a collection of rational functions defined recursively. The fact that the resulting Zariski open subset is nonempty (and therefore dense) follows because for each $\fap\in\BCFWfkn$, $C\in\Ptp_\fap$, and $\LaLat\in\LaLaimmnn$, the pair $\lalat:=\PhiLL(C)$ is \msgen by \cref{prop:BCFW:from_lalat_to_r}. 
See also \cref{rmk:msgen_exist}.
\end{remark}

\begin{proof}[{Proof of \cref{thm:f_triang} (surjectivity)}]
We check \crefi{dfn:BCFW:tiling_amb}{Rtiling3}. Assume that $\lalat\in\lalapp$ is \msgen; by \cref{rmk:msgen}, the set of such elements is dense in $\lalapp$.

Let $\Sig$ be a graph with vertices $\{\bdf_1,\bdf_2,\dots,\bdf_n,\fo\}$ and edges $\{\bdf_{\s},\bdf_{\s-1}\}$ for $\s\in\brn$ together with $\{\fo,\bdf_{\io-1}\},\{\fo,\bdf_\io\},\{\fo,\bdf_{\io+1}\},\{\fo,\bdf_\jo\}$. Recall that we set $\xdout(\fo):=\raypointTO_\jo$. Thus, we have a map $\xd=(\Tcal,\Ocal):\Sig\to\Rdd$. 
By \cref{cor:BCFW:Mand_pos}, this map satisfies the assumptions of \cref{lemma:non_self_intersecting_Sigma}. (Note that $\QhmomT\neq0$ since $\Mxxll\io\jo>0$.) Thus, $\Tcal$ is an embedding of $\Sig$ into $\C$. By \cref{lemma:ray_inside_polygon}, the point $\xTout(\fo)$ is located inside the polygon $\PllT$. 
 It follows that the polygons $\AcurveTL$ and $\AcurveTR$ also satisfy~\eqref{eq:sum_arg_pmom_-2pi}. 

\begin{figure}
 \includegraphics[scale=0.97]{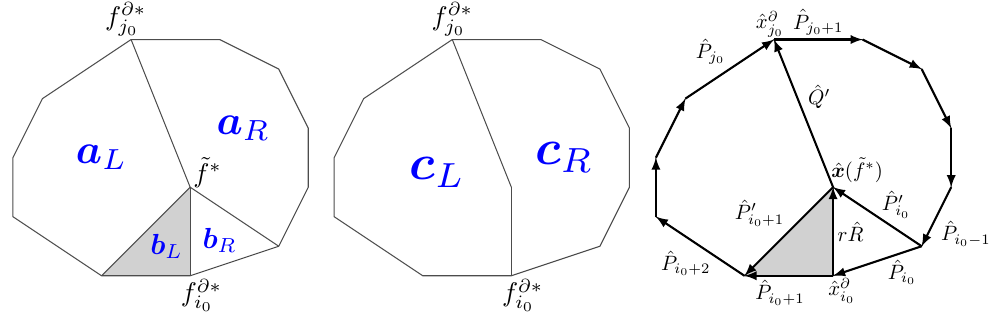}
 \caption{\label{fig:ABC} The null polygons $\AcurveL,\AcurveR,\BcurveL,\BcurveR,\CcurveL,\CcurveR$ projected to the kami plane.}
\end{figure}

By \cref{cor:BCFW:Mand_pos}, the null polygons $\AcurveL,\AcurveR$ satisfy the assumptions of \cref{lemma:BCFW:null}. Thus, we obtain integers $k_L,k_R$ and decorations $\lalatL\in\lalakx_{k_L,n_L},\lalatR\in\lalakx_{k_R,n_R}$ such that $\AcurveL$ (resp., $\AcurveR$) is the null polygon associated to $\lalatL$ (resp., $\lalatR$). 

We check that $(k_L,n_L,k_R,n_R)$ satisfies conditions~\itemref{BCFW1}--\itemref{BCFW3} in \cref{dfn:BCFW}. Introduce null polygons 
$\BcurveL=(\bdx_{\io},\bdx_{\io+1},\xdout(\fo))$, 
 $\BcurveR=(\bdx_{\io-1},\bdx_{\io},\xdout(\fo))$, 
 $\CcurveL=(\bdx_{\io},\bdx_{\io+1},\dots,\bdx_{\jo},\xdout(\fo))$, and
 $\CcurveR=(\bdx_{\jo},\dots,\bdx_{\io-1},\bdx_{\io},\xdout(\fo))$;
 cf. \cref{fig:ABC}. 
 Thus, we have defined seven null polygons $\AcurveL,\AcurveR,\BcurveL,\BcurveR,\CcurveL,\CcurveR,\Pll$ in $\Rdd$ whose projections to the kami plane are all oriented clockwise, and satisfy the homological identities 
\begin{equation}\label{eq:BCFW:curve_H1}
 \AcurveL+\BcurveL=\CcurveL,\quad
 \AcurveR+\BcurveR=\CcurveR, \quad\text{and}\quad 
\CcurveL+\CcurveR=\Pll
\quad\text{in $H_1(\Sig,\Z)$.}
\end{equation}

For a nonzero complex number $z\in\Cast$, define $\argp(z)\in[0,\pi)$ by setting $\argp(z):=\arg(z)$ if $\arg(z)\in[0,\pi)$ and $\argp(z):=\arg(-z)$ otherwise, where $\arg=\arg_{(-\pi,\pi]}$ as before.
 Let $\Dcurve$ be one of the above seven null polygons, and let $\bdf$ be a boundary vertex of $\Dcurve$ incident to null vectors $\Pmom,\Qmom$ such that $\PmomT,\QmomT$ are oriented clockwise around the boundary of $\Dcurve$. Choose some decorations of $\Pmom,\Qmom$ and define %
\begin{equation*}%
 \sumbD(\bdf):=\argp(\Qmomy/\Pmomy), \quad%
 \sumwD(\bdf):=\argp(\Qmomyt/\Pmomyt), \quad\text{and}\quad
 \sumD(\bdf):=\arg(\QmomT/\PmomT)+\pi.
\end{equation*}
By \cref{rmk:BCFW:little_group}, these quantities do not depend on the decorations of $\Pmom,\Qmom$. We claim that if $\Dcurve$ is one of the above seven null polygons and $\bdf$ is any one of its boundary vertices then 
\begin{equation}\label{eq:BCFW:sumbD+sumwD=sumD}
 \sumbD(\bdf) + \sumwD(\bdf) = \sumD(\bdf).
\end{equation}
When $\Pmom\cdot \Qmom>0$,~\eqref{eq:BCFW:sumbD+sumwD=sumD} follows from~\eqref{eq:wind_Pcurve_vs_wind_lalat}. Suppose now that $\Pmom\cdot \Qmom=0$ (which is the only other possibility). Recall that $\Sig$ satisfies the assumptions of \cref{lemma:non_self_intersecting_Sigma}. Let $\bdf_-$ (resp., $\bdf_+$) be the boundary vertex of $\Dcurve$ immediately before (resp., after) $\bdf$ in clockwise order. Then $\{\bdf_-,\bdf_+\}$ must be an edge of $\Sig$. Thus, the vertices $\{\bdf_-,\bdf,\bdf_+\}$ form a triangle of $\Sig$ embedded by $\T$ into the plane. Since $\DcurveT$ is embedded clockwise, we get $\sumD(\bdf)\in(0,\pi)$. 
 We have $\QmomT/\PmomT = (\Qmomy\Qmomyt)/(\Pmomy\Pmomyt)$ by definition, so the difference $\sumbD(\bdf) + \sumwD(\bdf) - \sumD(\bdf)$ is divisible by $\pi$. Note also that in order to have $\Pmom\cdot \Qmom=0$, we must have either $\sumbD(\bdf)=0$ or $\sumwD(\bdf)=0$, so $\sumbD(\bdf)+\sumwD(\bdf)\in[0,\pi)$. Because $\sumD(\bdf)\in(0,\pi)$, we get that $\sumbD(\bdf) + \sumwD(\bdf) - \sumD(\bdf)=0$. This shows~\eqref{eq:BCFW:sumbD+sumwD=sumD}.

Let $\Dcurve_1+\Dcurve_2=\Dcurve$ be one of the homological identities listed in~\eqref{eq:BCFW:curve_H1}. Let $\bdf$ be a common boundary vertex of $\Dcurve_1,\Dcurve_2,\Dcurve$. (Thus, $\bdf\in\{\bdf_{\io-1},\bdf_\io,\bdf_{\io+1},\bdf_\jo,\fo\}$). We claim that
\begin{equation}\label{eq:BCFW:D1_D2_D}
 \sumbDa(\bdf) + \sumbDb(\bdf) = \sumbD(\bdf) \quad\text{and}\quad
 \sumwDa(\bdf) + \sumwDb(\bdf) = \sumwD(\bdf).
\end{equation}
Indeed, we have $\sumDa(\bdf)+\sumDb(\bdf) = \sumD(\bdf)$ because $\Sig$ is embedded. 
The equations in~\eqref{eq:BCFW:D1_D2_D} hold up to an integer multiple of $\pi$. More precisely, since each term in~\eqref{eq:BCFW:D1_D2_D} belongs to $[0,\pi)$, each of the differences $\sumbDa(\bdf) + \sumbDb(\bdf) - \sumbD(\bdf)$ and $\sumwDa(\bdf) + \sumwDb(\bdf) - \sumwD(\bdf)$ belongs to $\{0,\pi\}$. By~\eqref{eq:BCFW:sumbD+sumwD=sumD}, the sum of these differences is $\sumDa(\bdf)+\sumDb(\bdf) - \sumD(\bdf)=0$. Thus, each difference must be zero. 
Applying a similar argument to the vertex $\fo$, we get 
\begin{equation}\label{eq:BCFW:sum_fo_pi}
 \sumbCL(\fo) + \sumbCR(\fo) = \pi \quad\text{and}\quad
 \sumwCL(\fo) + \sumwCR(\fo) = \pi.
\end{equation}

For each null polygon $\Dcurve$, define $k_{\Dcurve}$ to be the integer determined by \cref{lemma:BCFW:null}, so that the sum of $\sumbD(\bdf)$ over all boundary vertices $\bdf$ of $\Dcurve$ equals $(k_{\Dcurve}-1)\pi$. We see that $k_{\BcurveL} = 2$, $k_{\BcurveR} = 1$, $k_{\AcurveL} = k_L$, $k_{\AcurveR} = k_R$, and $k_{\Pll} = k$. If $n_L=3$ (i.e., if $\AcurveL$ is a triangle) then it must be of different color than $\BcurveL$ because $\Mxxll\io\jo>0$. Thus, $\AcurveL$ must be white, and so $k_{\AcurveL}=1$. 
 Similarly, if $n_R=3$ then $\Mxxll\io\jo>0$ implies that $k_{\AcurveR}=2$. By~\eqref{eq:BCFW:D1_D2_D}, $k_{\AcurveL}+k_{\BcurveL}=k_{\CcurveL}+1$, $k_{\AcurveR}+k_{\BcurveR}=k_{\CcurveR}+1$, and (taking~\eqref{eq:BCFW:sum_fo_pi} into account) $k_{\CcurveL}+k_{\CcurveR}=k_{\Pll}+2$. This implies that $k_L+k_R = k + 1$.

We have checked that the tuple $(k_L,n_L,k_R,n_R)$ satisfies~\itemref{BCFW1}--\itemref{BCFW3}. Since $\lalatL\in\lalakx_{k_L,n_L}$ and $\lalatR\in\lalakx_{k_R,n_R}$ are \Mdash positive by \cref{cor:BCFW:Mand_pos}, we get $\lalatL\in\lalappx_{k_L,n_L}$ and $\lalatR\in\lalappx_{k_R,n_R}$. By the induction hypothesis, there exists a graph $\G_L\in\BCFWGknL$ (resp., $\G_R\in\BCFWGknR$) and a t-embedding $\T_L$ (resp., $\T_R$) of $\G_L$ (resp., $\G_R$) with boundary data $\lalatL$ (resp., $\lalatR$). 
Inserting these t-embeddings inside $\AcurveTL$ and $\AcurveTR$, we obtain an embedding $\T:\SuppGD\to\C$, where $\G:=\G_L\otimes\G_R\in\BCFWGkn$. To check that $\T$ is a t-embedding, we verify conditions~\itemref{intro:t_imm_straight_convex}--\itemref{intro:t_imm_angles_bdry} in \cref{dfn:intro:t_imm}. \itemref{intro:t_imm_straight_convex} is satisfied by construction. \itemref{intro:t_imm_gauge} is vacuous since we define $\wt:=\wtT$. The \Kawangle condition~\itemref{intro:t_imm_angles_int} holds for $\fo$ by~\eqref{eq:BCFW:D1_D2_D}--\eqref{eq:BCFW:sum_fo_pi}, and holds for all other interior vertices of $\xT(\GD)$ by induction. The boundary angle condition~\itemref{intro:t_imm_angles_bdry} holds for $\T$ by~\eqref{eq:angle_sum_arg_rat} since $\lalat\in\lalak$.
 To show~\itemref{intro:t_imm_orientation}, observe that $\T$ is clearly orientation-preserving on the triangular faces $\BcurveTL,\BcurveTR$ of $\xT(\GD)$; all other triangular faces of $\xT(\GD)$ belong to either $\T_L(\GD_L)$ or $\T_R(\GD_R)$, so $\T$ is also orientation-preserving on them by induction. However, to complete the check of~\itemref{intro:t_imm_orientation} for $\T$, 
 we need to omit degenerate bigonal faces from $\xT(\GD)$; cf. \cref{rmk:BCFW_reduced}. Since $\T$ is orientation-preserving on each triangular face of $\GD$, it will be orientation-preserving on each polygonal face $\v\in\Vint$ of the contracted graph $\G$. The fact that the image $\xT(\v)$ is a convex $d$-gon (where $d=\degG(\v)$) follows from the angle conditions~\itemref{intro:t_imm_angles_int}--\itemref{intro:t_imm_angles_bdry}. By~\itemref{intro:t_imm_angles_bdry}, the angle of $\xT(\v)$ at any boundary vertex $\bdxT_i$ belongs to $(0,\pi)$. 
By construction, $\sumbBL(\fo),\sumwAL(\fo),\sumbAR(\fo),\sumwBR(\fo)>0$, so by~\itemref{intro:t_imm_angles_int}, the angle of $\xT(\v)$ at $\xT(\fo)$ cannot be equal to $\pi$ and so belongs to $(0,\pi)$. By induction, the angle of $\xT(\v)$ at every other interior vertex $\xT(\f)$ incident to $\xT(\v)$ also belongs to $(0,\pi)$.
Thus, $\xT(\v)$ is indeed a convex $d$-gon. 
This finishes the proof of \crefi{dfn:BCFW:tiling_amb}{Rtiling3}, completing the proof of \cref{thm:f_triang}. %
\end{proof}

\section{T-duality}\label{sec:TREE}
The goal of this section is to introduce the \emph{T-duality map} $\AAshift:\MPkn\to\AAkn$ from the ambient momentum amplituhedron $\MPkn$ defined in~\eqref{eq:intro:M_amb_dfn} to the \emph{ambient \mta} $\AAkn$ defined in~\eqref{eq:TREE:AAkn_dfn} below. It is based on the magic projector $\Qla$ (\cref{ssec:MOM:magic_Qla_mom_ampl}). While we focus on the $m=4$ amplituhedron in this paper, we note that an $m=2$ version of T-duality was studied in~\cite{LPW,PSBW}, and some properties of the $m=4$ T-duality map $C\mapsto C\cdot \Qla$ on positroid cells 
(e.g., \cref{lemma:la_vs_Bounda_Boundb}) 
 have also been explored in~\cite[Section~5.2]{LPW}.%

In~\justpaptwo, we upgrade T-duality studied in this section to the level of (not necessarily reduced) weighted planar bipartite graphs and loop amplituhedra.

\subsection{T-duality for ambient amplituhedra}

Recall the definition of the ambient momentum amplituhedron $\MPkn$ and its tiles $\MPf$ from~\eqref{eq:intro:M_amb_dfn} and~\eqref{eq:lalappf_dfn}. 
 By \cref{lemma:la_vs_Bounda_Boundb}, $\MPf$ is empty unless $\fap\in\BND_{2,0}(k,n)$. 
Next, we give the ``ambient'' version of the sign flip definition~\cite{AHTT} of the \emph{\mta} of~\cite{AHT};
see also~\cite[Definition~11.9]{PSBW} for an $m=2$ version.
\begin{definition}
For a nonzero vector $v=(v_1,v_2,\dots,v_d)\in\R^d\setminus\{0\}$, let $\var(v)$ be the number of times the sequence $v_1,v_2,\dots,v_d$ changes sign (omitting all zero entries in $v$). 
\end{definition}
\noindent For example, $\var(0,1,2,0,-1,-3,2,0,5,-1) = \var(1,2,-1,-3,2,5,-1) = 3$. It is easy to see that if $\la\in\Gr(2,n)$ satisfies $\brla<i,i+1> >0$ for all $i\in\brn$ then 
\begin{equation}\label{eq:TREE:wind_vs_varx}
 \wind(\la) = (\varx(\la) +1)\pi, \quad\text{where}\quad \varx(\la):=\var\left(\brla<1,2>,\brla<1,3>,\dots,\brla<1,n>\right).
\end{equation}

Let $V=\mat[V_1|V_2|\cdots|V_n]\in\Gr(4,n)$. 
 Following \cref{sec:backgr:cyc}, we set $V_{i+n}=(-1)^{k-1}V_i$ for all $i\in\Z$. For $a,b,c,d\in\brn$, we denote $\brV[a,b,c,d]:=\det\mat[V_a|V_b|V_c|V_d]$. Let
\begin{equation*}%
 \varxxx(V):=\var\left(\brV[1,2,3,4],\brV[1,2,3,5],\dots,\brV[1,2,3,n]\right).
\end{equation*}

\begin{definition}\label{dfn:TREE:AAkn}
The \emph{($m=4$ tree) ambient \mta} is given by
\begin{equation}\label{eq:TREE:AAkn_dfn}
 \AAkn := \left\{V\in\Gr(4,n)\middle| 
\text{
\begin{tabular}{cc}
$\brV[i,i+1,j,j+1] >0$ for all $i+2\leq j\leq i+n-2$, and\\
$\varxxx(V)=k-2$
\end{tabular}
} \right\}.
\end{equation}
For $\ddfperm\in\Boundxx(k-2,n)$, we set 
\begin{equation}\label{eq:TREE:AAddf_dfn}
 \AAddf := \{V\in\AAkn\mid V\subset \ddC^\perp \text{ for some $\ddC\in\Ptp_{\ddfperm}$}\}.
\end{equation}
\end{definition}

We have $\dim\MPkn = 4n-12$ and $\dim\AAkn = 4n-16$. Consider maps
\begin{align}
\label{eq:AAshift_dfn}
\AAshift:\MPkn &\to \Gr(4,n),%
& \lalat&\mapsto \Qlapp(\lat);\\
\label{eq:AAiso_dfn}
\AAiso:\MPkn&\to\Fln(2,4), & \lalat
&\mapsto (\la,\Qlapp(\lat));
\end{align}
\noindent cf. \cref{dfn:TREE:Qlapp}. 
Thus, $\AAshift = \AAforg\circ\AAiso$ factors through $\AAiso$, where $\AAforg:\Fln(2,4)\to\Gr(4,n)$ is the forgetful map sending $(\la,V)\mapsto V$.

Recall that the subset $\lak\subset\Gr(2,n)$ was defined in~\eqref{eq:lak_latk}. 
Consider intermediate subsets
\begin{equation*}%
 \MAkn:=\{(\la,V)\in \lak\times \AAkn\mid \la\subset V\} \quad\text{and}\quad
 \MAddf:=\{(\la,V)\in \lak\times \AAddf\mid \la\subset V\}.
\end{equation*}

Over the course of the next several subsections, we will show the following result.
\begin{theorem}[T-duality for ambient amplituhedra] \label{thm:AAshift}
Let $2\leq k\leq n-2$. 
\begin{enumerate}[label=(\arabic*)]
\item\label{AAshift_iso} The map $\AAiso:\MPkn\xrasim\MAkn$ is a homeomorphism. It restricts to a homeomorphism $\MPf\xrasim \MAddf$ for all $\fap\in\Bounda$. 
\item\label{AAshift_forg} The map $\AAforg:\MAkn\to\AAkn$ is open and surjective. 
It restricts to a surjective open map $\MAddf\to\AAddf$ for each $\ddfperm\in\BND(k-2,n)$.
\end{enumerate}
\end{theorem}

\subsection{\MdashTITLE positivity vs Pl\"ucker-positivity}\label{ssec:TREE:Mand_vs_Pluck}
Fix $\la\in\lak$ and introduce auxiliary subsets
\begin{align*}%
 \MPnla &:= \{\lat\in\Gr(2,\lap)\mid \lalat\text{ is \Mdash positive}\};\\
 \AAnla &:= \{V\in\Grsupla(4,n)\mid \brV[i,i+1,j,j+1] >0\text{ for all $i+2\leq j\leq i+n-2$}\}.
\end{align*}
\noindent Note that we do not yet impose any sign variation conditions on either $\lat$ or $V$.

\begin{proposition}\label{prop:MPnla_AAnla_homeo}
The operators $\Qlapp$ and $\Qla$ 
restrict to 
 mutually inverse homeomorphisms between $\MPnla$ and $\AAnla$.
\end{proposition}

Before we give a proof, we explain how to identify the Minkowski space $\R^{2,2}$ with the space $\Matddr$ of real $2\times 2$ matrices. 
Given $\xs=(\xsT,\xsO)\in\R^{2,2}$, we define a matrix $\xM_{\xs}$ by 
\begin{equation}\label{eq:x_vs_ap_vs_am}
\xM_{\xs}:= \begin{pmatrix}
\Re(\ap) & \Im(\am) \\ -\Im(\ap) & \Re(\am)
\end{pmatrix},\quad\text{where}\quad
\ap:=\frac12(\xsT + \xsO), \quad \am:=\frac12(\xsT-\xsO); \quad\text{conversely,}
\end{equation}
\begin{equation}\label{eq:xM_to_x}
 \xM_{\xs}=\begin{pmatrix}
a & b\\c & d
\end{pmatrix} 
\quad\Longrightarrow\quad \xsT = (a+d) + \I(b-c) \quad\text{and}\quad \xsO = (a-d) - \I(b+c).
\end{equation}
An immediate consequence of this definition is that
\begin{equation}\label{eq:det_xM_vs_Pmom^2}
 \det \xM_{\xs} = \Re(\ap)\Re(\am) + \Im(\ap)\Im(\am) = \frac14 \left(|\xsT|^2 - |\xsO|^2\right) = \frac14 \xs^2.
\end{equation}

\noindent For a decorated polygon $\Pll=(\bdx_1,\bdx_2,\dots,\bdx_n)$ with $\Pmom_i=\bdx_i - \bdx_{i-1}$ as in \cref{dfn:Pll}, we denote $\bdxM_i:=\xM_{\bdx_i}$. Thus,~\eqref{eq:Pmom_vs_yyt} translates into 
\begin{equation}\label{eq:TE:decor_dfn}
 \bdxM_i-\bdxM_{i-1} = \xM_{\Pmom_i} = \lat_i \cdot \la_i^T \quad\text{for all $i\in\brn$}.
\end{equation}

We denote $\CMI := \begin{pmatrix}
0 & 1 \\ -1 & 0
\end{pmatrix}$.

\begin{proof}[Proof of \cref{prop:MPnla_AAnla_homeo}]
By \cref{lemma:TREE:magic_homeo}, it 
 remains to relate the Mandelstam variables $\Mijll$ to the Pl\"ucker coordinates $\brV[i,i+1,j,j+1]$. 
Given $\lat\in\MPnla$, we temporarily denote the matrices $\bdxM_i$ in~\eqref{eq:TE:decor_dfn} by $\bdxM_i(\lat)$. 
We assume that they are in \emph{normal form}: $\bdxM_1(\lat)=\bzero_{2\times2}$. 
By~\eqref{eq:det_xM_vs_Pmom^2}, 
we get $\frac14\Mijll = \det(\bdxM_{i}(\lat) - \bdxM_{j}(\lat))$. 

Given $V\in\Grsupla(4,n)$, we choose an arbitrary $2$-plane $\mu\in\Gr(2,V)$ complementary to $\la$ so that $V = \la\oplus \mu$. Thus, $\mu$ is defined up to adding a multiple $M\cdot \la$ of $\la$ for $M\in\Matddr$. 
For $i\in\brn$, let 
$\bdxM_i(V,\mu):=\mat[\mu_i|\mu_{i+1}] \cdot (\CMI\cdot \mat[\la_i|\la_{i+1}])^{-1}$.
 These matrices are defined up to an additive constant: changing $\mu\mapsto\mu+M\cdot \la$ corresponds to $\bdxM_i(V,\mu)\mapsto\bdxM_i(V,\mu) + M\cdot \CMI^{-1}$. We assume that they are in \emph{normal form}: $\mat[\mu_1|\mu_2] = \bdxM_1(V,\mu)=\bzero_{2\times2}$. In this case, $\mu=\LVVp\cdot V$ is uniquely determined by $V$ so we write $\bdxM_i(V):=\bdxM_i(V,\mu)$. 

We claim that if $V = \Qlapp(\lat)$ then $\bdxM_i(\lat) = \bdxM_i(V)$ for all $i\in\brn$. Indeed, because they are both in normal form, $\bdxM_1(\lat) = \bdxM_1(V) = \bzero_{2\times2}$. It remains to check that $\bdxM_i(\lat) - \bdxM_{i-1}(\lat) = \bdxM_i(V) - \bdxM_{i-1}(V)$ for all $i\in\brn$. We have
\begin{align*}
\bdxM_i(V) - \bdxM_{i-1}(V) 
&= \mat[\mu_i|\mu_{i+1}]\cdot (\CMI\cdot \mat[\la_i|\la_{i+1}])^{-1} 
 - \mat[\mu_{i-1}|\mu_{i}]\cdot (\CMI\cdot \mat[\la_{i-1}|\la_{i}])^{-1} \\
&= \frac1{\brla<i,i+1>}\mat[\mu_i|\mu_{i+1}]\cdot \mat[-\la_{i+1}|\la_i]^T 
 -\frac1{\brla<i-1,i>}\mat[\mu_{i-1}|\mu_i]\cdot \mat[-\la_{i}|\la_{i-1}]^T\\
 &=\frac{\left(\brla<i-1,i>\cdot \mu_{i+1} + \brla<i,i+1>\cdot \mu_{i-1}\right)\cdot \la_i^T 
- \mu_i \cdot \left(\brla<i-1,i>\cdot \la_{i+1}^T + \brla<i,i+1>\cdot \la_{i-1}^T\right)}{\brla<i-1,i>\brla<i,i+1>}.
\end{align*}
We claim that this equals $\lat_i\cdot \la_i^T$. Indeed, let us add
 $\brla<i+1,i-1>\cdot \mu_i\cdot \la_i^T$ to both terms in the numerator. After this, the first term becomes $\left(\brla<i-1,i>\cdot \mu_{i+1} + \brla<i,i+1>\cdot \mu_{i-1} + \brla<i+1,i-1>\cdot \mu_i\right)\cdot \la_i^T = \brla<i-1,i>\cdot \brla<i,i+1>\cdot \lat_i\cdot \la_i^T$ since $\lat = V\cdot \Qla = \mu\cdot \Qla$ for $V=\la\oplus\mu$ with $\la=\Ker\Qla$; cf.~\eqref{eq:intro:Qla}. The second term becomes 
$\mu_i \cdot \left(\brla<i-1,i>\cdot \la_{i+1}^T + \brla<i,i+1>\cdot \la_{i-1}^T +\brla<i+1,i-1>\cdot \la_i^T\right)$; this expression is zero by Cramer's rule. 
 Thus, $\bdxM_i(V) - \bdxM_{i-1}(V)$ agrees with $\bdxM_i(\lat) - \bdxM_{i-1}(\lat)$ given by~\eqref{eq:TE:decor_dfn}. 
From now on, we denote $\bdxM_i(\lat)=\bdxM_i(V)$ simply by $\bdxM_i$. 

Next, we claim that
\begin{equation}\label{eq:brV_vs_Mand}
 \brV[i,i+1,j,j+1] = \frac14\brla<i,i+1> \brla<j,j+1> \Mijll \quad\text{for all $i+2\leq j\leq i+n-2$.}
\end{equation}
Indeed, since $\mat[\mu_i|\mu_{i+1}] = \bdxM_i\cdot \CMI\cdot \mat[\la_i|\la_{i+1}]$,
\begin{equation}\label{eq:TREE:brV_calc_brla}
 \brV[i,i+1,j,j+1] = \det \begin{pmatrix}
\la_i & \la_{i+1} & \la_j & \la_{j+1} \\
\mu_i & \mu_{i+1} & \mu_j & \mu_{j+1} 
 \end{pmatrix} = \det \begin{pmatrix}
\Id_2 & \Id_2 \\
\bdxM_i\CMI & \bdxM_j\CMI
\end{pmatrix} \cdot \brla<i,i+1> \brla<j,j+1>.
\end{equation}
We have $\det \begin{pmatrix}
\Id_2 & \Id_2 \\
\bdxM_i\CMI & \bdxM_j\CMI
\end{pmatrix} 
 = \det(\bdxM_j-\bdxM_i)$ since $\det(\CMI) = 1$. By~\eqref{eq:det_xM_vs_Pmom^2}, $\det(\bdxM_j-\bdxM_i) = \frac14\Mijll$. This shows~\eqref{eq:brV_vs_Mand}. 
By assumption, $\la$ satisfies $\brla<i,i+1>,\brla<j,j+1> > 0$ for all $i,j\in\Z$. Thus, by~\eqref{eq:brV_vs_Mand}, $\Qlapp:\Gr(2,\lap)\xrasim\Grsupla(4,n)$ indeed restricts to a homeomorphism $\Qlapp:\MPnla\xrasim\AAnla$.%
\end{proof}

\begin{remark}\label{rmk:Mand=>brlat>0}
Since $\brla<i,i+1> >0$ and $\frac14\Mxxll{i-1}{i+1} = \brla<i,i+1> \brlat[i,i+1]$, 
 we have 
$\brlat[i,i+1]>0$ for all $\lat\in\MPnla$ and $i\in\Z$.
\end{remark}

\begin{notation}
From now on, we assume that $\la,\lat,\mu,V$ are related by 
\begin{equation}\label{eq:V_la_mu_lat}
 \lat=V\cdot \Qla=\mu\cdot \Qla,\quad V=\Qlapp(\lat)=\begin{pmatrix}
\la\\ \mu
 \end{pmatrix},\quad
\mu = \LVVp\cdot V,\quad
\mat[\mu_1|\mu_2]=\bzero_{2\times2}.
\end{equation}
\end{notation}
\noindent As explained in the proof of \cref{prop:MPnla_AAnla_homeo}, the matrices $\bdxM_i$ may be recovered from $\la$ and $\mu$ via
\begin{equation}\label{eq:bdxM_CMI_laii=muii}
 \bdxM_i \cdot \CMI \cdot \mat[\la_i|\la_{i+1}] = \mat[\mu_i|\mu_{i+1}].
\end{equation}
Next, given a $2\times2$ matrix $M$, we record the \emph{Schur complement} identity:
\begin{equation}\label{eq:Schur_comp}
\text{for}\quad 
\Lcal=\begin{pmatrix}
 \Id_2 \\ M
 \end{pmatrix} \quad\text{and}\quad
 \Lcal^\perp=\mat[-M|\Id_2],\quad\text{we have}\quad
 \det\mat[V_i|V_j|\Lcal] = \det(\Lcal^\perp\cdot \mat[V_i|V_j]).
\end{equation}
Finally, setting $\Arg:=\Arg_{(-\pi,\pi]}$ as in~\eqref{eq:intro:wind} and $\mu_0:=(-1)^{k-1}\mu_n$, we let
\begin{equation}\label{eq:windskip_dfn}
 \windskip(\mu):=\wind\mat[\mu_3|\mu_4|\cdots|\mu_n] = 
\Arg(\mu_0,\mu_3) + \Arg(\mu_3,\mu_4)+\cdots+\Arg(\mu_{n-1},\mu_n).
\end{equation}
Applying~\eqref{eq:Schur_comp} for 
 $\Lcal=\mat[V_1|V_2]$ and using~\eqref{eq:TREE:wind_vs_varx}, we get
\begin{equation}\label{eq:windskip_vs_varxxx}
 \windskip(\mu) = (\varxxx(V) + 1)\pi.
\end{equation}

Recall that when $\lalat$ is \Mdash positive, $\PllT$ is simple by \cref{lemma:Mbd=>simple}, so~\eqref{eq:wind_Pcurve_vs_wind_lalat} gives
\begin{equation}\label{eq:wind_la_lat_vs_TURN}
 \wind(\la) - \wind(\lat) = \TURN(\PllT)=\pm2\pi
 \quad\text{for $\lat\in\MPnla$.}
\end{equation}

\subsection{$\GLp$-\winding numbers}\label{ssec:GLp_winding}
Let $\GLp\subset\GL_2(\R)$ be the subgroup of matrices with positive determinant. By the polar decomposition, $\pi_1(\GLp)\cong\pi_1(\SO(2))\cong\Z$. For a closed loop $\gl:[0,2\pi]\to\GLp$, we let $\windGL(\gl)\in 2\pi\Z$ be its class in $\pi_1(\GLp)\cong\Z$ multiplied by $2\pi$. Thus, for the standard generator $\glrot(t)=\begin{pmatrix}
\cos(t) & -\sin(t)\\ \sin(t) & \cos(t)
\end{pmatrix}$ of $\pi_1(\GLp)$, we have $\windGL(\glrot)=2\pi$. For a curve $\gp:[0,\pi]\to\GLp$ such that $\gp(0)=(-1)^{k-1}\gp(\pi)$, we define $\windGL(\gp)\in\pi\Z$ to be $\frac12\windGL(\gl)$, where $\gl:[0,2\pi]\to\GLp$ is a loop obtained by concatenating $\gp(t)$ with $(-1)^{k-1}\gp(t)$. Thus, $\windGL(\gp)\equiv(k-1)\pi\pmod{2\pi}$ for any such curve $\gp$.

\begin{remark}\label{rmk:turnGL_act}
It follows from the polar decomposition that one can compute $\windGL(\gl)$ or $\windGL(\gp)$ by choosing a nonzero vector $v\in\R^2\setminus\{0\}$ and recording the counterclockwise turning angle $\windRd(v(t))$ of the loop $v(t):=\gl(t)\cdot v$ in $\R^2\setminus\{0\}$, where $\windRd(\cdot)$ is defined similarly to~\eqref{eq:intro:wind}.
\end{remark}

\begin{definition}%
\label{dfn:GLPath}
To every $2\times n$ matrix $\la$ such that $\brla<i,i+1> >0$ for $i\in\brn$ we associate a curve $\GLPath_\la:[0,n\pi]\to\GLp$ such that $\GLPath_\la((i-1)\pi) = \mat[\la_i|\la_{i+1}]$ for all $i=1,2,\dots,n+1$. Explicitly, to go from, say, $\mat[\la_1|\la_2]$ to $\mat[\la_2|\la_3]$, we concatenate a curve $\gp_1(t)=\mat[(1-t)\la_1-t\la_3|\la_2]$, $t\in[0,1]$, from $\mat[\la_1|\la_2]$ to $\mat[-\la_3|\la_2]$ with the curve $\mat[-\la_3|\la_2]\cdot \glrot(t)$, $t\in[0,\pi/2]$, from $\mat[-\la_3|\la_2]$ to $\mat[\la_2|\la_3]$. We then reparameterize the resulting curve so that $\GLPath_\la(0)=\mat[\la_1|\la_2]$ and $\GLPath_\la(\pi)=\mat[\la_2|\la_3]$. 
\end{definition}
\noindent By \cref{rmk:turnGL_act}, $\windGL(\GLPath_\la) = \wind(\la)$. 

\begin{definition}\label{dfn:matrix_null_poly}
A \emph{matrix null polygon} is a sequence $\gmpol=(\gm_1,\gm_2,\dots,\gm_n)$ of elements of $\GLp$ such that $\det(\gm_{i+1} - \gm_{i})=0$ for all $i\in\brn$. Here, we set $\gm_{i+n}:=\gm_i$ for $i\in\Z$. 
For $\gm_i(t):=(1-t)\gm_{i} + t\gm_{i+1}$, observe that $\det(\gm_i(t))$ is an affine linear function in $t$ that is strictly positive for $t\in\{0,1\}$. Thus, $\gm_i(t)$ belongs to $\GLp$ for all $t\in[0,1]$ and $i\in\brn$. Concatenating these curves, we obtain a loop $\gmpoll$ in $\GLp$ starting and ending at $\gm_1=\gm_{n+1}$. We define $\windGL(\gmpol):=\windGL(\gmpoll)$. 
\end{definition}

For $M:=\begin{pmatrix}
a & b \\ c& d
\end{pmatrix}$, define $\trT(M):=\begin{pmatrix}
a + d \\ b -c
\end{pmatrix}$; cf.~\eqref{eq:xM_to_x}. Observe that $\trT(M)$ is never zero on $\GLp$. Thus, it gives a continuous map $\trT:\GLp\to\R^2\setminus\{0\}$.
\begin{lemma}\label{lemma:turnGL_trT}
The induced homomorphism $\trT_\ast:\pi_1(\GLp)\to\pi_1(\R^2\setminus\{0\})$ satisfies $\windGL(\gl) = -\windRd(\trT_\ast(\gl))$ for each loop $\gl$ in $\GLp$. Furthermore, given a matrix null polygon $\gmpol=(\gm_1,\gm_2,\dots,\gm_n)$ in $\GLp$, 
we have $\windGL(\gmpol)=-\windRd(\trT(\gmpol))$ 
for the $2\times n$ matrix $\trT(\gmpol):=\mat[\trT(\gm_1)|\trT(\gm_2)|\cdots|\trT(\gm_n)]$.
\end{lemma}
\begin{proof}
It suffices to check the identity $\windGL(\gl) = -\windRd(\trT_\ast(\gl))$ on a single generator $\gl(t)=\glrot(t)$, which is straightforward. Observe that the polygon $\trT(\gmpoll)$ avoids the origin since $\gm_i(t)\in\GLp$ as we discussed above. Thus, we indeed get $\windGL(\gmpol) = -\windRd(\trT(\gmpol))$.
\end{proof}

For $\ys\in\Rdd$ and $\xM_{\ys}$ given by~\eqref{eq:x_vs_ap_vs_am}, let 
\begin{equation}\label{eq:muof(ys)_dfn}
 \muof(\ys):=\mu - \xM_{\ys}\cdot \CMI\cdot \la, \quad\text{so that}\quad
\mat[\muiof_i(\ys)|\muiof_{i+1}(\ys)] = (\bdxM_i - \xM_{\ys})\cdot \CMI\cdot \mat[\la_i|\la_{i+1}]
\quad\text{for all $i\in\brn$}
\end{equation}
 by~\eqref{eq:bdxM_CMI_laii=muii}. In particular, by~\eqref{eq:det_xM_vs_Pmom^2}, when $(\bdx_i-\ys)^2>0$ for all $i\in\brn$, the $2\times n$ matrix $\muof(\ys)$ satisfies $\det\mat[\muiof_i(\ys)|\muiof_{i+1}(\ys)]>0$ for all $i\in\brn$, and thus gives rise to a curve $\GLPath_{\muof(\ys)}$ in $\GLp$ via \cref{dfn:GLPath}. Furthermore, recall from \cref{lemma:Mbd=>simple} that when $\la\in\lak$, $\lat\in\MPnla$, and $(\bdx_i-\ys)^2>0$ for all $i\in\brn$,
 the point $\ysT$ is disjoint from the (simple) polygon $\PllT$. 
\begin{lemma}
Assume that $\la\in\lak,\lat\in\MPnla,\mu,V$ are related by~\eqref{eq:V_la_mu_lat} and that $\ys\in\Rdd$ satisfies $(\bdx_i-\ys)^2>0$ for all $i\in\brn$. Then 
\begin{equation}\label{eq:windaround_muof}
 \wind(\muof(\ys)) = \wind(\la) - \windaround(\PllT,\ysT), \quad \text{where}\quad
 \windaround(\PllT,\ysT):=\windRd\mat[\bdxT_1 - \ysT|\cdots|\bdxT_n - \ysT] %
\end{equation}
is the winding number of the polygon $\PllT$ around the point $\ysT$. 
\end{lemma}
\begin{proof}
By the well-known \emph{Eckmann--Hilton argument}~\cite{Eckmann_Hilton}, loop concatenation in $\pi_1(\GLp)$ (or in $\pi_1(X)$ for any topological group $X$) coincides with the operation of pointwise loop multiplication (sending loops $\gl_1,\gl_2:[0,2\pi]\to\GLp$ to a loop $\gl:[0,2\pi]\to\GLp$ given by $\gl(t):=\gl_1(t)\cdot \gl_2(t)$ for $t\in[0,2\pi]$).
Letting $\gmpol=(\bdxM_1 - \xM_{\ys},\dots,\bdxM_n - \xM_{\ys})$ be the associated matrix null polygon, we see by~\eqref{eq:muof(ys)_dfn} that after a suitable reparameterization of each curve, we get $\GLPath_{\muof(\ys)}(t) = \gmpoll(t)\cdot \CMI\cdot \GLPath_{\la}(t)$ for all $t$. It follows that $\wind(\muof(\ys)) = \wind(\la) + \windGL(\gmpol)$. By \cref{lemma:turnGL_trT}, we get $\windGL(\gmpol) = -\windaround(\PllT,\ysT)$.
\end{proof}

For the following result, given $\Lcal\in\Mat_{4,2}(\R)$, let $\varx\mat[V|\Lcal]:=\var(\brVLnopar[1,2,\Lcal],\brVLnopar[1,3,\Lcal],\dots,\brVLnopar[1,n,\Lcal])$, where we set $\brVLnopar[i,j,\Lcal]:=\det\mat[V_i|V_j|\Lcal]$ for all $i,j$.
\begin{lemma}
Let $\la\in\lak,\mu,V$ be related by~\eqref{eq:V_la_mu_lat}, and let $\ys\in\Rdd$
 and $\Liney:=\begin{pmatrix}
\Id_2\\ \xM_{\ys}\cdot \CMI
\end{pmatrix}$.
 Then 
\begin{equation}\label{eq:varx_muof=varx_brVLy}
 \varx\mat[V|\Liney] = \varx(\muof(\ys)).
\end{equation}
\end{lemma}
\begin{proof}
By~\eqref{eq:muof(ys)_dfn}, we have $\muof(\ys) = \Liney^\perp\cdot V$ for $\Liney^\perp:=\mat[-\xM_{\ys}\cdot \CMI|\Id_2]$. It follows by~\eqref{eq:Schur_comp} that for all $i,j$, $\brVLy[i,j]=\det\mat[\muiof_i(\ys)|\muiof_j(\ys)]$.
\end{proof}

Combining~\eqref{eq:windaround_muof}--\eqref{eq:varx_muof=varx_brVLy} with~\eqref{eq:TREE:wind_vs_varx} and~\eqref{eq:wind_la_lat_vs_TURN}, we obtain the following result.
\begin{corollary}\label{cor:var_vs_inside}
Assume that $\la\in\lak,\lat\in\MPnla,\mu,V$ are related by~\eqref{eq:V_la_mu_lat} and that $\ys\in\Rdd$ satisfies $(\bdx_i-\ys)^2>0$ for all $i\in\brn$. Then 
\begin{equation}\label{eq:var_vs_inside}
 \varx\mat[V|\Liney] = \varx(\muof(\ys)) =
 \begin{cases}
 \varx(\la), &\text{if $\ysT$ is located outside the polygon $\PllT$;}\\
 \varx(\lat), &\text{if $\ysT$ is located inside the polygon $\PllT$.}\\
 \end{cases}
\end{equation}
\end{corollary}

\subsection{Proof of Theorem~\ref{thm:AAshift}}\label{ssec:TREE:proof_thm_AAshift}

\itemref{AAshift_iso}: 
We start by checking that the map $\AAiso$ sending $\lalat\mapsto (\la,V=\Qlapp(\lat))$ is a homeomorphism between $\MPkn$ and $\MAkn$. 
It is clear that $\AAiso$ is continuous, and that the inverse map $\AAiso^{-1}$ sending $(\la,V) \mapsto (\la,\lat=V\cdot\Qla)$ is also continuous. It remains to show that $\AAiso(\MPkn)\subset\MAkn$ and $\AAiso^{-1}(\MAkn)\subset\MPkn$. 
Let $\la\in\lak$, $\lat\in\Gr(2,\lap)$, and $V\in\Grsupla(4,n)$ be such that $\AAiso\lalat = (\la,V)$. By \cref{prop:MPnla_AAnla_homeo}, we have $\lat\in\MPnla$ if and only if $V\in\AAnla$. 

Assume that $\lat\in\MPnla$ and $V\in\AAnla$. 
By \cref{rmk:Mand=>brlat>0}, $\brlat[i,i+1]>0$ for all $i\in\brn$. The polygon $\PllT$ is simple by \cref{lemma:Mbd=>simple}. By~\eqref{eq:wind_la_lat_vs_TURN}, it turns clockwise when $\wind(\lat)=(k+1)\pi$ and counterclockwise when $\wind(\lat)=(k-3)\pi$. Let $\mu$ be as in~\eqref{eq:V_la_mu_lat} and assume that $\mat[\la_1|\la_2]=\Id_2$. Set $\muot:=\mat[\mu_0|\mu_3]$ and let $\ys\in\Rdd$ be such that $\xM_\ys=-\muot\cdot \CMI^{-1}=\mat[-\mu_3|\mu_0]$, so that we have $-\xM_{\ys}\cdot \CMI\cdot \mat[\la_1|\la_2] = \muot$. 
For $\eps\geq0$, set $\Veps:=\begin{pmatrix}
\Id_2 & \bzero_{2\times2}\\
-\yesM\cdot \CMI & \Id_2
\end{pmatrix}\cdot V = \begin{pmatrix}
\la \\ \muepsy
\end{pmatrix}$, where $\muepsy = \mu - \yesM\cdot \CMI\cdot \la$ as in~\eqref{eq:muof(ys)_dfn}. The first two columns of $\muepsy$ are given by $\mat[\muepsyi_1|\muepsyi_2] = \eps\muot$. 

Next, we check $(\bdx_i - \yes)^2>0$ for all $i\in\brn$ and all small $\eps>0$. We have $\mat[\muepsyi_i|\muepsyi_{i+1}]=(\bdxM_i-\yesM)\cdot \CMI\cdot \mat[\la_i|\la_{i+1}]$, so by~\eqref{eq:det_xM_vs_Pmom^2}, it suffices to check $\det\mat[\muepsyi_i|\muepsyi_{i+1}]>0$ for all $i\in\brn$. For $3\leq i\leq n-1$, we have $\det\mat[\mu_i|\mu_{i+1}]>0$ so for small $\eps>0$, we get $\det\mat[\muepsyi_i|\muepsyi_{i+1}]>0$. Observe also that the result holds for $i=1$ since $\mat[\muepsyi_1|\muepsyi_2] = \eps\muot$ and $\det\mat[\mu_0|\mu_3]=\brV[0,1,2,3]>0$. 
Writing $\mat[\la_0|\la_1|\la_2|\la_3] = \begin{pmatrix}
a_0 & 1 & 0 & a_3\\ b_0 & 0 & 1 & b_3
\end{pmatrix}$, we see that $\brla<0,1>=-b_0>0$ and $\brla<2,3> = -a_3>0$. Using $-\xM_{\ys}\cdot \CMI = \muot$ and $\mat[\muepsyi_1|\muepsyi_2] = \eps\muot$, we find 
$\det\mat[\muepsyi_0|\muepsyi_1] = \det\mat[\mu_0 + \eps\muot \la_0 | \eps\mu_0] = \det\mat[\eps b_0 \mu_3|\eps\mu_0]=-\eps^2 b_0\det\mat[\mu_0|\mu_3]>0$. Similarly, we get
$\det\mat[\muepsyi_2|\muepsyi_3] = \det\mat[\eps\mu_3|\eps a_3 \mu_0]=-\eps^2a_3\det\mat[\mu_0|\mu_3]>0$. Thus, $(\bdx_i - \yes)^2>0$ for all $i\in\brn$. 

Our goal is to apply~\eqref{eq:var_vs_inside} to $\muepsy$. We calculate $\ysT$. 
 Since $\lat = \mu\cdot \Qla$ and $\mat[\mu_1|\mu_2]=\bzero_{2\times2}$, we see from~\eqref{eq:intro:Qla} that $\lat_1 = \frac1{\brla<0,1>}\mu_0$ and $\lat_2 = \frac1{\brla<2,3>}\mu_3$. Using~\eqref{eq:intro:y_to_lalat} and~\eqref{eq:xM_to_x}, we find $\ysT = \brla<0,1>\I\yt_1 - \brla<2,3>\yt_2$. Since $\mat[\la_1|\la_2]=\Id_2$, we have $\y_1 = 1$ and $\y_2 = \I$, and thus 
by~\eqref{eq:yy=Tcal_Ocal}, 
 we get $\bdxT_1-\bdxT_0 = \y_1\yt_1 = \yt_1$ and $\bdxT_2 - \bdxT_1 = \y_2\yt_2 = \I\yt_2$. Since $\bdxT_1=0$, we have $\ysT - \bdxT_1 = \brla<0,1>\I\yt_1 - \brla<2,3>\yt_2 = -\brla<0,1>\I(\bdxT_0 - \bdxT_1) + \I\brla<2,3>(\bdxT_2 - \bdxT_1)$. Recall from~\eqref{eq:angle_sum_arg_rat} that the angle of $\PllT$ at $\bdxT_1$ is given by 
$\arg((\bdxT_2-\bdxT_1)/(\bdxT_0-\bdxT_1))=
\sumbT_1+\sumwT_1=\Arg(\la_1,\la_2)+\pi-\Arg(\lat_1,\lat_2)$. This angle belongs to $(\pi/2,3\pi/2)$ since $\Arg(\la_1,\la_2)=\pi/2$ and $\Arg(\lat_1,\lat_2)\in(0,\pi)$.
 Thus, the point $\eps\ysT$ lies strictly to the left of the $\bdxT_0\to\bdxT_1\to\bdxT_2$ portion of $\PllT$. 

We conclude that for small $\eps>0$, $\eps\ysT$ is located outside $\PllT$ when $\PllT$ turns clockwise (i.e., $\wind(\lat)=(k+1)\pi$) and is located inside $\PllT$ when $\PllT$ turns counterclockwise (i.e., $\wind(\lat) = (k-3)\pi$). By~\eqref{eq:var_vs_inside}, 
$\varx(\muepsy) = \varx(\la) = k-2$ when $\wind(\lat)=(k+1)\pi$ and 
$\varx(\muepsy) = \varx(\lat) = k-4$ when $\wind(\lat)=(k-3)\pi$. By \cref{rmk:Mand=>brlat>0}, $\wind(\lat)=(k+1)\pi$ if and only if $\lat\in\latk$. On the other hand, by construction, $\wind(\muepsy) = \windskip(\mu)$ for small $\eps$. We conclude that 
\begin{equation*}%
 \windskip(\mu) = (k-1)\pi 
 \quad\Longleftrightarrow\quad 
\lat\in\latk.
\end{equation*}
By~\eqref{eq:windskip_vs_varxxx}, we get $\AAiso(\MPkn)\subset\MAkn$ and $\AAiso^{-1}(\MAkn)\subset\MPkn$.

We now consider the restriction of $\AAiso$ to $\MPf$ for some $f\in\Boundkn$. By \cref{lemma:la_vs_Bounda_Boundb}, we may assume that $f\in\Bounda$. Let $\lalat\in\MPkn$ and $\AAiso\lalat=(\la,V)\in\MAkn$. 
If $\lalat\in\MPf$ then $\la\subset C\subset \latp$ for some $C\in\Grsupla(k,n)\cap\Ptp_f$. Let $\ddC:=C\cdot \Qla$. By \cref{prop:magic_homeo}, $\ddC\in\Gr(k-2,\lap)\cap\Ptp_{\ddfperm}$. 
By~\eqref{eq:Qlapp_duality_proof}, $V := \Qlapp(\lat) = (\latp \Qla)^\perp$, so 
$V\cdot \ddC^T = (\latp \Qla)^\perp\cdot (C\Qla)^T=\bzero_{4\times(k-2)}$
 because $C\subset\latp$ and thus $C\Qla\subset\latp\Qla$. Therefore, $V\subset\ddC^\perp$, so $(\la,V)\in\MAddf$. 
Conversely, suppose that $(\la,V)\in\MAddf$ for some $\ddfperm\in\BND(k-2,n)$, and let $\ddC\in\Ptp_{\ddfperm}$ be such that $V\subset \ddC^\perp$. 
In particular, $\ddC\in\Gr(k-2,\lap)\cap\Ptp_{\ddfperm}$, and by \cref{lemma:la_vs_Bounda_Boundb}, $\ddfperm\in\Bounddb$. Set $C:=\Qlapp(\ddC)$. By \cref{prop:magic_homeo}, $C\in\Grsupla(k,n)\cap\Ptp_f$. We claim that $C\subset\latp$. Indeed, since $\lat = V\cdot \Qla=\mu\cdot \Qla$, the matrix $C\cdot \lat^T = C\cdot \Qla V^T = \ddC\cdot V^T$ is zero. 
 Thus, $\lalat\in\MPf$. This finishes the proof of part~\itemref{AAshift_iso}.

\itemref{AAshift_forg}:
We first show that the map $\AAforg:\MAkn\to\AAkn$ is surjective. 
Let $V\in\AAkn$. We would like to find $\la\in\lak$ such that $(\la,V)\in\MAkn$. Let $\mu=\mat[0|0|\mu_3|\cdots|\mu_n]:=\LVVp\cdot V$ be as in~\eqref{eq:V_la_mu_lat}. Using $\GL_4(\R)$-action, we can represent $V=\begin{pmatrix}
 \la' \\ \mu
\end{pmatrix}$ for some $\la'$ satisfying $\mat[\la'_1|\la'_2] = \Id_2$, so that $\LVVp=\mat[\bzero_{2\times2}|\Id_2]$. 
For $\eps>0$, let 
$\laieps_0=\mu_0$, $\laieps_1 = 2\eps\mu_0+\eps\mu_3$, $\laieps_2 = \eps\mu_0+2\eps\mu_3$, and $\laieps_3 = \mu_3$, where $\mu_0=(-1)^{k-1}\mu_n$ as before. 
Since $\brV[0,1,2,3]>0$, the matrix $\mat[V_0|V_1|V_2|V_3]$ is invertible. Thus, for each $\eps>0$, we can choose a $2\times 4$ matrix $A_\eps$ such that $\mat[\laieps_0|\laieps_1|\laieps_2|\laieps_3]=A_\eps\cdot \mat[V_0|V_1|V_2|V_3]$. 

By construction, since $\brlae<0,3>=\det\mat[\mu_0|\mu_3]=\brV[0,1,2,3]>0$, we have
\begin{equation}\label{eq:brlae_Arg_lae}
 \brlae<0,1>,\brlae<1,2>,\brlae<2,3> >0 \quad\text{and}\quad
 \Arg(\laieps_0,\laieps_1)+\Arg(\laieps_1,\laieps_2)+\Arg(\laieps_2,\laieps_3)=\Arg(\mu_0,\mu_3)
\end{equation} 
for all $\eps>0$, where $\Arg:=\Arg_{(-\pi,\pi]}$ as in~\eqref{eq:intro:wind}. For $3\leq i\leq n-1$, 
 we get $\det\mat[\mu_i|\mu_{i+1}] = \brV[1,2,i,i+1] > 0$. Since $\lim_{\eps\to0}A_\eps = \LVVp = \mat[\bzero_{2\times2}|\Id_2]$, we get $\lim_{\eps\to0}\brlae<i,i+1> = \det\mat[\mu_i|\mu_{i+1}]>0$. Thus, $\brlae<i,i+1> >0$ for small $\eps>0$. Since $\wind(\laeps)$ is locally constant and equals $\windskip(\mu)$ in the $\eps\to0$ limit in view of~\eqref{eq:brlae_Arg_lae} and~\eqref{eq:windskip_dfn}, 
we have $\wind(\laeps)=\windskip(\mu)=(k-1)\pi$. Thus, for small $\eps>0$, we have
$\laeps\in\lak$. Since $\laeps\subset V$, we get $(\laeps,V)\in\MAkn$. This shows that $\AAforg:\MAkn\to\AAkn$ is surjective. 

To show that $\AAforg$ is an open map, let $(\la,V)\in\MAkn$. Let $A$ be a $2\times 4$ matrix such that $\la = A \cdot V$. Let $U\subset\MAkn$ be an open neighborhood of $(\la,V)$. Suppose for contradiction that $\AAforg(U)$ does not contain an open neighborhood of $V$ in $\AAkn$. Thus, there exists a sequence $\Veps\in\AAkn\setminus\AAforg(U)$ converging to $V$ as $\eps\to0$. Let $\laeps:=A\cdot \Veps$. For small $\eps>0$, we have $\laeps\in\Gr(2,\Veps)$ and $\laeps\to\la$ as $\eps\to0$. 
For each $i\in\brn$, since $\brla<i,i+1> >0$, we have $\brlae<i,i+1> >0$ for small $\eps>0$. Since $\wind(\laeps)$ is locally constant and is well defined for $\laeps$ and for $\la$ (i.e., none of the columns of $\laeps$ and $\la$ are zero and the consecutive columns are not antiparallel), $\wind(\laeps)=\wind(\la)=(k-1)\pi$. It follows that $(\laeps,\Veps)\in\MAkn$ converges to $(\la,V)$ as $\eps\to0$, so $\Veps=\AAforg(\laeps,\Veps)\in\AAforg(U)$ for small $\eps>0$, a contradiction. Thus, $\AAforg$ is an open map.

The statement that $\AAforg$ restricts to a surjective open map $\MAddf\to\AAddf$
 for all $\ddfperm\in\BND(k-2,n)$ follows trivially since the condition ($V\subset\ddC^\perp$ for some $\ddC\in\Ptp_{\ddfperm}$) that cuts out the respective subsets of $\MAkn$ and $\AAkn$ only involves $V$, and so for any given $V\in\AAddf$, any choice of $\la\in\lak$ such that $\la\subset V$ automatically gives rise to a point $(\la,V)\in\MAddf$. See also the proof of \cref{lemma:RX_triple} below.
\qed

\subsection{Properties of \mtilings}
Our next goal is to deduce some further BCFW tiling results from \cref{thm:f_triang,thm:AAshift}. 
We start by proving two abstract topological results concerning \mtilings. 

\begin{proposition}\label{lemma:RX_triple}
For $\s=2,3$ and $\RG\in\RGbf$, let $\RelS\subset\RXS\times\RYS$, $\RXoS_\RG\subset\RXS$, $\ReloS_\RG=\RelS\cap(\RXoS_\RG\times\RYS)$, and $\RYoS_\RG=\RprojS(\ReloS_\RG)$ be as in \cref{dfn:BCFW:tiling_amb}.
Assume that $\RXB=\RXC$ and $\RXoB_\RG=\RXoC_\RG$ for all $\RG\in\RGbf$.
 Suppose that we have a commutative diagram
\begin{equation}\label{eq:TREE:comm_diag}
\begin{tikzcd}[column sep=3em]
\RelB \arrow[r,"{(\id_{\RXB},\Rpi)}"] \arrow[d,"\RprojB"] 
\drar[phantom, "\ \square",pos=0.47]
& \RelC \arrow[d,"\RprojC"] 
\\
\RYB\arrow[r,"\Rpi",twoheadrightarrow] 
& \RYC.
\end{tikzcd} 
\end{equation}
Suppose in addition that the map $\Rpi:\RYB\to\RYC$ is continuous, surjective, and open, and that the diagram is \emph{Cartesian}, i.e., 
$\RelB = \{(x,y)\in \RXB\times\RYB\mid (x,\Rpi(y))\in\RelC\}$. 
Then 
\begin{equation}\label{eq:RtilingS}
 \text{$\{\RYoB_\RG\mid \RG\in\RGbf\}$ is \amtilingB of $\RYB$}
 \quad\Longleftrightarrow\quad
 \text{$\{\RYoC_\RG\mid \RG\in\RGbf\}$ is \amtilingC of $\RYC$}.
\end{equation}
\end{proposition}

\begin{proof}
Restricting the Cartesian diagram condition $\RelB = (\id_{\RXB},\Rpi)^{-1}(\RelC)$ to each $\RG\in\RGbf$, we get
\begin{equation}\label{eq:ReloB_vs_ReloC_Cartesian}
 \ReloB_\RG %
 = \{(x,y)\in \RXoB_\RG\times\RYB\mid (x,\Rpi(y))\in\ReloC_\RG\},
 \quad\text{and}\quad
 \RYoB_\RG = \Rpi^{-1}(\RYoC_\RG)
 \quad\text{for all $\RG\in\RGbf$}.
\end{equation}
We claim that for each $\RG\in\RGbf$, the diagram~\eqref{eq:TREE:comm_diag} restricts to
\begin{equation}\label{eq:TREE:comm_diag_res}
\begin{tikzcd}[column sep=3em]
 \ReloB_\RG \arrow[r,"{(\id_{\RXB},\Rpi)}",twoheadrightarrow] \arrow[d,"\RprojoB_\RG",twoheadrightarrow]
\drar[phantom, "\ \square"]
& \ReloC_\RG \arrow[d,"\RprojoC_\RG",twoheadrightarrow] 
\\
 \RYoB_\RG\arrow[r,"\Rpi",twoheadrightarrow] 
& \RYoC_\RG,
\end{tikzcd} 
\end{equation}
with all four maps surjective. Indeed, the maps $\RprojoS_\RG:\ReloS_\RG\to\RYoS_\RG$ are surjective for $\s=2,3$ by definition. By~\eqref{eq:ReloB_vs_ReloC_Cartesian}, $(\id_{\RXB},\Rpi)(\ReloB_\RG)\subset \ReloC_\RG$, and since $\Rpi:\RYB\to\RYC$ is surjective, we get $(\id_{\RXB},\Rpi)(\ReloB_\RG) = \ReloC_\RG$. Similarly, we obtain 
$\Rpi(\RYoB_\RG)=\RYoC_\RG$. 

Let us denote parts~\itemref{Rtiling1}, \itemref{Rtiling2}, \itemref{Rtiling3} of \cref{dfn:BCFW:tiling_amb} when applied to tiles $\{\RYoB_\RG\mid \RG\in\RGbf\}$ (resp., $\{\RYoC_\RG\mid \RG\in\RGbf\}$) by \Rtila2, \Rtilb2, \Rtilc2 (resp., \Rtila3, \Rtilb3, \Rtilc3). 

\Rtila3 $\Longrightarrow$ \Rtila2: This follows since homeomorphisms are stable under pullback. Explicitly, suppose that $\RprojoC_\RG:\ReloC_\RG\xrasim\RYoC_\RG$ is a homeomorphism. 
Given $y_2\in\RYoB_\RG$, set $y_3:=\Rpi(y_2)\in\RYoC_\RG$ and $(x,y_3) := \RprojoC_\RG^{-1}(y_3)\in\ReloC_\RG$, 
and let $q_{\RYoB_\RG}:\RYoB_\RG\to\ReloB_\RG$ be the map sending $y_2\mapsto (x,y_2)$. 
By~\eqref{eq:ReloB_vs_ReloC_Cartesian}, since $(x,y_3)\in\ReloC_\RG$, we have $(x,y_2)\in\ReloB_\RG$. 
Thus,
$q_{\RYoB_\RG}$ is a continuous inverse of $\RprojoB_\RG:\ReloB_\RG\to\RYoB_\RG$. 

\Rtila2 $\Longrightarrow$ \Rtila3: 
Since $\Rpi:\RYoB_\RG\to\RYoC_\RG$ is surjective by~\eqref{eq:TREE:comm_diag_res}, given $y_3\in\RYoC_\RG$, we can pick an arbitrary preimage $y_2\in\Rpi^{-1}(y_3)$. By~\eqref{eq:ReloB_vs_ReloC_Cartesian}, $y_2\in\RYoB_\RG$, so let $(x,y_2):=\RprojoB_\RG^{-1}(y_2)$. If we had picked a different preimage $y_2'\in\Rpi^{-1}(y_3)$ and set $(x',y_2'):=\RprojoB_\RG^{-1}(y_2')\in\ReloB_\RG$ then by~\eqref{eq:ReloB_vs_ReloC_Cartesian}, we would have $(x',y_2)\in\ReloB_\RG$, and since $y_2$ has a unique preimage under $\RprojoB_\RG$, we must have $x=x'$. Letting $q_{\RYoC_\RG}(y_3):=(x,y_3)\in\ReloC_\RG$, we see that $q_{\RYoC_\RG}:\RYoC_\RG\to\ReloC_\RG$ is a set-theoretic inverse of $\RprojoC_\RG:\ReloC_\RG\to\RYoC_\RG$.
To check that $q_{\RYoC_\RG}$ is continuous, let $U_3\subset \ReloC_\RG$ be open, $U_2:=(\id_{\RXB},\Rpi)^{-1}(U_3)$, $V_2:=\RprojoB_\RG(U_2)$, and $V_3:=\Rpi(V_2)$. 
Since $\Rpi$ is continuous, $U_2\subset\ReloB_\RG$ is open. 
Since $\RprojoB_\RG:\ReloB_\RG\xrasim\RYoB_\RG$ is a homeomorphism and $\Rpi$ is open, $V_2\subset\RYoB_\RG$ and $V_3\subset\RYoC_\RG$ are open. 
Using the surjectivity of $(\id_{\RXB},\Rpi)$ in~\eqref{eq:TREE:comm_diag_res}
and the uniqueness of lifts under $\RprojoB_\RG$, we get
$V_3=q_{\RYoC_\RG}^{-1}(U_3)$.
It follows that $q_{\RYoC_\RG}$ is continuous. 

\Rtilb3 $\Longrightarrow$ \Rtilb2: If $(\RYoC_\RG)_{\RG\in\RGbf}$ are disjoint then by~\eqref{eq:ReloB_vs_ReloC_Cartesian}, 
so are their preimages $\RYoB_\RG=\Rpi^{-1}(\RYoC_\RG)$.

\Rtilb2 $\Longrightarrow$ \Rtilb3: Suppose that $y_3\in\RYoC_\RG\cap\RYoC_\RGg$ for distinct $\RG,\RGg\in\RGbf$. Since $\Rpi$ is surjective, $\Rpi^{-1}(y_3)\neq\emptyset$, and by~\eqref{eq:ReloB_vs_ReloC_Cartesian}, 
$\Rpi^{-1}(y_3)\subset\RYoB_\RG\cap\RYoB_\RGg$, contradicting \Rtilb2.

\Rtilc3 $\Longrightarrow$ \Rtilc2: If $\bigcup_{\RG\in\RGbf} \RYoB_\RG$ is not dense in $\RYB$ then there exists a nonempty open subset $U_2\subset\RYB\setminus(\bigcup_{\RG\in\RGbf} \RYoB_\RG)$. Since $\Rpi$ is open, $\Rpi(U_2)\subset \RYC$ is open, and by~\eqref{eq:ReloB_vs_ReloC_Cartesian}, $\Rpi(U_2)\subset\RYC\setminus(\bigcup_{\RG\in\RGbf} \RYoC_\RG)$.

\Rtilc2 $\Longrightarrow$ \Rtilc3: Given nonempty open $U_3\subset \RYC\setminus(\bigcup_{\RG\in\RGbf} \RYoC_\RG)$, by~\eqref{eq:ReloB_vs_ReloC_Cartesian}, 
$\Rpi^{-1}(U_3)\subset\RYB\setminus\bigcup_{\RG\in\RGbf} \RYoB_\RG$. %
 Since $\Rpi$ is continuous and surjective, $\Rpi^{-1}(U_3)\subset\RYB$ is open and nonempty.
\end{proof}

Next, we formulate an abstract version of \cref{dfn:intro:triang}. 
\begin{definition}[Tiling]\label{dfn:BCFW:tiling}
Let $\RPhi: \RX\to \W$ be a continuous map. %
Let $\RGbf$ be a finite set and let $\{\RXo_\RG\mid\RG\in\RGbf\}$ be a collection of subsets of $\RX$. 
For $\RG\in\RGbf$, define a \emph{tile} $\Wo_\RG:=\RPhi(\RXo_\RG)$. 
We say that the tiles $\{\Wo_\RG\mid \RG\in\RGbf\}$ form a \emph{tiling} of $\W$ if the following conditions are satisfied.
\begin{enumerate}[label=(\alph*)]
\item\label{tiling1} \emph{Injectivity:} For each $\RG\in\RGbf$, the map $\RPhi$ restricts to a homeomorphism 
$\RXo_\RG\xrasim\Wo_\RG$.%
\item\label{tiling2} \emph{Disjointness:} The tiles $\{\Wo_\RG\mid \RG\in\RGbf\}$ are pairwise disjoint.
\item\label{tiling3} \emph{Surjectivity:} The union $\bigsqcup_{\RG\in\RGbf} \Wo_\RG$ is dense in $\W$. 
\end{enumerate}
\end{definition}

\begin{proposition}\label{lemma:Rtiling_to_tiling}
Let $\RX\subset\RXcl$, $\RY\subset\RYcl$, $\W=\Wcl\cap\RY$, and $\Rel=\Relcl\cap(\RX\times\RY)$ for some $\Wcl\subset\RYcl$ and $\Relcl\subset\RXcl\times\RYcl$. 
Suppose that the tiles $\{\RYo_\RG\mid \RG\in\RGbf\}$ form an \emph{\mtiling} of $\RY$.
Suppose in addition that $\RPhi: \RX\to \W$ extends to a continuous map $\RPhicl:\RXcl\to\Wcl$ 
 whose graph 
\begin{equation}\label{eq:RGraph}
\RGraph_{\RPhicl}:=\{(x,\RPhicl(x))\mid x\in\RXcl\} 
\quad\text{satisfies}\quad
 \RGraph_{\RPhicl} = \Relcl\cap (\RXcl\times \Wcl). 
\end{equation}
Assume that $\Relcl\subset\RXcl\times\RYcl$ is a closed subset and that the closure $\RXocl_\RG$ of $\RXo_\RG$ in $\RXcl$ is compact for each $\RG\in\RGbf$. 
Then the tiles $\{\Wo_\RG\mid \RG\in\RGbf\}$ form a tiling of $\W$.
\end{proposition}
\begin{proof}
For $\RG\in\RGbf$, let $\RGraph_{\RPhi|_{\RXo_\RG}}:=\{(x,\RPhi(x))\mid x\in\RXo_\RG\}$. We claim that
\begin{equation}\label{eq:RWo=RYo_cap_RW}
\RGraph_{\RPhi|_{\RXo_\RG}} = \Relo_\RG\cap (\RX\times \W)
\quad\text{and}\quad
 \Wo_\RG = \RYo_\RG\cap \W.
\end{equation}
Indeed, using $\Relo_\RG = \Rel\cap(\RXo_\RG\times\RY)$, $\Rel = \Relcl\cap(\RX\times\RY)$,~\eqref{eq:RGraph}, and $\RPhi = \RPhicl|_{\RX}$, we get
\begin{equation*}%
\Relo_\RG\cap (\RX\times \W) 
= \Rel\cap(\RXo_\RG\times\W) 
= \Relcl\cap(\RXo_\RG\times\W)
= \RGraph_{\RPhicl}\cap(\RXo_\RG\times\W)
= \RGraph_{\RPhi|_{\RXo_\RG}}, 
\end{equation*} 
which gives the first identity in~\eqref{eq:RWo=RYo_cap_RW}. The second identity follows from the first since
\begin{equation*}%
 \Wo_\RG 
= \RPhi(\RXo_\RG) 
= \RprojY(\RGraph_{\RPhi|_{\RXo_\RG}})
= \RprojY(\Relo_\RG\cap (\RX\times \W))
= \RprojY(\Relo_\RG)\cap \W
= \RYo_\RG\cap \W.
\end{equation*}

By \crefi{dfn:BCFW:tiling_amb}{Rtiling1}, $\RprojoY_\RG:\Relo_\RG\xrasim\RYo_\RG$ is a homeomorphism, so restricting $\RprojoY_\RG^{-1}$ to $\Wo_\RG = \RYo_\RG\cap \W$, we obtain a continuous map $\Wo_\RG\to \RGraph_{\RPhi|_{\RXo_\RG}}=\Relo_\RG\cap (\RX\times \W)$. Composing this map with the projection $\RprojX$ gives a continuous inverse of $\RPhi|_{\RXo_\RG}$. This shows part~\itemref{tiling1} of \cref{dfn:BCFW:tiling}. Part~\itemref{tiling2} follows trivially: since the tiles $\RYo_\RG$ are disjoint, by~\eqref{eq:RWo=RYo_cap_RW}, so are the tiles $\Wo_\RG$. 

We show part~\itemref{tiling3}. Let $w\in\W$. We have $\W=\RY\cap\Wcl$ and the union $\bigsqcup_{\RG\in\RGbf}\RYo_\RG$ is dense in $\RY$. Thus, there exists $\RG\in\RGbf$ and a sequence $y_1,y_2,\dots$ of elements of $\RYo_\RG$ converging to $w$. Since $\RYo_\RG=\RprojY(\Relo_\RG)$, for each $i=1,2,\dots$, there exists $x_i\in\RXo_\RG$ such that $(x_i,y_i)\in\Rel$. Since $\RXocl_\RG$ is compact, after passing to a subsequence, we may assume that the sequence $x_1,x_2,\dots$ converges to some limit $x\in\RXocl_\RG$. Since $\Relcl\subset\RXcl\times\RYcl$ is closed, we get $(x,w)\in\Relcl$. By~\eqref{eq:RGraph}, we find $w=\RPhicl(x)$. Letting $w_i:=\RPhi(x_i)$, since $\RPhicl$ is continuous, we see that $\lim_{i\to\infty}w_i = w$. Since $x_i\in\RXo_\RG$, we get $w_i\in\Wo_\RG$. Thus, $\bigsqcup_{\RG\in\RGbf}\Wo_\RG$ is dense in $\W$.
\end{proof}

\subsection{T-duality for ambient amplituhedron tilings} 
We specialize the results in the previous subsection. In the notation of \cref{dfn:Grfsep,dfn:Grfind}
and \cref{lemma:RX_triple}, we set 
\begin{align}
\label{eq:RXA_dfn}
\RXA&:=\Grfsep(k,n), & \RYA&:=\MPkntree, & \RelA&:=\{(C,\la,\lat)\in\RXA\times\RYA\mid \la\subset C\subset\latp\};\\
\label{eq:RXB_dfn}
\RXB&:=\Grddfsep(k-2,n), & \RYB&:=\MAkn, & \RelB&:=\{(\ddC,\la,V)\in\RXB\times\RYB\mid V\subset \ddC^\perp\};\\
\label{eq:RXC_dfn}
\RXC&:=\Grddfsep(k-2,n), & \RYC&:=\AAkntree, & \RelC&:=\{(\ddC,V)\in\RXC\times\RYC\mid V\subset \ddC^\perp\}.
\end{align}
For $\fap\in\BNDfsep(k,n)$,
 we set
\begin{equation*}%
\RXoA_{\fap}:=\Ptp_{\fap},\quad \RXoB_{\ddfperm}=\RXoC_{\ddfperm}:=\Ptp_{\ddfperm};\quad
\RYoA_{\fap}:=\MPf, \quad \RYoB_{\ddfperm}:=\MAddf, \quad \RYoC_{\ddfperm}:=\AAddf. 
\end{equation*}

\begin{remark}
Similarly to \cref{lemma:lalat_MP=>C_fsep}, 
it follows from~\eqref{eq:altp_Delta} that if $V\in\AAkntree$ and $\ddC\in\Grtnn(k-2,n)$ satisfy $V\subset\ddC^\perp$ then we must have $\ddC\in\Grddfsep(k-2,n)$.
\end{remark}

\begin{corollary}[T-duality for \mtilings of ambient amplituhedra]\label{lemma:TREE:amb_tiling_equivalence}
Let $\BBfkn\subset\BNDfsep(k,n)$ and $\ddBBfkn:=\{\ddfperm\mid\fap\in\BBfkn\}$. 
The tiles $\{\MPf\mid \fap\in\BBfkn\}$ form \amtilingA of $\MPkn$ if and only if the tiles $\{\AAddf\mid \ddfperm\in\ddBBfkn\}$ form \amtilingC of $\AAkn$. 
\end{corollary}
\begin{proof}
We have $\RXoA_{\fap}\subset\RXA$ and $\RXoB_{\ddfperm}=\RXoC_{\ddfperm}\subset\RXB=\RXC$ for all $\fap\in\BBfkn\subset\BNDfsep(k,n)$. 
Let $\Rphibot$ and $\Rpi$ be as in \cref{thm:AAshift}. 
Let $\Rphitop:\RelA\to\RelB$ be given by $\Rphitop(C,\la,\lat):=(C\cdot \Qla,\la,\Qlapp(\lat))$. As explained in the proof of \cref{thm:AAshift}, $\Rphitop$ is a homeomorphism with inverse $(\ddC,\la,V)\mapsto(\Qlapp(\ddC),\la,V\cdot \Qla)$. By \cref{prop:magic_homeo}, $\Rphitop$ restricts to a homeomorphism $\ReloA_{\fap}\xrasim\ReloB_{\ddfperm}$ for each $\fap\in\BNDfsep(k,n)$. 
Thus, $\{\MPf\mid \fap\in\BBfkn\}$ is \amtilingA of $\MPkn$ if and only if $\{\MAddf\mid \ddfperm\in\ddBBfkn\}$ is \amtilingB of $\MAkn$. 

To relate \mtilingsB of $\MAkn$ to \mtilingsC of $\AAkn$, we check the assumptions of \cref{lemma:RX_triple}. 
By construction, the diagram~\eqref{eq:TREE:comm_diag} commutes. 
By \cref{thm:AAshift}, $\Rpi$ is open and surjective. 
Finally, the Cartesian square condition $\RelB = \{(x,y)\in \RXB\times\RYB\mid (x,\Rpi(y))\in\RelC\}$ follows from the fact that the relation $V\subset\ddC^\perp$ in~\eqref{eq:RXB_dfn}--\eqref{eq:RXC_dfn} does not involve $\la$. 
\end{proof}

Recall from \cref{rmk:BCFW:T_emb_unique} that $\BCFWfkn\subset\BNDfsep(k,n)$. 
Combining \cref{thm:f_triang} with \cref{lemma:TREE:amb_tiling_equivalence}, we obtain the following result.
\begin{corollary}\label{lemma:ddf_triang}
 The tiles $\{\AAddf\mid \fap\in\BCFWfkn\}$ form \amtilingC of $\AAkn$. 
\end{corollary}

\subsection{Sign flip and linear projection amplituhedra}
We would like to compare ambient amplituhedra $\MPkn,\AAkn$ to \emph{sign flip} amplituhedra $\MomLLamb,\AZamb$ of~\cite{AHTT,HZ_notes,DFLP} and to \emph{linear projection} amplituhedra $\MomLLproj,\AZproj$ of~\cite{AHT,DFLP}. %
We will deduce the BCFW tiling results for these amplituhedra from \cref{lemma:Rtiling_to_tiling}. 

We start by defining the amplituhedra $\MomLLamb,\AZamb,\MomLLproj,\AZproj$. 
Let $\LaLat\in\LaLaimmnn$ (\cref{dfn:Ttauknij}). Following~\cite{HZ_notes,DFLP}, we set
\begin{equation*}%
 \MomLLproj:=\PhiLL(\Grtnn(k,n)) 
 \quad\text{and}\quad
\MomLLamb:=\{\lalat\in\MPkn\mid \la\subset\La\text{ and }\lat\subset\Lat\}.
\end{equation*}
Similarly, for 
 $Z\in\Grtp(k+2,n)$, the \emph{\mta map} is given by
\begin{equation}\label{eq:PsiZ_dfn}
 \PsiZ:\Grtnn(k-2,n)\to\Gr(4,n),\quad \ddC\mapsto \ddC^\perp\cap Z.
\end{equation}
\begin{remark}\label{rmk:dim_ddC_cap_Z=4}
We have $\dim(\ddC^\perp\cap Z)=4$ for all $\ddC\in\Grtnn(k-2,n)$; see e.g.~\cite[Lemma~3.10]{KW}.
\end{remark}
We define
\begin{equation*}%
 \AZproj:=\PsiZ(\Grtnn(k-2,n)) \quad\text{and}\quad
 \AZamb:=\{V\in\AAkn\mid V\subset Z\}.
\end{equation*}
 For $\fap\in\Bounda$, we let
\begin{equation*}%
 \MomLLprojf:=\PhiLL(\Ptp_{\fap}) \quad\text{and}\quad \AZprojddf:=\PsiZ(\Ptp_{\ddfperm}).
\end{equation*}

\begin{lemma}%
\label{lemma:TREE:proj_subset_amb}
We have inclusions
\begin{equation}\label{eq:TREE:proj_subset_cl_amb}
\MomLLproj\subset\clMomLLamb
 \quad\text{and}\quad 
 \AZproj\subset\clAZamb
\end{equation}
for all $\LaLat\in\LaLaimmnn$ and $Z\in\Grtp(k+2,n)$, 
where $\overline\cdot$ denotes closure in the Hausdorff topology on $\lalats$ and $\Gr(4,n)$, respectively.
\end{lemma}
\begin{proof}
Given $C\in\Grtnn(k,n)$ and $\LaLat\in\LaLaimmnn$, we may approximate $C$ by a sequence of $C'\in\Grtp(k,n)$. 
Since $2\leq k\leq n-2$, it follows from \cref{lemma:TREE:fullysep_vs_rhoij} that 
 $\fap_{C'}=\fap_{k,n}$ is \fullysep. By 
\cref{lemma:fullysep=>simple_and_Mbd}, 
 $(\la',\lat'):=\PhiLL(C')$ is \Mdash positive. By \cref{prop:momLL_basic}, $(\la',\lat')\in\lalak$. Thus, $(\la',\lat')\in\MPkn$, so $(\la',\lat')\in\MomLLamb$. Since $C'$ converges to $C$, $(\la',\lat')$ converges to $(\la,\lat):=\PhiLL(C)$ because $\PhiLL$ is continuous on $\Grtnn(k,n)$, so $\lalat\in\clMomLLamb$. This shows the first inclusion in~\eqref{eq:TREE:proj_subset_cl_amb}. The second inclusion is well known; see e.g.~\cite[Section~5.4]{AHTT}.
\end{proof}
Ever since the work of~\cite{AHTT,DFLP}, it has been expected that both inclusions in~\eqref{eq:TREE:proj_subset_cl_amb} are in fact equalities. We deduce this from the BCFW tiling results for $\MomLLproj$ and $\AZproj$.

\begin{theorem}[BCFW tilings of amplituhedra]\ \label{lemma:proj_tiling}
\begin{enumerate}[label=(\arabic*)]
\item\label{proj_tiling1} The tiles $\{\MomLLprojf\mid \fap\in\BCFWfkn\}$ form a tiling of $\MomLLproj$ for all $\LaLat\in\LaLaimmnn$.
\item\label{proj_tiling2} The tiles $\{\AZprojddf\mid \fap\in\BCFWfkn\}$ form a tiling of $\AZproj$ for all $Z\in\Grtp(k+2,n)$. 
\item\label{proj_tiling3} The linear projection and sign flip definitions of the amplituhedron agree, i.e., 
\begin{equation}\label{eq:TREE:proj=cl_amb}
\MomLLproj=\clMomLLamb
 \quad\text{and}\quad 
 \AZproj=\clAZamb
\end{equation}
for all $\LaLat\in\LaLaimmnn$ and $Z\in\Grtp(k+2,n)$. 
\end{enumerate}
\end{theorem}
\begin{proof}
Let $\LaLat\in\LaLaimmnn$ and $Z\in\Grtp(k+2,n)$. 
Let $\RXA,\RYA,\RelA$ and $\RXC,\RYC,\RelC$ be given by~\eqref{eq:RXA_dfn} and~\eqref{eq:RXC_dfn}. 
By \cref{thm:f_triang,lemma:ddf_triang}, the tiles $\{\RYoA_{\fap}\mid\fap\in\BCFWfkn\}$ (resp., $\{\RYoC_{\ddfperm}\mid \fap\in\BCFWfkn\}$) form \amtilingA of $\RYA$ (resp., \amtilingC of $\RYC$). 
Our goal is to apply \cref{lemma:Rtiling_to_tiling} to 
$\WA:=\MomLLamb$ and $\WC:=\AZamb=\{V\in\RYC\mid V\subset Z\}$.

Let $\RXAcl:=\Grtnn(k,n)$, $\RYAcl:=\lalats$, $\WAcl:=\{\lalat\in\RYAcl\mid \la\subset\La\text{ and }\lat\subset\Lat\}$, and 
$\RelAcl:=\{(C,\la,\lat)\in\RXAcl\times\RYAcl\mid \la\subset C\subset\latp\}$. Thus, $\WA=\WAcl\cap\RYA$, $\RelA=\RelAcl\cap(\RXA\times\RYA)$, and the subset $\RelAcl\subset \RXAcl\times\RYAcl$ is closed. 
The map $\RPhiAcl:=\PhiLL:\RXAcl\to\WAcl$ is continuous by \crefi{prop:momLL_basic}{mom1}. 
Since $\LaLat\in\LaLaimmnn$, by \cref{lemma:fullysep=>simple_and_Mbd}, $\RPhiA(\RXA)\subset\WA$, where $\RPhiA:=\RPhiAcl|_{\RXA}$. 
To check~\eqref{eq:RGraph}, observe that by construction, if $\lalat=\RPhiAcl(C)$ for $C\in\RXAcl$ then $\la\subset C\subset\latp$, i.e., $(C,\la,\lat)\in\RelAcl$. Thus, $\RGraph_{\RPhiAcl} \subset \RelAcl\cap (\RXAcl\times \WAcl)$. Conversely, if $(C,\la,\lat)\in \RelAcl\cap (\RXAcl\times \WAcl)$ then we have $\la\subset C \subset\latp$, $\la\subset\La$, and $\lat\subset\Lat$. By \crefi{prop:momLL_basic}{mom1}, both intersections $C\cap \La$ and $C^\perp\cap\Lat$ are $2$-dimensional, so $\lalat=\PhiLL(C)$, and thus $(C,\la,\lat)\in\RGraph_{\RPhiAcl}$.
For $\fap\in\BCFWfkn$, the closure $\RXoAcl_{\fap} = \Povtnn_{\fap}$ is compact. 
Thus, part~\itemref{proj_tiling1} of the theorem follows from \cref{lemma:Rtiling_to_tiling}.

We show part~\itemref{proj_tiling2}. Denote $\RXCcl:=\Grtnn(k-2,n)$, $\RYCcl:=\Gr(4,n)$, $\WCcl:=\{V\in\RYCcl\mid V\subset Z\}$, $\RelCcl:=\{(\ddC,V)\in\RXCcl\times\RYCcl\mid V\subset \ddC^\perp\}$. Let $\RPhiCcl=\PsiZ$ be given by~\eqref{eq:PsiZ_dfn} and let $\RPhiC:=\RPhiCcl|_{\RXC}$. %
 To see that $\RPhiC(\RXC)\subset\WC$, observe that for $i+2\leq j\leq i+n-2$, $\Jij:=\{i,i+1,j,j+1\}\subset\brn$, $\ddC\in\Grtnn(k-2,n)$, $Z\in\Grtp(k+2,n)$, and $V:=\PsiZ(\ddC)$, by~\cite[Equation~(3.11)]{KW},
\begin{equation}\label{eq:KW_eq}
 \brV[i,i+1,j,j+1] = \sum_{L\in{\Jijc\choose k-2}}\Delta_L(\ddC) \Delta_{L\sqcup \Jij}(Z).
\end{equation}
 In particular, $\brV[i,i+1,j,j+1]>0$ if and only if $\rank\ddC_{\Jijc}=k-2$. By~\eqref{eq:TREE:proj_subset_cl_amb}, $\varxxx(V)=k-2$. Thus, when $\ddC\in\RXC=\Grddfsep(k-2,n)$, we have $\RPhiC(\ddC)\in\RYC=\AAkntree$ and thus $\RPhiC(\RXC)\subset\WC$.

By \cref{rmk:dim_ddC_cap_Z=4}, $\RPhiCcl$ is continuous. 
We check~\eqref{eq:RGraph}. It follows from the definitions that 
$\RGraph_{\RPhiCcl}\subset\RelCcl\cap(\RXCcl\times\WCcl)$. Conversely, for $(\ddC,V)\in\RelCcl\cap(\RXCcl\times\WCcl)$, we have $V\subset Z\cap \ddC^\perp$, so by \cref{rmk:dim_ddC_cap_Z=4}, $V=\PsiZ(\ddC)$. Thus, $\RGraph_{\RPhiCcl}=\RelCcl\cap(\RXCcl\times\WCcl)$, so
 part~\itemref{proj_tiling2} of the theorem follows from \cref{lemma:Rtiling_to_tiling}.

By parts~\itemref{proj_tiling1}--\itemref{proj_tiling2} of the theorem, the subsets $\RPhiA(\RXA)\subset\WA$ and $\RPhiC(\RXC)\subset\WC$ are dense. By~\eqref{eq:TREE:proj_subset_cl_amb}, 
$\RPhiA(\RXA)=\PhiLL(\Grfsep(k,n))\subset \MomLLproj\subset\clMomLLamb=\overline{\WA}$. Since $\RPhiA(\RXA)$ is dense in $\WA$, $\MomLLproj$ is dense in $\clMomLLamb$. Since $\MomLLproj=\PhiLL(\Grtnn(k,n))$ is a continuous image of a compact set, it is closed, so we get $\MomLLproj=\clMomLLamb$. The proof of the second equality $\AZproj=\clAZamb$ in~\eqref{eq:TREE:proj=cl_amb} is similar.
\end{proof}

\subsection{Discussion} We compare \cref{lemma:proj_tiling} to some of the previous results in a series of remarks.

\begin{remark}\label{rmk:msgen_exist}
By~\eqref{eq:RWo=RYo_cap_RW}, $\WoA_{\fap}:=\RPhiA(\RXoA_{\fap})\subset\RYoA_\fap$ and $\WoC_{\ddfperm}:=\RPhiC(\RXoC_{\ddfperm})\subset\RYoC_\ddfperm$. By~\eqref{eq:lalappf_dfn} and~\eqref{eq:TREE:AAddf_dfn}, we get 
$\MomLLprojf\subset\MPf$ and $\AZprojddf\subset\AAddf$
for all $\fap\in\BCFWfkn$. 
In particular, 
 $\MPf\neq\emptyset$ for each $\fap\in\BCFWfkn$. Thus, the set of \msgen elements $\lalat\in\lalats$ is nonempty; cf. \cref{rmk:msgen}.
\end{remark}

\begin{remark}\label{rmk:converse_main_diff}
It is not clear to us how to show conversely that any tiling of $\MomLLproj$ or $\AZproj$ induces \amtiling of $\MPkn$ or $\AAkn$. One difficulty lies in finding a positive or negative answer to \cref{que:TREE:extend} below. 
 In particular, we do not know how to prove the equivalence of BCFW tiling conjectures for $\MomLLproj$ and $\AZproj$ directly; we can only deduce it from the equivalence of BCFW tiling conjectures for ambient amplituhedra $\MPkn$ and $\AAkn$. 
\end{remark}
\begin{question}\ \label{que:TREE:extend}
\begin{enumerate}[label=(\arabic*)]
\item Let $\lalat\in\MPkn$. Does there always exist $\LaLat\in\LaLaimmnn$ such that $\la\subset\La$ and $\lat\subset\Lat$?
\item Let $V\in\AAkn$. Does there always exist $Z\in\Grtp(k+2,n)$ such that $V\subset Z$?
\end{enumerate}
\end{question}
\begin{remark}\label{rmk:ELT_ELPTSBW}
\Cref{lemma:proj_tiling} implies the BCFW tiling results of~\cite{ELT,ELPTSBW}. The machinery developed in those papers relies essentially on the topological properties of the amplituhedron map $\PsiZ$.\footnote{For example,~\eqref{eq:TREE:proj=cl_amb} does not follow from the results of~\cite{ELT,ELPTSBW}. We thank Tsviqa Lakrec and Lauren Williams for discussions regarding this question.} In view of \cref{rmk:converse_main_diff}, it remains unclear whether the techniques of~\cite{ELT,ELPTSBW} can be extended to yield a tiling of the ambient amplituhedron $\AAkn$. Such an extension, if possible, would allow one to recover our \cref{thm:f_triang} from their results via \cref{lemma:TREE:amb_tiling_equivalence}.
\end{remark}

\begin{remark}%
In a follow-up paper~\justpaptwo, we extend the T-duality map $(C,\la,\lat)\mapsto(\ddC=C\cdot\Qla,V=\Qlapp(\lat))$ to a local operation on weighted planar bipartite graphs. 
We expect that applying this operation to the glued graph $(\Gf,\wtf)$ (\cref{cor:Ann_split_Delta}) results in the following correspondence. Let $C\in\Grtnn(k,n)$, 
 $\LaLat\in\LaLaArep$, 
and $\lalat := \PhiLL(C)$. Since $\LaLat\in\LaLaArep\subset\LaLafl$ by~\eqref{eq:OCA:Cbotm_subset_Cbotp}--\eqref{eq:OAC:Arep}, there exists a $k$-plane $\Lamid:=\lat\oplus\Lap = \lap\cap\Lat\in\Gr(k,n)$. 
(We have $\Lap\subset\Lamid\subset\Lat$ but $\Lamid\notin\Grtnn(k,n)$.) 
 Let 
\begin{equation}\label{eq:SHIFT:Muni-Auni}
\ddC=C\cdot\Qla,\quad 
V=\Qlapp(\lat),
\quad\text{and}\quad
Z=\Qlapp(\Lamid).
\end{equation}
Then $V = \PsiZ(\ddC)$ and $\ddC\in\Grtnn(k-2,n)$. Moreover, one can show using the machinery of~\justpaptwo that $Z\in\Grtp(k+2,n)$. Conversely, when $\la\subset V$ is known, one can recover $\lat:=V\cdot \Qla$, $C:=\Qlapp(\ddC)$, and $\Lamid:=Z\cdot \Qla$. We have $\lat=\Lamid\cap C^\perp$ and $\la = \Lamid^\perp\cap C$; however, more information is needed to recover $\LaLat$. 
\end{remark}

\begin{remark}[Parity duality]
We explain the relationship between the \emph{parity duality} of~\cite{GL_parity} 
 and the natural parity duality $(\la,\lat)\mapsto(\alt(\lat),\alt(\la))$ between $\MPkntree$ and $\MPop_{n-k,n}$
(which sends $(C,\La,\Lat)\mapsto (\altp(C),\alt(\Lat),\alt(\La))$). The transformation of~\cite{GL_parity} consists of applying the \emph{right twist} $\rtw$ of~\cite{MuSp} to the $(n-4)\times n$ matrix $V^\perp=\begin{pmatrix}
\ddC\\ Z^\perp
\end{pmatrix}$ and then applying $\alt$. It is well known that $\rtw(V^\perp) = \ltw(V)^\perp$, where $\ltw$ is the \emph{left twist} of~\cite{MuSp}. Thus, the parity duality of~\cite{GL_parity} sends $V=\begin{pmatrix}
\la \\ \mu
\end{pmatrix}\in\AAkn$ to $V^\ast:=\alt(\ltw(V)) = \begin{pmatrix}
\mu^\ast \\ \la^\ast
\end{pmatrix}\in\AA_{n-k-2,n}$. One can check that up to applying cyclic symmetry and column rescaling,\footnote{To be precise, letting $D:=\diag\left(\frac{\brV[i-1,i,i+1,i+2]}{\brla<i-1,i>\brla<i,i+1>}\right)_{i\in\brn}$, we set $V^\ast = \alt(\ltw(V)\cdot \Shift^{-2}\cdot D)$ for $\Shift$ given by~\eqref{eq:Shift_dfn}.} we have $\la^\ast=\alt(\lat)$ (where $\lat := \mu\cdot \Qla$) and $\lat^\ast:=\mu^\ast\cdot Q_{\la^\ast} = \alt(\la)$. 
\end{remark}

\section{Perfect t-embeddings}\label{sec:perf}
The following is a slight modification (cf. \cref{rmk:CLR_outer_face}) of the main definition of~\cite{CLR2}.
\begin{definition}\label{dfn:perf}
A null polygon $\Pcurve=(\bdx_1,\bdx_2,\dots,\bdx_n)$ in $\Rdd$ is called \emph{perfect} if 
\begin{enumerate}[label=(\roman*)]
\item\label{perf1} each edge of $\PcurveT$ lies on a line tangent to the unit circle $\Tcirc:=\{z\in\C:|z|=1\}$, and the bisector of the angle at each $\bdxT_i$ passes through $0$, and
\item\label{perf2} for each $i\in\brn$, we have $\sumbT_i = \sumwT_i$.
\end{enumerate}
A t-immersion $\xd:\Faces\to\Rdd$ is called \emph{perfect} if its boundary polygon $\Pbdx$ is perfect.
\end{definition}
\noindent See \cref{fig:hex}. It is clear that if $\xd$ is perfect then the region enclosed by $\PbdxT$ is star-shaped.
Thus, by \cref{lemma:Jordan_curve}, every perfect t-immersion is a t-embedding. We will see in \cref{cor:perf:n=2k} that 
we must have $n=2k$ in order for a graph $\G$ of type $(k,n)$ to admit perfect t-embeddings.

It was conjectured in~\cite[Section~4.2]{CLR2} that each ``sufficiently nondegenerate'' graph $(\G,\wt)$ of type $(k,2k)$ admits a perfect t-embedding, and that this perfect t-embedding is unique (provided it exists). The goal of this section is to interpret perfect t-embeddings in terms of a certain involution on $\Gr(2,n)$.
 Using this interpretation, we disprove the uniqueness part in \cref{ex:perf:two_solutions}. We also show in \cref{lemma:perf:no_solutions} that perfect t-embeddings do not exist when $\G$ is a BCFW graph.

On the other hand, we introduce the notion of a \emph{T-dual perfect t-embedding} in \cref{ssec:T_dual_perfect} and show that such objects always exist and are unique up to conformal equivalence.

\subsection{An involution}
Let $\Gro(2,n):=\{\la\in\Gr(2,n)\mid \brla<i,i+1>,\brla<i+1,i-1>\neq0\text{ for all $i\in\brn$}\}$ and
\begin{equation}\label{eq:pinv(la)_dfn}
\pinv:\Gro(2,n)\to\Gro(2,n),\quad \la\mapsto \alt(\la) \cdot \Qla.
\end{equation}
In other words, for $\la\in\Gro(2,n)$, let 
\begin{equation}\label{eq:perf:tlai_dfn}
 \btla:=\diag(\tla_1,\tla_2,\dots,\tla_n), \quad\text{where}\quad
 \tlai := \frac{\brla<i+1,i-1>}{\brla<i-1,i>\brla<i,i+1>}
\quad\text{for $i\in\brn$.}
\end{equation}
Then since $\la\cdot \Qla = 0$, we get $\pinv(\la) = \alt(\la) \cdot \btla = \la \cdot \alt(\btla)$ (up to a factor of $2$).
We lift $\pinv$ to a map on $2\times n$ matrices by negating the second row (cf.~\eqref{eq:intro:y_to_lalat} and \cref{lemma:wind_alt}) so that 
\begin{equation}\label{eq:perf:pinv_dfn_mat}
\pinv(\la) := \diag(1,-1)\cdot \alt(\la) \cdot \btla;
\quad\text{explicitly,}\quad
\pinv(\la)_i = (-1)^{i-1}\tlai \begin{pmatrix}
\la_{1,i}\\
-\la_{2,i}
\end{pmatrix} \quad\text{for $i\in\brn$}.
\end{equation}
\begin{lemma}\label{lemma:perf:pinv}
The map $\pinv:\Gro(2,n)\to\Gro(2,n)$ is an involution satisfying $\la\perp\pinv(\la)$.
\end{lemma}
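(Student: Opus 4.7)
The proof plan is a direct calculation; I do not expect any serious obstacle, and both claims reduce to inspection of the formula~\eqref{eq:intro:Qla} for $\Qla$.

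First I would dispatch the orthogonality $\la\perp\pinv(\la)$. Reading off~\eqref{eq:intro:Qla}, the only nonzero entries of $\Qla$ are
$(\Qla)_{i-1,i}=\tfrac{1}{\brla<i-1\,i>}$, $(\Qla)_{i,i}=\tilde\la_i$, and $(\Qla)_{i+1,i}=\tfrac{1}{\brla<i\,i+1>}$, which shows at once that $\Qla^T=\Qla$. Since $\la$ spans the kernel of $\Qla$, we have $\la\cdot\Qla=0$, and hence
\[
\la\cdot\pinv(\la)^T \;=\; \la\cdot(\alt(\la)\cdot\Qla)^T \;=\; \la\cdot\Qla\cdot\alt(\la)^T \;=\; 0.
\]
Negating the second row of the matrix representative $\pinv(\la)_i=(-1)^{i-1}\tilde\la_i(\la_{1,i},-\la_{2,i})^T$ does not affect the row span, so this shows $\la\perp\pinv(\la)$ as subspaces of $\R^n$.

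Next I would verify the involution property by a direct two-step computation. Write $\mu:=\pinv(\la)$ as a matrix via~\eqref{eq:perf:pinv_dfn_mat}. Expanding the determinant
\[
\<\mu_i\,\mu_j\>=(-1)^{i+j-2}\tilde\la_i\tilde\la_j\det\!\begin{pmatrix}\la_{1,i}&\la_{1,j}\\ -\la_{2,i}&-\la_{2,j}\end{pmatrix}=-(-1)^{i+j}\tilde\la_i\tilde\la_j\brla<i\,j>,
\]
so $\<\mu_{i-1}\,\mu_i\>=\tilde\la_{i-1}\tilde\la_i\brla<i-1\,i>$, $\<\mu_i\,\mu_{i+1}\>=\tilde\la_i\tilde\la_{i+1}\brla<i\,i+1>$, and $\<\mu_{i+1}\,\mu_{i-1}\>=-\tilde\la_{i+1}\tilde\la_{i-1}\brla<i+1\,i-1>$. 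Plugging these into the definition~\eqref{eq:perf:tlai_dfn} of $\tilde\mu_i$ and using $\tilde\la_i=\brla<i+1\,i-1>/(\brla<i-1\,i>\brla<i\,i+1>)$ gives
\[
\tilde\mu_i \;=\; \frac{-\brla<i+1\,i-1>}{\tilde\la_i^{\,2}\,\brla<i-1\,i>\brla<i\,i+1>} \;=\; -\frac{1}{\tilde\la_i}.
\]

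Finally I would combine these ingredients to compute $\pinv(\mu)_i$. Using $(\mu_{1,i},-\mu_{2,i})^T=(-1)^{i-1}\tilde\la_i(\la_{1,i},\la_{2,i})^T=(-1)^{i-1}\tilde\la_i\,\la_i$ and $\tilde\mu_i=-1/\tilde\la_i$, the formula~\eqref{eq:perf:pinv_dfn_mat} applied to $\mu$ yields
\[
\pinv(\mu)_i \;=\; (-1)^{i-1}\cdot\Bigl(-\tfrac{1}{\tilde\la_i}\Bigr)\cdot(-1)^{i-1}\tilde\la_i\,\la_i \;=\; -\la_i
\]
for all $i\in\brn$. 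Hence $\pinv(\pinv(\la))=-\la$ as a $2\times n$ matrix, which represents the same element of $\Gr(2,n)$ as $\la$. This completes the proof. The only potentially subtle point is the bookkeeping of signs arising from the ``negate the second row'' convention; both applications of $\pinv$ contribute such a sign, and together with the $(-1)^{i-1}(-1)^{i-1}=1$ factor and the minus sign from $\tilde\mu_i=-1/\tilde\la_i$, they combine to the overall sign $-\la_i$ that disappears upon passage to $\Gr(2,n)$.
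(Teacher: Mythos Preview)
Your proof is correct and follows essentially the same direct-computation approach as the paper. For orthogonality, both you and the paper use exactly the same argument: $\Qla$ is symmetric and kills $\la$, so $\la\cdot\pinv(\la)^T=\la\cdot\Qla\cdot\alt(\la)^T=0$. For the involution, the paper packages the computation slightly more slickly by first observing the scaling law $\pinv(\la\cdot\tdiag)=\pinv(\la)\cdot\tdiag^{-1}$ for any diagonal $\tdiag$, and then noting that in $\Gr(2,n)$ one has $\pinv(\la)=\la\cdot\alt(\btla)$, whence $\pinv(\pinv(\la))=\pinv(\la)\cdot\alt(\btla)^{-1}=\la$; your explicit computation $\tilde\mu_i=-1/\tilde\la_i$ is precisely this scaling law specialized to $\tdiag=\alt(\btla)$, so the two arguments are the same up to presentation.
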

\begin{proof}
The $i$-th column of $\pinv(\la)$ is given by~\eqref{eq:perf:pinv_dfn_mat}. By~\eqref{eq:perf:tlai_dfn}, rescaling the $i$-th column of $\la$ by $t$ corresponds to dividing the $i$-th column of $\pinv(\la)$ by $t$. That is, for any $\tdiag\in\LG$, we have $\pinv(\la\cdot \tdiag) = \pinv(\la)\cdot \tdiag^{-1}$. We find 
$\pinv(\pinv(\la)) = \pinv(\la\cdot \alt(\btla)) = \pinv(\la)\cdot \alt(\btla)^{-1} = \la$. Thus, $\pinv$ is an involution.

Since $\Qla$ is self-adjoint and satisfies $\la\cdot \Qla = 0$, we have $\la\cdot \pinv(\la)^T = \la\cdot \Qla \cdot \alt(\la)^T = 0$. 
\end{proof}

\begin{lemma}\label{lemma:perfect=>MP_and_n=2k}
If $\lalat\in\lalak$ satisfies $\lat=\pinv(\la)$ then $n=2k$, 
$\lalat\in\MPkn$,
 and $\tlai<0$ for all $i\in\brn$.
\end{lemma}
\begin{proof}
Since $\lalat\in\lalak$, we have $\brla<i,i+1> >0$ and $\brlat[i,i+1]>0$. Since $\lat = \pinv(\la)$, by~\eqref{eq:perf:pinv_dfn_mat}, $\brlat[i,i+1] = \tlai\tla_{i+1} \brla<i,i+1>$, so $\tlai\tla_{i+1}>0$ for all $i\in\brx{n-1}$. Thus, all $\tlai$'s are nonzero and have the same sign. 
Suppose for contradiction that $\tlai>0$ for all $i\in\brn$. Equivalently, 
 $\brla<i-1,i+1><0$ for all $i\in\brn$. 
We find $\Arg(\la_{2i-1},\la_{2i}) + \Arg(\la_{2i},\la_{2i+1})>\pi$ for all $i\in\brn$. Summing up these angles for $i=1,\dots,\lfloor n/2\rfloor$, we find $\wind(\la)> \lfloor n/2\rfloor \pi$. Since all $\tlai>0$, by~\eqref{eq:wind_alt}, we get $\wind(\lat) = n\pi - \wind(\la) < \lceil n/2 \rceil\pi$. But since $\lalat\in\lalak$, we have $\wind(\lat)=\wind(\la)+2\pi$, a contradiction. We have shown that $\tlai<0$ for all $i\in\brn$. Since $(k+1)\pi = \wind(\lat) = n\pi - \wind(\la) = (n-k+1)\pi$, we get $n=2k$. 

We check that $\lalat$ is \Mdash positive. For integers $p<q$, we denote 
$S_0(p,q):=\frac{\brla<p,q>\brla<p+1,q+1>}{\brla<p,p+1>\brla<q,q+1>}$ and 
$S_1(p,q):=\frac{\brla<p,q+1>\brla<p+1,q>}{\brla<p,p+1>\brla<q,q+1>}$. 
We let $S(p,q):=S_0(p,q)$ if $p\equiv q$ and $S(p,q):=-S_1(p,q)$ if $p\not\equiv q$ modulo~$2$. 
We have the Pl\"ucker relation, $S_0(p,q) = 1 + S_1(p,q)$. 
Applying~\eqref{eq:perf:tlai_dfn}--\eqref{eq:perf:pinv_dfn_mat}, we find 
\begin{equation*}%
 \brla<p,q>\brlat[p,q] = S(p-1,q) + S(p,q-1) - S(p-1,q-1) - S(p,q). 
\end{equation*}
Applying this to each term in~\eqref{eq:intro:Mand_dfn} yields a telescoping sum from which we find 
$\Mand_{i,j}(\la,\lat) = S(i,j)$ for all $i+2\leq j\leq i+n-2$. Explicitly, if $i\equiv j$ (resp., $i\not\equiv j$) modulo~$2$ then 
\begin{equation}\label{eq:Mand_perfect_formula}
 \Mand_{i,j}(\la,\lat) = \frac{\brla<i,j>\cdot \brla<i+1,j+1>}{\brla<i,i+1>\cdot \brla<j,j+1>},
 \quad\text{resp.,}\quad
 \Mand_{i,j}(\la,\lat) = -\frac{\brla<i,j+1>\cdot \brla<i+1,j>}{\brla<i,i+1>\cdot \brla<j,j+1>}.
\end{equation}
Since $\tlai<0$, we get $\brla<i-1,i+1> >0$ for all $i\in\Z$. Denoting $\wind(\la_i\to\la_j):=\sum_{\s=i}^{j-1}\Arg(\la_\s,\la_{\s+1})$, we find $\wind(\la_i\to\la_{i+2d}) < d\pi$ for all $d\in\brk$. 
Thus, $\wind(\la_{i-2(k-d)}\to\la_{i}) < (k-d)\pi$, so $\wind(\la_{i}\to\la_{i+2d}) = \wind(\la) - \wind(\la_{i-2(k-d)}\to\la_{i}) > (d-1)\pi$. 
Since $(d-1)\pi < \wind(\la_i\to\la_{i+2d}) < d\pi$, we get $(-1)^{d-1}\brla<i,i+2d> >0$ for all $d\in\brx{k-1}$. Since all terms $\brla<i,j>$, $\brla<i+1,j+1>$, $\brla<i,j+1>$, $\brla<i+1,j>$ in~\eqref{eq:Mand_perfect_formula} involve indices of the same parity, we see that $\Mand_{i,j}(\la,\lat)>0$ for all $i+2\leq j\leq i+n-2$. 
\end{proof}

\begin{proposition}\label{prop:perfect}
For $\lalat\in\lalak$, the null polygon $\Pll$ is perfect if and only if $\lat = \pinv(\la)$. In particular, a t-immersion $\xll:\Faces\to\Rdd$ is perfect if and only if $\lat = \pinv(\la)$.
\end{proposition}
\begin{proof}
First, observe that part~\itemref{perf2} of \cref{dfn:perf} is equivalent to the condition that the polygon $\PllO$ is contained in a line. After rotating it as in \cref{rmk:bdry:alpha}, we may assume that $\bdxO_{i} - \bdxO_{i-1}$ is real for all $i\in\brn$, and that $\bdxO_1 - \bdxO_0<0$. By~\eqref{eq:yy=Tcal_Ocal}, we get that $\ovl{\y_i}\yt_i$ is real, so $\yt_i=(-1)^{i-1}t_i\y_i$ for some $t_i\in\Rast$. Applying~\eqref{eq:intro:y_to_lalat}, we get $\lat_i = (-1)^{i-1}t_i \begin{pmatrix}
\la_{1,i}\\
-\la_{2,i}
\end{pmatrix}$.

 It remains to show that $t_i=\tlai$ for all $i\in\brn$ if and only if part~\itemref{perf1} of \cref{dfn:perf} holds for $\Pll$. Since $\lalat\in\lalak$, similarly to \cref{lemma:perfect=>MP_and_n=2k}, we see that all $t_i$'s have the same sign. Since $t_1|\y_1|^2 = \bdxO_1 - \bdxO_0<0$, we have $t_1<0$ and thus $t_i<0$ for all $i\in\brn$. 

Let $\mupt_i\in\Tcirc$ be the tangent point to the line through $\bdxT_{i-1}$ and $\bdxT_i$. Since $\bdxT_i-\bdxT_{i-1}=(-1)^{i-1}t_i\y_i^2$ by~\eqref{eq:yy=Tcal_Ocal}, $\mupt_i = \pm \I\y_i^2/|\y_i|^2$.
 In fact, since the boundary polygon $(\bdxT_1,\bdxT_2,\dots,\bdxT_n)$ winds clockwise around the origin and since $t_i<0$, we find that $\mupt_i = (-1)^{i}\I\y_i^2/|\y_i|^2 = (-1)^{i}\I\y_i / \ovl{\y_i}$ for all $i\in\brn$.

The point $\bdxT_i$ lies at the intersection of the two tangent lines to $\Tcirc$ at points $\mupt_i$ and $\mupt_{i+1}$, and therefore $\bdxT_i = \frac{2\mupt_i\mupt_{i+1}}{\mupt_i+\mupt_{i+1}}$. We calculate
\begin{equation*}%
\frac{\bdxT_{i}-\bdxT_{i-1}}{\mupt_i} 
= 2\mupt_i\cdot \frac{\mupt_{i+1} - \mupt_{i-1}}{(\mupt_{i-1}+\mupt_i)(\mupt_i+\mupt_{i+1})}
= \I|\y_i|^2 \frac{\brla<i+1,i-1>}{\brla<i-1,i>\brla<i,i+1>}
= \I|\y_i|^2\tlai.
\end{equation*}
Here, we have used that e.g. $\mupt_i + \mupt_{i+1} = (-1)^{i}\I(\y_i\ovl{\y_{i+1}} - \y_{i+1}\ovl{\y_i}) /\ovl{\y_i\y_{i+1}}$ and $\y_i\ovl{\y_{i+1}} - \y_{i+1}\ovl{\y_i} = -2\I\det\mat[\y_i|\y_{i+1}] = -2\I\brla<i,i+1>$. On the other hand, $\bdxT_{i}-\bdxT_{i-1} = (-1)^{i-1}t_i\y_i^2$. Dividing this by $\mupt_i = (-1)^{i}\I\y_i^2/|\y_i|^2$ and equating the result to $\I|\y_i|^2\tlai$, 
 we find $t_i = \tlai$.
\end{proof}

\begin{remark}\label{rmk:perf_Lorentz}
 Let $\perfY,\perfZ\in\Rdd$ be given by $\perfYT=\perfYO=0$ and $\perfZT=0$, $\perfZO = \I$. 
Then a null polygon $\Pcurve=(\bdx_1,\dots,\bdx_n)$ in $\Rdd$ is perfect in the sense of \cref{dfn:perf} if (after some shift and rotation of the line containing $\PcurveO$) we have 
 $(\bdx_i-\perfY)^2 = 1$ and $(\bdx_i - \perfZ)^2 = 0$ for all $i\in\brn$. 
Thus, a translation/rescaling/Lorentz-invariant way to define perfect null polygons is to fix a pair $\perfY,\perfZ\in\Rdd$ satisfying $c:=(\perfY - \perfZ)^2<0$ and then consider null polygons $\Pcurve=(\bdx_1,\dots,\bdx_n)$ satisfying 
$(\bdx_i-\perfY)^2 = -c$ and $(\bdx_i - \perfZ)^2 = 0$ for all $i\in\brn$. 
In the above proof, we have used scaled Lorentz transformations and translations to fix $c=-1$, $\perfY=(0,0)$, and $\perfZ=(0,\I)$. 
\end{remark}

Combining \cref{lemma:perfect=>MP_and_n=2k} with \cref{prop:perfect}, we obtain the following. 
\begin{corollary}\label{cor:perf:n=2k}
Any perfect t-immersion $\xll$ satisfies $\lalat\in\MPkdk$.
\end{corollary}

\begin{figure}
\begin{tabular}{ccc}
 \includegraphics[scale=1.2]{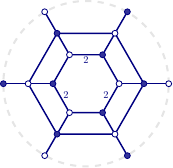}
&
 \includegraphics[scale=1.5]{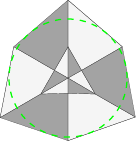}
&
 \includegraphics[scale=1.5]{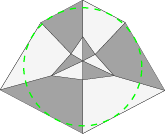}
\end{tabular}
 \caption{\label{fig:hex} A weighted hexagonal prism $(\G,\wt)$ with unmarked edges having weight $1$ and two perfect t-embeddings of $(\G,\wt)$; see \cref{rmk:CLR_outer_face}.}
\end{figure}

\subsection{T-dual perfect t-embeddings}\label{ssec:T_dual_perfect}
Recall from \cref{sec:TREE} that T-duality provides a map $\AAshift:\MPkn\to\AAkn$ sending a pair $\lalat\in\MPkn$ of orthogonal $2$-planes to a $4$-plane $V=\Qlapp(\lat)\in\AAkn$. It turns out that the \emph{perfect condition} $\lat = \pinv(\la)$ (\cref{prop:perfect}) on $\lalat$ becomes especially natural after applying T-duality.

\begin{proposition}\label{prop:T_dual_perfect_vs_perfect}
Let $\lalat\in\MPkn$ and $V:=\Qlapp(\lat)\in\AAkn$. Then $\lalat$ satisfies the
perfect condition $\lat = \pinv(\la)$ if and only if $V$ satisfies the \emph{T-dual perfect condition}
\begin{equation}\label{eq:perfect_V}
 V = \alt(V) \quad\text{as elements of $\Gr(4,n)$.}
\end{equation}
\end{proposition}
\begin{proof}
First, if $\lalat\in\MPkn$ satisfies $\lat = \pinv(\la)$ then we must have $n=2k$ by \cref{cor:perf:n=2k}. Similarly, for $V\in\AAkn$ and $\mu:=\LVVp\cdot V$ as in~\eqref{eq:V_la_mu_lat}, \eqref{eq:wind_alt} yields $\windskip(\mu')=(n-2)\pi - \windskip(\mu)$, where $\mu':=\diag(1,-1)\cdot \alt(\mu)$.
 Having $V=\alt(V)\in\AAkn$ forces $\mu=\mu'$, and since $\windskip(\mu)=(k-1)\pi$ by~\eqref{eq:windskip_vs_varxxx}, we also must have $n=2k$. 

Next, we claim that for all $\la\in\lakdk$, the sum $\la+\alt(\la)$ of $2$-planes in $\R^{2k}$ satisfies
\begin{equation}\label{eq:perfect_alt_oplus_dim4}
 \dim(\la+\alt(\la))=4 \quad\text{for all $\la\in\lakdk$.}
\end{equation}
Indeed, the matrices $\begin{pmatrix}
\la \\ \alt(\la)
 \end{pmatrix}$ and $\begin{pmatrix}
\la_1 & 0 & \la_3 & 0 & \cdots & \la_{2k-1} & 0\\
0 & \la_2 & 0 & \la_4 & \cdots & 0 & \la_{2k}
 \end{pmatrix}$ are related by invertible row operations. 
The latter matrix does not have rank $4$ if and only if either all odd or all even columns of $\la$ are proportional. In this case, we would have $\wind(\la)=k\pi$, but since $\la\in\lakdk$, we have $\wind(\la)=(k-1)\pi$, a contradiction.

Suppose that $\lalat\in\MPkdk$ satisfies the perfect condition $\lat = \pinv(\la)$ and let $V:=\Qlapp(\lat)\in\AAkdk$. Let $\mu:=\alt(\la)$. By~\eqref{eq:pinv(la)_dfn}, $\lat = \mu\cdot \Qla$, so $\la,\mu\subset V$. By~\eqref{eq:perfect_alt_oplus_dim4}, $V = \la\oplus \mu$, so $V = \alt(V)$ as elements of $\Gr(4,2k)$. Conversely, suppose that $V\in\AAkdk$ satisfies $V=\alt(V)$. By \cref{thm:AAshift}, there exists a $2$-plane $\la\in\Gr(2,V)\cap \lakdk$. Choose any such $2$-plane $\la$ and set $\lat:=V\cdot \Qla$. Since $V=\alt(V)$, we have $\alt(\la)\subset V$. By~\eqref{eq:perfect_alt_oplus_dim4}, $V = \la\oplus\alt(\la)$. Thus, $\lat = \alt(\la)\cdot \Qla = \pinv(\la)$. 
\end{proof}

\begin{definition}
A \emph{T-dual perfect t-datum} for $\ddC\in\Grtnn(k-2,2k)$ is a $4$-plane $V\in\AAkdk$ satisfying $V\subset\ddC^\perp$ and $V=\alt(V)$.
\end{definition}
\noindent 
As we show in \cref{thm:T_dual_perfect} below, any sufficiently nondegenerate $\ddC\in\Grtnn(k-2,2k)$ admits a unique T-dual perfect t-datum $V$, given explicitly by~\eqref{eq:V_vs_ddC_cap}. 
The name \emph{T-dual perfect t-datum} is justified as follows. 
\begin{corollary}\label{rmk:from_T_dual_perfect_to_perfect}
Let $\G$ be a graph of type $(k,2k)$ and let $\fap:=\fG\in\Boundkdk$. 
The following algorithm always outputs a perfect t-embedding of $\G$, and any perfect t-embedding of $\G$ arises in this way. 
\begin{enumerate}[label=(\arabic*)]
\item Choose any $\ddC\in\Ptp_{\ddfperm}$ and consider a T-dual perfect t-datum $V$ for $\ddC$. (It exists and is unique for nondegenerate $\ddC$ by \cref{thm:T_dual_perfect} below.)
\item Choose any $2$-plane $\la\in\lakdk$ satisfying $\la\subset V$. (It exists by \cref{thm:AAshift}.)
\item Set $C:=\Qlapp(\ddC)$ and $\lat:=\pinv(\la)$. %
\item Pick $\wt\in\Rtpgauge$ such that $C = \Meas(\G,\wt)$ and let $\xll$ be given by \cref{notn:Tll_xll}. 
 Then $\xll$ is a perfect t-embedding of $(\G,\wt)$. 
\end{enumerate}
\end{corollary}
\begin{proof}
Choosing any $2$-plane $\la\in\lakdk$ such that $\la\subset V$ 
and letting $\lat:=V\cdot \Qla$ and $C:=\Qlapp(\ddC)$, we obtain a triple $\la\subset C\subset\latp$ with $C\in\Grnd(k,2k)$ and $\lalat\in\MPkdk$. By \cref{thm:intro:t_imm_vs_triples}, this triple gives rise to a t-immersion $\xll$ of any weighted graph $(\G,\wt)$ satisfying $C=\Meas(\G,\wt)$. By \cref{prop:perfect,prop:T_dual_perfect_vs_perfect}, $\xll$ is a perfect t-embedding of $(\G,\wt)$. 
Conversely, given a perfect t-embedding $\xll$ of $(\G,\wt)$ with $C:=\Meas(\G,\wt)$, by \cref{prop:perfect}, we have $\lat=\pinv(\la)$, and so by \cref{prop:T_dual_perfect_vs_perfect}, $V:=\Qlapp(\lat)$ is a T-dual perfect t-datum for $\ddC:=C\cdot \Qla$. Thus, any perfect t-embedding arises via the above algorithm. 
\end{proof}

\begin{remark}
Given a T-dual perfect t-datum $V$ for $\ddC$, any other choice of $\la'\in\Gr(2,V)\cap \lakdk$ would give rise to another perfect t-embedding $\xd_{\la',\lat'}$ of the same graph $\G$ but with different edge weights $\wt'$. However, any two such perfect t-embeddings are \emph{conformally equivalent}, meaning they are related by $\GL_4(\R)$-action on the T-dual perfect t-datum $V\in\Gr(4,2k)$.
\end{remark}

\begin{remark}
One can also think of T-dual perfect t-embeddings geometrically as follows. While a t-embedding is an assignment $\xd:\Faces\to\Rdd$ of points in $\Rdd$ to the vertices of $\GD$, a \emph{T-dual t-embedding} associates an oriented line $\mtL_{\ff}$ in $\RP^3$ to each vertex $\ff\in\Faces$ of $\GD$ such that whenever several vertices of $\GD$ share a common white (resp., black) face, the corresponding lines intersect at a common point (resp., are contained in a common plane). Under this correspondence, \Mdash positivity conditions $(\xd(\ff)-\xd(\f))^2>0$ translate into the oriented lines $\mtL_{\ff},\mtL_{\f}$ being skew and positively oriented (via the right-hand rule). The \emph{perfect} condition translates into all boundary lines $\mtL_{\bdf_1},\mtL_{\bdf_2},\cdots,\mtL_{\bdf_{2k}}$ intersecting a fixed pair $\Lcal_0,\Lcal_{\infty}$ of skew lines in $\RP^3$. 
See \Nref{ssec:APP:RP3} for further details. 
\end{remark}

Next, we discuss existence and uniqueness of T-dual perfect t-data. 
In the following definition, we set $\brnodd:=\{1,3,\dots,2k-1\}$ and $\brnev:=\{2,4,\dots,2k\}$.
\begin{definition}
We say that $X\in\Grtnn(d,2k)$ is \emph{\oend} and write $X\in\Groe(d,2k)$ if for all 
$J\in{\brnodd\choose d}\sqcup{\brnev\choose d}$, we have $\Delta_J(X)>0$. 
\end{definition}

The next result is a direct consequence of Skandera's inequalities~\cite{Skandera}.
\begin{lemma}[{\cite[Theorem~6.3]{Farber_Postnikov}}]\label{lemma:oend_Skandera}
An element $X\in\Grtnn(k,2k)$ is \oend if and only if there exists $J\in{\brx{2k}\choose k}$ such that $\Delta_J(X)>0$ and $\Delta_{\comp{J}}(X)>0$. \qed
\end{lemma}

\begin{definition}\label{dfn:Catalan}
Let $X\in\Grtnn(d,2k)$, $\fap:=\fX$, and let $\Icalr_\fap=(\Ir_i)_{i\in\Z}$ be the Grassmann necklace of $X$. For each $i\in\brx{2k}$ and $\s\in\brx{2d}$, let $h_i(\s):=|[i,i+\s)\cap\Ir_i| - |[i,i+\s)\setminus\Ir_i|$.\footnote{Construct a path $\DP(\Ir_i)$ such that for each $t=i,i+1,\dots,i+2d-1$, $\DP(\Ir_i)$ has an up-right step if $t\in \Ir_i$ and a down-right step if $t\notin\Ir_i$. Then $\Ir_i$ satisfies the Catalan condition if and only if $\DP(\Ir_i)$ is a Dyck path (never passing below the $x$ axis).} We say that $\Ir_i$ satisfies the \emph{Catalan condition} if $h_i(\s)\geq0$ for all $\s\in\brx{2d}$. We say that $X$ (resp., $\fap$) satisfies the \emph{Catalan condition} if each element $\Ir_i$ of its Grassmann necklace satisfies the Catalan condition.
\end{definition}

\begin{lemma}\label{lemma:oend_Catalan}
An element $X\in\Grtnn(d,2k)$ is \oend if and only if it satisfies the Catalan condition.
\end{lemma}
\begin{proof}
We have $\Delta_J(X)>0$ if and only if $J$ belongs to the positroid $\Matroid_X$ of $X$. By definition, this is equivalent to having $\Ir_i\preceq_i J$ in the cyclically shifted Gale order $\preceq_i$ for all $i\in\Z$. For $J_i:=\{i,i+2,\dots,i+2(d-1)\}$, 
$\Ir_i\preceq_i J_i$ translates precisely into the Catalan condition above. Since $J_i$ is the $\preceq_i$-minimal element of ${\brnodd\choose d}\sqcup{\brnev\choose d}$, having $\Ir_i\preceq_i J_i$ for all $i\in\brx{2k}$ is necessary and sufficient for $X$ to be \oend.
\end{proof}

\noindent In view of \cref{lemma:oend_Catalan}, we say that $\fap\in\BND(d,2k)$ is \emph{\oend} if it satisfies the Catalan condition of \cref{dfn:Catalan}.

\begin{theorem}[Existence and uniqueness of T-dual perfect t-embeddings]\label{thm:T_dual_perfect}
Each \oend $\ddC\in\Groe(k-2,2k)$ admits a unique T-dual perfect t-datum $V\in\AAkdk$. 
It is given by
\begin{equation}\label{eq:V_vs_ddC_cap}
 V = \ddC^\perp\cap\altp(\ddC).
\end{equation} 
\end{theorem}
\begin{proof}
Similarly to the proof of~\eqref{eq:perfect_alt_oplus_dim4}, we see that $\dim(\ddC+\alt(\ddC))=2k-4$ because $\ddC$ is \oend. 
 If $V\subset\ddC^\perp$ satisfies $V=\alt(V)$ then $V\subset\altp(\ddC)$, so $V\subset(\ddC^\perp\cap\altp(\ddC))=(\ddC+\alt(\ddC))^\perp$. Since $\dim(\ddC+\alt(\ddC))=2k-4$, we have $\dim(\ddC+\alt(\ddC))^\perp = 4$, so indeed $V$ must be given by~\eqref{eq:V_vs_ddC_cap}.

Next, we show that the $4$-plane $V$ given by~\eqref{eq:V_vs_ddC_cap} satisfies $V\in\AAkdk$. 
Similarly to~\eqref{eq:KW_eq} (see also~\cite[Lemma~10.5]{GL_parity}), for each $i+2\leq j\leq i+2k-2$ and $\Jij:=\{i,i+1,j,j+1\}\subset\brx{2k}$, we have 
$\brV[i,i+1,j,j+1] = \sum_{L,R\in{\brx{2k}\choose k-2}:\ L\sqcup R = \Jijc} \Delta_L(\ddC) \Delta_R(\ddC)\geq0$. It follows that $\brV[i,i+1,j,j+1]>0$ if and only if the submatrix $\ddC|_{\Jijc}$ 
 contains complementary sets of columns $L,R$ with both Pl\"ucker coordinates $\Delta_L(\ddC),\Delta_R(\ddC)$ nonzero. By \cref{lemma:oend_Skandera}, this is true if and only if $\ddC|_{\Jijc}$ is \oend for each $\Jij=\{i,i+1,j,j+1\}$. Since $\Jij$ can contain any two odd elements (or any two even elements), these conditions are equivalent to $\ddC$ itself being \oend.

We have shown that for each \oend $\ddC\in\Grtnn(k-2,2k)$, the $4$-plane $V$ given by~\eqref{eq:V_vs_ddC_cap} satisfies $\brV[i,i+1,j,j+1]>0$ for all $i+2\leq j\leq i+2k-2$. The set $\Groe(k-2,2k)$ of \oend elements contains the top cell $\Grtp(k-2,2k)$. In particular, $\Groe(k-2,2k)$ is connected. Since the map $\ddC\mapsto V(\ddC):=\ddC^\perp\cap\altp(\ddC)$ is continuous on $\Groe(k-2,2k)$ and since $\varxxx(V)$ is locally constant on the set of $V$ satisfying $\brV[i,i+1,j,j+1]>0$ for all $i+2\leq j\leq i+2k-2$, it suffices to check that $\varxxx(V(\ddC))=k-2$ for a single point $\ddC\in\Groe(k-2,2k)$. Taking $\ddC:=\Xcsx{k-2}\in\Grtp(k-2,2k)$ to be the unique cyclically symmetric point in $\Grtnn(k-2,2k)$ defined in \cref{sec:cs_mom_ampl}, we check that $\varxxx(V(\ddC))=k-2$ similarly to the proof of \cref{prop:Qla_Ptp}. Thus, $V(\ddC)\in\AAkdk$ for all $\ddC\in\Groe(k-2,2k)$.
\end{proof}

Combining \cref{rmk:from_T_dual_perfect_to_perfect,thm:T_dual_perfect}, we obtain the following result.
\begin{corollary}[Existence of perfect t-embeddings]\label{lemma:perfect_exist_G}
Suppose that $\fap:=\fG$ is such that $\ddfperm$ is \oend (cf. \cref{dfn:Catalan,lemma:oend_Catalan}). Then $\G$ admits perfect t-embeddings. 
\end{corollary}

\begin{remark}\label{rmk:perfect_diagram}
\Cref{rmk:from_T_dual_perfect_to_perfect,thm:T_dual_perfect} do not imply existence or uniqueness of perfect t-embeddings of a given \emph{weighted} graph $(\G,\wt)$. 
The main complication arises in the choice of $\la$: for $C:=\Meas(\G,\wt)\in\Grtnn(k,2k)$, finding a perfect t-embedding of $(\G,\wt)$ amounts to finding 
a $2$-plane $\la\in\Gro(2,C)$ such that $\pinv(\la)\perp C$. This corresponds to solving a system~\eqref{eq:perf:perp} of high-degree polynomial equations as we discuss below. On the other hand, given $\ddC:=\Meas(\ddG,\ddwt)\in\Grtnn(k-2,2k)$, finding a $4$-plane $V\subset\ddC^\perp$ satisfying the perfect condition $\alt(V)=V$ amounts to solving a system of \emph{linear} equations which has a unique solution given by~\eqref{eq:V_vs_ddC_cap}. By \cref{thm:T_dual_perfect}, this unique solution $V$ automatically gives rise to a T-dual t-embedding, i.e., satisfies $V\in\AAkdk$. 

Similarly to~\eqref{eq:TREE:comm_diag}, we may consider the following commutative diagram. Let
\begin{equation}\label{eq:MPerf_dfn}
 \MPerfkdk:=\{\lalat\in\MPkdk\mid \lat = \pinv(\la)\}. 
\end{equation} 
 Then the moduli space of perfect t-embeddings of $\G$ with $\fap:=\fG$ is given by
\begin{equation*}%
 \MPerff:=\{(C,\la,\lat)\in\Ptp_{\fap}\times\MPerfkdk\mid \la\subset C\subset\latp\}.
\end{equation*} 
 Similarly, let
\begin{equation}\label{eq:Aperf_dfn}
 \APerfkdk:=\{V\in\AAkdk\mid V = \alt(V)\},
 \quad\text{and let}\quad
\APerff:=\{(\ddC,V)\in\Ptp_{\ddfperm}\times\APerfkdk\mid V\subset\ddC^\perp\}
\end{equation}
 be the moduli space of T-dual perfect t-embeddings of $\G$. Finally, we denote
\begin{equation*}
  \flAPerff:=\{(\ddC,\la,V)\in\Ptp_{\ddfperm}\times\lakdk\times\APerfkdk\mid \la\subset V\subset\ddC^\perp\}.
\end{equation*}
 In this notation, we have maps
\begin{equation}\label{eq:perf_proj}
 \begin{tikzcd}[column sep=2.5em]
 \Ptp_{\fap} 
& \MPerff \arrow[r,"\Reliso","\sim"'] \arrow[l,"\pperf"'] 
& \flAPerff \arrow[r,"\RelAAforg",twoheadrightarrow] 
& \APerff \arrow[r,"\ddpperf","\sim"'] 
& \Ptp_{\ddfperm},
\end{tikzcd} 
\end{equation}
where $\RelAAforg$ is surjective by \cref{thm:AAshift} while $\Reliso$ and $\ddpperf$ are homeomorphisms by \cref{prop:T_dual_perfect_vs_perfect,thm:T_dual_perfect}, respectively. 
 The algorithm in \cref{rmk:from_T_dual_perfect_to_perfect} follows the diagram~\eqref{eq:perf_proj} from right to left. Existence (resp., uniqueness) of perfect t-embeddings of $(\G,\wt)$ for all $\wt\in\Rtpgauge$ is equivalent to the projection map $\pperf:\MPerff\to\Ptp_{\fap}$ being surjective (resp., injective).
\end{remark}

We leave the following questions for future work.
\begin{question}
Does existence and uniqueness of T-dual perfect t-embeddings (guaranteed by \cref{thm:T_dual_perfect}) allow one to produce uniform bounds on the inverse Kasteleyn matrix $\wtK^{-1}(\w,\b)$ 
 analogously to~\cite[Theorem~3.1]{CLR2}?
\end{question}

\begin{question}
Do the ``perfect amplituhedra'' $\MPerfkdk$ and $\APerfkdk$ given by~\eqref{eq:MPerf_dfn}--\eqref{eq:Aperf_dfn} correspond to some quantum field theory? Do they admit BCFW-like tilings? (See also \cref{rmk:ABJM_vs_perf} below.)
\end{question}

\subsection{Negative results on existence and uniqueness of perfect t-embeddings}\label{ssec:perfect_neg}
\Cref{prop:perfect} allows one to find perfect t-embeddings by solving systems of polynomial equations. Namely, when $C\in\Grnd(k,2k)$ is fixed, one can parameterize $\la\in\Gr(2,C)$ by $2k-4$ variables. Next, one solves the $2k$ equations
\begin{equation}\label{eq:perf:perp}
 \pinv(\la)\cdot C^T = \bzero_{2\times k},
\end{equation}
four of which follow from the rest since $\la\perp\pinv(\la)$ is automatic by \cref{lemma:perf:pinv}. Once the (finitely many) solutions to~\eqref{eq:perf:perp} are found, one checks the condition $(\la,\pinv(\la))\in\lalakdk$ which guarantees that $\Tll$ is a perfect t-embedding by \cref{thm:intro:t_imm_vs_triples} and \cref{prop:perfect}.

\begin{example}\label{ex:perf:two_solutions}
For $k=3$, $n=6$, a weighted top cell graph $(\G,\wt)$ shown in \figref{fig:hex}(left) admits four distinct perfect t-embeddings. Two of them are shown in \figref{fig:hex}(middle,right). The remaining two are obtained from the one in \figref{fig:hex}(right) by $120$- and $240$-degree rotations. None of these four perfect t-embeddings are related by Lorentz transformations. This gives a negative answer to~\cite[Open question~4.8]{CLR2}.
\end{example}

\begin{remark}\label{rmk:CLR_outer_face}
The setup of~\cite{CLR2} involves planar bipartite graphs with a bipartite ``marked outer face'' of degree $n=2d$. Translated into our language, this corresponds to graphs $\G$ such that $n$ is even, the colors of the boundary vertices $\bdv_1,\bdv_2,\dots,\bdv_n$ alternate between black and white, and every boundary face $\bdf_i$ is incident to three edges. An example where $\G$ is a (non-reduced) hexagonal prism is shown in \figref{fig:hex}(left). It follows that any reduced graph $(\G',\wt')$ satisfying $\Meas(\G',\wt')=\Meas(\G,\wt)$, where $(\G,\wt)$ is as in \cref{ex:perf:two_solutions}, also has four distinct perfect t-embeddings. 
\end{remark}

\begin{lemma}\label{lemma:perf:no_solutions}
Suppose that $\G\in\BCFWGkdk$. Then $\G$ (with any choice of positive edge weights) does not admit a perfect t-embedding.
\end{lemma}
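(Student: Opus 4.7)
By \cref{cor:perf:n=2k}, it suffices to consider the case $n=2k$. Suppose for contradiction that $\T$ is a perfect t-embedding of some $\G\in\BCFWGkn$, with origami map $\O$. Write $\G=\G_L\otimes\G_R$ with bridge at $(\bdv_\io,\bdv_{\io+1})$, and let $\fo$ denote the interior face adjacent to $\bdf_\io$ across the bridge.

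The key combinatorial input is that in any BCFW graph, the boundary face $\bdf_\io$ is bounded (apart from the two boundary arcs at $\bdv_\io$ and $\bdv_{\io+1}$) solely by the single bridge edge. In particular, $\bdf_\io$ is incident to exactly two interior vertices of $\G$: the two bridge endpoints $\bdvx_\io$ and $\bdvx_{\io+1}$, which have opposite colors. Therefore the angles $\sumwT(\bdf_\io)$ and $\sumbT(\bdf_\io)$ from \cref{dfn:intro:sum_angles} coincide with the two face-angles at $\T(\bdf_\io)$ on either side of the straight segment $[\T(\bdf_\io),\T(\fo)]$ that is the image of the bridge. The perfect condition~\itemref{perf2} forces these two angles to be equal, so the bridge-image segment bisects the angle at $\T(\bdf_\io)$, and~\itemref{perf1} forces this bisector to pass through the origin. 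Hence $\T(\fo)$ lies on the line $\ell_\io$ through the origin and $\T(\bdf_\io)$.

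The next step is to exploit the remaining adjacencies of $\fo$. By \cref{cor:BCFW:share_face}, $\fo$ shares a face of $\GD$ with each of $\bdf_{\io-1}$, $\bdf_{\io+1}$, and $\bdf_\jo$, and the origami map is an isometry on each such shared face. Combined with the observation that in a perfect t-embedding the origami boundary polygon is $1$-dimensional (since $\O(\bdf_i)-\O(\bdf_{i-1})=\yt_i\ovl{\y_i}=(-1)^{i-1}t_i|\y_i|^2\in\R$ after the rotation of \cref{prop:perfect}), this gives four real equations constraining the pair $(\T(\fo),\O(\fo))\in\C\times\C$, one for each shared face. The plan is to combine these equations with $\T(\fo)\in\ell_\io$ and with the analogous bisector-through-origin argument applied in the smaller BCFW subgraphs $\G_L$ and $\G_R$, ultimately reducing to the BCFW base cases $(k,n)\in\{(1,3),(2,3)\}$ where no perfect t-embedding exists simply because $n\ne 2k$.

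I would implement this by induction on $n$. For the inductive step, the hope is to show that, once normalized so that $\T(\fo)=\O(\fo)$ sits at a specific point dictated by $\ell_\io$ and the bridge isometry, the restricted t-immersions of $\G_L$ and $\G_R$ inherit the tangential-boundary and bisector-through-a-point structure of a perfect embedding (with a translated centre), thereby falling under the induction hypothesis applied to the smaller BCFW graphs $\G_L\in\BCFWG_{k_L,n_L}$ and $\G_R\in\BCFWG_{k_R,n_R}$. The main obstacle I anticipate is exactly this descent step: showing that the perfect structure on $\G$ descends to a perfect-like structure on $\G_L$ or $\G_R$. I expect this to follow from the fact that the bisector through the origin at $\T(\bdf_\io)$, together with the origami identifying the bridge segment, uniquely pins down the re-centered inscribed circle for the restricted subembedding, but some care is needed to check the boundary angle condition and to handle the $n_L=3$ or $n_R=3$ base case of the recursion cleanly.
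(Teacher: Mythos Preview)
Your opening observation---that the bridge segment $[\T(\bdf_\io),\T(\fo)]$ bisects the angle at $\T(\bdf_\io)$ and hence lies on a line through the origin---is exactly how the paper begins. From there, however, the paper takes a direct, non-inductive route, and your proposed descent has a genuine gap.

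The paper's argument is global. After normalizing $\O$ to fix $\T(\bdf_\io)$ and the ray $\ray$, one checks that $|\T(\bdf_\s)|=|\O(\bdf_\s)|$ for \emph{every} boundary index $\s$. (With this normalization the one-dimensional origami boundary polygon lies on a tangent line to $\Tcirc$, and the alternating signs of the origami boundary edges force the distance from each $\O(\bdf_\s)$ to the tangent point to equal the tangent length $\sqrt{|\T(\bdf_\s)|^2-1}$.) Hence every perpendicular bisector $\line_\s$ passes through the origin, so all the parameters $\r_\s$ of~\eqref{eq:Psum_r_dfn} coincide. This contradicts \cref{prop:BCFW:from_lalat_to_r}, which says $\r_\jo$ is strictly separated from the other $\r_\s$. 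No recursion is needed: one compares all boundary vertices to the origin simultaneously.

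Your inductive plan, by contrast, requires the boundary polygon of the restricted immersion $\T_L$ (or $\T_R$) to be circumscribed about some circle with all angle bisectors through its center. Nothing forces this: two of the sides of $\Tcomp{\Acal_L}$ are the new segments $[\T(\fo),\T(\bdf_{\io+1})]$ and $[\T(\bdf_\jo),\T(\fo)]$, and there is no common incircle for these together with the remaining original sides. You flag this as ``the main obstacle,'' but it is the entire content of the lemma; the hope that the bridge and origami data ``uniquely pin down the re-centered inscribed circle'' has no justification. Even if some perfect-like structure did descend, the sub-BCFW parameters need not satisfy $n_L=2k_L$ (only $n_L+n_R=2(k_L+k_R)$ is forced), so the base-case appeal to \cref{cor:perf:n=2k} would not close the induction.
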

\begin{proof}
Suppose otherwise that $\xd$ is a perfect t-embedding of $(\G,\wt)$. Let $\io,\jo$ be as in \cref{dfn:BCFW} and \cref{rmk:jo}. 
The unique point $\xd(\ffout)$ satisfying $(\xd(\ffout) - \xd(\f))^2 = 0$ for $\f=\bdf_{\io-1},\bdf_{\io},\bdf_{\io+1},\bdf_{\jo}$ is the point $\perfZ$ in the notation of \cref{rmk:perf_Lorentz}, with $\perfZT=0$. 
 In other words, for each $\s\in\brn$, the perpendicular bisector $\line_\s$ between $\bdxT_\s$ and $\bdxO_\s$ intersects the ray $\ray$ at the origin. This implies that the parameter $\r_\s$ defined in~\eqref{eq:Psum_r_dfn} is the same for all $\s$. This contradicts \cref{prop:BCFW:from_lalat_to_r}.
\end{proof}
\noindent In the notation of \cref{sec:BCFW:triangulation_MP} and~\eqref{eq:MPerf_dfn}, the subvariety 
$\MPerfkdk\subset\MPkdk$ 
 is contained inside the complement of the dense subset $\bigsqcup_{f\in\BCFWfkdk} \lalappf\subset\MPkdk$.

\begin{remark}
In view of \cref{lemma:perfect_exist_G}, we conclude that for all $\fap\in\BCFWfkdk$, $\ddfperm$ is not \oend. 
 It is still possible that perfect t-embeddings exist for all ``sufficiently nondegenerate'' weighted graphs $(\G,\wt)$, where the nondegeneracy condition is the one in \cref{lemma:perfect_exist_G}. 
 This gives a more precise version of~\cite[Open question~4.7]{CLR2}. 
\end{remark}

\appendix

\section{ABJM amplituhedra, s-embeddings, and the problem of Apollonius}\label{sec:ABJM}
We explain how to extend our results to the case of the \emph{ABJM momentum amplituhedron} $\MBJk\LaLat$ and show existence of \emph{s-embeddings} conjectured in~\cite{Chelkak_s_emb}. 

\begin{remark}
After the first version of this paper (which did not contain any results on the ABJM amplituhedron) appeared in 2024, the BCFW tiling conjecture for $\MBJk\LaLat$ was confirmed in~\cite{OPT}. Their proof relied on our results concerning immanant-positive pairs $\LaLat$ studied in \cref{sec:imm}. Below, we give a different argument for the stronger (cf. \cref{rmk:ELT_ELPTSBW}) statement that the \emph{ambient} ABJM momentum amplituhedron is tiled by the corresponding BCFW-like tiles, along the lines of our proof of \cref{thm:intro:A}. We briefly sketch the arguments, leaving the fully rigorous treatment for future work.
\end{remark}

\subsection{ABJM momentum amplituhedron and Ising model}
Following~\cite{Huang_Wen,Huang_Wen_Xie}, let $\OGtnn(k,2k):=\{C\in\Grtnn(k,2k)\mid \altp(C)=C\}$ be the \emph{totally nonnegative orthogonal Grassmannian}. This space was studied rigorously in~\cite{GP_Ising}, where it was related to the \emph{planar Ising model}.
\begin{definition}%
The \emph{ambient ABJM momentum amplituhedron} is given by 
\begin{equation}\label{eq:MBJk_dfn}
 \MBJk:=\{\lalat\in\MPkdk\mid \lat=\alt(\la)\}.
\end{equation}
\end{definition}
\noindent Here, we treat $\lalat$ as a pair of $2$-planes as opposed to a pair of $2\times 2k$ matrices. 
For matrices, we instead impose the condition $\lat = \diag(1,-1)\cdot \alt(\la)$ similarly to~\eqref{eq:perf:pinv_dfn_mat} and \cref{lemma:wind_alt}. 

Thus, for $C\in\OGtnn(k,2k)$ and $\lalat\in\MBJk$, the conditions $\la\subset C$ and $\lat\subset C^\perp$ become equivalent. 

\begin{remark}\label{rmk:ABJM_vs_perf}
Comparing~\eqref{eq:MBJk_dfn} to~\eqref{eq:pinv(la)_dfn} and~\eqref{eq:MPerf_dfn}, we see that the points in $\MBJk$ are closely related to perfect t-embeddings. In fact, under the \oac, 
the condition $\lat = \diag(1,-1)\cdot \alt(\la)$ implies $\sumwT_i = \sumbT_i$. 
This condition coincides with \crefi{dfn:perf}{perf2}. 
Thus, for $\lalat\in\MBJk$, the origami boundary polygon $\PllO=(\bdxO_1,\bdxO_2,\dots,\bdxO_{2k})$ is $1$-dimensional (contained in a line). 
\end{remark}

We consider a ``self-dual'' subset $\RelBJ\subset\Rel$ of the correspondence $\Rel$ in~\eqref{eq:TREE:Rel_dfn} defined by
\begin{equation*}%
 \RelBJ:=\{(C,\la,\lat)\in\OGtnn(k,2k)\times\MBJk\mid \la\subset C\subset\latp\}.
\end{equation*}

Given a weighted planar (non-bipartite) graph $(\GIS,J)$ in a disk with $k$ boundary vertices, $(\GIS,J)$ may be converted into a planar bipartite graph $(\G,\wt)$ of type $(k,2k)$ with a bipartite square located in the middle of each edge of $\GIS$; see~\cite{Dubedat,Huang_Wen,GP_Ising} for examples. We call such graphs $(\G,\wt)$ \emph{weighted Ising-bipartite graphs}. For each edge $\e$ of $\GIS$, the weights of the four edges of the corresponding bipartite square in $\G$ are $c,s,c,s$ in clockwise order, where $c,s>0$ are real parameters satisfying $c^2+s^2=1$. In the case of the (ferromagnetic) Ising model, they are given by $c:=\tanh(2J_\e)$ and $s:=\sech(2J_\e)$, where $J_\e\in\Rtp$ is the Ising interaction constant associated to $\e$. When the graph $\GIS$ is ``minimal'' (in the sense that any two strands of its medial graph intersect at most once), the graph $\G$ is reduced. Furthermore, the bounded affine permutation $\fap=\fG$ satisfies $\fap(\fap(i))=i+2k$, i.e., is an involution modulo $2k$. 

\begin{remark}[Existence of s-embeddings]
A special class of t-embeddings called \emph{s-embeddings} was introduced in~\cite{Chelkak_s_emb} in relation to the planar Ising model. An s-embedding of an Ising-bipartite graph $(\G,\wt)$ is a t-embedding $\xd$ of $(\G,\wt)$ such that for each bipartite square in $\G$ as above with face $\ff$ inside incident to faces $\f_1,\f_2,\f_3,\f_4$ in clockwise order, the quadrilateral $(\xT(\f_1),\xT(\f_2),\xT(\f_3),\xT(\f_4))$ is tangent to a circle centered at $\xT(\ff)$; see e.g. the red dashed circle in \figref{fig:ABJM}(left). 
The relationship between s-embeddings of $(\GIS,J)$ and t-embeddings of the weighted Ising-bipartite graph $(\G,\wt)$ was explained in~\cite[Section~8.2]{CLR1}. One can deduce existence of s-embeddings from our results similarly to our proof of \cref{thm:intro:B}. Namely, given an arbitrary weighted Ising-bipartite graph $(\G,\wt)$, we have $C:=\Meas(\G,\wt)\in\OGtnn(k,2k)$. Choose an immanant-nonnegative pair $\LaLat\in\LaLaimmnndk$ satisfying $\Lat = \alt(\La)$.\footnote{One can show existence of such pairs by adapting the arguments in the proof of \cref{lemma:Ttaucoef_nonzero_on_Fl,thm:imm_pos_Fltp}, replacing the group $\Gtp$ with the totally positive part $\Otp$ of the orthogonal group. Similarly to $\Gtnn$, $\Otnn$ is generated as a monoid by hyperbolic rotation matrices $g(r)$ with nonnegative parameters $r\geq0$. See~\cite[Section~7.3]{OPT} for further details.} Applying the momentum amplituhedron map $\PhiLL$ to $C$, we obtain a pair $\lalat:=\PhiLL(C)\in\MBJk$ satisfying $\la\subset C\subset\latp$. Applying \cref{thm:intro:t_imm_vs_triples}, we obtain a t-embedding $\xll$ of $(\G,\wt)$ which is automatically an s-embedding of $(\GIS,J)$. 
\end{remark}

\begin{figure}
 \includegraphics[width=1.0\textwidth]{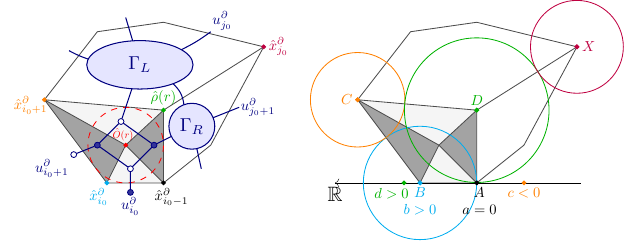}

\vspace{-0.1in}
 \caption{\label{fig:ABJM} An ABJM \ORst. Compare with \cref{fig:BCFW}.}
\end{figure}

\subsection{BCFW recursion and problem of Apollonius}
The ABJM analog of the BCFW recursion was introduced in~\cite[Section~3.2]{Huang_Wen}: instead of adding a single (white-black or black-white) bridge at boundary vertices $\bdv_{\io},\bdv_{\io+1}$, one adds a \emph{double bridge}, i.e., a pair of bridges of opposite colors so that they form a bipartite square adjacent to $\bdv_{\io},\bdv_{\io+1}$; see \figref{fig:ABJM}(left). The weights of the four edges of the square are $c,s,c,s$ as before with $c:=\tanh(2\JIS)$ and $s:=\sech(2\JIS)$ for $\JIS\in\Rtp$. 
The effect of this double bridge on the matrices $C$ and $\la$ amounts to multiplication on the right by a \emph{hyperbolic rotation matrix} $g(\JIS)$ that has a $2\times2$ block $\begin{pmatrix}
1/c & s/c \\ s/c & 1/c
\end{pmatrix}$ in rows and columns $\io,\io+1$; see e.g.~\cite[Remark~5.4]{GP_Ising}. Denoting $C(\JIS):=C\cdot g(\JIS)$, $\la(\JIS):=\la\cdot g(\JIS)$, $\lat(\JIS):=\lat\cdot g(\JIS)^{-1}$, we find that 
\begin{equation}\label{eq:C_JIS}
 C(\JIS)\in\OGtnn(k,2k),\quad \la(\JIS) \subset C(\JIS)\subset\lat(\JIS)^\perp, \quad\text{and}\quad
 \lat(\JIS) = \alt(\la(\JIS)) \quad\text{for all $\JIS>0$}. 
\end{equation}

We denote the collection of Ising-bipartite graphs produced by any branch of this \emph{\ABJW recursion} by $\BJGs$ and let $\BJfs:=\{\fG\mid\G\in\BJGs\}$. 
 For each $\fap\in\BJfs$, we consider a subset 
$\ReloBJ_\fap:=\{(C,\la,\lat)\in\RelBJ\mid C\in \Ptp_{\fap}\cap\OGtnn(k,2k)\}$ and let
\begin{equation*}%
 \MBJ_\fap:=\{\lalat\in\MBJk\mid \la\subset C\subset\latp\text{ for some $C\in\Ptp_\fap\cap\OGtnn(k,2k)$}\}
\end{equation*}
be the projection of $\ReloBJ_\fap$ to the second factor. 

We sketch an argument for how our proof of \cref{thm:f_triang} can be extended to show that the tiles $\{\MBJ_\fap\mid\fap\in\BJfs\}$ form \amtilingBJ of $\MBJk$. First, let us summarize our strategy from \cref{sec:BCFW} in the case of the ordinary momentum amplituhedron $\MPkdk$.
\begingroup
\setlength{\leftmargini}{20pt}
\begin{enumerate}[label=(\arabic*)]
\item\label{ABJM_summ1} Start with a \Mdash positive pair $\lalat\in\MPkdk$ and recover the boundary angle sums $\sumbT_i,\sumwT_i$ from the \Mdash positive null polygon $\Pll=(\bdx_1,\bdx_2,\dots,\bdx_{2k})$;
\item\label{ABJM_summ2} Deform the point $\rayTO(0):=\bdx_{\io}$ along a ray $\rayTO(\r)=\bdx_{\io} + r\Rmom$, where $\Rmom$ is given by~\eqref{eq:Rmom_dfn}.
\item\label{ABJM_summ3} Find the smallest value of $\r = \r_\jo\in(0,\infty)$ for which some Mandelstam variable $(\bdx_{\jo}-\rayTO(\r))^2$ becomes zero. Designate $\xd(\ffout)=\rayTO(\r_\jo)$ as the location of the new face and proceed recursively.
\end{enumerate}
\endgroup
\noindent In step~\itemref{ABJM_summ2}, the point $\rayT(\r)$ moves along a ray in the kami plane, and the value of $\r$ in step~\itemref{ABJM_summ3} is found using the perpendicular bisector construction shown in \cref{fig:BCFW}. 

In the ABJM case, the boundary angle sums in step~\itemref{ABJM_summ1} satisfy $\sumbT_i=\sumwT_i$ for all $i\in\brkk$. Denote the newly created face locations by $\BJO(\r)$ and $\rayT(\r)$ as in \figref{fig:ABJM}(left). It follows from~\eqref{eq:C_JIS} that for each $\r>0$, the deformed boundary polygon $\PllT(\r)$ still has equal boundary angle sums $\sumbT_i(\r)=\sumwT_i(\r)$ for each $i\in\brkk$. Thus, for each $\r>0$, the quadrilateral with vertices $\QuadT(\r):=(\bdxT_{\io-1},\bdxT_\io,\bdxT_{\io+1},\rayT(\r))$ is \emph{tangential}, i.e., tangent to a circle, with center $\BJO(\r)$. See \figref{fig:ABJM}(left).

This leads to the following geometric problem in step~\itemref{ABJM_summ3}. Given a growing tangential quadrilateral $\QuadT(\r)$ as above, find the smallest value $\r=\r_\jo>0$ such that for some index $\jo$ satisfying $\io+2\leq \jo\leq \io+2k-2$, the Mandelstam variable $(\bdx_{\jo}-\rayTO(\r))^2$ is zero. 

For each fixed index $j$, the problem of finding $\r>0$ such that $(\bdx_{j}-\rayTO(\r))^2=0$ turns out to be more than 2000 years old. Namely, we claim that it is equivalent to the \emph{problem of Apollonius} which consists of finding a circle $\Circ$ tangent to three other fixed circles in the plane. For us, one of the three circles will be a single point, in which case the tangency condition translates into the requirement that the circle $\Circ$ must pass through this point.

Let us denote $\BJA:=\bdxT_{\io-1}$, $\BJB:=\bdxT_{\io}$, $\BJC:=\bdxT_{\io+1}$, $\BJD:=\rayT(\r)$, and $\BJX:=\bdxT_j$. Out of the two possible conventions for the double bridge, we choose the one where $\bdv_{\io}$ is adjacent to the white vertex of the square as in \figref{fig:ABJM}(left). After a global shift and rotation, we may assume that the entire origami boundary polygon $\PllO$ is contained in the line $\BJA\BJB$ which we identify with the real line $\R\subset\C$ with the point $\BJA$ at the origin. (In \figref{fig:ABJM}(right), the positive direction of the real line is to the left of $\BJA$.) 
Because $\QuadT(\r)$ is tangential, the point $\BJDO=\rayO(\r)$ also lies on this line. For the origami map images of $\BJA,\BJB,\BJC,\BJD,\BJX$, we denote the corresponding real numbers by $\BJa=0,\BJb,\BJc,\BJd,\BJx\in\R$. We may assume that $\BJb,\BJd>0$ as in \figref{fig:ABJM}(right). We thus have $\BJc<\BJb$. Because $|\BJA\BJB|<|\BJA\BJD|+|\BJB\BJC|$, 
 we also must have $\BJc<\BJd$. 
The side lengths of the quadrilateral $\QuadT(\r)$ satisfy $|\BJA\BJB|+|\BJC\BJD| = |\BJA\BJD| + |\BJB\BJC|$ and are therefore given by $|\BJA\BJB| = \BJb$, $|\BJB\BJC| = \BJb-\BJc$, $|\BJC\BJD| = \BJd-\BJc$, and $|\BJA\BJD| = \BJd$. The point $\BJD=\BJDofr$ lies on a hyperbola given by $|\BJD\BJA|-|\BJD\BJC|=\BJc$. 

 Consider circles $\CircB,\CircC,\CircD,\CircX$ of radii $\BJb,|\BJc|,\BJd,|\BJx|$ centered at $\BJB,\BJC,\BJD,\BJX$, respectively. Thus, $\CircB$ and $\CircD$ pass through $\BJA$ and $\CircC$ is tangent to both $\CircB$ and $\CircD$. (Here, both tangencies are external if $\BJc<0$ and internal if $\BJc>0$.) 
The distance between origami images of $\BJD$ and $\BJX$ is $|\BJd-\BJx|$. The corresponding Mandelstam variable is zero if and only if $|\BJD\BJX| = |\BJd-\BJx|$, which is equivalent to the circles $\CircD$ and $\CircX$ being tangent (externally when $\BJx<0$ and internally when $\BJx>0$). Thus, the problem of finding the point $\BJD=\rayT(\r)$ reduces to the \emph{problem of Apollonius}: find a circle $\CircD$ passing through $\BJA$ and tangent to given circles $\CircC$ and $\CircX$. 

This problem can be solved explicitly by performing an \emph{inversion} $z\mapsto z^{-1}$ on $\Cast$ centered at $\BJA$. Denote by $\invCX,\invCB,\invCC,\invCD$ the images of the circles $\CircX,\CircB,\CircC,\CircD$ under inversion. Thus, $\invCB$ and $\invCD$ are lines not passing through $\BJA$, and the image of the interior of $\CircB$ (resp., $\CircD$) under inversion is the half-plane on the side of $\invCB$ (resp., $\invCD$) not containing $\BJA$. 

Recall that the Mandelstam variables $(\bdx_i-\bdx_j)^2$ are positive for $i=\io-1,\io,\io+1$, and we also have $(\bdx_{\io+1}-\bdx_{\io-1})^2>0$. These conditions translate into $|\BJX\BJA|>|\BJx|$, $|\BJX\BJB|>|\BJb-\BJx|$, $|\BJX\BJC|>|\BJc-\BJx|$, and $|\BJC\BJA|>|\BJc|$, respectively. It follows that $\BJA$ lies outside the circles $\CircC,\CircX$, and therefore it also lies outside the inverted circles $\invCC,\invCX$. The conditions $|\BJX\BJB|>|\BJb-\BJx|$ and $|\BJX\BJC|>|\BJc-\BJx|$ ensure that the circle $\CircX$ neither contains nor is contained inside $\CircB$ or $\CircC$. 

Since $\CircC$ is tangent to both $\CircB,\CircD$, the lines $\invCB,\invCD$ are both tangent to the circle $\invCC$. 
When $\BJc<0$ (resp., $\BJc>0$), $\CircC$ lies outside (resp., inside) $\CircB,\CircD$, and thus $\invCC$ lies on the same side (resp., on the opposite side) of the lines $\invCB,\invCD$ as $\BJA$. A similar argument applies to the circle $\invCX$ lying on one of the two sides of the line $\invCD$. 

Summarizing, the dynamics of the point $\BJD=\BJDofr$ moving along a hyperbola $|\BJD\BJA|-|\BJD\BJC|=\BJc$ starting from $\ray(0)=\BJB$ is translated under inversion into the following dynamics: the tangent line $\invCD(\r)$ to a fixed circle $\invCC$ rotates clockwise starting with $\invCD(0)=\invCB$ until it becomes tangent (externally/internally depending on the sign of $\BJx$) to another fixed circle $\invCX$. Similarly to \cref{ssec:BCFW_boundary}, the Mandelstam variable $(\bdx_j-\rayTO(\r))^2$ becomes zero for some $\r=\r_j>0$ if the point $\BJA$ stays on the same side of the line $\invCD(\r)$ for all $0<\r<\r_j$; otherwise, it stays positive for all $\r>0$. Applying inversion again recovers the desired point $\BJDofx(\r_j)$.

When $j$ is not fixed, we still have a fixed circle $\invCC$, a tangent line $\invCB$ to it, and a point $\BJA$ not on $\invCB$ and outside $\invCC$. For each $\io+2\leq j\leq \io+2k-2$, we are given a circle $\invCXj$ that does not contain $\BJA$ and neither contains nor is fully contained in $\invCC$. Step~\itemref{ABJM_summ3} of the \ABJW recursion amounts to rotating the tangent line $\invCD(\r)$ to $\invCC$ clockwise starting with $\invCD(0)=\invCB$ until it becomes tangent to one of the circles $\invCXj$. Once this happens, we set $\jo:=j$ and use $\r_{\jo}$ to define the locations of the new faces. The remainder of the proof consists of adapting the arguments in \crefrange{ssec:BCFW_boundary}{sec:BCFW:from_lalat_to_t_emb} to show that this procedure results in a valid t-embedding of the corresponding graph $\G\in\BJGs$.

\section{T-embeddings~with~prescribed~boundary and positroid hyperplane arrangements}\label{sec:Varc}
Given a weighted planar bipartite graph $(\G,\wt)$ and a fixed kami plane projection $\PcurveT:=(\bdxT_1,\bdxT_2,\dots,\bdxT_n)$ of the boundary polygon, we study how many t-embeddings $\Tcal$ of $(\G,\wt)$ there are with this kami boundary polygon. (We do \emph{not} fix the origami boundary polygon $\PcurveO$.) A related question was asked in~\cite[Section~3]{KLRR}.

This again reduces to a system of polynomial equations, and the number of complex solutions turns out to be given by the number of bounded regions of a \emph{positroid hyperplane arrangement}. We prove this by applying Varchenko's conjecture~\cite{Varchenko}, proved by Orlik--Terao~\cite{OrTe}. This connection is intriguing in view of the recent developments relating the \emph{CHY scattering equations}~\cite{CHY1} to Varchenko's conjecture; see~\cite[Section~3.4]{Lam_moduli}.

Fix $C\in\Grnd(k,n)$. Let $\HArr_C:=\{\Hyp_1,\Hyp_2,\dots,\Hyp_n\}$ be the associated \emph{positroid hyperplane arrangement} in $\R^k$: $\Hyp_i$ is the hyperplane orthogonal to the column vector $C_i\in\R^k$ for $i\in\brn$.

We apply row operations to $C$ so that the first column $C_1=(1,0,\dots,0)^T$ is a basis vector. Define another hyperplane arrangement $\bHArrC\subset\R^{k-1}$ of $n-1$ affine hyperplanes $\bH_2,\bH_3,\dots,\bH_n$, where $\bH_i:=\{(a_2,\dots,a_k)\in\R^{k-1}\mid C_{1,i} + C_{2,i}a_2+\dots+C_{k,i}a_k = 0\}$. (The arrangement $\bHArrC$ is called the \emph{projectivization} of $\HArrC$; see~\cite{Stanley_hyp}.)
We let $\bbar(C)$ be the number of bounded regions of $\bHArrC$, i.e., the number of bounded connected components of $\R^{k-1}\setminus\bHArrC$. 

\begin{figure}
\begin{tabular}{cc}
 \includegraphics[scale=0.45]{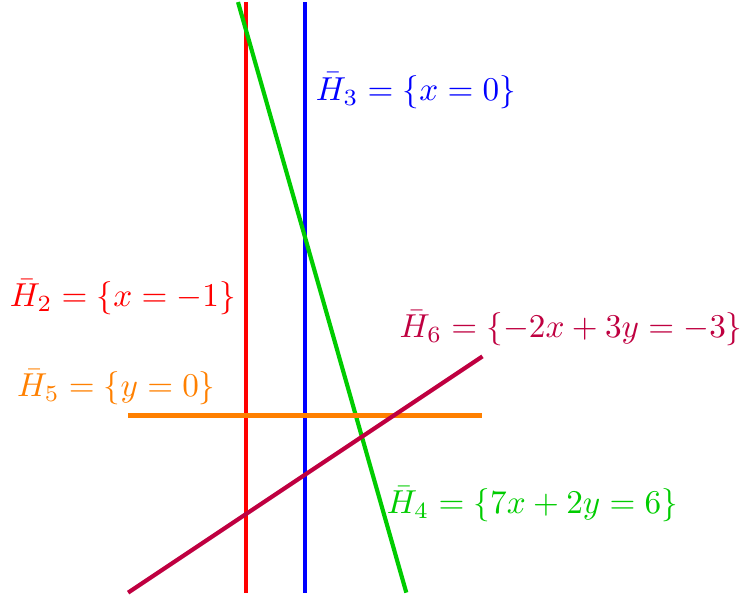}
&
 \includegraphics[scale=0.7]{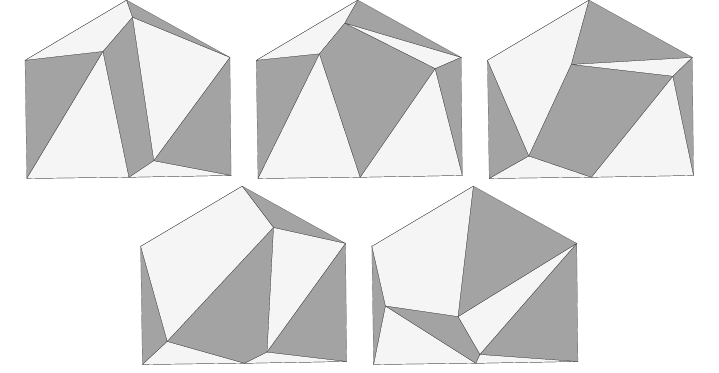}
\end{tabular}
 \caption{\label{fig:Varc} The hyperplane arrangement $\bHArrC$ for $C$ as in \cref{ex:Varc1} (left). The $\bbar(C)=5$ t-embeddings with prescribed boundary (right).}
\end{figure}

\begin{example}\label{ex:Varc1}
Let $k=3$, $n=6$, $f=[3,5,6,7,8,10]\in\BCFWfkn$, and $C = \left(\begin{smallmatrix}
1 & 1 & 0 & -6 & 0 & 3 \\
0 & 1 & 1 & 7 & 0 & -2 \\
0 & 0 & 0 & 2 & 1 & 3
 \end{smallmatrix}\right) \in\Ptp_f$. Thus, $\HArrC$ is an arrangement of six planes in $\R^3$ and $\bHArrC$ is an arrangement of five affine lines in $\R^2$. 
The five lines in $\bHArrC$ are given by $\bH_i = \{(x,y)\in\R^2\mid C_{1,i} + C_{2,i}x + C_{3,i}y = 0\}$ for $i=2,3,\dots,n$. The arrangement $\bHArrC$ is shown in \figref{fig:Varc}(left). It has $\bbar(C) = 5$ bounded regions.
\end{example}

\begin{proposition}\label{prop:Varchenko}
Fix a generic kami boundary polygon $\PcurveT:=(\bdxT_1,\bdxT_2,\dots,\bdxT_n)$ and let $C\in\Grnd(k,n)$. The number of pairs $(\la,\lat)\in\lalats$ satisfying $\la\subset C \subset\latp$ such that $\PllT = \PcurveT$ 
 is equal to the number $\bbar(C)$ of bounded regions of $\bHArrC$.
\end{proposition}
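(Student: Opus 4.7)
The plan is to reformulate the count algebraically and then invoke the theorem of Varchenko~\cite{Varchenko}, proved by Orlik--Terao~\cite{OrTe}, counting complex critical points of a master function on a hyperplane-arrangement complement.

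First I will phrase the problem in terms of complex vectors. By~\eqref{eq:intro:y_to_lalat} and~\eqref{eq:intro:Fw_Fb_y}, a pair $(\la,\lat)\in\lalats$ with $\la\subset C\subset\latp$, together with a choice of matrix representatives, amounts to a pair $(\y,\yt)\in\alt(C\oplus\I C)\times\alt(C^\perp\oplus\I C^\perp)$, and two such data give the same pair of real $2$-planes iff they differ by the $\C^\ast$-action $(\y,\yt)\mapsto(c\y,c^{-1}\yt)$. By~\eqref{eq:yy=Tcal_Ocal}, the edge vectors of the boundary polygon of $\Tll$ are $\y_i\yt_i$, so the constraint that $\Tll$ has boundary polygon $\Tcomp\Pcurve$ becomes $\y_i\yt_i=z_i$ for $i\in\brn$, where $z_i:=\Tcal(\bdf_i)-\Tcal(\bdf_{i-1})$ automatically satisfy $\sum_iz_i=0$. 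Thus the quantity to be computed equals $|X/\C^\ast|$ for
$$X:=\{(\y,\yt)\in\alt(C\oplus\I C)\times\alt(C^\perp\oplus\I C^\perp)\mid \y_i\yt_i=z_i\text{ for all }i\in\brn\}.$$

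Next I will realize $X/\C^\ast$ as the critical locus of a master function. After row operations assume $C_1=e_1$, and parameterize $\y=\alt(u)$ via $t\in\C^k$ by $u_i=L_i(t):=\sum_jC_{j,i}t_j$ (so $L_1(t)=t_1$). Setting $\yt_i:=z_i/\y_i$, the requirement $\yt\in\alt(C^\perp\oplus\I C^\perp)$ is equivalent to orthogonality of $(z_i/L_i(t))_{i=1}^n$ to every row of $C$, namely
$$\sum_{i=1}^n\frac{z_iC_{\ell,i}}{L_i(t)}=0,\qquad \ell=1,\dots,k,$$
which are exactly the equations $\partial\Phi/\partial t_\ell=0$ for the master function $\Phi(t):=\sum_{i=1}^n z_i\log L_i(t)$ on the complement of the complexified central arrangement $\HArrC\subset\C^k$. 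Because $\sum z_i=0$, $\Phi$ is invariant under the dilation $t\mapsto\lambda t$, which matches the $\C^\ast$-action on $(\y,\yt)$; passing to the affine chart $\{t_1=1\}$ (so that $\log L_1\equiv 0$) yields critical-point equations for $\bar\Phi(t_2,\dots,t_k):=\sum_{i=2}^n z_i\log L_i$ on $\C^{k-1}\setminus\bHArrC$.

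The arrangement $\bHArrC$ is essential since the columns of $C\in\Grnd(k,n)$ span $\R^k$. The Varchenko--Orlik--Terao theorem then says that for generic complex weights the number of nondegenerate critical points of $\bar\Phi$ on $\C^{k-1}\setminus\bHArrC$ equals the number $\bbar(C)$ of bounded regions of the real arrangement $\bHArrC$. To complete the proof I verify that the $\C^\ast$-quotient map from solutions to pairs $(\la,\lat)$ is a bijection onto the count: generic solutions have $\y_i,\yt_i\neq 0$ and yield $2$-dimensional $\la,\lat$, and distinct $\C^\ast$-orbits give distinct pairs because writing $\y'=a\y+b\ovl\y$, $\yt'=a'\yt+b'\ovl\yt$ and imposing $\y'_i\yt'_i=\y_i\yt_i$ produces an overdetermined bilinear system in $(a,b,a',b')$ whose only generic solution is $b=b'=0,\ aa'=1$. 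The main obstacle is showing that the Varchenko--Orlik--Terao genericity condition survives the constraint $\sum z_i=0$: this is automatic because that constraint cuts out an irreducible hyperplane while the genericity locus is Zariski-open, so a generic boundary polygon is Varchenko-generic in the required sense.
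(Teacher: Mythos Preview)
Your proof is correct and follows essentially the same approach as the paper: translate the boundary-polygon constraint into the critical-point equations for the master function $\sum_i z_i\log L_i$ on the hyperplane-arrangement complement and invoke the Varchenko--Orlik--Terao count. You are in fact more explicit than the paper about projectivizing to the chart $\{t_1=1\}$ and about why distinct $\C^\ast$-orbits yield distinct pairs $(\la,\lat)$; one small inaccuracy is the extra $\alt$ in your description of the spaces for $(\by,\byt)$ (the correct statement, from~\eqref{eq:intro:y_to_lalat} and $\la\subset C\subset\latp$, is $\by\in C\oplus\I C$ and $\byt\in C^\perp\oplus\I C^\perp$ directly), but this is harmless since the alternating signs cancel in the products $\y_i\yt_i$.
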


\begin{example}\label{ex:Varc2}
Continuing \cref{ex:Varc1}, the $\bbar(C)=5$ possible t-embeddings of some $(\G,\wt)$ satisfying $C=\Meas(\G,\wt)$ with a fixed kami boundary polygon $\PcurveT$ are shown in \figref{fig:Varc}(right).
\end{example}

\begin{remark}
In general, not all pairs $\lalat\in\lalats$ in \cref{prop:Varchenko} give rise to t-immersions (only the ones satisfying $\lalat\in\lalak$ do). However, experimentally, we see that surprisingly often we indeed get exactly $\bbar(C)$-many t-immersions with given kami boundary polygon $\PcurveT$. For instance, we conjecture that if $\PcurveT$ is a regular $n$-gon then for any generic $C\in\Grtp(k,n)$, all $\bbar(C)={n-2\choose k-1}$ solutions give rise to t-immersions (which are automatically t-embeddings by \cref{lemma:Jordan_curve}).
\end{remark}

\begin{proof}
Finding $\lalat$ as in \cref{prop:Varchenko} is equivalent to finding $\by:=(\y_i)_{i=1}^n\in C\oplus \I C$ and $\byt:=(\yt_i)_{i=1}^n\in C^\perp \oplus\I C^\perp$ such that $\y_i\yt_i = z_i:=\bdxT_{i}-\bdxT_{i-1}$ for all $i\in\brn$; cf. \cref{sec:intro:holomorphic}. 

Let $\alpha_i(\ba):=\ba\cdot C_i$ be the linear function of $\ba\in\C^k$ whose kernel is the (complexified) hyperplane $\Hyp_i$, for $i\in\brn$. We have $\by\in C\oplus\I C$ if and only if there exists $\ba\in\C^k$ such that $\y_i=\alpha_i(\ba)$ for all $i\in\brn$. We set $\yt_i(\ba):=z_i / \alpha_i(\ba)$. We have $\dlog\alpha_i = \frac1{\alpha_i(\ba)}(C_{1,i}\dop a_1 + C_{2,i}\dop a_2 + \cdots + C_{k,i}\dop a_k)$. Writing $\Phi(\ba):=\prod_{i=1}^n \alpha_i(\ba)^{z_i}$ (a complex multivalued function; see~\cite{Varchenko,OrTe}), we have
\begin{equation}\label{eq:Varchenko:dlog_Phi}
 \dlog\Phi 
= \sum_{i=1}^n z_i \dlog\alpha_i
= \sum_{i=1}^n \yt_i(\ba)(C_{1,i}\dop a_1 + C_{2,i}\dop a_2 + \cdots + C_{k,i}\dop a_k).
\end{equation}
A point $\ba\in\C^k$ is a critical point of $\log\Phi$ if and only if the right-hand side of~\eqref{eq:Varchenko:dlog_Phi} is a zero differential form. This is equivalent to the condition that $\byt(\ba):=(\yt_1(\ba),\yt_2(\ba),\dots,\yt_n(\ba))\in\C^n$ is orthogonal to $C$, i.e., $\byt(\ba)\in C^\perp\oplus\I C^\perp$. Thus, the critical points of $\log\Phi$ are in bijection with pairs $(\by,\byt)\in(C\oplus\I C)\times(C^\perp\oplus\I C^\perp)$ such that $\y_i\yt_i = z_i$ for all $i\in\brn$. By \cite[Theorem~1.1 and Proposition~2.4]{OrTe}, when the $z_i$'s are generic complex numbers, the number of critical points of $\log\Phi$ is given by $\bbar(C)$.
\end{proof}

\begin{problem}
Find a closed formula for $\bbar(C)$ or for the characteristic polynomial $\chiC(t)$ for $C\in\Ptp_f$ in terms of the bounded affine permutation $f$.
\end{problem}

\bibliographystyle{alpha}
\bibliography{loop_ampl}

\end{document}